%% file: main.tex
\RequirePackage{pdfmanagement}
\SetKeys[document/metadata]{pdfversion=1.7, pdfstandard=A-3u, lang=en-US}
\RequirePackage{cmap}[2021-02-06]

\documentclass[a4paper,USenglish,autoref,thm-restate,numberwithinsect,pdfa]{lipics-v2021}

\usepackage[T1]{fontenc}[2021-04-29]
\usepackage{csquotes}[2024-04-04]
\usepackage{microtype}[2024-03-29]

\usepackage{cite}
\usepackage{amsmath,amssymb,amsthm,amsfonts,mathtools}
\usepackage{bm}
\usepackage{doi}[2018-09-09]
\usepackage{ebproof}[2021-01-28]
\usepackage{thmtools}[2023-05-04]
\usepackage[svgnames]{xcolor}[2024-09-29]
\usepackage{xparse}[2025-10-09]
\usepackage{xspace}[2014-10-28]
\usepackage{stmaryrd}[1994-03-03]
\usepackage{multirow}
\usepackage{makecell}
\usepackage{booktabs}

\usepackage{gobble-user}[2019-01-04]
\usepackage{hyperref}[2024-11-05]
\usepackage{mathpartir}[2016-02-24]
\usepackage{mathcommand}[2021-06-07]
\usepackage{mleftright}[2019-12-03]
\usepackage{tikz}
\usepackage{newunicodechar}[2018-04-08]
\usepackage{usebib}[2012-03-17]
\usepackage{varwidth}[2009-03-30]
\usepackage{wrapfig}
\usepackage[titleref, user]{zref}[2023-09-14]
\usepackage{zref-clever}[2024-09-30]

\allowdisplaybreaks

\AddToHook{cmd/caption/before}{\ifvmode\else\par\fi\removelastskip}

\NewExpandableDocumentCommand\texthorizbar{}{―}
\DeclareTextFontCommand\strong{\bfseries\mathversion{bold}}

\DeclareMathSymbol\mathexclam{\mathord}{operators}{33}
\NewDocumentCommand\Or{}{\:\mathrel{\text{or}}\:}
\DeclareDocumentCommand\And{}{\:\mathrel{\text{and}}\:}

\newmuskip\punctmuskip
\newcount\punctpenalty
\NewDocumentCommand\DeclareMathPunct{mm}{
  \begingroup
  \lccode`~=`#1
  \lowercase{\endgroup\def~}%
    {\mathpunct{#2}\unskip\unpenalty\penalty\punctpenalty\mskip\punctmuskip}
  \AddToHook{begindocument/end}
    {\mathchardef#2=\mathcode`#1\mathcode`#1=\string'100000\relax}
}
\DeclareMathPunct{,}{\mathcomma}
\DeclareMathPunct{.}{\mathperiod}

\pdfglyphtounicode{tfm:cmex10/parenleftbig}      {0028} 
\pdfglyphtounicode{tfm:cmex10/parenrightbig}     {0029} 
\pdfglyphtounicode{tfm:cmex10/braceleftbig}      {007B} 
\pdfglyphtounicode{tfm:cmex10/bracerightbig}     {007D} 
\pdfglyphtounicode{tfm:cmex10/angbracketleftbig} {27E8} 
\pdfglyphtounicode{tfm:cmex10/angbracketrightbig}{27E9} 
\pdfglyphtounicode{tfm:cmex10/vextendsingle}     {007C} 
\pdfglyphtounicode{tfm:cmex10/parenleftBig}      {0028} 
\pdfglyphtounicode{tfm:cmex10/parenrightBig}     {0029} 
\pdfglyphtounicode{tfm:cmex10/braceleftbigg}     {007B} 
\pdfglyphtounicode{tfm:cmex10/parenleftBigg}     {0028} 
\pdfglyphtounicode{tfm:cmex10/parenrightBigg}    {0029} 
\pdfglyphtounicode{tfm:cmex10/braceleftBigg}     {007B} 
\pdfglyphtounicode{tfm:cmex10/summationtext}     {2211} 
\pdfglyphtounicode{tfm:cmex10/producttext}       {220F} 
\pdfglyphtounicode{tfm:cmmi10/arrowhookleft}     {21AA} 
\pdfglyphtounicode{tfm:cmsy7/mapsto}             {21A6} 
\pdfglyphtounicode{tfm:cmsy10/mapsto}            {21A6} 

\RenewDocumentCommand\mktextquote{m +m m m m m}{#1#2#4#5#3#6}

\ebproofset{
  center=false,
  label separation=\labelsep,
  rule margin=\fboxsep,
  template={%
    \normalfont
    \raisebox
      {\totalheight-\height-\dp\strutbox}
      {$\UseName{m@th}\inserttext$}%
    \mathstrut
  },
}
\AddToHook{env/prooftree/begin}{\let\&=&}

\newcounter{rule}

\zcRefTypeSetup{rule}{
  Name-sg=Rule,
  name-sg=rule,
  Name-pl=Rules,
  name-pl=rules,
  reffont=\mdseries\upshape,
}

\letcs\NewEbproofStatement{__ebproof_new_statement:nnn}
\newdimen\treeaxis
\ebproofnewstyle{caption}{
  proof style=downwards,
  template={%
    \llap{\makebox[\treeaxis][l]
      {\normalfont\small\strong{\inserttext}}\mathstrut}%
  },
  rule margin=\smallskipamount,
  rule style=no rule,
}
\NewEbproofStatement{caption}{>{\TrimSpaces}m e\label}{
  \AddToHookNext{env/prooftree/end}{
    \GetTitleStringExpand{#1}
    \global\cslet{@currentlabel}\GetTitleStringResult
    \global\cslet{@currentlabelname}\GetTitleStringResult
    \global\csdef{@currentcounter}{rule}
    \rewrite{\global\treeaxis=\treemark{axis}\copy\treebox}
    \infer[caption]1{\IfValueT{#2}{\stepcounter{rule}\UseName{ltx@label}{#2}}#1}
  }
}
\RenewDocumentEnvironment{prooftree*}{O{} b}
  {}
  {\[\begin{prooftree}[#1]#2\end{prooftree}\]\ignorespacesafterend}
\NewDocumentCommand\by{m r()}{\infer#1[\zcref[noname]{#2}]}

\pdfglyphtounicode{tfm:eufm10/M}{1D510}

\DeclareFontFamily{U}{fontawesome5}{}
\DeclareFontShape{U}{fontawesome5}{b}{n}{<-> s * [0.8333] fa5free2solid}{}
\DeclareSymbolFont{fa5symbol}{U}{fontawesome5}{b}{n}
\DeclareMathSymbol{\MathFaLockOpen}{\mathord}{fa5symbol}{26}
\pdfglyphtounicode{tfm:fa5free1regular/envelope}{2709}
\pdfglyphtounicode{tfm:fa5free2solid/lock-open}{1F513}
\DeclareDocumentMathCommand\faLockOpen{}{\MathFaLockOpen}

\graphicspath{{images/}}

\def\kpSymbolsC{jkplsyc}
\LoadFontDefinitionFile{U}{jkplsyc}
\DeclareSymbolFont{kpSymbolsC}{U}{\kpSymbolsC}{m}{n}
\SetSymbolFont{kpSymbolsC}{bold}{U}{\kpSymbolsC}{bx}{n}
\pdfglyphtounicode{tfm:\kpSymbolsC/medcircle}{26AC}
\DeclareMathSymbol{\mdsmwhtcircle}{\mathord}{kpSymbolsC}{7}

\input{listings-lambda-calculi.prf}
\NewDocumentCommand\code{}
  {\lstinline[language=lambda-calculi, basicstyle=\ttfamily]}

\DeclareExpandableDocumentCommand\MathparLineskip{}
  {\deflength{\lineskiplimit}{\abovedisplayskip}\relax
  \deflength{\lineskip}{\abovedisplayskip}}
\ExpandArgs{c}
\apptocmd\MathparBindings{\UseName{@restorepar}}{}{}

\DeclarePairedDelimiter{\parens}{\lparen}{\rparen}
\DeclarePairedDelimiter{\brackets}{\lbrack}{\rbrack}
\DeclarePairedDelimiter{\Brackets}{\llbracket}{\rrbracket}

\DeclarePairedDelimiter{\tuple}{\langle}{\rangle}
\DeclarePairedDelimiter\abs\lvert\rvert

\DeclareMathOperator*\pos{pos}
\DeclareMathOperator*\restricted{\upharpoonright}

\DeclareMathOperator*\FC{\mathbf{FC}}
\DeclareMathOperator*\FV{\mathbf{FV}}
\DeclareMathOperator*\DomC{\mathbf{Dom}\sb{\mathrm{C}}}

\microtypesetup{stretch=30, shrink=40}

\AddToHook{begindocument}{\mleftright}

\newunicodechar{Γ}{\Gamma}
\newunicodechar{Δ}{\Delta}
\newunicodechar{Θ}{\Theta}
\newunicodechar{Λ}{\Lambda}
\newunicodechar{Π}{\Pi}
\newunicodechar{Σ}{\Sigma}
\newunicodechar{α}{\alpha}
\newunicodechar{β}{\beta}
\newunicodechar{γ}{\gamma}
\newunicodechar{δ}{\delta}
\newunicodechar{λ}{\lambda}
\newunicodechar{ρ}{\rho}
\newunicodechar{σ}{\sigma}
\newunicodechar{ℒ}{\mathcal{L}}
\newunicodechar{→}{\rightarrow}
\newunicodechar{↦}{\mapsto}
\newunicodechar{↪}{\hookrightarrow}
\newunicodechar{⇒}{\Rightarrow}
\newunicodechar{∀}{\forall}
\newunicodechar{∃}{\exists}
\newunicodechar{∅}{\emptyset}
\newunicodechar{∈}{\in}
\newunicodechar{∉}{\notin}
\newunicodechar{∋}{\ni}
\newunicodechar{∑}{\sum}
\newunicodechar{∖}{\setminus}
\newunicodechar{∗}{\ast}
\newunicodechar{∣}{\mid}
\newunicodechar{∧}{\land}
\newunicodechar{∨}{\lor}
\newunicodechar{∪}{\cup}
\newunicodechar{≔}{\coloneq}
\newunicodechar{≠}{\neq}
\newunicodechar{≡}{\equiv}
\newunicodechar{≼}{\preccurlyeq}
\newunicodechar{≽}{\succcurlyeq}
\newunicodechar{⊆}{\subseteq}
\newunicodechar{⊊}{\subsetneq}
\newunicodechar{⊑}{\sqsubseteq}
\newunicodechar{⊒}{\sqsupseteq}
\newunicodechar{⊢}{\vdash}
\newunicodechar{⊥}{\bot}
\newunicodechar{⊨}{\vDash}
\newunicodechar{⊩}{\Vdash}
\newunicodechar{⊬}{\nvdash}
\newunicodechar{⊭}{\nvDash}
\newunicodechar{⊮}{\nVdash}
\newunicodechar{⊴}{\unlhd}
\newunicodechar{⊵}{\unrhd}
\newunicodechar{⋅}{\cdot}
\newunicodechar{□}{\Box}
\newunicodechar{◦}{\circ}
\newunicodechar{⟨}{\langle}
\newunicodechar{⟩}{\rangle}
\newunicodechar{⨿}{\amalg}
\newunicodechar{⩴}{\Coloneq}
\newunicodechar{⪯}{\preceq}
\newunicodechar{⪰}{\succeq}
\newunicodechar{}{\mathord{\blacktriangleleft}}
\newunicodechar{}{\mathord{\blacktriangleright}}
\newunicodechar{𝔐}{\mathfrak{M}}

\LoadFontDefinitionFile{U}{stmry}
\DeclareFontShape{U}{stmry}{b}{n}{<-> ssub * stmry/m/n}{}
\pdfglyphtounicode{llbracket}{27E6}
\pdfglyphtounicode{rrbracket}{27E7}
\pdfglyphtounicode{llparenthesis}{2987}
\pdfglyphtounicode{rrparenthesis}{2988}

\UseName{clist_map_inline:nn}
  {theorem, lemma, corollary, definition}
  {\csdef{theH#1}{\UseName{the#1}}}

\begingroup
  \theoremstyle{claimstyle}
  \newtheorem*{case}{Case}
  \newtheorem*{subcase}{Subcase}
\endgroup

\usetikzlibrary{
  arrows,
  cd,
  decorations.pathmorphing,
}
\tikzset{
  >=stealth',
  line around/.style={decoration={pre length=#1, post length=#1}},
  stage/.style={
    decorate,
    decoration={snake, segment length=0.8em, amplitude=0.4ex},
    line around=0.3333em,
  },
  stage/.value forbidden,
  Transition/.pic={\draw[->, #1] (0, 0) to +(2em, 0);},
  Scope/.pic={\pic{Transition};},
  Stage/.pic={\pic{Transition={stage, decoration={pre length=0.2em}}};},
}
\tikzcdset{
  arrow style=tikz,
  arrows={semithick},
}

\newbibfield{author}
\bibinput{bibliography}

\NewDocumentCommand\Textcite{}{\textcite}

\ExplSyntaxOn
\regex_const:Nn \c_bmtt_bibtex_author_delimiter_regex { \s+ (?i) (?:and) \s+ }
\int_const:Nn \c_bmtt_max_cite_names_int { 2 }
\cs_new:Nn \bmtt_seq_to_sentence:N
  {
    \seq_use:Nnnn #1
      { ~ and \nobreakspace }
      { , ~ }
      { , ~ and \nobreakspace }
  }
\NewDocumentCommand \citeauthor { m !t. }
  {
    \group_begin:
      \exp_args:NNe \regex_split:NnN
        \c_bmtt_bibtex_author_delimiter_regex
        { \usebibentry { #1 } { author } }
        \l_tmpa_seq
      \seq_set_map:NNn \l_tmpa_seq \l_tmpa_seq { \clist_item:nn { ##1 } { 1 } }
      \int_compare:nTF
        { \seq_count:N \l_tmpa_seq > \c_bmtt_max_cite_names_int }
        {
          \seq_item:Nn \l_tmpa_seq { 1 } \nobreakspace
          et \tex_penalty:D \tex_exhyphenpenalty:D \c_space_tl
          al . \IfBooleanF{#2}{\@}
        }
        { \bmtt_seq_to_sentence:N \l_tmpa_seq }
    \group_end:
  }
\NewDocumentCommand \textcite { m }
  {
    \group_begin:
      \seq_set_from_clist:Nn \l_tmpa_seq { #1 }
      \seq_set_map:NNn \l_tmpa_seq \l_tmpa_seq
        {
          \citeauthor { ##1 } \nobreakspace \cite { ##1 }
        }
      \bmtt_seq_to_sentence:N \l_tmpa_seq
    \group_end:
  }
\ExplSyntaxOff

\UseName{zref@setdefault}{\textcolor{red}{\textbf{??}}}
\ztitlerefsetup{expand=true, cleanup={\let\ref\zref}}
\zcsetup{
  hyperref,
  nameinlink=false,
  lastsep={,\space and\nobreakspace},
  rangesep={\textendash},
  rangetopair=false,
}
\UseName{zref@newprop}{claim}{}
\zcRefTypeSetup{item}{
  Name-sg=,
  name-sg=,
  Name-pl=,
  name-pl=,
  refbounds={, \lparen, \rparen, },
}

\NewDocumentCommand\zcnameref{O{}}{\zcref*[nocap, noref, noabbrev, #1]}
\NewDocumentCommand\Zcref{s O{}}{\IfBooleanTF{#1}{\zcref*}{\zcref}[cap, #2]}
\NewDocumentCommand\Zcsubref{O{}}{\zcref[ref=subref, noname, #1]}

\NewExpandableDocumentCommand\logic{m}{\mathbf{#1}}
\def\logicS4{\logic{S4}}
\def\IS4{\logic{IS4}}
\def\CS4{\logic{CS4}}
\NewExpandableDocumentCommand\LTL{}{\logic{LTL}}
\NewExpandableDocumentCommand\PLL{}{\logic{PLL}}
\NewExpandableDocumentCommand\CML{}{\logic{CML}}
\NewExpandableDocumentCommand\BML{}{\logic{BML}}
\NewExpandableDocumentCommand\theLogic{}{\BML}

\NewDocumentCommand\Scope{}{\tikz[baseline=-axis_height] \pic{Scope};\xspace}
\NewDocumentCommand\Stage{}{\tikz[baseline=-axis_height] \pic{Stage};\xspace}
\NewDocumentCommand\tp{}{
  \mkern1mu\nonscript\mkern1mu
  \boldsymbol{\mathexclam}
  \mkern1mu\nonscript\mkern1mu
}
\NewDocumentCommand\bounded{m m}{\binder{#1} \within #2}
\NewDocumentCommand\cls{!>{\SplitArgument1{i>}}d()}{\bounded#1}
\NewDocumentCommand\var{>{\SplitArgument1:}r() !e@}
  {\binder{\firstoftwo#1} \has@{#2} \secondoftwo#1}
\NewDocumentCommand\poly{!d()}{∀\IfValueT{#1}{\cls(#1).}}
\NewDocumentCommand\open{!d()}
  {\IfValueT{#1}{\cramped{\sp{\mkern2mu\cls(#1)}}}}
\NewDocumentCommand\hasType{!e^}{
  \nobreak \mskip 5mu minus 4mu \mathord{:}
  \IfValueT{#1}{\sp{#1\mkern-5mu}}
  \mskip 6mu plus 1mu minus 3mu \relax
}
\NewDocumentCommand\has{!e@}{\IfValueTF{#1}{\hasType^{\binder{#1}}}{\hasType}}
\NewDocumentCommand\given{O{\UseName{delimsize}}}{
  \mathrel{}
  \nolinebreak \mathclose{}
  \nolinebreak #1 \vert
  \nolinebreak \mathopen{}
  \nolinebreak \mathrel{}
}
\newmuskip\bmttglue
\NewDocumentCommand\quo{>{\SplitArgument1{i>}}r() m}
  {\mathbf{quo}\nobreak
    \{^{\bounded #1} \penalty\punctpenalty \mskip\bmttglue #2 \}}
\NewDocumentCommand\unq{e@ m}
  {\mathbf{unq}\nobreak
    \{^{#1} \penalty\punctpenalty \mskip\bmttglue #2 \}}
\NewDocumentCommand\assign{m m m m}
  {\IfValueT{#2}{\binder{#2} \coloneq #4,} \binder{#1} \coloneq #3}
\NewDocumentCommand\subst{>{\SplitArgument1@}m >{\SplitArgument1@}m}{\assign#1#2}
\NewDocumentCommand\w{>{\SplitArgument1/}r[]}{\brackets{\subst#1}}

\NewDocumentCommand\predicate{m !d[]}{\mathcal{#1}\IfValueT{#2}{\Brackets{#2}}}

\NewExpandableDocumentCommand\Parens{m}
  {\lparen\mkern-\medmuskip\lparen #1 \rparen\mkern-\medmuskip\rparen}
\NewDocumentCommand\toF{r||}{\abs*{#1}}
\NewDocumentCommand\toB{r||}{#1^{⪰\,\tp}}

\NewDocumentEnvironment{parened}{b}
  {}
  {
    \left\lparen
      \varwidth\textwidth
        $\begin{multlined}#1\end{multlined}$
      \endvarwidth
    \right\rparen
  }
\NewDocumentCommand\NewTheoremEnvironment{m}{
  \NewDocumentEnvironment{#1*}
    {>{\TrimSpaces}O{} >{\TrimSpaces}e\label +b}
    {
      \ifhmode\unskip\unskip\fi
      \IfValueF{##2}{\undefined}%
      \UseName{zref@localaddprops}{main}{claim}%
      \UseName{zref@setcurrent}{claim}{\RestateClaim{##2}}%
      \begin{#1}[
        name={##1},
        label={##2},
        restate={
          [
            name={%
              \IfBlankF{##1}{##1\space\lparen}%
              On\space\zcpageref{##2}%
              \IfBlankF{##1}{\rparen}%
            }
          ]bmtt_restate_##2%
        }
      ]
        ##3
      \end{#1}
    }{\ignorespacesafterend}
}
\NewTheoremEnvironment{theorem}
\NewTheoremEnvironment{lemma}
\newtheorem*{notation*}{Notation}

\makeatletter
\NewDocumentEnvironment{inparaenum}{}
  {%
    \global\advance\@enumdepth\@ne
    \ExpandArgs{e}\usecounter{enum\@roman{\the\@enumdepth}}%
    \def\@itemlabel{\UseName{label\@listctr}}%
    \def\@item[##1]{\refstepcounter\@listctr\makelabel{##1}\ignorespaces}%
    \def\makelabel##1{##1\nobreakspace}%
    \ignorespaces
  }
  {\global\advance\@enumdepth\m@ne\ignorespacesafterend}
\makeatother

\NewDocumentEnvironment{conditions}{}
  {\left\lbrace\varwidth\linewidth\list
    {\strut\nolinebreak
      \UseName{labelitem\romannumeral\inteval{\UseName{@itemdepth}+1}}}
    {\leftmargin1.5em\labelwidth1.5em\parsep0pt\itemsep\smallskipamount}}
  {\unskip\endlist\endvarwidth\right.}

\NewDocumentCommand\RestateClaim{m}{%
  \ifx\csname bmtt_restate_#1\endcsname\relax
    \todo{#1}%
  \else
    \UseName{bmtt_restate_#1}*%
  \fi
  \ignorespaces
}

\NewDocumentCommand\todo{m}{\textcolor{red}{#1}}

\NewDocumentCommand\lamcirc{}{\ensuremath{\lambda^{\mdsmwhtcircle}}\xspace}
\NewDocumentCommand\lambox{}{\ensuremath{\lambda^{\Box}}\xspace}
\NewDocumentCommand\lamalpha{}{\ensuremath{\lambda^{\alpha}}\xspace}
\NewDocumentCommand\lamtri{}{\ensuremath{\lambda^{\triangleright}}\xspace}

\NewExpandableDocumentCommand\bml{}{$\BML$\xspace}

\NewDocumentCommand\ctext{m}{\raise0.2ex\hbox{\textcircled{\scriptsize{#1}}}}

\input{bmtt.tex}

\title{Bounded Modal Logic} 
\subtitle{Explicit Scope Dependencies in Multi-Stage Programming}

\author{Yuito Murase}{Kyoto University, Japan}{murase@fos.kuis.kyoto-u.ac.jp}{https://orcid.org/0000-0001-6038-6249}{Supported by JSPS KAKENHI Grant Number JP24KJ1363.}

\author{Akinori Maniwa}{Unaffiliated}{akinorimaniwa@outlook.com}{https://orcid.org/0009-0004-4872-4785}{}

\authorrunning{Y. Murase and A. Maniwa} 

\Copyright{Yuito Murase and Akinori Maniwa} 

\ccsdesc[500]{Theory of computation~Modal and temporal logics}

\keywords{%
  modal logic,
  modal lambda calculus,
  multi-stage programming%
} 

\category{} 

\relatedversion{} 

\nolinenumbers 

\EventEditors{John Q. Open and Joan R. Access}
\EventNoEds{2}
\EventLongTitle{42nd Conference on Very Important Topics (CVIT 2016)}
\EventShortTitle{CVIT 2016}
\EventAcronym{CVIT}
\EventYear{2016}
\EventDate{December 24--27, 2016}
\EventLocation{Little Whinging, United Kingdom}
\EventLogo{}
\SeriesVolume{42}
\ArticleNo{23}

\begin{document}

\maketitle

\begin{abstract}
It is widely known that proof systems for modal logic can be interpreted as type systems for multi-stage programming (MSP).  However, existing modal-logical foundations for MSP do not fully account for staged programs with complex scoping structures.  For example, a modal account of cross-stage persistence, in which free variables in generated code may refer to runtime bindings, has not yet been fully established.

This paper presents \emph{Bounded Modal Logic} (\bml), a constructive modal
logic with modalities bounded by names for scopes and first-order-style
quantification over those names. This makes scope dependencies of code
fragments explicit, thereby enabling reasoning about staged programs with
nontrivial scoping behavior, including cross-stage persistence.

We present a natural-deduction system and a Kripke semantics for \bml, and prove soundness and completeness. Its Curry--Howard calculus embeds the standard S4- and LTL-based staging calculi while also typing code that uses cross-stage persistence and is later executed by \texttt{run}. We further establish standard metatheoretic properties and a staged semantics supporting stage-by-stage execution.
\end{abstract}

\section{Introduction}\label{sec:intro}
\paragraph*{Modal Foundations for MetaML-style Multi-Stage Programming}
Multi-stage programming (MSP) constructs and executes code fragments at runtime; MetaML~\cite{journals/tcs/TahaS00,conf/popl/TahaN03} is a canonical calculus for this style of programming.
\begin{figure}[t]
  \vspace{-1.0\baselineskip}
  \begin{subfigure}{0.5\textwidth}
    \centering
    \includegraphics[page=1]{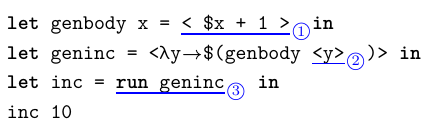}
    \captionsetup{skip=2pt}
    \caption{A staged program with quasi-quotation and \code|run|.}
    \label{fig:metaml-example}
  \end{subfigure}\hfill
  \begin{subfigure}{0.45\textwidth}
    \centering
    \includegraphics[page=2]{images/metaml-example.pdf}
    \captionsetup{skip=2pt}
    \caption{A staged program with cross-stage persistence.}
  \label{fig:metaml-example-csp}
  \end{subfigure}
  \caption{MetaML-style multi-stage programming.}
  \vspace{-0.75\baselineskip}
\end{figure}
\Zcref{fig:metaml-example} illustrates the core constructs of MetaML-style MSP.
In \ctext{1}, quotation \code|<dots>| forms a code fragment, whereas splicing \code|~dots| inserts generated code into an enclosing quotation.
Quotations can also produce code that is open with respect to a future-stage context: in \ctext{2}, \code|<y>| refers to the parameter of the generated function.
Finally, generated code can be executed at runtime using \code|run|, as in \ctext{3}.
Together, these constructs support concise functional programs that manipulate code as data and are used in MetaOCaml~\cite{journals/scp/Kiselyov26}, Template Haskell~\cite{conf/haskell/SheardJ02}, and Scala 3~\cite{conf/gpce/StuckiBO21}.

Under the Curry--Howard correspondence, MetaML-style MSP admits standard modal accounts based on S4~\cite{journals/jacm/DaviesP01} and linear-time temporal logic (LTL)~\cite{journals/jacm/Davies17}. Both provide quasi-quotation (\ctext{1}), but characterize different capabilities: S4 accounts for closed code generation together with \code|run| (\ctext{3}), whereas LTL accounts for open future-stage code (\ctext{2}). Modal foundations intended to accommodate both capabilities have therefore been studied extensively~\cite{conf/esop/MoggiTBS99,conf/ppdp/YuseI06,journals/corr/abs-1010-3806}; we review this line of work in \Zcref{sec:related}. However, reconciling the S4 and LTL accounts does not by itself yield a modal characterization of the full range of MetaML-style staging.

The program in \Zcref{fig:metaml-example-csp} illustrates the missing ingredient. Although \code|inc| is bound at runtime (\ctext{4}), it occurs inside a quotation (\ctext{5}) and remains tied to that lexical binding when the generated code is executed by \code|run| (\ctext{6}). This is a form of \emph{cross-stage persistence} (CSP)~\cite{journals/tcs/TahaS00,conf/flops/HanadaI14,journals/pacmpl/XieWNY23}. The quoted code is thus simultaneously dependent on an enclosing runtime scope and executable by \code|run|. This combination cannot be characterized merely as the coexistence of the standard LTL capability of constructing open future-stage code and the standard S4 capability of executing closed code. What is needed is a single modal logic, equipped with a proof theory and Kripke semantics, that accounts for this interaction.

\paragraph*{Birelational Kripke Structures in Staged Programs}
We make this issue precise in terms of the scope and stage relationships
required by a staged program.  As illustrated in
\Zcref{fig:syntacticstructure}, we represent these requirements by a
birelational Kripke frame.
Its nodes are scopes, and its two accessibility relations are described below.
\begin{figure}[b]
  \vspace{-1.0em}
  \begin{subfigure}{0.35\textwidth}
    \raisebox{-\totalheight}{\includegraphics[page=1]{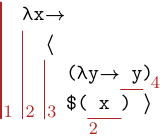}}\hfill
    \raisebox{-\totalheight}{\includegraphics[page=1]{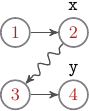}}
    \caption{Kripke structure with S4~\cite{journals/jacm/DaviesP01}}\label{fig:syntacticstructure:Kripke-style}
  \end{subfigure}\hfill
  \begin{subfigure}{0.6\textwidth}
    \raisebox{-\totalheight}{\includegraphics[page=2]{lambda-terms.pdf}}\hfill
    \raisebox{-\totalheight}{\includegraphics[page=2]{syntactic-structures.pdf}}
    \caption{Kripke structure with LTL~\cite{journals/jacm/Davies17}.}\label{fig:syntacticstructure:lamcirc}
  \end{subfigure}

  \centering
  \medskip\noindent
  \begin{subfigure}{0.65\textwidth}
    \centering
    \raisebox{-\totalheight/2}{\includegraphics[page=3]{lambda-terms.pdf}}\qquad
    \raisebox{-\totalheight/2}{\includegraphics[page=3]{syntactic-structures.pdf}}
    \caption{Kripke structure of \Zcref{fig:metaml-example-csp}}\label{fig:syntacticstructure:metaml}
  \end{subfigure}
  \captionsetup{skip=2pt}
  \caption{Staged programs and their Kripke structures.}
  \label{fig:syntacticstructure}
\end{figure}
\begin{itemize}
  \item A \emph{scope-nesting} edge \Scope permits an inner scope to use
  variables available in an outer scope. For example, a variable binding that
  introduces a scope gives rise to such an edge.
  \item A \emph{stage-transition} edge \Stage introduces a stage boundary,
  allowing splices in an inner scope to escape to an outer scope. A quotation
  gives rise to such an edge.
\end{itemize}

The effect of quotations differs in \Zcref{fig:syntacticstructure:Kripke-style,fig:syntacticstructure:lamcirc}, which reflects the difference between S4 and LTL.
In \Zcref{fig:syntacticstructure:Kripke-style}, the quote introduces a new scope 3 with only a stage transition from 2.
This behavior corresponds to the S4 interpretation of MetaML, where a quote always moves to the top-level scope, where no variable can be used.

On the other hand, in \Zcref{fig:syntacticstructure:lamcirc}, the inner quote
introduces scope 5, with a stage-transition edge from scope 2 and a
scope-nesting edge from scope 4. The latter accounts for the occurrence of
\code|y| in scope 5. More generally, LTL keeps scopes at different stages
strictly separated. A scope introduced by a quotation receives a
stage-transition edge from the source scope and a scope-nesting edge from the
innermost scope at the target stage. The latter allows the quotation to use
variables bound outside it at the same stage.

Such behavior reflects birelational Kripke semantics for S4 and LTL~\cite{simpson1994proof,journals/iandc/KojimaI11,alechina+2001categorical}, when we regard scope nesting as intuitionistic accessibility and stage transition as modal accessibility.

\Zcref{fig:syntacticstructure:metaml} illustrates the Kripke structure of the
staged program in \Zcref{fig:metaml-example-csp}. In this structure, the
quotation introduces the new scope 3 that is connected to scope 2 by both a
stage transition and scope nesting; the scope-nesting relation is necessary to
account for the occurrence of \code|inc|. This structure is not captured by
either S4 or LTL: S4 lacks the required scope-nesting edge, while LTL supplies
such an edge only from within the target stage, not from the source of the
stage transition itself.

This comparison identifies the gap in the existing modal accounts: the relevant
programs involve an interaction between scope nesting and stage transition that
is not represented by either S4 or LTL. We are thus led to the following
question: \emph{what kind of modal logic can account for such scope-and-stage
structures in MetaML-style staged programs?}

\paragraph*{Our Proposal: Bounded Modal Logic}
This paper answers this question by proposing \emph{Bounded Modal Logic}
(\bml), a constructive modal logic for reasoning about birelational Kripke
structures of staged programs. To capture the interaction between scope
nesting and stage transition, \bml makes scopes nameable in the object language
and uses such names as bounds on modalities. Its formula language is given by:
\[
  \bmttnt{A}, \bmttnt{B} \Coloneqq
  \bmttnt{p}
  \mid \bmttnt{A}  \mathbin{\rightarrow}  \bmttnt{B}
  \mid  \Box^{\mathord{\succeq}  \gamma }  \bmttnt{A} 
  \mid  \forall  \gamma_{{\mathrm{1}}} \within  \gamma_{{\mathrm{2}}} . \bmttnt{A} 
\]
Here, $\gamma$, $\gamma_{{\mathrm{1}}}$, and $\gamma_{{\mathrm{2}}}$ range over names for elements of
the underlying Kripke structure, regarded here as variable scopes. We call
such names \emph{classifiers}. The relation $\gamma_{{\mathrm{1}}} \, \preceq \, \gamma_{{\mathrm{2}}}$ denotes scope
nesting: the scope named by $\gamma_{{\mathrm{2}}}$ is nested inside the
scope named by $\gamma_{{\mathrm{1}}}$. We also have a special classifier $ \exclam $ that
represents the global scope.

The central connective is the \emph{bounded modality} $ \Box^{\mathord{\succeq}  \gamma }  \bmttnt{A} $.
The symbol $\succeq \gamma$ indicates a lower bound with respect to the
scope-nesting order. Thus, $ \Box^{\mathord{\succeq}  \gamma }  \bmttnt{A} $ refines an ordinary modal type by
requiring the target of the stage transition to lie above the scope named by
$\gamma$. Intuitively, it describes code of type $\bmttnt{A}$ whose generated
stage may still use the bindings available in that scope.
The following diagram illustrates this semantic pattern.\par
{\centering
  \vspace{0.5\baselineskip}
  \includegraphics{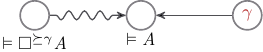}
\par}
\noindent{}Thus, the annotation $\gamma$ records the source scope on which the code may
depend; for example, \code|< inc 10 >| in
\Zcref{fig:syntacticstructure:metaml} has type $ \Box^{\mathord{\succeq}  \delta }  \mathtt{int} $, where $\delta$
names the scope of \code|inc|.

In addition, \emph{polymorphic classifier quantifier} $ \forall  \gamma_{{\mathrm{1}}} \within  \gamma_{{\mathrm{2}}} . \bmttnt{A} $ introduces a quantification over $\gamma_{{\mathrm{1}}}$ with lower bound $\gamma_{{\mathrm{2}}}$ with regard to scope nesting. Such a quantifier is useful to generalize functions with classifiers: $ \forall  \gamma_{{\mathrm{1}}} \within  \gamma_{{\mathrm{2}}} . \bmttsym{(}   \Box^{\mathord{\succeq}  \gamma_{{\mathrm{1}}} }  \bmttnt{A}   \mathbin{\rightarrow}   \Box^{\mathord{\succeq}  \gamma_{{\mathrm{1}}} }  \bmttnt{A}   \bmttsym{)} $ can be interpreted as a type for a function over code fragments with arbitrary scope $\gamma_{{\mathrm{1}}}$, which is nested within $\gamma_{{\mathrm{2}}}$.
Thus, \bml has an aspect of first-order predicate logic whose domain is a birelational Kripke structure. Note that atomic propositions in \bml are only propositional variables, and relations between classifiers do not form propositions on their own, unlike ordinary predicate logic. As we shall see later, those assertions will be treated as first-class judgments in our proof system.

\paragraph*{Contributions and Organization}
\Zcref{sec:structure, sec:logic:nd, sec:logic:semantics, sec:logic:completeness}
develop \bml-structures, a natural-deduction system, and Kripke semantics, and
establish soundness and completeness.  \Zcref{sec:calc, sec:staging} develop
the corresponding Curry--Howard calculus and staged reduction semantics,
together with their metatheory.  \Zcref{sec:s4andltl} establishes a logical
correspondence with $\CS4$ and type-preserving embeddings of the corresponding
S4- and LTL-based staging calculi.

\section{Syntactic Structure of Staged Programs}\label{sec:structure}
Before introducing \bml, we formally define a birelational Kripke structure, called a \bml-structure, that captures the structure of staged programs.
\begin{definition}[\bml-Structure]
  A\/ \bml-structure is a quintuple\/ $ \langle D ,  \preceq ,  \sqsubseteq ,  V ,   \exclam  \rangle $ where
  \begin{itemize}
  \item $ \langle D , \preceq \rangle $ is a preordered set with
  \item a root\/ $ \exclam $, the least element of $D$ with respect to $ \preceq $;
  \item $ \sqsubseteq $ is a preorder on $D$ with \emph{stability}\footnote{%
        The term `stable' is
        borrowed from \Textcite{stell+2016bi-intuitionistic}.%
      } condition $\mathord{ \mathrel{(\mathord{ \preceq }) } } \subseteq \mathord{ \mathrel{(\mathord{ \sqsubseteq }) } }$, or equivalently: left-stability $\mathord{ \mathrel{(\mathord{  \mathrel{\mathord{ \preceq } \mathbin{;} \mathord{ \sqsubseteq } }  }) } } \subseteq \mathord{ \mathrel{(\mathord{ \sqsubseteq }) } }$ and right-stability $\mathord{ \mathrel{(\mathord{  \mathrel{\mathord{ \sqsubseteq } \mathbin{;} \mathord{ \preceq } }  }) } } \subseteq \mathord{ \mathrel{(\mathord{ \sqsubseteq }) } }$;
    \item $V$ assigns each atom~$p$ to an upward-closed subset of~$D$.
  \end{itemize}
\end{definition}

We regard a \bml-structure as capturing the essential syntactic structure of staged programs, where the intuitionistic relation $ \preceq $ denotes scope nesting and the modal relation $ \sqsubseteq $ denotes stage transition. $D$ is a set of scopes or locations in programs. $ \exclam $ represents the global scope, which behaves as a bottom element with regard to $ \preceq $. We impose the stability condition for the modal relation to match the proof system introduced in~\Zcref{sec:logic:nd}.

Before defining the proof system of \bml, we discuss how \bml-structures arise in Kripke/Fitch-style proof systems~\cite{journals/jacm/DaviesP01,conf/fossacs/Clouston18,journals/pacmpl/ValliappanRC22,journals/pacmpl/GratzerSB19,conf/esop/MuraseNI23}, which provide a basis for S4-based modal calculi with quasi-quotation. This introduces several key notions that we will use in the next section. We recall a Kripke/Fitch-style proof system for a constructive variant of $\logicS4$ from the literature. We derive propositions via the judgment $ {\Gamma } \vdash_{\logicS4}  A $, where a context ${\Gamma }$ has the following structure.
\[ {\Gamma }, {\Delta } \Coloneqq  \varepsilon  \mid {\Gamma }  \bmttsym{,}  A \mid {\Gamma }  \bmttsym{,}  \mathord{\blacktriangleright} \]

\Textcite{journals/jacm/DaviesP01} mention that $ \mathord{\blacktriangleright} $ represents modal transition, and propositions between each $ \mathord{\blacktriangleright} $ represent assumptions holding at a specific world.\footnote{To be precise, \Textcite{journals/jacm/DaviesP01} introduced a stack structure over contexts to represent modal transitions between contexts. In our definition, we can regard $ \mathord{\blacktriangleright} $ to delimit contexts. With regard to Fitch-style proof systems~\cite{conf/fossacs/Clouston18,journals/pacmpl/ValliappanRC22,journals/pacmpl/GratzerSB19,conf/esop/MuraseNI23}, we can identify $ \mathord{\blacktriangleright} $ with $\faLockOpen$ in their definition of contexts.} We elaborate this idea to make a correspondence between a context and a \bml-structure. For example, a context $\bmttnt{p}  \bmttsym{,}  \mathord{\blacktriangleright}  \bmttsym{,}  \bmttnt{p}  \bmttsym{,}  \bmttnt{q}  \bmttsym{,}  \mathord{\blacktriangleright}$ corresponds to the \bml-structure below:
\begin{center}
  \includegraphics{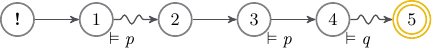}
\end{center}
$\Scope$ and $\Stage$ represent $ \preceq $ and $ \sqsubseteq $, respectively. Here, we can regard a context as carrying three kinds of information:
\begin{enumerate}
\item A \bml-structure itself;
\item Assumptions holding at each element of the \bml-structure; and
\item The current position in the \bml-structure where deduction is performed (drawn as a yellow double circle).
\end{enumerate}

In particular, we use the notion of the current position of a \bml-structure; we simply call it a \emph{position} in the rest of the paper. Then, each item in a context works in the following way:
\begin{itemize}
\item An empty context $ \varepsilon $ corresponds to a \bml-structure with a single element $ \exclam $. No assumption holds there, and the position is $ \exclam $.
\item When adding an assumption $A$ to ${\Gamma }$, it introduces a new element with an assumption $A$. The position moves to the new element, and it introduces $ \preceq $ along with the movement.
\item When adding $ \mathord{\blacktriangleright} $ to ${\Gamma }$, it introduces a new element, and the position moves to it. It also introduces $ \sqsubseteq $ along with the movement.
\end{itemize}

We can informally understand derivation rules with respect to \bml-structure of contexts, regarding them as Kripke models for intuitionistic modal logic~\cite{simpson1994proof,alechina+2001categorical}. We explain three selected rules from the viewpoint of \bml-structure:
\vspace{-0.75\baselineskip}
\begin{linenomath}
  \begin{mathpar}
    \begin{prooftree}
      \caption{Hyp}\label{rule:ks-derive-hyp}
      \hypo{\mathord{\blacktriangleright} \, \notin \, {\Gamma }_{{\mathrm{2}}}}
      \infer1{ {\Gamma }_{{\mathrm{1}}}  \bmttsym{,}  A  \bmttsym{,}  {\Gamma }_{{\mathrm{2}}} \vdash_{\logicS4}  A }
    \end{prooftree}
    \and
    \begin{prooftree}
      \caption{$ \Box $-I}\label{rule:ks-derive-nec-i}
      \hypo{ {\Gamma }  \bmttsym{,}  \mathord{\blacktriangleright} \vdash_{\logicS4}  A }
      \infer1{ {\Gamma } \vdash_{\logicS4}   \Box A  }
    \end{prooftree}
    \and
    \begin{prooftree}
      \caption{$ \Box $-E}\label{rule:ks-derive-nec-e}
      \hypo{ {\Gamma }_{{\mathrm{1}}} \vdash_{\logicS4}   \Box A  }
      \infer1{ {\Gamma }_{{\mathrm{1}}}  \bmttsym{,}  {\Gamma }_{{\mathrm{2}}} \vdash_{\logicS4}  A }
    \end{prooftree}
  \end{mathpar}
\end{linenomath}
\begin{description}
\item[\ref{rule:ks-derive-hyp}:] As ${\Gamma }_{{\mathrm{2}}}$ does not include $ \mathord{\blacktriangleright} $, the relations introduced by ${\Gamma }_{{\mathrm{2}}}$ are all $ \preceq $. As $ \preceq $ is reflexive and transitive, we use persistency to conclude that $A$ holds at the position of ${\Gamma }_{{\mathrm{1}}}  \bmttsym{,}  A  \bmttsym{,}  {\Gamma }_{{\mathrm{2}}}$.
\item[\ref{rule:ks-derive-nec-i}:] The premise $ {\Gamma }  \bmttsym{,}  \mathord{\blacktriangleright} \vdash_{\logicS4}  A $ states that $A$ holds at the element that is reachable by $ \sqsubseteq $ from the position of ${\Gamma }$. Hence, we can conclude $ \Box A $ at the position of ${\Gamma }$.
\item[\ref{rule:ks-derive-nec-e}:] As the relations introduced by ${\Gamma }_{{\mathrm{2}}}$ include both $ \preceq $ and $ \sqsubseteq $, we get a single transition with $ \sqsubseteq $ using the stability condition and reflexivity/transitivity of $ \sqsubseteq $. Hence, we can safely conclude $A$ at the position of ${\Gamma }_{{\mathrm{1}}}  \bmttsym{,}  {\Gamma }_{{\mathrm{2}}}$.
\end{description}

When proof terms are assigned to these rules, the structural information
carried by contexts is reflected in the syntax of modal lambda terms:
hypotheses correspond to variable bindings, while $ \mathord{\blacktriangleright} $ records the
stage transition induced by quotation. This makes it natural to relate modal
lambda terms to \bml-structures. In \Zcref{sec:logic:nd,sec:calc}, we develop
a richer context structure and a corresponding lambda calculus for \bml.

\section{Natural Deduction}\label{sec:logic:nd}
Having introduced the underlying \bml-structure as a conceptual basis, we now formalize reasoning over it by presenting a natural-deduction proof system.
This proof system is designed to internalize the intuitions observed in Kripke/Fitch-style proof systems.
First, we annotate each item of a context with a \emph{classifier}, following the tradition of \emph{labelled proof systems}~\cite{book/Gabbay96,negri2005proof,journals/entcs/ReedP09}. This allows us to refer to elements in a \bml-structure.
\[ \Gamma, \Delta \Coloneqq  \varepsilon  \mid \Gamma, \bmttnt{A}^{\gamma} \mid \Gamma, \mathord{\blacktriangleright} ^{\gamma} \]
We also use an \emph{initial} classifier $ \exclam $ to refer to an empty context.
We then extend the structure of contexts as follows (For clarity, we color classifiers if they are newly introduced):
\[ \Gamma, \Delta \Coloneqq  \varepsilon  \mid \Gamma  \bmttsym{,}   { \bmttnt{A} }^{\binder{ \gamma } }  \mid \Gamma  \bmttsym{,}   \mathord{\blacktriangleright} ^{\binder{ \gamma_{{\mathrm{1}}} } \within \gamma_{{\mathrm{2}}} }  \mid \Gamma  \bmttsym{,}   \mathord{\blacktriangleleft} ^{ \gamma }  \mid \Gamma  \bmttsym{,}   \binder{ \gamma_{{\mathrm{1}}} } \within \gamma_{{\mathrm{2}}}  \]
Empty contexts and hypotheses behave as before.
$ \mathord{\blacktriangleright} ^{\binder{ \gamma_{{\mathrm{2}}} } \within \gamma_{{\mathrm{1}}} } $ introduces an intuitionistic transition from $\gamma_{{\mathrm{1}}}$ to $\gamma_{{\mathrm{2}}}$, in addition to a modal transition.
$ \mathord{\blacktriangleleft} ^{ \gamma_{{\mathrm{1}}} } $ is a new item that only moves position to $\gamma_{{\mathrm{1}}}$, which requires that there is a modal transition in the opposite direction of the movement.
$\gamma_{{\mathrm{2}}} \,  \succeq  \, \gamma_{{\mathrm{1}}}$ is also a new item that introduces a new element $\gamma_{{\mathrm{2}}}$, and an intuitionistic transition from $\gamma_{{\mathrm{1}}}$ to $\gamma_{{\mathrm{2}}}$.
We summarize the behavior of each item in \Zcref{fig:contextsummary}. \Zcref{fig:intuition} provides concrete examples of contexts and their corresponding \bml-structures.
\begin{table}[bp]
  \centering
  \caption{Summary of the context syntax: what happens when $\Gamma$ is extended with each item. Colored classifiers indicate that they are introduced by the items.}%
  \label{fig:contextsummary}
  \begin{threeparttable}
    \begin{tabular}{c c c c c c}
      \toprule
      \multirowcell{2}[-\jot/2]{\textbf{Item}}
      &
        \multirowcell{2}[-\jot/2]{\textbf{Moves position?}\\\textbf{(to)}}
      &
        \multicolumn{3}{c}{\textbf{Assumptions added}}
      &
        \multirowcell{2}[-\jot/2]{\textbf{Consumed}\\\textbf{by}}
      \\
      \cmidrule(lr){3-5}
      &
      &
        \textbf{($ \preceq $)}
      &
        \textbf{($ \sqsubseteq $)}
      &
        \textbf{Proposition} & \\
      \midrule
      $ { \bmttnt{A} }^{\binder{ \gamma } } $
      & Yes\;($\gamma$)
      & $ \mathrm{pos} ( \Gamma )  \, \preceq \, \gamma$
      & \texthorizbar
        & $\bmttnt{A}$ holds at $\gamma$
          & (\ref{rule:derive-to-i})
      \\
      $ \mathord{\blacktriangleright} ^{\binder{ \gamma_{{\mathrm{2}}} } \within \gamma_{{\mathrm{1}}} } $
      & Yes\;($\gamma_{{\mathrm{2}}}$)
      & $\gamma_{{\mathrm{1}}} \, \preceq \, \gamma_{{\mathrm{2}}}$
      & $ \mathrm{pos} ( \Gamma )  \, \sqsubseteq \, \gamma_{{\mathrm{2}}}$
        & \texthorizbar
          & (\ref{rule:derive-bm-i})
      \\
      $ \mathord{\blacktriangleleft} ^{ \gamma } $
      & Yes\;($\gamma$)
      & \texthorizbar
      & \texthorizbar\rlap{\tnote{1}}
        & \texthorizbar
          & (\ref{rule:derive-bm-e})
      \\
      $\gamma_{{\mathrm{2}}} \,  \succeq  \, \gamma_{{\mathrm{1}}}$
      & No
      & $\gamma_{{\mathrm{1}}} \, \preceq \, \gamma_{{\mathrm{2}}}$
      & \texthorizbar
        & \texthorizbar
          & (\ref{rule:derive-polycls-i})
      \\
      \bottomrule
    \end{tabular}
    \begin{tablenotes}
      \item[1] $\gamma \, \sqsubseteq \,  \mathrm{pos} ( \Gamma ) $ is not added as an assumption, but is required to be deduced from $\Gamma$.
    \end{tablenotes}
  \end{threeparttable}
\end{table}
\begin{figure}[bpt]
  \centering
  \NewDocumentCommand\subfig{O{} m}
    {\subcaptionbox{#2}[#1\textwidth]
      {\includegraphics[page=\inteval{\value{subfigure}+1}]
        {context-structures.pdf}}}%
  \hfil
  \subfig[0.2]{$ \varepsilon $.\label{fig:intuition:a}}\hfil
  \subfig[0.3]{$ { \bmttnt{p} }^{\binder{ \gamma_{{\mathrm{1}}} } }   \bmttsym{,}   { \bmttnt{q} }^{\binder{ \gamma_{{\mathrm{2}}} } } $.\label{fig:intuition:b}}\hfil
  \par\bigskip
  \hfil
  \subfig[0.3]{$ { \bmttnt{p} }^{\binder{ \gamma_{{\mathrm{1}}} } }   \bmttsym{,}   { \bmttnt{q} }^{\binder{ \gamma_{{\mathrm{2}}} } }   \bmttsym{,}   \mathord{\blacktriangleright} ^{\binder{ \gamma_{{\mathrm{3}}} } \within  \mathord{\boldsymbol{!} }  } $.\label{fig:intuition:c}}\hfil
  \subfig[0.3]{$ { \bmttnt{p} }^{\binder{ \gamma_{{\mathrm{1}}} } }   \bmttsym{,}   { \bmttnt{q} }^{\binder{ \gamma_{{\mathrm{2}}} } }   \bmttsym{,}   \mathord{\blacktriangleright} ^{\binder{ \gamma_{{\mathrm{3}}} } \within  \mathord{\boldsymbol{!} }  }   \bmttsym{,}   \mathord{\blacktriangleleft} ^{ \gamma_{{\mathrm{1}}} } $.\label{fig:intuition:d}}\hfil
  \subfig[0.36]{$ { \bmttnt{p} }^{\binder{ \gamma_{{\mathrm{1}}} } }   \bmttsym{,}   { \bmttnt{q} }^{\binder{ \gamma_{{\mathrm{2}}} } }   \bmttsym{,}   \mathord{\blacktriangleright} ^{\binder{ \gamma_{{\mathrm{3}}} } \within  \mathord{\boldsymbol{!} }  }   \bmttsym{,}   \mathord{\blacktriangleleft} ^{ \gamma_{{\mathrm{1}}} }   \bmttsym{,}   \binder{ \gamma_{{\mathrm{4}}} } \within \gamma_{{\mathrm{3}}} $.\label{fig:intuition:e}}\hfil
  \caption{\bml contexts and their corresponding \bml-structures. \Scope and \Stage indicate intuitionistic and modal transitions, respectively. Each yellow double circle indicates the current position of each context.}\label{fig:intuition}
  \vspace{-0.75\baselineskip}
\end{figure}

We proceed with formal definitions. First we define $ \mathrm{pos} ( \Gamma ) $, the position of $\Gamma$.
\begin{definition}[Position of Contexts]\samepage
  \begin{align*}
     \mathrm{pos} (  \varepsilon  )  &=  \exclam  &   \mathrm{pos} ( \Gamma  \bmttsym{,}   { \bmttnt{A} }^{\binder{ \gamma } }  )  &= \gamma &  \mathrm{pos} ( \Gamma  \bmttsym{,}   \mathord{\blacktriangleright} ^{\binder{ \gamma_{{\mathrm{1}}} } \within \gamma_{{\mathrm{2}}} }  )  &= \gamma_{{\mathrm{1}}}\\
     \mathrm{pos} ( \Gamma  \bmttsym{,}   \mathord{\blacktriangleleft} ^{ \gamma }  )  &= \gamma &  \mathrm{pos} ( \Gamma  \bmttsym{,}   \binder{ \gamma_{{\mathrm{1}}} } \within \gamma_{{\mathrm{2}}}  )  &=  \mathrm{pos} ( \Gamma ) 
  \end{align*}
\end{definition}
For the sake of space, we introduce a shorthand notation for positions.
\begin{notation*}
 We write\/ $ \Gamma ^{\position{ \gamma } } $ to represent\/ $\Gamma$ with its position $\gamma$. When we write a context with multiple meta-variables like\/ $ \Gamma_{{\mathrm{1}}} ^{\position{ \gamma_{{\mathrm{1}}} } }   \bmttsym{,}   \Gamma_{{\mathrm{2}}} ^{\position{ \gamma_{{\mathrm{2}}} } } $, then it means that\/ $ \mathrm{pos} ( \Gamma_{{\mathrm{1}}} )  = \gamma_{{\mathrm{1}}}$ and\/ $ \mathrm{pos} ( \Gamma_{{\mathrm{1}}}  \bmttsym{,}  \Gamma_{{\mathrm{2}}} )   \bmttsym{=}  \gamma_{{\mathrm{2}}}$ hold. Note that it does not mean\/ $ \mathrm{pos} ( \Gamma_{{\mathrm{2}}} )   \bmttsym{=}  \gamma_{{\mathrm{2}}}$ because\/ $\Gamma_{{\mathrm{2}}}$ can be empty.
\end{notation*}

We write $ \mathbf{Dom}_{\token{C} }( \Gamma ) $ for a set of classifiers defined in $\Gamma$, corresponding to elements in a \bml-structure. Then, $\Gamma  \vdash  \gamma_{{\mathrm{1}}}  \preceq  \gamma_{{\mathrm{2}}}$ and $ \Gamma \vdash \gamma_{{\mathrm{1}}} \sqsubseteq \gamma_{{\mathrm{2}}} $ describe scope nesting and stage transition between classifiers, whose derivation rules can be found in \Zcref{fig:derive-trans}. \ref{rule:itrans-refl} and \ref{rule:gtrans-trans} describe the reflexive and transitive nature of these relations. \ref{rule:mtrans-lift} states that $ \preceq $ can lift to $ \sqsubseteq $, which corresponds to the stability condition in \bml-structure. The rest of the rules introduce $ \preceq $ and $ \sqsubseteq $ based on each item of a context as described in \Zcref{fig:contextsummary}.
\begin{figure}[bt]
  \begin{inparaenum}
    \item
      $\mathord{ \trianglelefteq } \in \{\mathord{ \preceq }, \mathord{ \sqsubseteq }\}$.
    \item
      $\vdash  \Gamma  \hasType \, \bmttkw{ctx}$ is assumed.
  \end{inparaenum}
  \begin{mathpar}
    \begin{prooftree}
      \caption{$ \preceq $-Refl}\label{rule:itrans-refl}
      \hypo{\gamma \, \in \,  \mathbf{Dom}_{\token{C} }( \Gamma ) }
      \infer1{\Gamma  \vdash  \gamma  \preceq  \gamma}
    \end{prooftree}
    \and
    \begin{prooftree}
      \caption{$ \trianglelefteq $-Trans}\label{rule:gtrans-trans}
      \hypo{\Gamma  \vdash  \gamma_{{\mathrm{1}}}  \trianglelefteq  \gamma_{{\mathrm{2}}}}
      \hypo{\Gamma  \vdash  \gamma_{{\mathrm{2}}}  \trianglelefteq  \gamma_{{\mathrm{3}}}}
      \infer2{\Gamma  \vdash  \gamma_{{\mathrm{1}}}  \trianglelefteq  \gamma_{{\mathrm{3}}}}
    \end{prooftree}
    \and
    \begin{prooftree}
      \caption{$ \sqsubseteq $-Lift}\label{rule:mtrans-lift}
      \hypo{\Gamma  \vdash  \gamma_{{\mathrm{1}}}  \preceq  \gamma_{{\mathrm{2}}}}
      \infer1{ \Gamma \vdash \gamma_{{\mathrm{1}}} \sqsubseteq \gamma_{{\mathrm{2}}} }
    \end{prooftree}
    \\
    \begin{prooftree}
      \caption{$ \preceq $-Hyp}\label{rule:itrans-hyp}
      \hypo{}
      \infer1{ \Gamma_{{\mathrm{1}}} ^{\position{ \gamma_{{\mathrm{1}}} } }   \bmttsym{,}   { \bmttnt{A} }^{\binder{ \gamma_{{\mathrm{2}}} } }   \bmttsym{,}  \Gamma_{{\mathrm{2}}}  \vdash  \gamma_{{\mathrm{1}}}  \preceq  \gamma_{{\mathrm{2}}}}
    \end{prooftree}
    \and
    \begin{prooftree}
      \caption{$ \preceq $-Cls}\label{rule:itrans-cls}
      \hypo{ \binder{ \gamma_{{\mathrm{1}}} } \within \gamma_{{\mathrm{2}}}  \, \in \, \Gamma}
      \infer1{\Gamma  \vdash  \gamma_{{\mathrm{2}}}  \preceq  \gamma_{{\mathrm{1}}}}
    \end{prooftree}
    \and
    \begin{prooftree}
      \caption{$ \sqsubseteq $-$ \mathord{\blacktriangleright} $}\label{rule:mtrans-open}
      \hypo{}
      \infer1{  \Gamma_{{\mathrm{1}}} ^{\position{ \gamma_{{\mathrm{1}}} } }   \bmttsym{,}   \mathord{\blacktriangleright} ^{\binder{ \gamma_{{\mathrm{2}}} } \within \gamma_{{\mathrm{3}}} }   \bmttsym{,}  \Gamma_{{\mathrm{2}}} \vdash \gamma_{{\mathrm{1}}} \sqsubseteq \gamma_{{\mathrm{2}}} }
    \end{prooftree}
    \and
    \begin{prooftree}
      \caption{$ \preceq $-$ \mathord{\blacktriangleright} $}\label{rule:itrans-open}
      \hypo{ \mathord{\blacktriangleright} ^{\binder{ \gamma_{{\mathrm{1}}} } \within \gamma_{{\mathrm{2}}} }  \, \in \, \Gamma}
      \infer1{\Gamma  \vdash  \gamma_{{\mathrm{2}}}  \preceq  \gamma_{{\mathrm{1}}}}
    \end{prooftree}
  \end{mathpar}
  \caption{Derivation rules for $\Gamma  \vdash  \gamma_{{\mathrm{1}}}  \preceq  \gamma_{{\mathrm{2}}}$ and $ \Gamma \vdash \gamma_{{\mathrm{1}}} \sqsubseteq \gamma_{{\mathrm{2}}} $.}\label{fig:derive-trans}
\end{figure}

$\vdash  \Gamma  \hasType \, \bmttkw{ctx}$ and $\Gamma  \vdash  \bmttnt{A}  \hasType \, \bmttkw{prop}$ state well-formedness of $\Gamma$ and $\bmttnt{A}$, respectively. Most of the derivation rules for these judgments ensure occurrences of classifiers in $\Gamma$ and $\bmttnt{A}$ are well defined, and we omit rules. One exception is \ref{rule:wf-ctx-shut} in \Zcref{fig:derivationrules}, which ensures that the context $ \Gamma ^{\position{ \gamma } }   \bmttsym{,}   \mathord{\blacktriangleleft} ^{ \delta } $ satisfies $  \Gamma ^{\position{ \gamma } }  \vdash \delta \sqsubseteq \gamma $.

The judgment $\Gamma  \vdash  \bmttnt{A}$ asserts truth of $\bmttnt{A}$ under the context $\Gamma$. \Zcref{fig:derivationrules} lists derivation rules.
\begin{figure}[bt]
  \noindent
  \begin{mathpar}[\footnotesize]
    \begin{prooftree}\caption{WF-$ \mathord{\blacktriangleleft} $}\label{rule:wf-ctx-shut}
      \hypo{\vdash   \Gamma ^{\position{ \gamma } }   \hasType \, \bmttkw{ctx}}
      \hypo{  \Gamma ^{\position{ \gamma } }  \vdash \delta \sqsubseteq \gamma }
      \infer2{\vdash   \Gamma ^{\position{ \gamma } }   \bmttsym{,}   \mathord{\blacktriangleleft} ^{ \delta }   \hasType \, \bmttkw{ctx}}
    \end{prooftree}
    \and
    \begin{prooftree}
      \caption{Hyp}\label{rule:derive-hyp}
      \hypo{ { \bmttnt{A} }^{\binder{ \gamma_{{\mathrm{1}}} } }  \, \in \,  \Gamma ^{\position{ \gamma_{{\mathrm{2}}} } } }
      \hypo{ \Gamma ^{\position{ \gamma_{{\mathrm{2}}} } }   \vdash  \gamma_{{\mathrm{1}}}  \preceq  \gamma_{{\mathrm{2}}}}
      \infer2{ \Gamma ^{\position{ \gamma_{{\mathrm{2}}} } }   \vdash  \bmttnt{A}}
    \end{prooftree}
    \and
    \begin{prooftree}
      \caption{$\to$-I}\label{rule:derive-to-i}
      \hypo{\Gamma  \bmttsym{,}   { \bmttnt{A_{{\mathrm{1}}}} }^{\binder{ \gamma } }   \vdash  \bmttnt{A_{{\mathrm{2}}}}}
      \hypo{\gamma \, \notin \,  \token{FC}( \bmttnt{A_{{\mathrm{2}}}} ) }
      \infer2{\Gamma  \vdash  \bmttnt{A_{{\mathrm{1}}}}  \mathbin{\rightarrow}  \bmttnt{A_{{\mathrm{2}}}}}
    \end{prooftree}
    \and
    \begin{prooftree}
      \caption{$\to$-E}\label{rule:derive-to-e}
      \hypo{\Gamma  \vdash  \bmttnt{A_{{\mathrm{1}}}}  \mathbin{\rightarrow}  \bmttnt{A_{{\mathrm{2}}}}}
      \hypo{\Gamma  \vdash  \bmttnt{A_{{\mathrm{1}}}}}
      \infer2{\Gamma  \vdash  \bmttnt{A_{{\mathrm{2}}}}}
    \end{prooftree}
    \and
    \begin{prooftree}
      \caption{$\Box$-I}\label{rule:derive-bm-i}
      \hypo{\Gamma  \bmttsym{,}   \mathord{\blacktriangleright} ^{\binder{ \gamma_{{\mathrm{1}}} } \within \gamma_{{\mathrm{2}}} }   \vdash  \bmttnt{A}}
      \hypo{\gamma_{{\mathrm{1}}} \, \notin \,  \token{FC}( \bmttnt{A} ) }
      \infer2{\Gamma  \vdash   \Box^{\mathord{\succeq}  \gamma_{{\mathrm{2}}} }  \bmttnt{A} }
    \end{prooftree}
    \and
    \begin{prooftree}
      \caption{$\Box$-E}\label{rule:derive-bm-e}
      \hypo{ \Gamma ^{\position{ \gamma_{{\mathrm{1}}} } }   \bmttsym{,}   \mathord{\blacktriangleleft} ^{ \gamma_{{\mathrm{2}}} }   \vdash   \Box^{\mathord{\succeq}  \gamma_{{\mathrm{3}}} }  \bmttnt{A} }
      \hypo{ \Gamma ^{\position{ \gamma_{{\mathrm{1}}} } }   \vdash  \gamma_{{\mathrm{3}}}  \preceq  \gamma_{{\mathrm{1}}}}
      \infer2{ \Gamma ^{\position{ \gamma_{{\mathrm{1}}} } }   \vdash  \bmttnt{A}}
    \end{prooftree}
    \and
    \begin{prooftree}
      \caption{$\forall$-I}\label{rule:derive-polycls-i}
      \hypo{\Gamma  \bmttsym{,}   \binder{ \gamma_{{\mathrm{1}}} } \within \gamma_{{\mathrm{2}}}   \vdash  \bmttnt{A}}
      \infer1{\Gamma  \vdash   \forall  \gamma_{{\mathrm{1}}} \within  \gamma_{{\mathrm{2}}} . \bmttnt{A} }
    \end{prooftree}
    \and
    \begin{prooftree}
      \caption{$\forall$-E}\label{rule:derive-polycls-e}
      \hypo{\Gamma  \vdash   \forall  \gamma_{{\mathrm{1}}} \within  \gamma_{{\mathrm{2}}} . \bmttnt{A} }
      \hypo{\Gamma  \vdash  \gamma_{{\mathrm{2}}}  \preceq  \gamma_{{\mathrm{3}}}}
      \infer2{\Gamma  \vdash   \bmttnt{A} [ \binder{ \gamma_{{\mathrm{1}}} } \coloneqq \gamma_{{\mathrm{3}}}  ] }
    \end{prooftree}
  \end{mathpar}
  \caption{Curated rules for $\vdash  \Gamma  \hasType \, \bmttkw{ctx}$ and $\Gamma  \vdash  \bmttnt{A}$.}\label{fig:derivationrules}
\end{figure}
As discussed in Kripke/Fitch-style proof systems, we can understand these derivation rules via \bml-structure.
\ref{rule:derive-hyp} explicitly requires an intuitionistic transition $\gamma_{{\mathrm{1}}} \, \preceq \, \gamma_{{\mathrm{2}}}$ to use $\bmttnt{A}$ at $\gamma_{{\mathrm{1}}}$ via persistency.
Introduction rules \ref{rule:derive-to-i}, \ref{rule:derive-bm-i} and \ref{rule:derive-polycls-i} state that implication, bounded modality and polymorphic classifier quantifiers correspond to the structure of $ { \bmttnt{A} }^{\binder{ \gamma } } $, $ \mathord{\blacktriangleright} ^{\binder{ \gamma_{{\mathrm{1}}} } \within \gamma_{{\mathrm{2}}} } $ and $\gamma_{{\mathrm{1}}} \,  \succeq  \, \gamma_{{\mathrm{2}}}$, respectively. \ref{rule:derive-bm-e} uses the structure of $ \mathord{\blacktriangleleft} $ to get a modal relation from $\gamma_{{\mathrm{2}}}$ to $\gamma_{{\mathrm{1}}}$. It also requires a condition $\gamma_{{\mathrm{3}}} \, \preceq \, \gamma_{{\mathrm{1}}}$, which is required by the bound. We provide \Zcref{fig:derivationexamples} as examples of derivations.
\begin{figure}[bt]
  \noindent
  \footnotesize
  \[ \Gamma_{{\mathrm{2}}} =  \binder{ \gamma_{{\mathrm{1}}} } \within  \mathord{\boldsymbol{!} }    \bmttsym{,}   { \bmttsym{(}   \forall  \gamma_{{\mathrm{2}}} \within  \gamma_{{\mathrm{1}}} .  \Box^{\mathord{\succeq}  \gamma_{{\mathrm{2}}} }  \bmttnt{A}   \mathbin{\rightarrow}   \Box^{\mathord{\succeq}  \gamma_{{\mathrm{2}}} }  \bmttnt{B}    \bmttsym{)} }^{\binder{ \gamma_{{\mathrm{3}}} } } \qquad \Gamma_{{\mathrm{3}}} =  \mathord{\blacktriangleright} ^{\binder{ \gamma_{{\mathrm{4}}} } \within \gamma_{{\mathrm{1}}} }   \bmttsym{,}   { \bmttnt{A} }^{\binder{ \gamma_{{\mathrm{5}}} } }   \bmttsym{,}   \mathord{\blacktriangleleft} ^{ \gamma_{{\mathrm{3}}} }  \]
  \begin{mathpar}
    \begin{prooftree}
      \infer0[\ref{rule:derive-hyp}]{\Gamma_{{\mathrm{2}}}  \bmttsym{,}  \Gamma_{{\mathrm{3}}}  \vdash   \forall  \gamma_{{\mathrm{2}}} \within  \gamma_{{\mathrm{1}}} . \bmttsym{(}   \Box^{\mathord{\succeq}  \gamma_{{\mathrm{2}}} }  \bmttnt{A}   \mathbin{\rightarrow}   \Box^{\mathord{\succeq}  \gamma_{{\mathrm{2}}} }  \bmttnt{B}   \bmttsym{)} }
      \infer1[\ref{rule:derive-polycls-e}]{\Gamma_{{\mathrm{2}}}  \bmttsym{,}  \Gamma_{{\mathrm{3}}}  \vdash   \Box^{\mathord{\succeq}  \gamma_{{\mathrm{5}}} }  \bmttnt{A}   \mathbin{\rightarrow}   \Box^{\mathord{\succeq}  \gamma_{{\mathrm{5}}} }  \bmttnt{B} }
      \infer0[\ref{rule:derive-hyp}]{\Gamma_{{\mathrm{2}}}  \bmttsym{,}  \Gamma_{{\mathrm{3}}}  \bmttsym{,}   \mathord{\blacktriangleright} ^{\binder{ \gamma_{{\mathrm{6}}} } \within \gamma_{{\mathrm{5}}} }   \vdash  \bmttnt{A}}
      \infer1[\ref{rule:derive-bm-i}]{\Gamma_{{\mathrm{2}}}  \bmttsym{,}  \Gamma_{{\mathrm{3}}}  \vdash   \Box^{\mathord{\succeq}  \gamma_{{\mathrm{5}}} }  \bmttnt{A} }
      \infer2[\ref{rule:derive-to-e}]{\Gamma_{{\mathrm{2}}}  \bmttsym{,}  \Gamma_{{\mathrm{3}}}  \vdash   \Box^{\mathord{\succeq}  \gamma_{{\mathrm{5}}} }  \bmttnt{B} }
      \infer1[\ref{rule:derive-bm-e}]{\Gamma_{{\mathrm{2}}}  \bmttsym{,}   \mathord{\blacktriangleright} ^{\binder{ \gamma_{{\mathrm{4}}} } \within \gamma_{{\mathrm{1}}} }   \bmttsym{,}   { \bmttnt{A} }^{\binder{ \gamma_{{\mathrm{5}}} } }   \vdash  \bmttnt{B}}
      \infer1[\ref{rule:derive-to-i}]{\Gamma_{{\mathrm{2}}}  \bmttsym{,}   \mathord{\blacktriangleright} ^{\binder{ \gamma_{{\mathrm{4}}} } \within \gamma_{{\mathrm{1}}} }   \vdash  \bmttnt{A}  \mathbin{\rightarrow}  \bmttnt{B}}
      \infer1[\ref{rule:derive-bm-i}]{\Gamma_{{\mathrm{2}}}  \vdash   \Box^{\mathord{\succeq}  \gamma_{{\mathrm{1}}} }  \bmttsym{(}  \bmttnt{A}  \mathbin{\rightarrow}  \bmttnt{B}  \bmttsym{)} }
      \infer1[\ref{rule:derive-to-i}]{ \binder{ \gamma_{{\mathrm{1}}} } \within  \mathord{\boldsymbol{!} }    \vdash  \bmttsym{(}   \forall  \gamma_{{\mathrm{2}}} \within  \gamma_{{\mathrm{1}}} .  \Box^{\mathord{\succeq}  \gamma_{{\mathrm{2}}} }  \bmttnt{A}   \mathbin{\rightarrow}   \Box^{\mathord{\succeq}  \gamma_{{\mathrm{2}}} }  \bmttnt{B}    \bmttsym{)}  \mathbin{\rightarrow}   \Box^{\mathord{\succeq}  \gamma_{{\mathrm{1}}} }  \bmttsym{(}  \bmttnt{A}  \mathbin{\rightarrow}  \bmttnt{B}  \bmttsym{)} }
      \infer1[\ref{rule:derive-polycls-i}]{ \varepsilon   \vdash   \forall  \gamma_{{\mathrm{1}}} \within   \mathord{\boldsymbol{!} }  . \bmttsym{(}   \forall  \gamma_{{\mathrm{2}}} \within  \gamma_{{\mathrm{1}}} .  \Box^{\mathord{\succeq}  \gamma_{{\mathrm{2}}} }  \bmttnt{A}   \mathbin{\rightarrow}   \Box^{\mathord{\succeq}  \gamma_{{\mathrm{2}}} }  \bmttnt{B}    \bmttsym{)}  \mathbin{\rightarrow}   \Box^{\mathord{\succeq}  \gamma_{{\mathrm{1}}} }  \bmttsym{(}  \bmttnt{A}  \mathbin{\rightarrow}  \bmttnt{B}  \bmttsym{)}  }
    \end{prooftree}
    \and
    \begin{prooftree}
      \infer0[\ref{rule:derive-hyp}]{ { \bmttsym{(}   \Box^{\mathord{\succeq}   \mathord{\boldsymbol{!} }  }  \bmttnt{A}   \bmttsym{)} }^{\binder{ \gamma_{{\mathrm{1}}} } }   \vdash   \Box^{\mathord{\succeq}   \mathord{\boldsymbol{!} }  }  \bmttnt{A} }
      \infer1[\ref{rule:derive-bm-e}]{ { \bmttsym{(}   \Box^{\mathord{\succeq}   \mathord{\boldsymbol{!} }  }  \bmttnt{A}   \bmttsym{)} }^{\binder{ \gamma_{{\mathrm{1}}} } }   \vdash  \bmttnt{A}}
      \infer1[\ref{rule:derive-to-i}]{ \varepsilon   \vdash   \Box^{\mathord{\succeq}   \mathord{\boldsymbol{!} }  }  \bmttnt{A}   \mathbin{\rightarrow}  \bmttnt{A}}
    \end{prooftree}
  \end{mathpar}
  \caption{Examples of Derivations (omitting derivations w.r.t.\@ $ \preceq $).}
  \label{fig:derivationexamples}
\end{figure}

Finally, we establish that propositions remain true along intuitionistic
transitions.
\begin{lemma*}[Persistency]\label{claim:monotonicity}
  If\/ $ \Gamma_{{\mathrm{1}}} ^{\position{ \gamma_{{\mathrm{1}}} } }   \vdash  \bmttnt{A}$ and\/ $ \Gamma_{{\mathrm{1}}} ^{\position{ \gamma_{{\mathrm{1}}} } }   \bmttsym{,}   \Gamma_{{\mathrm{2}}} ^{\position{ \gamma_{{\mathrm{2}}} } }   \vdash  \gamma_{{\mathrm{1}}}  \preceq  \gamma_{{\mathrm{2}}}$, then\/ $ \Gamma_{{\mathrm{1}}} ^{\position{ \gamma_{{\mathrm{1}}} } }   \bmttsym{,}   \Gamma_{{\mathrm{2}}} ^{\position{ \gamma_{{\mathrm{2}}} } }   \vdash  \bmttnt{A}$.
\end{lemma*}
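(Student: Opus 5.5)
The plan is to move the position of the derivation from $\gamma_1$ to $\gamma_2$ without re-deriving $A$. The tool for this is the shut item $\blacktriangleleft$ together with the rules for $\Box$, and the key auxiliary result is weakening. Stating persistency as a separate lemma suggests that a direct induction would be awkward. In particular, the \ref{rule:derive-hyp} case needs $\gamma_1' \preceq \gamma_2$ from $\gamma_1'\preceq\gamma_1$. That part is fine by transitivity. But \ref{rule:derive-bm-e} pushes $\blacktriangleleft$ items, and the $\to$-I case extends the context at the end, which changes the position.

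\textbf{Step 1 (weakening).} First I would prove a weakening lemma by induction on derivations: if $\Gamma_1,\Gamma_1'\vdash A$ and $\Gamma_1,\Gamma_2,\Gamma_1'$ is well formed with fresh classifiers in $\Gamma_2$, then $\Gamma_1,\Gamma_2,\Gamma_1'\vdash A$. The same statement is needed for the relational judgments $\preceq$ and $\sqsubseteq$. For these the rules of \Zcref{fig:derive-trans} are all stable under inserting items, provided the inserted segment restores the position. Hence I would actually state weakening for an inserted segment $\Gamma_2$ followed by $\blacktriangleleft^{\mathrm{pos}(\Gamma_1)}$, or more simply for the case $\Gamma_1'$ empty.

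\textbf{Step 2 (main argument).} From $\Gamma_1^{\gamma_1}\vdash A$, I would form $\Gamma_1,\blacktriangleright^{\delta\succeq\gamma_1}\vdash A$ using a fresh $\delta$. This is an instance of weakening in which the extra item changes the position, so it must be handled by the same induction. An alternative, if that is delicate, is to prove directly the generalized statement that $A$ remains derivable after any extension $\Gamma_2$ whose position is $\preceq$-above $\gamma_1$.

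Concretely, I would prove by induction on the derivation of $\Gamma_1\vdash A$ the following claim: for every $\Gamma_2$ with $\Gamma_1,\Gamma_2\vdash\gamma_1\preceq\mathrm{pos}(\Gamma_1,\Gamma_2)$ and $\Gamma_1,\Gamma_2$ well formed, we have $\Gamma_1,\Gamma_2\vdash A$.
\begin{itemize}
\item \ref{rule:derive-hyp}: use transitivity, and weaken the relational judgment.
\item $\to$-E and $\forall$-E: apply the induction hypothesis to the premises, and weaken the $\preceq$ side condition.
\item $\to$-I with premise $\Gamma_1,A_1^{\gamma}\vdash A_2$: we need $\Gamma_1,\Gamma_2\vdash A_1\to A_2$, so take the fresh premise context $\Gamma_1,\Gamma_2,A_1^{\gamma'}$. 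Since $\gamma\notin\mathrm{FC}(A_2)$, renaming is harmless. The induction hypothesis, however, only extends after $\Gamma_1,A_1^\gamma$, whereas here the new hypothesis sits after $\Gamma_2$.
\end{itemize}

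\textbf{Main obstacle.} This mismatch in the $\to$-I case is the crux. Hypotheses must be allowed to move past extensions, so the induction must be strengthened to a general \emph{context-morphism} lemma. The idea is a map $\rho$ sending classifiers of $\Gamma$ to classifiers of $\Gamma'$, such that four conditions hold:
\begin{itemize}
\item each hypothesis $B^{\gamma}$ of $\Gamma$ is available in $\Gamma'$ at some $\gamma''\preceq\rho(\gamma)$;
\item each $\preceq$ and $\sqsubseteq$ assumption of $\Gamma$ is derivable in $\Gamma'$ after applying $\rho$;
\item the position satisfies $\rho(\mathrm{pos}\,\Gamma)\preceq\mathrm{pos}\,\Gamma'$ (or equality, if it is read as a lax position);
\item the map extends along each item.
\end{itemize}
Such a morphism transports derivations, by induction on derivations. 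The delicate case is \ref{rule:derive-bm-e}. Its premise has context $\Gamma,\blacktriangleleft^{\gamma_2}$, whose well-formedness needs $\gamma_2\sqsubseteq\gamma_1$. In $\Gamma'$ this becomes $\rho(\gamma_2)\sqsubseteq\rho(\gamma_1)\preceq\mathrm{pos}\,\Gamma'$, which composes to a $\sqsubseteq$ by stability (\ref{rule:mtrans-lift} and transitivity). So $\Gamma',\blacktriangleleft^{\rho(\gamma_2)}$ is well formed and the morphism extends. The side condition $\gamma_3\preceq\gamma_1$ likewise becomes $\rho(\gamma_3)\preceq\mathrm{pos}\,\Gamma'$.

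Persistency then follows by taking $\Gamma=\Gamma_1$, $\Gamma'=\Gamma_1,\Gamma_2$, and $\rho$ the identity: every hypothesis of $\Gamma_1$ and every relational assumption of $\Gamma_1$ remains derivable in the extension, and the position condition is exactly the hypothesis $\gamma_1\preceq\gamma_2$.
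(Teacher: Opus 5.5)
The paper states Persistency without proof; it is only used later (inversion of $\Box$-I for completeness, variable substitution, local completeness). So there is no reference argument to compare against, and your proposal has to be judged on its own. On that basis your final context-morphism argument is correct. You should delete Step~1 and the first version of Step~2: you abandon both and never use them.

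The lax morphism does carry the induction:
\begin{itemize}
\item For $\to$-I, the new hypothesis needs $\rho(\mathrm{pos}\,\Gamma)\preceq\mathrm{pos}\,\Gamma'\preceq\gamma'$, which follows from $\preceq$-Hyp and transitivity.
\item For $\Box$-I, the new $\blacktriangleright$ needs $\rho(\mathrm{pos}\,\Gamma)\sqsubseteq\gamma_1'$, which follows from Lift, $\sqsubseteq$-$\blacktriangleright$ and transitivity.
\item For $\forall$-I, the position condition is unchanged.
\item Your $\Box$-E case is right; after $\blacktriangleleft^{\rho(\gamma_2)}$ the position condition becomes an equality.
\end{itemize}
Two points of presentation. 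First, ``the map extends along each item'' is not a defining condition of a morphism; it is exactly what these binder cases verify. Second, the position condition must be the lax $\preceq$ version. With equality, the final instantiation $\rho=\mathrm{id}$ would force $\gamma_1=\gamma_2$.

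Your diagnosis of the $\to$-I obstacle is right, but the full morphism machinery is more than the fix requires. It suffices to generalize over a trailing segment: show that $\Gamma_1,\Gamma_3\vdash A$ and $\Gamma_1,\Gamma_2\vdash\mathrm{pos}(\Gamma_1)\preceq\mathrm{pos}(\Gamma_1,\Gamma_2)$ imply $\Gamma_1,\Gamma_2,\Gamma_3\vdash A$, together with the analogous statement for $\preceq$ and $\sqsubseteq$, by induction on the first derivation.

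In this form, every binder rule and $\Box$-E simply extends $\Gamma_3$. The only invariant needed is $\mathrm{pos}(\Gamma_1,\Gamma_3')\preceq\mathrm{pos}(\Gamma_1,\Gamma_2,\Gamma_3')$ for every prefix $\Gamma_3'$ of $\Gamma_3$. This holds by reflexivity unless $\Gamma_3'$ consists only of classifier declarations, in which case it is exactly the hypothesis. That invariant handles everything that needs repair:
\begin{itemize}
\item $\preceq$-Hyp and $\sqsubseteq$-$\blacktriangleright$ for items of $\Gamma_3$;
\item the side conditions of Hyp and $\Box$-E;
\item well-formedness of a newly added $\blacktriangleleft$.
\end{itemize}
This has the same shape as the paper's ``general weakening for top-level subderivations'' (inserting $\Delta$ before $\blacktriangleleft^{\exclam}$) in the completeness proof.

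Your version gives you an arbitrary classifier map and built-in renaming. The cost is a substitution-commutation lemma for $\forall$-E and a heavier invariant. The segment version needs only $\alpha$-renaming of binders that clash with $\Gamma_2$.
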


\section{Kripke Semantics}\label{sec:logic:semantics}
To define Kripke semantics for \bml, we follow the standard construction for intuitionistic first-order logic~\cite{simpson1994proof}, regarding \bml-structure as its domain. Thus, we define a \bml-model as a family of \bml-structures that captures growing domain structure along with $ \preccurlyeq $.
\begin{definition}[\bml-Model]
  A\/ \bml-model is a triple\/ $ \langle W ,   \preccurlyeq  ,   \lbrace  M _{ \bmttnt{w} }  \rbrace_{ \bmttnt{w} \in W }  \rangle $ where
  \begin{itemize}
  \item $ \langle W ,  \preccurlyeq  \rangle $ is a nonempty preordered set;
  \item Each $ M _{ \bmttnt{w} } $ is a \bml-structure\/ $ \langle  D _{ \bmttnt{w} }  ,   \preceq _{ \bmttnt{w} }  ,   \sqsubseteq _{ \bmttnt{w} }  ,   V _{ \bmttnt{w} }  ,    \exclam  _{ \bmttnt{w} }  \rangle $; and
  \item If $\bmttnt{w_{{\mathrm{1}}}} \,  \preccurlyeq  \, \bmttnt{w_{{\mathrm{2}}}}$, then
    \begin{itemize}
    \item $ D _{ \bmttnt{w_{{\mathrm{1}}}} }  \subseteq  D _{ \bmttnt{w_{{\mathrm{2}}}} } $;
    \item $\bmttnt{d_{{\mathrm{1}}}} \,  \preceq _{ \bmttnt{w_{{\mathrm{1}}}} }  \, \bmttnt{d_{{\mathrm{2}}}} \implies \bmttnt{d_{{\mathrm{1}}}} \,  \preceq _{ \bmttnt{w_{{\mathrm{2}}}} }  \, \bmttnt{d_{{\mathrm{2}}}}$;
    \item $\bmttnt{d_{{\mathrm{1}}}} \,  \sqsubseteq _{ \bmttnt{w_{{\mathrm{1}}}} }  \, \bmttnt{d_{{\mathrm{2}}}} \implies \bmttnt{d_{{\mathrm{1}}}} \,  \sqsubseteq _{ \bmttnt{w_{{\mathrm{2}}}} }  \, \bmttnt{d_{{\mathrm{2}}}}$;
    \item $ V _{ \bmttnt{w_{{\mathrm{1}}}} }   \bmttsym{(}  \bmttnt{p}  \bmttsym{)} \subseteq  V _{ \bmttnt{w_{{\mathrm{2}}}} }   \bmttsym{(}  \bmttnt{p}  \bmttsym{)}$; and
    \item $  \exclam  _{ \bmttnt{w_{{\mathrm{1}}}} }  =   \exclam  _{ \bmttnt{w_{{\mathrm{2}}}} } $.
    \end{itemize}
  \end{itemize}
\end{definition}

Given a \bml-model $\mathfrak{M} =  \langle W ,   \preccurlyeq  ,   \lbrace  M _{ \bmttnt{w} }  \rbrace_{ \bmttnt{w} \in W }  \rangle $ and $\bmttnt{w} \, \in \, W$, a \emph{$\bmttnt{w}$-assignment~$\rho$} is a partial map from the set of all classifiers to $ D _{ \bmttnt{w} } $ with $ \exclam  \mapsto   \exclam  _{ \bmttnt{w} } $. For simplicity, we assume that an assignment has sufficient domain of definition for interpretation.
Given a \bml-model $\mathfrak{M} =  \langle W ,   \preccurlyeq  ,   \lbrace  M _{ \bmttnt{w} }  \rbrace_{ \bmttnt{w} \in W }  \rangle $, the \emph{satisfaction} of a formula $\bmttnt{A}$ at $\bmttnt{d} \, \in \,  D _{ \bmttnt{w} } $ with $\rho$ on $\bmttnt{w} \, \in \, W$, written $ \mathfrak{M} ,  \bmttnt{w} ,  \bmttnt{d}    \Vdash  ^{ \rho }  \bmttnt{A} $, is defined as
\begin{alignat*}{2}
  &  \mathfrak{M} ,  \bmttnt{w} ,  \bmttnt{d}    \Vdash  ^{ \rho }  \bmttnt{A}  && \iff \forall \bmttnt{v} \,  \succcurlyeq  \, \bmttnt{w}. \parens[\big]{ \mathfrak{M} ,  \bmttnt{v} ,  \bmttnt{d}    \vDash  ^{ \rho }  \bmttnt{A} } \text{,} \\
  \intertext{where $ \mathfrak{M} ,  \bmttnt{w} ,  \bmttnt{d}    \vDash  ^{ \rho }  \bmttnt{A} $ is defined as follows:}
  &  \mathfrak{M} ,  \bmttnt{w} ,  \bmttnt{d}    \vDash  ^{ \rho }  \bmttnt{p}  && \iff \bmttnt{d} \, \in \,  V _{ \bmttnt{w} }   \bmttsym{(}  \bmttnt{p}  \bmttsym{)} \text{;}\\
  &  \mathfrak{M} ,  \bmttnt{w} ,  \bmttnt{d}    \vDash  ^{ \rho }  \bmttnt{A}  \mathbin{\rightarrow}  \bmttnt{B}  && \iff \forall \bmttnt{e} \,   \succeq  _{ \bmttnt{w} }  \, \bmttnt{d}. \parens[\big]{ \mathfrak{M} ,  \bmttnt{w} ,  \bmttnt{e}    \Vdash  ^{ \rho }  \bmttnt{A}  \implies  \mathfrak{M} ,  \bmttnt{w} ,  \bmttnt{e}    \Vdash  ^{ \rho }  \bmttnt{B} } \text{;} \\
  &  \mathfrak{M} ,  \bmttnt{w} ,  \bmttnt{d}    \vDash  ^{ \rho }   \Box^{\mathord{\succeq}  \gamma }  \bmttnt{A}   && \iff \forall \bmttnt{e} \,  \sqsupseteq _{ \bmttnt{w} }  \, \bmttnt{d}. \parens[\big]{\rho  \bmttsym{(}  \gamma  \bmttsym{)} \,  \preceq _{ \bmttnt{w} }  \, \bmttnt{e} \implies  \mathfrak{M} ,  \bmttnt{w} ,  \bmttnt{e}    \Vdash  ^{ \rho }  \bmttnt{A} } \text{;} \\
  &  \mathfrak{M} ,  \bmttnt{w} ,  \bmttnt{d}    \vDash  ^{ \rho }   \forall  \gamma_{{\mathrm{1}}} \within  \gamma_{{\mathrm{2}}} . \bmttnt{A}   && \iff \forall \bmttnt{e} \,   \succeq  _{ \bmttnt{w} }  \, \rho  \bmttsym{(}  \gamma_{{\mathrm{2}}}  \bmttsym{)}. \parens[\big]{ \mathfrak{M} ,  \bmttnt{w} ,  \bmttnt{d}    \Vdash  ^{  \rho  \cdot [   \gamma_{{\mathrm{1}}} \mapsto \bmttnt{e}   ]  }  \bmttnt{A} } \text{.}
\end{alignat*}
It should be noted that $ \Vdash $ and $ \vDash $ are defined mutually recursively.

As \bml-model captures growth of \bml-structure, it has two intuitionistic relations: $ \preccurlyeq $ in \bml-model and $ \preceq $ in \bml-structure. Hence, persistency for \bml-model is stated in a more elaborate manner.
\begin{lemma*}[Semantic Persistency]\label{claim:semantic-monotonicity}
  Let $\mathfrak{M}$ be a \bml-model and suppose $\bmttnt{w} \,  \preccurlyeq  \, \bmttnt{v}$ and $\bmttnt{d} \,  \preceq _{ \bmttnt{v} }  \, \bmttnt{e}$. If\/ $ \mathfrak{M} ,  \bmttnt{w} ,  \bmttnt{d}    \Vdash  ^{ \rho }  \bmttnt{A} $, then\/ $ \mathfrak{M} ,  \bmttnt{v} ,  \bmttnt{e}    \Vdash  ^{ \rho }  \bmttnt{A} $.
\end{lemma*}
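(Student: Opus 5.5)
The plan is to separate the two directions of growth. Persistency along $\preccurlyeq$ comes almost for free from the definition of $\Vdash$. Persistency along $\preceq$ will be proved by induction on $\bmttnt{A}$, using that each structure is upward-closed and left-stable.

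\emph{Step 1 (growth along $\preccurlyeq$).} First I will observe that if $\mathfrak{M},\bmttnt{w},\bmttnt{d} \Vdash^{\rho} \bmttnt{A}$ and $\bmttnt{w}\preccurlyeq\bmttnt{v}$, then $\mathfrak{M},\bmttnt{v},\bmttnt{d}\Vdash^{\rho}\bmttnt{A}$. Every $\bmttnt{u}\succcurlyeq\bmttnt{v}$ also satisfies $\bmttnt{u}\succcurlyeq\bmttnt{w}$ by transitivity of $\preccurlyeq$, so $\vDash$ holds at $\bmttnt{u}$. Since $D_{\bmttnt{w}}\subseteq D_{\bmttnt{v}}$ and $\exclam_{\bmttnt{w}}=\exclam_{\bmttnt{v}}$, the $\bmttnt{w}$-assignment $\rho$ is also a $\bmttnt{v}$-assignment. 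This step does not depend on the shape of $\bmttnt{A}$.

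\emph{Step 2 (growth along $\preceq$).} This reduces the statement to the following claim: for fixed $\bmttnt{v}$, if $\mathfrak{M},\bmttnt{v},\bmttnt{d}\Vdash^{\rho}\bmttnt{A}$ and $\bmttnt{d}\preceq_{\bmttnt{v}}\bmttnt{e}$, then $\mathfrak{M},\bmttnt{v},\bmttnt{e}\Vdash^{\rho}\bmttnt{A}$.

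Take any $\bmttnt{u}\succcurlyeq\bmttnt{v}$. By the model conditions we get $\bmttnt{d}\preceq_{\bmttnt{u}}\bmttnt{e}$, so it suffices to show that $\mathfrak{M},\bmttnt{u},\bmttnt{d}\vDash^{\rho}\bmttnt{A}$ and $\bmttnt{d}\preceq_{\bmttnt{u}}\bmttnt{e}$ imply $\mathfrak{M},\bmttnt{u},\bmttnt{e}\vDash^{\rho}\bmttnt{A}$. I will prove this, together with the $\Vdash$-version just stated, by simultaneous induction on $\bmttnt{A}$, generalising over $\rho$.

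\begin{itemize}
\item \emph{Atoms.} These follow because $V_{\bmttnt{u}}(\bmttnt{p})$ is upward-closed.
\item \emph{Implication.} Any $\bmttnt{e}'\succeq_{\bmttnt{u}}\bmttnt{e}$ satisfies $\bmttnt{e}'\succeq_{\bmttnt{u}}\bmttnt{d}$ by transitivity, and the hypothesis at $\bmttnt{d}$ applies directly.
\item \emph{Bounded modality.} Given $\bmttnt{e}'\sqsupseteq_{\bmttnt{u}}\bmttnt{e}$ with $\rho(\gamma)\preceq_{\bmttnt{u}}\bmttnt{e}'$, left-stability $(\preceq;\sqsubseteq)\subseteq(\sqsubseteq)$ of $M_{\bmttnt{u}}$ gives $\bmttnt{d}\sqsubseteq_{\bmttnt{u}}\bmttnt{e}'$. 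The clause at $\bmttnt{d}$ then yields $\mathfrak{M},\bmttnt{u},\bmttnt{e}'\Vdash^{\rho}\bmttnt{A}$. This is the only place where the stability condition of a \bml-structure is needed.
\item \emph{Quantifier $\forall\gamma_1\within\gamma_2.\bmttnt{A}$.} The clause keeps the same point $\bmttnt{d}$ and only extends the assignment. For each $\bmttnt{e}''\succeq_{\bmttnt{u}}\rho(\gamma_2)$ we have $\mathfrak{M},\bmttnt{u},\bmttnt{d}\Vdash^{\rho\cdot[\gamma_1\mapsto\bmttnt{e}'']}\bmttnt{A}$. The induction hypothesis for the smaller formula $\bmttnt{A}$, applied at the extended assignment, moves this to $\bmttnt{e}$.
\end{itemize}

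The main obstacle I expect is this quantifier case. The induction must therefore be on $\Vdash$ rather than on $\vDash$ alone, and it must be uniform in the assignment. I also need to check that $\rho\cdot[\gamma_1\mapsto\bmttnt{e}'']$ is a legitimate $\bmttnt{u}$-assignment, which it is since $\bmttnt{e}''\in D_{\bmttnt{u}}$. Finally, I need that $\rho(\gamma)$ is interpreted consistently across the growing domains. This holds because $D_{\bmttnt{w}}\subseteq D_{\bmttnt{u}}$ and the relations are preserved along $\preccurlyeq$.

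Combining Step 1 (from $\bmttnt{w}$ to $\bmttnt{v}$ at $\bmttnt{d}$) with Step 2 (from $\bmttnt{d}$ to $\bmttnt{e}$ at $\bmttnt{v}$) yields the lemma.
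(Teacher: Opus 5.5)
Your proposal is correct and follows essentially the same route as the paper's proof. The paper likewise reduces the claim to showing that $\vDash$ persists along $\preceq$ within a fixed world, with the $\preccurlyeq$-direction coming for free from the definition of $\Vdash$. It then proceeds by induction on $A$, using upward-closure of valuations for atoms, transitivity of $\preceq$ for implication, left-stability for the bounded modality, and the induction hypothesis (uniform in the assignment) for the quantifier.
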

Comparing this to \Zcref{claim:monotonicity}, $ \preccurlyeq $ corresponds to inclusion between contexts. Building upon this intuition, we build a canonical model for \bml in the next section.

As we have seen in this section, the birelational Kripke structure for syntactic structure of programs does not necessarily correspond to Kripke semantics for its logical counterpart. Rather, its Kripke semantics is captured by a growing family of such syntactic structures.

\section{Soundness and Completeness}\label{sec:logic:completeness}
We confirm the correspondence between the Kripke semantics and the proof system. In preparation, we introduce additional definitions for Kripke semantics. Unlike satisfaction of a formula, the accessibility of each transition relation is independent of $\bmttnt{d}$ and interpreted as:
\begin{alignat*}{2}
  &  \mathfrak{M} ,  \bmttnt{w}    \Vdash  ^{ \rho }  \gamma_{{\mathrm{1}}} \, \trianglelefteq \, \gamma_{{\mathrm{2}}}  && \iff \rho  \bmttsym{(}  \gamma_{{\mathrm{1}}}  \bmttsym{)} \, \trianglelefteq \, \rho  \bmttsym{(}  \gamma_{{\mathrm{2}}}  \bmttsym{)}
  \text{.}
  \tag{where $\mathord{ \trianglelefteq } \in \lbrace  \preceq ,  \sqsubseteq  \rbrace$}
\end{alignat*}
The interpretation of a context $\Gamma$ is determined based on $ \mathrm{pos} ( \Gamma ) $ as follows:
\begin{alignat*}{2}
  &  \mathfrak{M} ,  \bmttnt{w}    \Vdash  ^{ \rho }   \varepsilon   && \iff \text{always;} \\
  &  \mathfrak{M} ,  \bmttnt{w}    \Vdash  ^{ \rho }  \Gamma  \bmttsym{,}   { \bmttnt{A} }^{\binder{ \gamma } }   && \iff  \mathfrak{M} ,  \bmttnt{w}    \Vdash  ^{ \rho }  \Gamma ,\enskip \mathfrak{M} ,  \bmttnt{w}    \Vdash  ^{ \rho }   \mathrm{pos} ( \Gamma )  \, \preceq \, \gamma , \And  \mathfrak{M} ,  \bmttnt{w} ,  \rho  \bmttsym{(}  \gamma  \bmttsym{)}    \Vdash  ^{ \rho }  \bmttnt{A}  \text{;} \\
  &  \mathfrak{M} ,  \bmttnt{w}    \Vdash  ^{ \rho }  \Gamma  \bmttsym{,}   \mathord{\blacktriangleright} ^{\binder{ \gamma_{{\mathrm{2}}} } \within \gamma_{{\mathrm{1}}} }   && \iff  \mathfrak{M} ,  \bmttnt{w}    \Vdash  ^{ \rho }  \Gamma ,\enskip \mathfrak{M} ,  \bmttnt{w}    \Vdash  ^{ \rho }   \mathrm{pos} ( \Gamma )  \, \sqsubseteq \, \gamma_{{\mathrm{2}}} , \And  \mathfrak{M} ,  \bmttnt{w}    \Vdash  ^{ \rho }  \gamma_{{\mathrm{1}}} \, \preceq \, \gamma_{{\mathrm{2}}}  \text{;} \\
  &  \mathfrak{M} ,  \bmttnt{w}    \Vdash  ^{ \rho }  \Gamma  \bmttsym{,}   \mathord{\blacktriangleleft} ^{ \gamma }   && \iff  \mathfrak{M} ,  \bmttnt{w}    \Vdash  ^{ \rho }  \Gamma  \text{;} \\
  &  \mathfrak{M} ,  \bmttnt{w}    \Vdash  ^{ \rho }  \Gamma  \bmttsym{,}   \binder{ \gamma_{{\mathrm{2}}} } \within \gamma_{{\mathrm{1}}}   && \iff  \mathfrak{M} ,  \bmttnt{w}    \Vdash  ^{ \rho }  \Gamma  \And  \mathfrak{M} ,  \bmttnt{w}    \Vdash  ^{ \rho }  \gamma_{{\mathrm{1}}} \, \preceq \, \gamma_{{\mathrm{2}}}  \text{.}
\end{alignat*}
The \emph{semantic consequence} is defined accordingly:
\begin{alignat*}{2}
  & \Gamma \,  \Vdash  \, \bmttnt{A} && \iff \forall \mathfrak{M},\bmttnt{w},\rho. \parens[\big]{ \mathfrak{M} ,  \bmttnt{w}    \Vdash  ^{ \rho }  \Gamma  \implies  \mathfrak{M} ,  \bmttnt{w} ,  \rho  \bmttsym{(}   \mathrm{pos} ( \Gamma )   \bmttsym{)}    \Vdash  ^{ \rho }  \bmttnt{A} } \text{;} \\
  & \Gamma \,  \Vdash  \, \gamma_{{\mathrm{1}}} \, \trianglelefteq \, \gamma_{{\mathrm{2}}} && \iff \forall \mathfrak{M},\bmttnt{w},\rho.\parens[\big]{ \mathfrak{M} ,  \bmttnt{w}    \Vdash  ^{ \rho }  \Gamma  \implies  \mathfrak{M} ,  \bmttnt{w}    \Vdash  ^{ \rho }  \gamma_{{\mathrm{1}}} \, \trianglelefteq \, \gamma_{{\mathrm{2}}} } \tag{where $\mathord{ \trianglelefteq } \in \lbrace  \preceq ,  \sqsubseteq  \rbrace$} \text{.}
\end{alignat*}
Now we can state soundness:
\begin{theorem*}[Kripke Soundness]\label{claim:Kripke-soundness}\quitvmode\samepage
  \begin{enumerate}
  \item\label{item:Kripke-soundness:types}
    If\/ $\Gamma  \vdash  \bmttnt{A}$, then\/ $\Gamma \,  \Vdash  \, \bmttnt{A}$.
  \item\label{item:Kripke-soundness:scopes}
    If\/ $\Gamma  \vdash  \gamma_{{\mathrm{1}}}  \preceq  \gamma_{{\mathrm{2}}}$, then\/ $\Gamma \,  \Vdash  \, \gamma_{{\mathrm{1}}} \, \preceq \, \gamma_{{\mathrm{2}}}$.
  \item\label{item:Kripke-soundness:stages}
    If\/ $ \Gamma \vdash \gamma_{{\mathrm{1}}} \sqsubseteq \gamma_{{\mathrm{2}}} $, then\/ $\Gamma \,  \Vdash  \, \gamma_{{\mathrm{1}}} \, \sqsubseteq \, \gamma_{{\mathrm{2}}}$.
  \end{enumerate}
\end{theorem*}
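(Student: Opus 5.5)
We prove all three items simultaneously by induction on the derivation, with items (2) and (3) proved first since they only depend on each other, and item (1) using them as side conditions. Throughout, fix a model $\mathfrak{M}$, a world $w$ and a $w$-assignment $\rho$ with $\mathfrak{M}, w \Vdash^{\rho} \Gamma$.

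For items (2) and (3), the structural rules are immediate: \ref{rule:itrans-refl} and \ref{rule:gtrans-trans} follow from reflexivity and transitivity of $\preceq_w$ and $\sqsubseteq_w$, and \ref{rule:mtrans-lift} is exactly the stability condition $\preceq_w \subseteq \sqsubseteq_w$. For \ref{rule:itrans-hyp}, \ref{rule:itrans-cls}, \ref{rule:mtrans-open} and \ref{rule:itrans-open}, we first prove an auxiliary fact by induction on $\Gamma_2$: if $\mathfrak{M}, w \Vdash^{\rho} \Gamma_1, \Gamma_2$ then $\mathfrak{M}, w \Vdash^{\rho} \Gamma_1$. Each context clause includes satisfaction of the prefix, so this is routine. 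The interpretation of the relevant item then yields exactly the required relation. For example, $\mathfrak{M},w\Vdash^\rho \Gamma_1^{\gamma_1}, A^{\gamma_2}$ gives $\rho(\gamma_1)\preceq_w\rho(\gamma_2)$, and $\Gamma_1^{\gamma_1},\blacktriangleright^{\gamma_2\succeq\gamma_3}$ gives both $\rho(\gamma_1)\sqsubseteq_w\rho(\gamma_2)$ and $\rho(\gamma_3)\preceq_w\rho(\gamma_2)$.

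For item (1), we go rule by rule.

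\ref{rule:derive-hyp}: the prefix lemma gives $\mathfrak{M},w,\rho(\gamma_1)\Vdash^\rho A$. Item (2) gives $\rho(\gamma_1)\preceq_w\rho(\gamma_2)$, and \Zcref{claim:semantic-monotonicity} with $v=w$ closes the case.

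\ref{rule:derive-to-e}: the goal is immediate by unfolding $\Vdash$ at $v\succcurlyeq w$, using $e=d$ and reflexivity.

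\ref{rule:derive-to-i}: given $v\succcurlyeq w$ and $e\succeq_v \rho(\mathrm{pos}\,\Gamma)$ with $e\Vdash A_1$, extend to $\rho' = \rho\cdot[\gamma\mapsto e]$, a $v$-assignment. We need a second auxiliary lemma: context satisfaction is preserved along $\preccurlyeq$. This holds because the relations are monotone in $w$ and formula satisfaction $\Vdash$ quantifies over all $v\succcurlyeq w$. Then $\mathfrak{M},v\Vdash^{\rho'}\Gamma, A_1^{\gamma}$, and the IH gives $A_2$ at $e$. Freshness ($\gamma\notin\mathrm{FC}(A_2)$ and $\gamma$ new in $\Gamma$) lets us drop the update, via a standard lemma that satisfaction depends only on $\rho$ restricted to free classifiers.

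\ref{rule:derive-bm-i} and \ref{rule:derive-polycls-i} follow the same pattern. For $\Box$-I, given $e\sqsupseteq_v d$ with $\rho(\gamma_2)\preceq_v e$, set $\gamma_1\mapsto e$; the context clause for $\blacktriangleright$ is then satisfied. For $\forall$-I, given $e\succeq_v\rho(\gamma_2)$, set $\gamma_1\mapsto e$.

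\ref{rule:derive-polycls-e}: instantiate with $e=\rho(\gamma_3)$, justified by item (2). We need a substitution lemma, $\Vdash^{\rho} A[\gamma_1:=\gamma_3]$ iff $\Vdash^{\rho\cdot[\gamma_1\mapsto\rho(\gamma_3)]} A$, proved by induction on $A$.

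\ref{rule:derive-bm-e} is the step I expect to be the main obstacle. Satisfaction of $\Gamma^{\gamma_1},\blacktriangleleft^{\gamma_2}$ reduces to that of $\Gamma$, but its position is $\gamma_2$. The IH therefore yields $\Box^{\succeq\gamma_3}A$ at $\rho(\gamma_2)$. The well-formedness premise \ref{rule:wf-ctx-shut} gives the derivation $\Gamma\vdash\gamma_2\sqsubseteq\gamma_1$, so item (3) yields $\rho(\gamma_2)\sqsubseteq_w\rho(\gamma_1)$. Item (2) applied to the second premise gives $\rho(\gamma_3)\preceq_w\rho(\gamma_1)$. Unfolding the $\Box$ clause at each $v\succcurlyeq w$, using monotonicity of the relations, gives $A$ at $\rho(\gamma_1)$. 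The delicate point is that we must be allowed to invoke item (3) on a relational derivation extracted from well-formedness, not from the derivation tree itself. This is why the induction is simultaneous: relational derivations are independent of formula derivations, so we can prove (2) and (3) first for all derivations and then use them freely in (1).
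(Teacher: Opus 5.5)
Your proposal is correct and follows essentially the same route as the paper. Items (2) and (3) are treated as independent of formula derivations, and item (1) goes by induction on the derivation using semantic persistency, preservation of context satisfaction along $\preccurlyeq$, the semantic substitution lemma, and, in the $\Box$-E case, the relation $\gamma_2 \sqsubseteq \gamma_1$ extracted from the well-formedness of $\Gamma, \blacktriangleleft^{\gamma_2}$; you also spell out the relational cases and the freshness bookkeeping that the paper leaves implicit.
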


To prove completeness we use a \textit{canonical-model} construction:
\begin{definition}[Canonical Model]\label{definition:canonical-model}
  $ \can{ \mathfrak{M} }  =  \langle  \can{ W }  ,   \can  \preccurlyeq   ,   \lbrace  \can{ M _{ \Gamma } }  \rbrace_{ \Gamma \in  \can{ W }  }  \rangle $ is defined as follows:
  \begin{itemize}
  \item $ \can{ W } $ is the set of all well-formed contexts;
  \item $\Gamma \,  \can  \preccurlyeq   \, \Delta \iff \exists \Gamma'.(\Delta = \Gamma,\Gamma')$;
  \item $ \can{ M _{ \Gamma } }  =  \langle  \can{ D _{ \Gamma } }  ,   \can{ \preceq _{ \Gamma } }  ,   \can{ \sqsubseteq _{ \Gamma } }  ,   \can{ V _{ \Gamma } }  ,   \can{  \exclam  _{ \Gamma } }  \rangle $ where
    \begin{itemize}
    \item $ \can{ D _{ \Gamma } }  =  \mathbf{Dom}_{\token{C} }( \Gamma ) $ with $ \can{  \exclam  _{ \Gamma } }  =  \exclam $;
    \item $\gamma_{{\mathrm{1}}} \,  \can{ \preceq _{ \Gamma } }  \, \gamma_{{\mathrm{2}}} \iff \Gamma  \vdash  \gamma_{{\mathrm{1}}}  \preceq  \gamma_{{\mathrm{2}}}$;
    \item $\gamma_{{\mathrm{1}}} \,  \can{ \sqsubseteq _{ \Gamma } }  \, \gamma_{{\mathrm{2}}} \iff  \Gamma \vdash \gamma_{{\mathrm{1}}} \sqsubseteq \gamma_{{\mathrm{2}}} $;
    \item $\gamma \, \in \,  \can{ V _{ \Gamma } }   \bmttsym{(}  \bmttnt{p}  \bmttsym{)} \iff \Gamma  \bmttsym{,}   \mathord{\blacktriangleleft} ^{  \mathord{\boldsymbol{!} }  }   \vdash   \Box^{\mathord{\succeq}  \gamma }  \bmttnt{p} $.
    \end{itemize}
  \end{itemize}
\end{definition}

\begin{lemma}
  $ \can{ \mathfrak{M} } $ is a\/ \bml-model.
\end{lemma}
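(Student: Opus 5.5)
We verify each clause of the definition of a \bml-model directly from the definition of $\can{\mathfrak{M}}$, splitting the work into two parts: each $\can{M_\Gamma}$ is a \bml-structure, and the family is monotone along $\can\preccurlyeq$.

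\emph{Each $\can{M_\Gamma}$ is a \bml-structure.} We check the clauses in turn.
\begin{itemize}
\item Reflexivity of $\can{\preceq_\Gamma}$ is \ref{rule:itrans-refl}, and transitivity is \ref{rule:gtrans-trans}.
\item $\exclam$ is the least element, i.e.\ $\Gamma\vdash\exclam\preceq\gamma$ for every $\gamma\in\DomC(\Gamma)$. We prove this by induction on the well-formed context $\Gamma$. For an item $A^{\gamma}$ we have $\mathrm{pos}(\Gamma')\preceq\gamma$ by \ref{rule:itrans-hyp}. For $\blacktriangleright^{\gamma_1\succeq\gamma_2}$ and for a classifier declaration, \ref{rule:itrans-open} and \ref{rule:itrans-cls} give $\gamma_2\preceq\gamma_1$. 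In each case we then combine with the induction hypothesis (for $\mathrm{pos}(\Gamma')$, resp.\ $\gamma_2$) using transitivity. This needs weakening for the $\preceq$/$\sqsubseteq$ judgments: a derivation over $\Gamma$ remains valid over any well-formed extension $\Gamma,\Gamma'$. We prove this by a routine induction, noting that \ref{rule:itrans-hyp} and \ref{rule:mtrans-open} already allow an arbitrary suffix. We also need $\mathrm{pos}(\Gamma)\in\DomC(\Gamma)$, which holds by induction on $\Gamma$ using well-formedness for the $\blacktriangleleft$ case.
\item $\can{\sqsubseteq_\Gamma}$ is a preorder: reflexivity follows from \ref{rule:mtrans-lift} applied to \ref{rule:itrans-refl}, and transitivity is \ref{rule:gtrans-trans}.
\item Stability $\preceq\,\subseteq\,\sqsubseteq$ is exactly \ref{rule:mtrans-lift}.
\item Upward closure of $\can{V_\Gamma}(p)$: suppose $\Gamma,\blacktriangleleft^{\exclam}\vdash\Box^{\succeq\gamma}p$ and $\Gamma\vdash\gamma\preceq\gamma'$. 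We must derive $\Gamma,\blacktriangleleft^{\exclam}\vdash\Box^{\succeq\gamma'}p$. First note that $\Gamma,\blacktriangleleft^{\exclam}$ is well formed, since $\Gamma\vdash\exclam\sqsubseteq\mathrm{pos}(\Gamma)$ by leastness and lift.
\end{itemize}

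\emph{The upward-closure step, which is the main obstacle.} To derive $\Box^{\succeq\gamma'}p$ we apply \ref{rule:derive-bm-i}: we must show
\[\Gamma,\blacktriangleleft^{\exclam},\blacktriangleright^{\delta\succeq\gamma'}\vdash p\]
with $\delta$ fresh. We obtain this by \ref{rule:derive-bm-e} with the $\blacktriangleleft$ target $\exclam$, which requires two things.
\begin{itemize}
\item The context $\Gamma,\blacktriangleleft^{\exclam},\blacktriangleright^{\delta\succeq\gamma'},\blacktriangleleft^{\exclam}$ must be well formed; this again follows from leastness of $\exclam$.
\item The premise $\Box^{\succeq\gamma}p$ must hold in that extended context; this follows by weakening of the truth judgment, an admissible structural property proved by induction on derivations.
\end{itemize}
The side condition $\gamma\preceq\delta$ comes from $\gamma\preceq\gamma'$, which we weaken and then chain with $\gamma'\preceq\delta$ obtained by \ref{rule:itrans-open}. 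Weakening under extensions containing $\blacktriangleleft$ is the delicate point. The $\blacktriangleleft$ rule in \ref{rule:derive-bm-e} refers only to the context prefix and its position, so the extension must preserve the well-formedness of the intermediate $\blacktriangleleft$. I would first try to prove a general weakening lemma, $\Gamma,\Gamma''\vdash A$ implies $\Gamma,\Gamma',\Gamma''\vdash A$ for well-formed results, by induction on derivations. If that is too strong, I would fall back on the special case of inserting items after the position-relevant prefix, which is all that is needed here.

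\emph{Monotonicity along $\can\preccurlyeq$.} If $\Delta=\Gamma,\Gamma'$, then $\DomC(\Gamma)\subseteq\DomC(\Delta)$ holds by definition. Preservation of $\preceq$ and $\sqsubseteq$ is the weakening of relation judgments established above. Preservation of $V$ follows from weakening of the truth judgment: $\Gamma,\blacktriangleleft^{\exclam}\vdash\Box^{\succeq\gamma}p$ implies $\Gamma,\Gamma',\blacktriangleleft^{\exclam}\vdash\Box^{\succeq\gamma}p$. Here the whole derivation is over a context ending in $\blacktriangleleft^{\exclam}$ with position $\exclam$, and $\Gamma'$ is inserted before that item. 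The roots coincide, both being $\exclam$. Finally, $\can\preccurlyeq$ is a preorder on the nonempty set $\can{W}$, which contains $\varepsilon$.
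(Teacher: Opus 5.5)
The paper states this lemma without proof, so there is no argument of its own to compare against. Your clause-by-clause verification is the natural one and is correct in substance. The needed ingredients are exactly the ones you use:
\begin{itemize}
\item leastness of $\exclam$ by induction on the context;
\item weakening of the $\preceq$/$\sqsubseteq$ judgments by appending, which is harmless because $\preceq$-Hyp and $\sqsubseteq$-$\mathord{\blacktriangleright}$ allow an arbitrary suffix;
\item the $\Box$-I/$\Box$-E detour through $\mathord{\blacktriangleright}^{\delta\within\gamma'},\mathord{\blacktriangleleft}^{\exclam}$ for upward closure of $\can{V_\Gamma}(p)$;
\item inserting $\Gamma'$ just before the final $\mathord{\blacktriangleleft}^{\exclam}$ for monotonicity of $V$.
\end{itemize}
The last two steps are instances of structural facts the paper records elsewhere: the Persistency lemma, and the ``general weakening for top-level subderivations'' in the appendix lemma on admissible rules used for the Truth Lemma.

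One correction: the unrestricted weakening $\Gamma,\Gamma''\vdash A \Rightarrow \Gamma,\Gamma',\Gamma''\vdash A$ that you would ``first try'' is false, not merely too strong to prove. For instance, $p^{\gamma_1}, q^{\gamma_2} \vdash p$ holds by Hyp, since $\preceq$-Hyp yields $\gamma_1 \preceq \gamma_2$. In $p^{\gamma_1}, \mathord{\blacktriangleright}^{\delta \within \exclam}, q^{\gamma_2}$, however, the only generating $\preceq$-facts are $\exclam \preceq \gamma_1$, $\exclam \preceq \delta$ and $\delta \preceq \gamma_2$. A one-world model on $\{\exclam,\gamma_1,\delta,\gamma_2\}$ with $V(p)=\{\gamma_1\}$, $V(q)=\{\gamma_2\}$ and $\gamma_1 \sqsubseteq \delta$ satisfies this context but not $p$ at $\gamma_2$. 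By soundness, the weakened sequent is underivable.

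So the obstruction is not the well-formedness of an intermediate $\mathord{\blacktriangleleft}$. It is that $\preceq$-Hyp and $\sqsubseteq$-$\mathord{\blacktriangleright}$ tie each newly introduced classifier to the position of its immediate predecessor, so inserting material in front of an item changes which relations are derivable.

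The restricted form you fall back on is sound precisely because the inserted block is immediately followed by $\mathord{\blacktriangleleft}^{\exclam}$. That item resets the position to $\exclam$ whatever precedes it, so every $\preceq$-Hyp and $\sqsubseteq$-$\mathord{\blacktriangleright}$ instance in the suffix is unchanged. Its own side condition $\exclam \sqsubseteq \mathrm{pos}(\Gamma,\Delta)$ follows from leastness and Lift.

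Both of your uses fit this shape. Monotonicity of $V$ inserts $\Gamma'$. Upward closure inserts $\mathord{\blacktriangleleft}^{\exclam},\mathord{\blacktriangleright}^{\delta\within\gamma'}$ before the final $\mathord{\blacktriangleleft}^{\exclam}$; equivalently, it is Persistency with the trivial side condition $\exclam\preceq\exclam$. So state and prove only that version, by induction on derivations generalized over the suffix following $\mathord{\blacktriangleleft}^{\exclam}$.
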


This canonical model clarifies how the layered structure of \bml-model corresponds to the structure of contexts in our proof system: each $ \can{ M _{ \Gamma } } $ models the structure formed by \emph{classifiers}, whereas $ \can{ \mathfrak{M} } $ models the structure formed by \emph{contexts}. The relation $ \can{ \preceq _{ \Gamma } } $ in a \bml-structure of the canonical model corresponds to the judgment $\Gamma  \vdash  \gamma_{{\mathrm{1}}}  \preceq  \gamma_{{\mathrm{2}}}$ while the relation $ \preccurlyeq $ corresponds to an order between contexts.

In stating truth lemma, it should be noted that there is a subtle difference between Kripke semantics and proof system. In semantics, the truth of a formula is always defined at each point of a model, whereas in syntax, only its validity at~$ \mathrm{pos} ( \Gamma ) $ can be asserted under $\Gamma$;
in this respect, \bml differs from ordinary labelled proof systems\nobreakspace (cf.\@ e.g.,~\cite{negri2005proof,simpson1994proof}). The restriction, however, does not pose a problem for establishing their correspondence, because $\Gamma  \bmttsym{,}   \mathord{\blacktriangleleft} ^{  \mathord{\boldsymbol{!} }  }   \vdash   \Box^{\mathord{\succeq}  \gamma }  \bmttnt{A} $ can be used instead to represent the validity of~$\bmttnt{A}$ at $\gamma$ under~$\Gamma$.
Here, the bounded modality expresses monotonicity, analogous to the $\logicS4$ modality in the Gödel--McKinsey--Tarski translation.

The \emph{canonical $\Gamma$-assignment} $ \can{ \rho _{ \Gamma } } $ is an assignment that maps each $\gamma \, \in \,  \mathbf{Dom}_{\token{C} }( \Gamma ) $ to itself, and we write $ \Gamma ,  \gamma  \can{  \Vdash  }  \bmttnt{A} $ if $  \can{ \mathfrak{M} }  ,  \Gamma ,  \gamma    \Vdash  ^{  \can{ \rho _{ \Gamma } }  }  \bmttnt{A} $. The \emph{truth lemma} is stated in the following form:

\begin{lemma*}[Truth Lemma]\label{claim:truth-lemma}
  $ \Gamma ,  \gamma  \can{  \Vdash  }  \bmttnt{A}  \iff \Gamma  \bmttsym{,}   \mathord{\blacktriangleleft} ^{  \mathord{\boldsymbol{!} }  }   \vdash   \Box^{\mathord{\succeq}  \gamma }  \bmttnt{A} $.
\end{lemma*}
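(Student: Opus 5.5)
We prove the statement by induction on the structure of $\bmttnt{A}$, simultaneously for all well-formed contexts $\Gamma$ and all $\gamma \in \mathbf{Dom}_{\mathrm{C}}(\Gamma)$. Before the induction we collect some syntactic facts about the proof system.

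Weakening is needed in two forms. If $\Gamma \vdash \bmttnt{A}$, then $\Gamma, \Gamma' \vdash \bmttnt{A}$ provided $\mathrm{pos}$ is restored. Concretely, from $\Gamma, \mathord{\blacktriangleleft}^{!} \vdash \Box^{\succeq\gamma}\bmttnt{A}$ we want $\Gamma, \Gamma', \mathord{\blacktriangleleft}^{!} \vdash \Box^{\succeq\gamma}\bmttnt{A}$. This is legitimate because $! \sqsubseteq$ anything: $!$ is least for $\preceq$, and \ref{rule:mtrans-lift} turns this into $\sqsubseteq$.

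The second fact is the key bridge between the two readings of truth:
\[
\Gamma^{\gamma'}, \mathord{\blacktriangleleft}^{!} \vdash \Box^{\succeq\gamma}\bmttnt{A}
\quad\text{and}\quad
\Gamma \vdash \gamma \preceq \gamma'
\quad\text{imply}\quad
\Gamma \vdash \bmttnt{A}.
\]
This is exactly \ref{rule:derive-bm-e} with $\gamma_2 = !$.

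For the converse we use \ref{rule:derive-bm-i}. To get $\Gamma, \mathord{\blacktriangleleft}^{!} \vdash \Box^{\succeq\gamma}\bmttnt{A}$, it suffices to derive $\bmttnt{A}$ in $\Gamma, \mathord{\blacktriangleleft}^{!}, \mathord{\blacktriangleright}^{\delta \succeq \gamma}$ with $\delta$ fresh. The position of that context is $\delta$, and it satisfies $\gamma \preceq \delta$ and $\gamma' \sqsubseteq \delta$ for every $\gamma'$ (via $!$).

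Persistency (\Zcref{claim:monotonicity}) and a standard substitution lemma for classifiers (needed for $\forall$) will also be assumed.

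The base case is the atom $\bmttnt{p}$. The left side unfolds to: for all $\Delta \succcurlyeq \Gamma$, $\gamma \in \can{V_\Delta}(\bmttnt{p})$, i.e.\ $\Delta, \mathord{\blacktriangleleft}^{!} \vdash \Box^{\succeq\gamma}\bmttnt{p}$. Taking $\Delta = \Gamma$ gives $\Rightarrow$, and weakening gives $\Leftarrow$.

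For the outer clause $\Vdash$ in the induction, $\Leftarrow$ always follows from weakening. It therefore suffices to show that the $\vDash$-clause at $\Delta$ matches derivability at $\Delta$.

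\textbf{Implication $\bmttnt{A} \to \bmttnt{B}$.} For $\Leftarrow$: given $\epsilon \succeq \gamma$ in $\Delta$ with $\Delta, \epsilon \Vdash \bmttnt{A}$, the induction hypothesis yields $\Delta, \mathord{\blacktriangleleft}^{!} \vdash \Box^{\succeq\epsilon}\bmttnt{A}$. Inside a fresh $\mathord{\blacktriangleright}^{\delta\succeq\epsilon}$ we eliminate both boxes to $\bmttnt{A}$ and $\bmttnt{A} \to \bmttnt{B}$ at $\delta$, which needs $\gamma \preceq \epsilon \preceq \delta$. We then apply $\to$-E and reintroduce $\Box^{\succeq\epsilon}\bmttnt{B}$, and conclude by the induction hypothesis.

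For $\Rightarrow$: extend to $\Delta = \Gamma, \mathord{\blacktriangleleft}^{!}, \mathord{\blacktriangleright}^{\delta\succeq\gamma}, \bmttnt{A}^{\epsilon}$. This context is in $\can W$ and extends $\Gamma$, with $\gamma \preceq \epsilon$. By \ref{rule:derive-hyp} under a further $\mathord{\blacktriangleleft}^{!}, \mathord{\blacktriangleright}$ we get $\Delta, \mathord{\blacktriangleleft}^{!} \vdash \Box^{\succeq\epsilon}\bmttnt{A}$, hence $\Delta, \epsilon \Vdash \bmttnt{A}$ by the induction hypothesis. The semantic clause then gives $\Delta, \epsilon \Vdash \bmttnt{B}$, so the induction hypothesis and the bridge fact yield $\Delta \vdash \bmttnt{B}$. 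Finally $\to$-I followed by $\Box$-I gives the claim.

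\textbf{Box and $\forall$.} These follow the same pattern. For $\Box^{\succeq\gamma_0}\bmttnt{A}$, the $\Rightarrow$ direction extends $\Gamma$ by $\mathord{\blacktriangleleft}^{!}, \mathord{\blacktriangleright}^{\delta\succeq\gamma}, \mathord{\blacktriangleright}^{\epsilon\succeq\gamma_0}$, so that $\delta \sqsubseteq \epsilon$ and $\gamma_0 \preceq \epsilon$. For $\forall\gamma_1 \succeq \gamma_2.\bmttnt{A}$, the $\Rightarrow$ direction extends by a fresh item $\gamma_1 \succeq \gamma_2$, uses $\forall$-I, and needs the semantic assignment $\rho\cdot[\gamma_1 \mapsto \epsilon]$ to agree with the canonical one after substitution $\bmttnt{A}[\gamma_1 \coloneqq \epsilon]$. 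The $\Leftarrow$ direction uses $\forall$-E, respectively $\Box$-E inside $\Box$-I.

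\textbf{Main obstacle.} The $\Box$ case is the hardest. Semantically, $\Delta, \delta \vDash \Box^{\succeq\gamma_0}\bmttnt{A}$ quantifies over all $\epsilon \sqsupseteq \delta$ in the canonical structure, i.e.\ all $\epsilon$ with $\Delta \vdash \delta \sqsubseteq \epsilon$. For the $\Leftarrow$ direction we must then derive $\bmttnt{A}$ at $\epsilon$ from a box held at $\delta$, which requires placing $\mathord{\blacktriangleleft}^{\delta}$ soundly. I would first prove an auxiliary admissible rule: if $\Delta, \mathord{\blacktriangleleft}^{!} \vdash \Box^{\succeq\gamma}\Box^{\succeq\gamma_0}\bmttnt{A}$, $\Delta \vdash \gamma \sqsubseteq \epsilon$ and $\Delta \vdash \gamma_0 \preceq \epsilon$, then $\Delta, \mathord{\blacktriangleleft}^{!} \vdash \Box^{\succeq\epsilon}\bmttnt{A}$. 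I would attempt this rule via a $\mathord{\blacktriangleleft}$-strengthening lemma: if $\Gamma \vdash \delta \sqsubseteq \mathrm{pos}(\Gamma)$, then anything derivable in $\Gamma, \mathord{\blacktriangleleft}^{!}$ remains derivable in $\Gamma, \mathord{\blacktriangleleft}^{\delta}$. The lemma would be proved by induction on derivations using left- and right-stability, and I expect this to be where most of the technical work lies.
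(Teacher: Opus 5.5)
Your proposal follows the paper's proof in all essentials. Both use induction on $A$, the same bridge ($\Box$-E against a trailing $\mathord{\blacktriangleleft}^{!}$) and its converse ($\Box$-I under a fresh $\mathord{\blacktriangleright}$), and the same witnessing extensions $\Gamma, \mathord{\blacktriangleleft}^{!}, \mathord{\blacktriangleright}^{\delta\succeq\gamma}, \ldots$ for the $\Rightarrow$ directions. Your first weakening fact is the paper's ``general weakening for top-level subderivations''; to prove it you will want the paper's form, which allows an arbitrary suffix after $\mathord{\blacktriangleleft}^{!}$. The paper proves each $\Rightarrow$ by contrapositive; your direct arguments are equally valid and avoid the classical step.

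Two points need fixing.

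\textbf{Induction measure.} Induction on the \emph{structure} of $A$ does not cover the $\forall$ case, direction $\Leftarrow$. After $\forall$-E you must apply the hypothesis to $A[\gamma_1 \coloneqq e]$ for an arbitrary $e \in \mathbf{Dom}_{\mathrm{C}}(\Delta)$, and this is not a subformula of $A$. Induct instead on the size of $A$, which classifier substitution preserves; this is what the paper does.

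\textbf{The ``main obstacle''.} It is already solved by your own weakening fact, because the inserted segment may itself end in a $\mathord{\blacktriangleleft}$. Assume $\Delta \vdash \gamma \sqsubseteq \epsilon$ and $\Delta \vdash \gamma_0 \preceq \epsilon$.
\begin{itemize}
\item The context $\Delta, \mathord{\blacktriangleleft}^{!}, \mathord{\blacktriangleright}^{\epsilon'\succeq\epsilon}, \mathord{\blacktriangleleft}^{\gamma}$ is well formed, since $\gamma \sqsubseteq \epsilon \preceq \epsilon'$.
\item Weakening the hypothesis gives $\Delta, \mathord{\blacktriangleleft}^{!}, \mathord{\blacktriangleright}^{\epsilon'\succeq\epsilon}, \mathord{\blacktriangleleft}^{\gamma}, \mathord{\blacktriangleleft}^{!} \vdash \Box^{\succeq\gamma}\Box^{\succeq\gamma_0}A$.
\item $\Box$-E at position $\gamma$ (where $\gamma \preceq \gamma$ holds trivially) strips the outer box.
\item $\Box$-E at position $\epsilon'$, using $\gamma_0 \preceq \epsilon'$, consumes $\mathord{\blacktriangleleft}^{\gamma}$ and strips the inner box.
\item $\Box$-I then yields $\Delta, \mathord{\blacktriangleleft}^{!} \vdash \Box^{\succeq\epsilon}A$.
\end{itemize}
This is exactly the paper's derivation.

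Your $\mathord{\blacktriangleleft}$-strengthening lemma is true, but it does not close the case by itself. After replacing $\mathord{\blacktriangleleft}^{!}$ by $\mathord{\blacktriangleleft}^{\gamma}$ you hold $\Box^{\succeq\gamma}\Box^{\succeq\gamma_0}A$ at position $\gamma$. You still need a further $\mathord{\blacktriangleleft}$ to eliminate the outer box there; eliminating it across $\mathord{\blacktriangleleft}^{\gamma}$ would require $\gamma \preceq \epsilon'$, which you do not have. So you end up needing the weakening step anyway, and then the strengthening lemma is superfluous.

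Finally, two side remarks are false, though your arguments do not rely on them.
\begin{itemize}
\item In $\Gamma, \mathord{\blacktriangleleft}^{!}, \mathord{\blacktriangleright}^{\delta\succeq\gamma}$ you do not get $\gamma' \sqsubseteq \delta$ for every $\gamma'$. For $\gamma' \in \mathbf{Dom}_{\mathrm{C}}(\Gamma)$ it holds only when $\Gamma \vdash \gamma' \sqsubseteq \gamma$, for example $\gamma' = !$.
\item The $\vDash$-clause at a fixed $\Delta$ does not match derivability at $\Delta$ in the $\Rightarrow$ direction. For $\to$ you need the fresh $A^{\epsilon}$ that only an extension supplies. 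This is why your actual $\Rightarrow$ arguments correctly go through the outer $\Vdash$.
\end{itemize}
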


Finally, we show Kripke completeness:

\begin{theorem*}[Kripke Completeness]\label{claim:Kripke-completeness}\quitvmode\samepage
  \begin{enumerate}
  \item If\/ $\Gamma \,  \Vdash  \, \bmttnt{A}$, then\/ $\Gamma  \vdash  \bmttnt{A}$.
  \item If\/ $\Gamma \,  \Vdash  \, \gamma_{{\mathrm{1}}} \, \preceq \, \gamma_{{\mathrm{2}}}$, then\/ $\Gamma  \vdash  \gamma_{{\mathrm{1}}}  \preceq  \gamma_{{\mathrm{2}}}$.
  \item If\/ $\Gamma \,  \Vdash  \, \gamma_{{\mathrm{1}}} \, \sqsubseteq \, \gamma_{{\mathrm{2}}}$, then\/ $ \Gamma \vdash \gamma_{{\mathrm{1}}} \sqsubseteq \gamma_{{\mathrm{2}}} $.
  \end{enumerate}
\end{theorem*}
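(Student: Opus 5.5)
We prove all three parts by instantiating the semantic consequence at the canonical model $\can{\mathfrak{M}}$ (a \bml-model by the preceding lemma), at world $\Gamma$ itself, with the canonical assignment $\can{\rho_\Gamma}$.

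\emph{First step: $\can{\mathfrak{M}},\Gamma \Vdash^{\can{\rho_\Gamma}} \Gamma$ for every well-formed $\Gamma$.} We argue by induction along the prefixes of $\Gamma$. For a prefix $\Gamma_0$ of $\Gamma$, the clauses only ask for relations $\mathrm{pos}(\Gamma_0)\preceq\gamma$, $\mathrm{pos}(\Gamma_0)\sqsubseteq\gamma_2$ and $\gamma_1\preceq\gamma_2$ in $\can{M_\Gamma}$, that is, for derivability under $\Gamma$.
\begin{itemize}
\item These follow directly from \ref{rule:itrans-hyp}, \ref{rule:mtrans-open}, \ref{rule:itrans-open} and \ref{rule:itrans-cls}. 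Those rules allow an arbitrary suffix $\Gamma_2$, so they apply to $\Gamma$ regardless of which prefix is being checked.
\item For a hypothesis item $A^{\gamma}$ we must also show $\Gamma,\gamma \can{\Vdash} A$. By the Truth Lemma this reduces to $\Gamma,\mathord{\blacktriangleleft}^{\exclam}\vdash\Box^{\succeq\gamma}A$. We obtain it by \ref{rule:derive-bm-i}, which extends the context with $\mathord{\blacktriangleright}^{\gamma'\succeq\gamma}$, and then \ref{rule:derive-hyp}, using $\gamma\preceq\gamma'$ from \ref{rule:itrans-open}.
\item Well-formedness of $\Gamma,\mathord{\blacktriangleleft}^{\exclam}$ needs $\exclam\sqsubseteq\mathrm{pos}(\Gamma)$. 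This comes from $\exclam\preceq\mathrm{pos}(\Gamma)$ via \ref{rule:mtrans-lift}.
\end{itemize}
This last fact, that $\exclam$ is $\preceq$-below every classifier, needs its own small induction on $\Gamma$. The cases for hypotheses, $\mathord{\blacktriangleright}$ items and classifier items go through transitivity. For a $\mathord{\blacktriangleleft}^{\delta}$ item, the new position is the old classifier $\delta$, so the induction hypothesis applies directly.

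\emph{Parts 2 and 3.} From $\Gamma\Vdash\gamma_1\trianglelefteq\gamma_2$ and the first step we get $\gamma_1 \mathrel{\can{\trianglelefteq_\Gamma}} \gamma_2$. By definition of the canonical model, this is exactly $\Gamma\vdash\gamma_1\trianglelefteq\gamma_2$.

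\emph{Part 1.} We get $\Gamma,\mathrm{pos}(\Gamma)\can{\Vdash}A$, and the Truth Lemma gives $\Gamma,\mathord{\blacktriangleleft}^{\exclam}\vdash\Box^{\succeq\gamma}A$ with $\gamma=\mathrm{pos}(\Gamma)$. To return to $\Gamma\vdash A$ we apply \ref{rule:derive-bm-e} with $\gamma_1=\gamma_2'=\ldots$ chosen as follows: the premise context is $\Gamma^{\gamma},\mathord{\blacktriangleleft}^{\exclam}$, the bound is $\gamma_3=\gamma$, and the side condition $\gamma\preceq\gamma$ holds by \ref{rule:itrans-refl}. This yields $\Gamma\vdash A$.

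The main obstacle is the first step: the Truth Lemma must be usable for hypotheses sitting inside $\Gamma$ at arbitrary positions, and the relational clauses must be checked against the full context $\Gamma$ rather than the prefix. The first issue is handled by the $\Box$-I/Hyp derivation above. The second relies on the rules of \Zcref{fig:derive-trans} being stated with an arbitrary trailing $\Gamma_2$, together with weakening of relational judgments along context extension. If that weakening is not available directly, we will prove it first by a routine induction on derivations.
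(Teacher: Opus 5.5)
Your proof is correct and is essentially the paper's argument. The paper proves the statement by contraposition, taking the canonical model at world $\Gamma$ with the canonical assignment as a countermodel via the Truth Lemma. You argue directly and make explicit the two steps that one-line proof leaves implicit: first, that $\Gamma$ is satisfied at its own canonical world (using $\Box$-I and Hyp for hypothesis items, and $\exclam\preceq\mathrm{pos}(\Gamma)$ for well-formedness of $\Gamma,\mathord{\blacktriangleleft}^{\exclam}$); second, the $\Box$-E step that turns $\Gamma,\mathord{\blacktriangleleft}^{\exclam}\vdash\Box^{\succeq\mathrm{pos}(\Gamma)}A$ back into $\Gamma\vdash A$.
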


\section{Lambda Calculus and Metatheory}\label{sec:calc}
Under the Curry--Howard isomorphism~\cite{book/GirardTL89,book/SorensenU2006}, we can consider a typed lambda calculus that corresponds to our natural-deduction system.
The syntactic categories of our term calculus are defined as follows:

\vspace{0.5em}
\begin{tabular}{llcl}
  \textbf{Types} & $\bmttnt{A}$, $\bmttnt{B}$ & $\Coloneqq$ & $\bmttnt{p} \mid \bmttnt{A}  \mathbin{\rightarrow}  \bmttnt{B} \mid  \Box^{\mathord{\succeq}  \gamma }  \bmttnt{A}  \mid  \forall  \gamma_{{\mathrm{1}}} \within  \gamma_{{\mathrm{2}}} . \bmttnt{A} $ \\
  \textbf{Contexts} & $\Gamma$, $\Delta$ & $\Coloneqq$ & $ \varepsilon  \mid \Gamma  \bmttsym{,}   \binder{ \bmttmv{x} }\has@{ \gamma } \bmttnt{A}  \mid \Gamma  \bmttsym{,}   \tau : \mathord{\blacktriangleright} ^{\binder{ \gamma_{{\mathrm{1}}} } \within \gamma_{{\mathrm{2}}} }  \mid \Gamma  \bmttsym{,}   \mathord{\blacktriangleleft} _{ \bmttnt{T} }^{ \gamma }  \mid \Gamma  \bmttsym{,}   \binder{ \gamma_{{\mathrm{1}}} } \within \gamma_{{\mathrm{2}}} $ \\
  \textbf{MT-witness} & $\bmttnt{T}$ & $\Coloneqq$ & $ \overrightarrow{ \tau } $ \\
  \textbf{Terms} & $\bmttnt{M}$, $\bmttnt{N}$ & $\Coloneqq$ & $\bmttmv{x} \mid  \lambda \binder{ \bmttmv{x} }\has@{ \gamma } \bmttnt{A} \ldotp \bmttnt{M}  \mid  \bmttnt{M_{{\mathrm{1}}}}   \bmttnt{M_{{\mathrm{2}}}}  \mid  \mathbf{quo} (\binder{ \tau })\lbrace^{\binder{ \gamma_{{\mathrm{1}}} } \within \gamma_{{\mathrm{2}}} }  \bmttnt{M} \rbrace  \mid  \mathbf{unq} _{ \bmttnt{T} }\lbrace^{ \gamma } \bmttnt{M} \rbrace $ \\
  && $\mid$ & $ \lambda \binder{ \gamma_{{\mathrm{1}}} }\within  \gamma_{{\mathrm{2}}} . \bmttnt{M}  \mid  \bmttnt{M} \gamma $
\end{tabular}
\vspace{0.5em}

In our calculus, we provide two sorts of proof terms: one for modal transitions and another for proofs of propositions. For modal transitions, we introduce atomic modal transition witness $\tau$ for the hypothetical modal transition introduced by $ \mathord{\blacktriangleright} $. A modal transition witness $\bmttnt{T}$ represents a proof of a modal transition, which is a sequence of atomic modal transition witnesses. We write $ \varepsilon $ for an empty sequence and $ \bmttnt{T_{{\mathrm{1}}}}  +  \bmttnt{T_{{\mathrm{2}}}} $ for the concatenation of two modal transition witnesses. Modal transition judgments are extended to carry modal transition witnesses as $ \Gamma \vdash \bmttnt{T} : \gamma_{{\mathrm{1}}} \sqsubseteq \gamma_{{\mathrm{2}}} $. Its derivation rules are given in \Zcref{fig:termassignment}.
\begin{figure}[bt]
  \centering
  \begin{mathpar}[\footnotesize]
    \begin{prooftree}
      \caption{$ \sqsubseteq $-Lift}\label{rule:calc-mtrans-lift}
      \hypo{\Gamma  \vdash  \gamma_{{\mathrm{1}}}  \preceq  \gamma_{{\mathrm{2}}}}
      \infer1{ \Gamma \vdash  \varepsilon  : \gamma_{{\mathrm{1}}} \sqsubseteq \gamma_{{\mathrm{2}}} }
    \end{prooftree}
    \and
    \begin{prooftree}
      \caption{$ \sqsubseteq $-Trans}\label{rule:calc-mtrans-trans}
      \hypo{ \Gamma \vdash \bmttnt{T_{{\mathrm{1}}}} : \gamma_{{\mathrm{1}}} \sqsubseteq \gamma_{{\mathrm{2}}} }
      \hypo{ \Gamma \vdash \bmttnt{T_{{\mathrm{2}}}} : \gamma_{{\mathrm{2}}} \sqsubseteq \gamma_{{\mathrm{3}}} }
      \infer2{ \Gamma \vdash  \bmttnt{T_{{\mathrm{1}}}}  +  \bmttnt{T_{{\mathrm{2}}}}  : \gamma_{{\mathrm{1}}} \sqsubseteq \gamma_{{\mathrm{3}}} }
    \end{prooftree}
    \and
    \begin{prooftree}
      \caption{$ \sqsubseteq $-$ \mathord{\blacktriangleright} $}\label{rule:calc-mtrans-open}
      \hypo{}
      \infer1{  \Gamma_{{\mathrm{1}}} ^{\position{ \gamma_{{\mathrm{1}}} } }   \bmttsym{,}   \tau : \mathord{\blacktriangleright} ^{\binder{ \gamma_{{\mathrm{2}}} } \within \gamma_{{\mathrm{3}}} }   \bmttsym{,}  \Gamma_{{\mathrm{2}}} \vdash \tau : \gamma_{{\mathrm{1}}} \sqsubseteq \gamma_{{\mathrm{2}}} }
    \end{prooftree}\\
    \begin{prooftree}[separation=1.0em]
      \caption{WF-$ \mathord{\blacktriangleleft} $}\label{rule:calc-ctx-wf-shut}
      \hypo{\vdash   \Gamma ^{\position{ \gamma } }   \hasType \, \bmttkw{ctx}}
      \hypo{  \Gamma ^{\position{ \gamma } }  \vdash \bmttnt{T} : \delta \sqsubseteq \gamma }
      \infer2{\vdash   \Gamma ^{\position{ \gamma } }   \bmttsym{,}   \mathord{\blacktriangleleft} _{ \bmttnt{T} }^{ \delta }   \hasType \, \bmttkw{ctx}}
    \end{prooftree}\\
    \begin{prooftree}[separation=1.0em]
      \caption{Var}\label{rule:type-var}
      \hypo{ \binder{ \bmttmv{x} }\has@{ \gamma_{{\mathrm{1}}} } \bmttnt{A}  \, \in \,  \Gamma ^{\position{ \gamma_{{\mathrm{2}}} } } }
      \hypo{ \Gamma ^{\position{ \gamma_{{\mathrm{2}}} } }   \vdash  \gamma_{{\mathrm{1}}}  \preceq  \gamma_{{\mathrm{2}}}}
      \infer2{ \Gamma ^{\position{ \gamma_{{\mathrm{2}}} } }   \vdash  \bmttmv{x}  \hasType  \bmttnt{A}}
    \end{prooftree}
    \and
    \begin{prooftree}[separation=1.0em]
      \caption{$\to$-I}\label{rule:type-to-i}
      \hypo{\Gamma  \bmttsym{,}   \binder{ \bmttmv{x} }\has@{ \gamma } \bmttnt{A_{{\mathrm{1}}}}   \vdash  \bmttnt{M_{{\mathrm{1}}}}  \hasType  \bmttnt{A_{{\mathrm{2}}}}}
      \hypo{\gamma \, \notin \,  \token{FC}( \bmttnt{A_{{\mathrm{2}}}} ) }
      \infer2{\Gamma  \vdash   \lambda \binder{ \bmttmv{x} }\has@{ \gamma } \bmttnt{A_{{\mathrm{1}}}} \ldotp \bmttnt{M_{{\mathrm{1}}}}   \hasType  \bmttnt{A_{{\mathrm{1}}}}  \mathbin{\rightarrow}  \bmttnt{A_{{\mathrm{2}}}}}
    \end{prooftree}
    \and
    \begin{prooftree}[separation=1.0em]
      \caption{$\to$-E}\label{rule:type-to-e}
      \hypo{\Gamma  \vdash  \bmttnt{M_{{\mathrm{1}}}}  \hasType  \bmttnt{A_{{\mathrm{1}}}}  \mathbin{\rightarrow}  \bmttnt{A_{{\mathrm{2}}}}}
      \hypo{\Gamma  \vdash  \bmttnt{M_{{\mathrm{2}}}}  \hasType  \bmttnt{A_{{\mathrm{1}}}}}
      \infer2{\Gamma  \vdash   \bmttnt{M_{{\mathrm{1}}}}   \bmttnt{M_{{\mathrm{2}}}}   \hasType  \bmttnt{A_{{\mathrm{2}}}}}
    \end{prooftree}
    \and
    \begin{prooftree}
      \caption{$\Box$-I}\label{rule:type-bm-i}
      \hypo{\Gamma  \bmttsym{,}   \tau : \mathord{\blacktriangleright} ^{\binder{ \gamma_{{\mathrm{1}}} } \within \gamma_{{\mathrm{2}}} }   \vdash  \bmttnt{M}  \hasType  \bmttnt{A}}
      \hypo{\gamma_{{\mathrm{1}}} \, \notin \,  \token{FC}( \bmttnt{A} ) }
      \infer2{\Gamma  \vdash   \mathbf{quo} (\binder{ \tau })\lbrace^{\binder{ \gamma_{{\mathrm{1}}} } \within \gamma_{{\mathrm{2}}} }  \bmttnt{M} \rbrace   \hasType   \Box^{\mathord{\succeq}  \gamma_{{\mathrm{2}}} }  \bmttnt{A} }
    \end{prooftree}
    \and
    \begin{prooftree}
      \caption{$\Box$-E}\label{rule:type-bm-e}
      \hypo{ \Gamma ^{\position{ \gamma_{{\mathrm{1}}} } }   \bmttsym{,}   \mathord{\blacktriangleleft} _{ \bmttnt{T} }^{ \gamma_{{\mathrm{2}}} }   \vdash  \bmttnt{M}  \hasType   \Box^{\mathord{\succeq}  \gamma_{{\mathrm{3}}} }  \bmttnt{A} }
      \hypo{ \Gamma ^{\position{ \gamma_{{\mathrm{1}}} } }   \vdash  \gamma_{{\mathrm{3}}}  \preceq  \gamma_{{\mathrm{1}}}}
      \infer2{ \Gamma ^{\position{ \gamma_{{\mathrm{1}}} } }   \vdash   \mathbf{unq} _{ \bmttnt{T} }\lbrace^{ \gamma_{{\mathrm{2}}} } \bmttnt{M} \rbrace   \hasType  \bmttnt{A}}
    \end{prooftree}
    \and
    \begin{prooftree}
      \caption{$\forall$-I}\label{rule:type-polycls-i}
      \hypo{\Gamma  \bmttsym{,}   \binder{ \gamma_{{\mathrm{1}}} } \within \gamma_{{\mathrm{2}}}   \vdash  \bmttnt{M}  \hasType  \bmttnt{A}}
      \infer1{\Gamma  \vdash   \lambda \binder{ \gamma_{{\mathrm{1}}} }\within  \gamma_{{\mathrm{2}}} . \bmttnt{M}   \hasType   \forall  \gamma_{{\mathrm{1}}} \within  \gamma_{{\mathrm{2}}} . \bmttnt{A} }
    \end{prooftree}
    \and
    \begin{prooftree}
      \caption{$\forall$-E}\label{rule:type-polycls-e}
      \hypo{\Gamma  \vdash  \bmttnt{M}  \hasType   \forall  \gamma_{{\mathrm{1}}} \within  \gamma_{{\mathrm{2}}} . \bmttnt{A} }
      \hypo{\Gamma  \vdash  \gamma_{{\mathrm{2}}}  \preceq  \gamma_{{\mathrm{3}}}}
      \infer2{\Gamma  \vdash   \bmttnt{M} \gamma_{{\mathrm{3}}}   \hasType   \bmttnt{A} [ \binder{ \gamma_{{\mathrm{1}}} } \coloneqq \gamma_{{\mathrm{3}}}  ] }
    \end{prooftree}
  \end{mathpar}
  \caption{Curated Rules for the Term Assignment System.}\label{fig:termassignment}
  \vspace{-0.75\baselineskip}
\end{figure}
\ref{rule:calc-mtrans-lift} lifts intuitionistic transition to modal transition with an empty witness $ \varepsilon $, and \ref{rule:calc-mtrans-trans} concatenates two modal transition witnesses. \ref{rule:calc-mtrans-open} uses the atomic modal transition witness $\tau$ introduced by $ \mathord{\blacktriangleright} $.
As well-formedness of $ \mathord{\blacktriangleleft} $ requires modal transition, now $ \mathord{\blacktriangleleft} $ explicitly carries a modal transition witness $\bmttnt{T}$, as shown in \ref{rule:calc-ctx-wf-shut}.

To assign terms to proofs, we assign variables to hypotheses like $ \binder{ \bmttmv{x} }\has@{ \gamma } \bmttnt{A} $. Then, proof terms $\bmttnt{M}$ represent proofs of propositions, as in standard lambda calculi. \Zcref{fig:termassignment} provides derivation rules for typing judgments $\Gamma  \vdash  \bmttnt{M}  \hasType  \bmttnt{A}$.
In the resulting calculus, one can observe that terms for \ref{rule:type-to-i}, \ref{rule:type-bm-i}, \ref{rule:type-bm-e}, \ref{rule:type-polycls-i} correspond to items in context; thus, context structure and its \bml-structure capture the syntactic structure of lambda-terms. In particular, terms for \ref{rule:type-bm-i} and \ref{rule:type-bm-e} appear as quasi-quotation constructs, which is commonly observed in modal lambda-calculi~\cite{book/MartiniM96,journals/jacm/DaviesP01,conf/fossacs/Clouston18,journals/jacm/Davies17}.

As an example, \Zcref{fig:lambdaterms} shows how our calculus represents the example in \Zcref{fig:syntacticstructure:metaml}, and its corresponding \bml-structure, which demonstrates that \bml allows typing programs that were not typed in the standard S4- or LTL-based calculi.
\begin{figure}[bpt]
  \hfil
  \raisebox{-\totalheight}{\includegraphics[page=1]{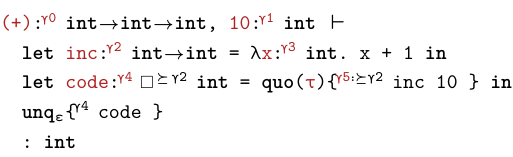}}\hfil
  \raisebox{-\totalheight}{\includegraphics[page=2]{images/csp-lambda-term.pdf}}\hfil
  \vspace{-0.75\baselineskip}
  \caption{The example in \Zcref{fig:syntacticstructure:metaml} written in our calculus, and its corresponding \bml-structure. As we lack constants, we introduce \code{+} and \code{10} as variables. \code{let}-expression is a shorthand for $\lambda$-abstraction and application.}
  \label{fig:lambdaterms}
  \vspace{-0.75\baselineskip}
\end{figure}
Here, $ \mathbf{unq} $ carries an empty modal transition witness $ \varepsilon $: this transition does not include actual modal transitions, and it can be regarded as runtime evaluation. Thus, modal transition witnesses can be regarded as providing staging information. This is quite similar to Kripke-style S4 modal calculus by Davies and Pfenning~\cite{journals/jacm/DaviesP01}, where unquote carries a natural number representing the number of modal transitions. We further discuss how we use modal transition witness for staged evaluation in \Zcref{sec:staging}.

\subsection{Logical Harmony and Reduction Semantics}
In preparation for discussion on proof reduction, we introduce three meta operations and confirm their properties. Classifier substitution $ \bmttsym{(}   \mathord{-}   \bmttsym{)} [ \binder{ \gamma_{{\mathrm{1}}} } \coloneqq \gamma_{{\mathrm{2}}}  ] $ operates on classifiers, contexts, types and terms, substituting free occurrences of $\gamma_{{\mathrm{1}}}$ with $\gamma_{{\mathrm{2}}}$. Modal transition substitution $ \bmttsym{(}   \mathord{-}   \bmttsym{)} [ \binder{ \gamma_{{\mathrm{1}}} } \coloneqq \gamma_{{\mathrm{2}}} , \binder{ \tau }  \coloneqq   \bmttnt{T}  ] $ operates on contexts and terms, substituting free occurrences of $\gamma_{{\mathrm{1}}}$ and $\tau$ with $\gamma_{{\mathrm{2}}}$ and $\bmttnt{T}$, respectively. Variable substitution $ \bmttsym{(}   \mathord{-}   \bmttsym{)} [ \binder{ \gamma_{{\mathrm{1}}} } \coloneqq \gamma_{{\mathrm{2}}} , \binder{ \bmttmv{x} }  \coloneqq   \bmttnt{M}  ] $ operates on terms, substituting free occurrences of $\gamma_{{\mathrm{1}}}$ and $\bmttmv{x}$ with $\gamma_{{\mathrm{2}}}$ and $\bmttnt{M}$, respectively. Then, we can confirm that the following lemmas hold.

\begin{lemma*}[Variable Substitution]\label{claim:varsubst}
  If\/ $ \Gamma_{{\mathrm{1}}} ^{\position{ \gamma_{{\mathrm{1}}} } }   \bmttsym{,}   \binder{ \bmttmv{x} }\has@{ \gamma_{{\mathrm{2}}} } \bmttnt{A}   \bmttsym{,}  \Gamma_{{\mathrm{2}}}  \vdash  \bmttnt{M_{{\mathrm{1}}}}  \hasType  \bmttnt{B}$ and\/ $\Gamma_{{\mathrm{1}}}  \vdash  \bmttnt{M_{{\mathrm{2}}}}  \hasType  \bmttnt{A}$, then\/ $ \Gamma_{{\mathrm{1}}}  \bmttsym{,}  \Gamma_{{\mathrm{2}}} [ \binder{ \gamma_{{\mathrm{2}}} } \coloneqq \gamma_{{\mathrm{1}}}  ]   \vdash   \bmttnt{M_{{\mathrm{1}}}} [ \binder{ \gamma_{{\mathrm{2}}} } \coloneqq \gamma_{{\mathrm{1}}} , \binder{ \bmttmv{x} }  \coloneqq   \bmttnt{M_{{\mathrm{2}}}}  ]   \hasType   \bmttnt{B} [ \binder{ \gamma_{{\mathrm{2}}} } \coloneqq \gamma_{{\mathrm{1}}}  ] $.
\end{lemma*}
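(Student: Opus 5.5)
We prove the statement by induction on the derivation of $ \Gamma_{{\mathrm{1}}} ^{\position{ \gamma_{{\mathrm{1}}} } }   \bmttsym{,}   \binder{ \bmttmv{x} }\has@{ \gamma_{{\mathrm{2}}} } \bmttnt{A}   \bmttsym{,}  \Gamma_{{\mathrm{2}}}  \vdash  \bmttnt{M_{{\mathrm{1}}}}  \hasType  \bmttnt{B}$, generalising over $\Gamma_{{\mathrm{2}}}$ and $\bmttnt{B}$. The substitution acts on two levels: $\gamma_{{\mathrm{2}}}$ is collapsed to the position $\gamma_{{\mathrm{1}}}$ of $\Gamma_{{\mathrm{1}}}$, and $\bmttmv{x}$ is replaced by $\bmttnt{M_{{\mathrm{2}}}}$.

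\textbf{Auxiliary lemmas on relational judgments.} Before the main induction we establish three facts.
\begin{itemize}
\item \emph{Classifier collapse.} If $ \Gamma_{{\mathrm{1}}} ^{\position{ \gamma_{{\mathrm{1}}} } }   \bmttsym{,}   \binder{ \bmttmv{x} }\has@{ \gamma_{{\mathrm{2}}} } \bmttnt{A}   \bmttsym{,}  \Gamma_{{\mathrm{2}}}  \vdash  \delta_1  \trianglelefteq  \delta_2$ (with or without a witness $\bmttnt{T}$), then $\Gamma_{{\mathrm{1}}}  \bmttsym{,}  \Gamma_{{\mathrm{2}}} [ \binder{ \gamma_{{\mathrm{2}}} } \coloneqq \gamma_{{\mathrm{1}}}  ]$ derives the substituted judgment with the same witness. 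This is proved by induction on the relational derivation. The only interesting case is \ref{rule:itrans-hyp} for the removed item, which yields $\gamma_{{\mathrm{1}}} \preceq \gamma_{{\mathrm{2}}}$ and becomes $\gamma_{{\mathrm{1}}} \preceq \gamma_{{\mathrm{1}}}$, closed by \ref{rule:itrans-refl}. Instances of \ref{rule:itrans-hyp} and \ref{rule:calc-mtrans-open} inside $\Gamma_{{\mathrm{2}}}$ whose left side was $\mathrm{pos}$ at the removed item now have left side $\gamma_{{\mathrm{1}}}$, which agrees with the substituted position.
\item \emph{Well-formedness.} As a corollary, well-formedness of the resulting context is preserved: the side condition of \ref{rule:calc-ctx-wf-shut} is handled by the collapse fact just stated.
\item \emph{Weakening.} If $\Gamma_{{\mathrm{1}}}  \vdash  \bmttnt{M_{{\mathrm{2}}}}  \hasType  \bmttnt{A}$ and $\Gamma_{{\mathrm{1}}}, \Delta$ is well formed with $\mathrm{pos}(\Gamma_{{\mathrm{1}}},\Delta)=\delta$ and $\Gamma_{{\mathrm{1}}},\Delta \vdash \gamma_{{\mathrm{1}}} \preceq \delta$, then $\Gamma_{{\mathrm{1}}},\Delta \vdash \bmttnt{M_{{\mathrm{2}}}} \hasType \bmttnt{A}$. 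This is the term-level analogue of \zcref{claim:monotonicity}, and I would prove it the same way. This requires care: $\bmttnt{M_{{\mathrm{2}}}}$ may contain $\mathbf{unq}_{\bmttnt{T}}\{^{\gamma} -\}$ whose witness must remain valid after weakening. Relational judgments are monotone under context extension, and the \ref{rule:type-bm-e} side condition $\gamma_3 \preceq \mathrm{pos}$ is obtained by transitivity with $\gamma_{{\mathrm{1}}} \preceq \delta$. I would therefore state weakening generally, as: $\Gamma ^{\position{ \gamma } } \vdash M : A$ and $\Gamma,\Delta ^{\position{ \delta } } \vdash \gamma \preceq \delta$ imply $\Gamma,\Delta\vdash M:A$. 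I would prove it by induction on $M$, with inner contexts handled by pushing extensions under binders.
\end{itemize}

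\textbf{Main induction.} The cases follow the typing rules.
\begin{itemize}
\item \emph{\ref{rule:type-var} with the variable $\bmttmv{x}$.} The premise gives $\gamma_{{\mathrm{2}}} \preceq \mathrm{pos}$. By the collapse fact this becomes $\Gamma_{{\mathrm{1}}},\Gamma_{{\mathrm{2}}}[\gamma_{{\mathrm{2}}}\coloneqq\gamma_{{\mathrm{1}}}] \vdash \gamma_{{\mathrm{1}}} \preceq \mathrm{pos}'$, and weakening of $\bmttnt{M_{{\mathrm{2}}}}$ closes the case. Since $\gamma_{{\mathrm{2}}}$ is bound by the hypothesis and, by well-formedness of $\Gamma_{{\mathrm{1}}}\vdash A$, does not occur in $\bmttnt{A}$, we have $\bmttnt{A}[\gamma_{{\mathrm{2}}}\coloneqq\gamma_{{\mathrm{1}}}]=\bmttnt{A}$.
\item \emph{\ref{rule:type-var} with another variable.} This follows directly from the collapse fact.
\item \emph{Binder cases \ref{rule:type-to-i}, \ref{rule:type-bm-i}, \ref{rule:type-polycls-i}.} We append the new item to $\Gamma_{{\mathrm{2}}}$ and apply the induction hypothesis. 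Freshness conditions $\gamma \notin \mathrm{FC}$ are preserved by choosing bound classifiers distinct from $\gamma_{{\mathrm{1}}},\gamma_{{\mathrm{2}}}$.
\item \emph{\ref{rule:type-bm-e}.} The $\blacktriangleleft_{\bmttnt{T}}$ item is likewise pushed into $\Gamma_{{\mathrm{2}}}$. Its well-formedness premise and the side condition $\gamma_3\preceq\gamma_1$ are transported by the collapse fact.
\item \emph{\ref{rule:type-polycls-e}.} We additionally need that classifier substitution commutes: $A[\gamma\coloneqq\gamma_3][\gamma_{{\mathrm{2}}}\coloneqq\gamma_{{\mathrm{1}}}] = A[\gamma_{{\mathrm{2}}}\coloneqq\gamma_{{\mathrm{1}}}][\gamma\coloneqq\gamma_3[\gamma_{{\mathrm{2}}}\coloneqq\gamma_{{\mathrm{1}}}]]$. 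This is standard given Barendregt conventions.
\end{itemize}

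\textbf{Expected main obstacle.} The hardest step should be the weakening lemma for $\bmttnt{M_{{\mathrm{2}}}}$ across a $\Gamma_{{\mathrm{2}}}$ that may contain $\blacktriangleright$, $\blacktriangleleft$ and classifier items. One must show that $\bmttnt{M_{{\mathrm{2}}}}$, typed at $\gamma_{{\mathrm{1}}}$, remains typable at any position $\preceq$-above $\gamma_{{\mathrm{1}}}$, with its unquote witnesses unchanged. The subtlety is that positions in $\Gamma_{{\mathrm{2}}}$ reachable only via $\sqsubseteq$ are not covered, and this is exactly why the \ref{rule:type-var} premise supplies $\preceq$. I would first try to prove weakening by induction on $\bmttnt{M_{{\mathrm{2}}}}$'s derivation, inserting $\Delta$ directly after $\Gamma_{{\mathrm{1}}}$ and re-deriving relational facts by monotonicity of $\vdash\trianglelefteq$ under suffix extension. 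If repositioning fails, I would fall back on a more general "renaming along $\preceq$" lemma.
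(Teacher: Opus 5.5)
Your proposal is correct and follows the paper's route. The paper proves this by a mutual induction over context well-formedness, type well-formedness, the $\preceq$ and $\sqsubseteq$ judgments, and typing, and it closes the case where $M_1$ is the substituted variable using Persistency, i.e.\ your term-level weakening. The differences are only organizational: you stage the collapse and well-formedness facts as separate preliminary lemmas rather than one simultaneous induction, and you plan to prove the term-level persistency yourself (rightly noting that unquote witnesses survive unchanged because $\varepsilon$ is the unit of $+$), whereas the paper simply cites it.
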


\begin{lemma*}[Rebasing]\label{claim:rebasing}
  If\/ $ \Gamma_{{\mathrm{1}}} ^{\position{ \gamma_{{\mathrm{1}}} } }   \bmttsym{,}   \mathord{\blacktriangleleft} _{ \bmttnt{T} }^{ \gamma_{{\mathrm{2}}} }   \bmttsym{,}   \tau : \mathord{\blacktriangleright} ^{\binder{ \gamma_{{\mathrm{3}}} } \within \gamma_{{\mathrm{4}}} }   \bmttsym{,}  \Gamma_{{\mathrm{2}}}  \vdash  \bmttnt{M_{{\mathrm{1}}}}  \hasType  \bmttnt{A}$ and\/ $\Gamma_{{\mathrm{1}}}  \vdash  \gamma_{{\mathrm{4}}}  \preceq  \gamma_{{\mathrm{1}}}$, then\/ $ \Gamma_{{\mathrm{1}}}  \bmttsym{,}  \Gamma_{{\mathrm{2}}} [ \binder{ \gamma_{{\mathrm{3}}} } \coloneqq \gamma_{{\mathrm{1}}} , \binder{ \tau }  \coloneqq   \bmttnt{T}  ]   \vdash   \bmttnt{M_{{\mathrm{1}}}} [ \binder{ \gamma_{{\mathrm{3}}} } \coloneqq \gamma_{{\mathrm{1}}} , \binder{ \tau }  \coloneqq   \bmttnt{T}  ]   \hasType   \bmttnt{A} [ \binder{ \gamma_{{\mathrm{3}}} } \coloneqq \gamma_{{\mathrm{1}}}  ] $.
\end{lemma*}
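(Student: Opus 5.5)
We prove the Rebasing lemma by induction on the typing derivation of $\bmttnt{M_{{\mathrm{1}}}}$, generalized over the suffix $\Gamma_{{\mathrm{2}}}$. We prove it simultaneously with the analogous statements for the auxiliary judgments. Write $\Gamma = \Gamma_{{\mathrm{1}}}, \mathord{\blacktriangleleft}_{\bmttnt{T}}^{\gamma_{{\mathrm{2}}}}, \tau : \mathord{\blacktriangleright}^{\gamma_{{\mathrm{3}}} \within \gamma_{{\mathrm{4}}}}, \Gamma_{{\mathrm{2}}}$ and $\Gamma' = \Gamma_{{\mathrm{1}}}, \Gamma_{{\mathrm{2}}}\sigma$, where $\sigma = [\gamma_{{\mathrm{3}}} \coloneqq \gamma_{{\mathrm{1}}}, \tau \coloneqq \bmttnt{T}]$. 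Concretely, we establish four facts together:
\begin{enumerate}
\item if $\vdash \Gamma : \mathbf{ctx}$, then $\vdash \Gamma' : \mathbf{ctx}$;
\item if $\Gamma \vdash \delta_1 \preceq \delta_2$, then $\Gamma' \vdash \delta_1\sigma \preceq \delta_2\sigma$;
\item if $\Gamma \vdash \bmttnt{T}' : \delta_1 \sqsubseteq \delta_2$, then $\Gamma' \vdash \bmttnt{T}'\sigma : \delta_1\sigma \sqsubseteq \delta_2\sigma$;
\item the typing statement itself.
\end{enumerate}
A first observation is that $\mathrm{pos}(\Gamma_{{\mathrm{1}}}, \mathord{\blacktriangleleft}^{\gamma_{{\mathrm{2}}}}_{\bmttnt{T}}, \tau{:}\mathord{\blacktriangleright}^{\gamma_{{\mathrm{3}}}\within\gamma_{{\mathrm{4}}}}) = \gamma_{{\mathrm{3}}}$, which $\sigma$ maps to $\gamma_{{\mathrm{1}}} = \mathrm{pos}(\Gamma_{{\mathrm{1}}})$. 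By a routine induction on $\Gamma_{{\mathrm{2}}}$, positions therefore commute with $\sigma$: $\mathrm{pos}(\Gamma)\sigma = \mathrm{pos}(\Gamma')$. Weakening and well-formedness of $\sigma$ on types (that is, $\gamma_{{\mathrm{3}}} \notin \mathbf{Dom}_{\mathrm{C}}(\Gamma_{{\mathrm{1}}})$) are assumed as standard.

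For the transition judgments, almost every rule commutes with $\sigma$ directly; only three base cases mention the removed items. For $\preceq$-$\blacktriangleright$, the context contains $\mathord{\blacktriangleright}^{\gamma_{{\mathrm{3}}}\within\gamma_{{\mathrm{4}}}}$, so $\Gamma \vdash \gamma_{{\mathrm{4}}} \preceq \gamma_{{\mathrm{3}}}$. It becomes $\Gamma_{{\mathrm{1}}} \vdash \gamma_{{\mathrm{4}}} \preceq \gamma_{{\mathrm{1}}}$, which is exactly the hypothesis, weakened to $\Gamma'$. For $\preceq$-Hyp and $\preceq$-Cls with items inside $\Gamma_{{\mathrm{2}}}$, note that the predecessor position may be $\gamma_{{\mathrm{3}}}$, and it is renamed to $\gamma_{{\mathrm{1}}}$, as required. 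For $\sqsubseteq$-$\blacktriangleright$ on $\tau$, we have $\Gamma \vdash \tau : \gamma_{{\mathrm{2}}} \sqsubseteq \gamma_{{\mathrm{3}}}$. It must become $\Gamma' \vdash \bmttnt{T} : \gamma_{{\mathrm{2}}} \sqsubseteq \gamma_{{\mathrm{1}}}$, and this is supplied by the well-formedness premise of $\mathord{\blacktriangleleft}^{\gamma_{{\mathrm{2}}}}_{\bmttnt{T}}$ via WF-$\blacktriangleleft$, weakened to $\Gamma'$. Atomic witnesses from $\mathord{\blacktriangleright}$ items inside $\Gamma_{{\mathrm{2}}}$ whose source is $\gamma_{{\mathrm{3}}}$ get source $\gamma_{{\mathrm{1}}}$, which is again consistent. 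The one remaining danger is a $\preceq$ or $\sqsubseteq$ fact that uses the bare position change of $\mathord{\blacktriangleleft}^{\gamma_{{\mathrm{2}}}}$. No rule derives such a fact, however: $\mathord{\blacktriangleleft}$ only adds a requirement, never an assumption. Well-formedness of $\Gamma'$ then follows by induction on $\Gamma_{{\mathrm{2}}}$, using fact 3 for any WF-$\blacktriangleleft$ occurring inside $\Gamma_{{\mathrm{2}}}$.

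For the typing rules, each case follows from the induction hypothesis plus the auxiliary facts. In \ref{rule:type-var}, if $x$ lies in $\Gamma_{{\mathrm{2}}}$ we apply fact 2. If $x$ lies in $\Gamma_{{\mathrm{1}}}$, the typing context still contains the $\blacktriangleleft$/$\blacktriangleright$ items, so the derivation of $\gamma_{{\mathrm{x}}} \preceq \mathrm{pos}$ transforms by fact 2 exactly as before. \ref{rule:type-bm-e} uses fact 2 on its side premise and the induction hypothesis, with $\Gamma_{{\mathrm{2}}}$ extended by $\mathord{\blacktriangleleft}$, on its main premise. The binder cases \ref{rule:type-to-i}, \ref{rule:type-bm-i} and \ref{rule:type-polycls-i} extend $\Gamma_{{\mathrm{2}}}$. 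Their freshness side conditions $\gamma \notin \token{FC}(A)$ are preserved after $\alpha$-renaming away from $\gamma_{{\mathrm{1}}}$. Finally, \ref{rule:type-polycls-e} needs the commutation $A[\gamma \coloneqq \delta]\sigma = A\sigma[\gamma \coloneqq \delta\sigma]$.

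The main obstacle is the base case $\sqsubseteq$-$\blacktriangleright$ for $\tau$. There we must check that the replaced witness $\bmttnt{T}$ has exactly the endpoints $\gamma_{{\mathrm{2}}} \sqsubseteq \gamma_{{\mathrm{1}}}$ and is well-typed in the shorter context $\Gamma_{{\mathrm{1}}}$, so that weakening applies. Along with it we must verify the invariant that no derivation relies on an unjustified relation between $\gamma_{{\mathrm{2}}}$ and $\mathrm{pos}$ beyond those two rule instances. I would therefore do that case, together with the position-commutation lemma, first.
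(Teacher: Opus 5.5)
Your proposal is correct and takes the same route as the paper. The paper's proof is a mutual induction on the derivations of the context well-formedness, type well-formedness, $\preceq$, $\sqsubseteq$ and typing judgments, for contexts with an arbitrary suffix $\Gamma_2$. The only difference is that you treat preservation of type well-formedness as routine rather than proving it as a fifth simultaneous clause; it is indeed routine, but your fact 1 does depend on it for hypothesis items inside $\Gamma_2$.
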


\begin{lemma*}[Classifier Substitution]\label{claim:clssubst}
  If\/ $\Gamma_{{\mathrm{1}}}  \bmttsym{,}   \binder{ \gamma_{{\mathrm{1}}} } \within \gamma_{{\mathrm{2}}}   \bmttsym{,}  \Gamma_{{\mathrm{2}}}  \vdash  \bmttnt{M}  \hasType  \bmttnt{A}$ and\/ $\Gamma_{{\mathrm{1}}}  \vdash  \gamma_{{\mathrm{2}}}  \preceq  \gamma_{{\mathrm{3}}}$, then\/ $ \Gamma_{{\mathrm{1}}}  \bmttsym{,}  \Gamma_{{\mathrm{2}}} [ \binder{ \gamma_{{\mathrm{1}}} } \coloneqq \gamma_{{\mathrm{3}}}  ]   \vdash   \bmttnt{M} [ \binder{ \gamma_{{\mathrm{1}}} } \coloneqq \gamma_{{\mathrm{3}}}  ]   \hasType   \bmttnt{A} [ \binder{ \gamma_{{\mathrm{1}}} } \coloneqq \gamma_{{\mathrm{3}}}  ] $.
\end{lemma*}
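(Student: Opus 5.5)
The plan is to prove the Classifier Substitution lemma by induction on the typing derivation of $\Gamma_1, \gamma_1 \succeq \gamma_2, \Gamma_2 \vdash M : A$. We first strengthen the statement so that it covers all the mutually defined judgments: context well-formedness, type well-formedness, $\Gamma \vdash \gamma \preceq \gamma'$ and $\Gamma \vdash T : \gamma \sqsubseteq \gamma'$. These judgments appear as premises of \ref{rule:type-var}, \ref{rule:type-bm-e}, \ref{rule:type-polycls-e} and \ref{rule:calc-ctx-wf-shut}, so they must be handled simultaneously. Throughout, we write $\sigma$ for $[\gamma_1 \coloneqq \gamma_3]$ and use the Barendregt convention: bound classifiers in $\Gamma_2$ and $M$ are distinct from $\gamma_1$ and $\gamma_3$.

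\textbf{Relational judgments.} We show that if $\Gamma_1, \gamma_1 \succeq \gamma_2, \Gamma_2 \vdash \delta \trianglelefteq \delta'$, then $\Gamma_1, \Gamma_2\sigma \vdash \delta\sigma \trianglelefteq \delta'\sigma$, and similarly with witnesses $T$. Since $T$ mentions no classifiers, it is left unchanged.
\begin{itemize}
\item Reflexivity, transitivity and \ref{rule:mtrans-lift} go through directly from the induction hypothesis.
\item The crucial case is \ref{rule:itrans-cls} applied to the removed item itself, concluding $\gamma_2 \preceq \gamma_1$. After substitution this becomes $\gamma_2 \preceq \gamma_3$. That is exactly the hypothesis $\Gamma_1 \vdash \gamma_2 \preceq \gamma_3$, weakened to $\Gamma_1, \Gamma_2\sigma$.
\item \ref{rule:itrans-hyp} and \ref{rule:mtrans-open} depend on the position of the prefix. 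The item $\gamma_1 \succeq \gamma_2$ does not change the position, so $\mathrm{pos}(\Gamma_1, \gamma_1 \succeq \gamma_2) = \mathrm{pos}(\Gamma_1)$. Hence these rules still apply after deleting it, with $\sigma$ applied to the later positions. The same holds when the relevant item lies in $\Gamma_1$.
\end{itemize}

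We therefore need a weakening lemma for the relational judgments: they are stable under appending well-formed items. Each rule only inspects membership of items and prefixes, so this is a routine induction. We also need $\mathrm{pos}(\Gamma_1, \Gamma_2\sigma) = \mathrm{pos}(\Gamma_1, \gamma_1 \succeq \gamma_2, \Gamma_2)\sigma$. This is immediate from the definition of $\mathrm{pos}$, together with the fact that $\gamma_1$ can never itself be a position.

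\textbf{Contexts and typing rules.} Well-formedness of $\Gamma_1, \Gamma_2\sigma$ follows by induction on $\Gamma_2$. The only nontrivial case is \ref{rule:calc-ctx-wf-shut}, which uses the substituted relational judgment. For the typing rules:
\begin{itemize}
\item \ref{rule:type-var}: the binding $x :^{\delta} A$ becomes $x :^{\delta\sigma} A\sigma$, and the $\preceq$ premise is transported by the relational case above.
\item \ref{rule:type-to-i}, \ref{rule:type-bm-i}, \ref{rule:type-polycls-i}: these extend $\Gamma_2$, so we apply the induction hypothesis with a longer $\Gamma_2$. The side conditions $\gamma \notin \mathrm{FC}(A)$ are preserved because the bound $\gamma$ differs from $\gamma_3$.
\item \ref{rule:type-bm-e}: the position premise $\gamma_3' \preceq \mathrm{pos}$ and the context item $\blacktriangleleft_T^{\gamma_2'}$ are both handled by the relational induction hypothesis.
\item \ref{rule:type-polycls-e}: this case needs the commutation $A[\gamma \coloneqq \delta]\sigma = A\sigma[\gamma \coloneqq \delta\sigma]$, which is standard for capture-avoiding substitution.
\end{itemize}

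\textbf{Main obstacle.} The hardest part is the relational-judgment level, in particular keeping the positional side conditions of \ref{rule:itrans-hyp}, \ref{rule:mtrans-open} and \ref{rule:type-bm-e} intact after deleting an item. The argument rests on the observation that the item $\gamma_1 \succeq \gamma_2$ never moves the position, so deleting it preserves every positional fact up to $\sigma$. The remaining cases are bookkeeping.
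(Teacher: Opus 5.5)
Your proposal is correct and matches the paper's proof. The paper also proves a strengthened version by mutual induction over the judgments you list: context and type well-formedness, $\preceq$, $\sqsubseteq$ with witnesses, and typing. As you say, the only substantive case is the $\preceq$-Cls instance on the removed item $\gamma_1 \succeq \gamma_2$, which is discharged by $\Gamma_1 \vdash \gamma_2 \preceq \gamma_3$ together with suffix weakening.

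One small correction: $\gamma_1$ \emph{can} be a position. For example, $\Gamma_2 = \tau:\blacktriangleright^{\delta \succeq \gamma_1}, \blacktriangleleft_{\varepsilon}^{\gamma_1}$ is well formed via $\preceq$-$\blacktriangleright$ and $\sqsubseteq$-Lift. This does not affect your argument, because $\mathrm{pos}(\Gamma_1, \Gamma_2\sigma) = \mathrm{pos}(\Gamma_1, \gamma_1 \succeq \gamma_2, \Gamma_2)\sigma$ follows directly from the definition of $\mathrm{pos}$: such an item simply becomes $\blacktriangleleft_{\varepsilon}^{\gamma_3}$.
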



Using these lemmas, we can confirm that the introduction and elimination rules in \bml are well balanced with respect to local soundness and local completeness~\cite{journals/mscs/PfenningD01}.
For the sake of space, we only state the local soundness/completeness patterns  in the following statements.

\begin{lemma*}[Local Soundness Patterns]\label{claim:localsoundness}\leavevmode\samepage
 \begin{enumerate}
  \item $ \Gamma ^{\position{ \gamma_{{\mathrm{1}}} } }   \vdash   \bmttsym{(}   \lambda \binder{ \bmttmv{x} }\has@{ \gamma_{{\mathrm{2}}} } \bmttnt{A} \ldotp \bmttnt{M}   \bmttsym{)}   \bmttnt{N}   \hasType  \bmttnt{B} \implies \Gamma  \vdash   \bmttnt{M} [ \binder{ \gamma_{{\mathrm{2}}} } \coloneqq \gamma_{{\mathrm{1}}} , \binder{ \bmttmv{x} }  \coloneqq   \bmttnt{N}  ]   \hasType  \bmttnt{B}$.
  \item $ \Gamma ^{\position{ \gamma_{{\mathrm{1}}} } }   \vdash   \mathbf{unq} _{ \bmttnt{T} }\lbrace^{ \gamma_{{\mathrm{2}}} }  \mathbf{quo} (\binder{ \tau })\lbrace^{\binder{ \gamma_{{\mathrm{3}}} } \within \gamma_{{\mathrm{4}}} }  \bmttnt{M} \rbrace  \rbrace   \hasType  \bmttnt{A} \implies \Gamma  \vdash   \bmttnt{M} [ \binder{ \gamma_{{\mathrm{3}}} } \coloneqq \gamma_{{\mathrm{1}}} , \binder{ \tau }  \coloneqq   \bmttnt{T}  ]   \hasType  \bmttnt{A}$.
  \item $\Gamma  \vdash   \bmttsym{(}   \lambda \binder{ \gamma_{{\mathrm{1}}} }\within  \gamma_{{\mathrm{2}}} . \bmttnt{M}   \bmttsym{)} \gamma_{{\mathrm{3}}}   \hasType  \bmttnt{A} \implies \Gamma  \vdash   \bmttnt{M} [ \binder{ \gamma_{{\mathrm{1}}} } \coloneqq \gamma_{{\mathrm{3}}}  ]   \hasType  \bmttnt{A}$.
 \end{enumerate}
\end{lemma*}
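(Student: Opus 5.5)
Each of the three items follows the same pattern. First invert the typing derivation of the redex: the typing rules are syntax-directed, so the last rule applied is the corresponding elimination rule, and immediately above it the corresponding introduction rule. This yields the premises. Then feed those premises into one of the three substitution lemmas stated just above: \Zcref{claim:varsubst}, \Zcref{claim:rebasing}, or \Zcref{claim:clssubst}. Each lemma is used with an empty trailing context $\Gamma_2$, so the substituted context collapses back to $\Gamma$. What remains is to check that the substitution does not change the resulting type; here the freshness side conditions of the introduction rules do the work.

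For (1), inversion of \ref{rule:type-to-e} gives $\Gamma \vdash \lambda x{:}^{\gamma_2}A.M : A \to B$ and $\Gamma \vdash N : A$. Inversion of \ref{rule:type-to-i} then gives $\Gamma, x{:}^{\gamma_2}A \vdash M : B$ with $\gamma_2 \notin \mathrm{FC}(B)$. The context $\Gamma$ has position $\gamma_1$, so \Zcref{claim:varsubst} with $\Gamma_2 = \varepsilon$ gives $\Gamma \vdash M[\gamma_2 \coloneqq \gamma_1, x \coloneqq N] : B[\gamma_2 \coloneqq \gamma_1]$. Since $\gamma_2 \notin \mathrm{FC}(B)$, the type $B[\gamma_2 \coloneqq \gamma_1]$ is just $B$.

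For (3), inversion gives $\Gamma, \gamma_1 \succeq \gamma_2 \vdash M : A'$, where $A = A'[\gamma_1 \coloneqq \gamma_3]$, together with $\Gamma \vdash \gamma_2 \preceq \gamma_3$. \Zcref{claim:clssubst} with $\Gamma_2 = \varepsilon$ then gives the result directly.

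For (2), inversion of \ref{rule:type-bm-e} gives two premises:
\[ \Gamma^{\gamma_1}, \blacktriangleleft_T^{\gamma_2} \vdash \mathbf{quo}(\tau)\{^{\gamma_3 \succeq \gamma_4} M\} : \Box^{\succeq \gamma_4} A \quad\text{and}\quad \Gamma \vdash \gamma_4 \preceq \gamma_1 . \]
Inverting \ref{rule:type-bm-i} on the first premise gives $\Gamma, \blacktriangleleft_T^{\gamma_2}, \tau{:}\blacktriangleright^{\gamma_3 \succeq \gamma_4} \vdash M : A$ with $\gamma_3 \notin \mathrm{FC}(A)$. \Zcref{claim:rebasing} with $\Gamma_2 = \varepsilon$ then yields $\Gamma \vdash M[\gamma_3 \coloneqq \gamma_1, \tau \coloneqq T] : A[\gamma_3 \coloneqq \gamma_1]$, and the freshness condition makes $A[\gamma_3 \coloneqq \gamma_1] = A$.

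I expect the only delicate point to be the inversion steps themselves. They need the typing rules to be syntax-directed, including the omitted rules. In particular, no conversion or weakening rule may sit between the elimination and the introduction. If such structural rules do exist, I would first prove a generation lemma by induction on derivations, carrying along the well-formedness side conditions (such as \ref{rule:calc-ctx-wf-shut} for $\blacktriangleleft_T$), and only then apply the substitution lemmas as above.
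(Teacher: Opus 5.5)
Your proposal is correct and takes the same approach as the paper, which only says the three patterns are easy consequences of Variable Substitution, Rebasing, and Classifier Substitution. You spell that out: invert the elimination and introduction rules, apply the matching lemma with $\Gamma_2=\varepsilon$, and use the freshness side conditions to discharge the substitution on the type. Your worry about inversion does not arise: the term-assignment system has exactly one typing rule per term former and no weakening or conversion rule, so no separate generation lemma is needed.
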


\begin{lemma*}[Local Completeness Patterns]\label{claim:localcompleteness} ($\bmttmv{x}$, $\delta$ and $\tau$ are chosen fresh.)
 \begin{enumerate}
  \item $\Gamma  \vdash  \bmttnt{M}  \hasType  \bmttnt{A}  \mathbin{\rightarrow}  \bmttnt{B} \implies \Gamma  \vdash   \lambda \binder{ \bmttmv{x} }\has@{ \delta } \bmttnt{A} \ldotp \bmttsym{(}   \bmttnt{M}   \bmttmv{x}   \bmttsym{)}   \hasType  \bmttnt{A}  \mathbin{\rightarrow}  \bmttnt{B}$.
  \item $ \Gamma ^{\position{ \gamma_{{\mathrm{1}}} } }   \vdash  \bmttnt{M}  \hasType   \Box^{\mathord{\succeq}  \gamma_{{\mathrm{2}}} }  \bmttnt{A}  \implies \Gamma  \vdash   \mathbf{quo} (\binder{ \tau })\lbrace^{\binder{ \delta } \within \gamma_{{\mathrm{2}}} }   \mathbf{unq} _{ \tau }\lbrace^{ \gamma_{{\mathrm{1}}} } \bmttnt{M} \rbrace  \rbrace   \hasType   \Box^{\mathord{\succeq}  \gamma_{{\mathrm{2}}} }  \bmttnt{A} $.
  \item $\Gamma  \vdash  \bmttnt{M}  \hasType   \forall  \gamma_{{\mathrm{1}}} \within  \gamma_{{\mathrm{2}}} . \bmttnt{A}  \implies \Gamma  \vdash   \lambda \binder{ \delta }\within  \gamma_{{\mathrm{2}}} . \bmttsym{(}   \bmttnt{M} \delta   \bmttsym{)}   \hasType   \forall  \gamma_{{\mathrm{1}}} \within  \gamma_{{\mathrm{2}}} . \bmttnt{A} $.
 \end{enumerate}
\end{lemma*}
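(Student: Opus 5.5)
The statement to prove is Local Completeness Patterns. I would treat the three items independently. In each, I build the typing derivation of the expanded term directly from the given derivation of $M$, using weakening and the basic transition rules of \Zcref{fig:derive-trans}. The one structural fact I need is a weakening lemma, proved beforehand by routine induction on typing derivations: if $\Gamma \vdash M : A$ and $\Gamma,\Gamma'$ is a well-formed extension introducing only fresh names, then $\Gamma,\Gamma' \vdash M : A$. This holds because every transition judgment derivable in $\Gamma$ remains derivable in $\Gamma,\Gamma'$, and positions are only consulted through such judgments. Note that this is weakening of the context, not persistency: the position may change.

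\textbf{Item 1.} Weaken to $\Gamma, x\has@{\delta} A \vdash M : A\to B$. Then $x$ is typed by \ref{rule:type-var}, using \ref{rule:itrans-refl} for $\delta\preceq\delta$. Next apply \ref{rule:type-to-e}, and close with \ref{rule:type-to-i}. The side condition $\delta\notin\FC(B)$ holds by freshness.

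\textbf{Item 3.} Weaken to $\Gamma, \delta \succeq \gamma_2 \vdash M : \forall\gamma_1\within\gamma_2.A$. Obtain $\gamma_2\preceq\delta$ by \ref{rule:itrans-cls}. Then \ref{rule:type-polycls-e} gives $M\delta : A[\gamma_1\coloneqq\delta]$. Finally, \ref{rule:type-polycls-i} yields $\forall\delta\within\gamma_2.A[\gamma_1\coloneqq\delta]$, which is $\alpha$-equivalent to the target type.

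\textbf{Item 2.} This is the main case and the only delicate one. Set $\Gamma' = \Gamma^{\gamma_1}, \tau:\blacktriangleright^{\delta\within\gamma_2}, \blacktriangleleft^{\gamma_1}_{\tau}$. First I check well-formedness via \ref{rule:calc-ctx-wf-shut}. The position after the $\blacktriangleright$ is $\delta$, and \ref{rule:calc-mtrans-open} gives $\tau : \gamma_1 \sqsubseteq \delta$, as required. Second, weakening gives $\Gamma' \vdash M : \Box^{\succeq\gamma_2}A$. This step is legitimate because $\Gamma'$ is an extension of $\Gamma$, even though the position of $\Gamma'$ has returned to $\gamma_1$. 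Third, \ref{rule:type-bm-e} requires $\gamma_2\preceq\delta$ in the context $\Gamma^{\gamma_1},\tau:\blacktriangleright^{\delta\within\gamma_2}$, whose position is $\delta$. This is exactly \ref{rule:itrans-open}. This yields $\Gamma,\tau:\blacktriangleright^{\delta\within\gamma_2} \vdash \mathbf{unq}_\tau\{^{\gamma_1} M\} : A$. Fourth, apply \ref{rule:type-bm-i}, with $\delta\notin\FC(A)$ by freshness.

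The main obstacle is the weakening step in Item 2. I would make sure the weakening lemma is stated for arbitrary well-formed suffixes, including $\blacktriangleleft$ items, whose witnesses $T$ must remain valid. Since transition derivations are closed under context extension, each rule case goes through. The only nontrivial cases are the binder rules, where freshness of bound names is handled by $\alpha$-renaming.
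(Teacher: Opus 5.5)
Your three derivations are the right ones, but the auxiliary weakening lemma you justify them with is false. Your stated reason for it (``positions are only consulted through such judgments'') is exactly where it fails. The Var rule checks $\gamma_1 \preceq \mathrm{pos}$ against the position of the \emph{conclusion} context. After extending $\Gamma$ to $\Gamma,\Gamma'$ you can transport $\Gamma \vdash \gamma_1 \preceq \mathrm{pos}(\Gamma)$, but you need $\gamma_1 \preceq \mathrm{pos}(\Gamma,\Gamma')$, which is a different judgment once the position moves.

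Concretely, $x :^{\gamma} p \vdash x : p$ holds, yet $x :^{\gamma} p,\, \tau : \blacktriangleright^{\delta \within \exclam} \nvdash x : p$, because $\gamma \preceq \delta$ is not derivable in that context. If your lemma held, $\Box$-I and $\to$-I would type $\lambda x :^{\gamma} p.\, \mathbf{quo}(\tau)\{^{\delta \within \exclam} x\}$ at $p \to \Box^{\succeq \exclam} p$. That formula fails in the one-world model with $D = \{\exclam, a, b\}$, where $a$ and $b$ are incomparable above $\exclam$, $\sqsubseteq$ is the full relation $D \times D$, and $V(p) = \{a\}$. This contradicts Kripke soundness.

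Your inductive sketch also breaks under binders. There you must insert $\Gamma'$ in the middle of the context, and $\preceq$-Hyp and $\sqsubseteq$-$\blacktriangleright$ refer to the now-moved position of the prefix. So without an extra hypothesis, not even the transition judgments are preserved.

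The repair is to use the paper's Persistency lemma, read for typing judgments, in place of your weakening lemma. It says that extending $\Gamma$ to $\Gamma,\Gamma'$ is sound provided $\Gamma,\Gamma' \vdash \mathrm{pos}(\Gamma) \preceq \mathrm{pos}(\Gamma,\Gamma')$. This hypothesis holds in all three of your uses:
\begin{itemize}
\item In item 1 it is $\mathrm{pos}(\Gamma) \preceq \delta$, given by $\preceq$-Hyp.
\item In item 3 the declaration $\delta \succeq \gamma_2$ does not move the position, so reflexivity suffices.
\item In item 2 the trailing $\blacktriangleleft^{\gamma_1}_{\tau}$ returns the position to $\gamma_1$, so reflexivity suffices again.
\end{itemize}
So in item 2 the weakening is legitimate \emph{because} the position has returned to $\gamma_1$, not ``even though'' it has. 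Weakening straight to $\Gamma, \tau : \blacktriangleright^{\delta \within \gamma_2}$ (position $\delta$) would need $\gamma_1 \preceq \delta$, which is unavailable in general. With this substitution your proof is exactly the paper's (``easy with Persistency''). The paper's admissible inversion of $\Box$-I in the completeness appendix uses the same derivation as your item 2.
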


Local soundness and completeness patterns can be regarded as $\beta$-reduction and $\eta$-expansion, respectively.
$ \mathcal{E} ^{ \gamma_{{\mathrm{1}}} }_{[  \gamma_{{\mathrm{2}}}  ]} $ is an evaluation context located at $\gamma_{{\mathrm{1}}}$, and whose hole is located at $\gamma_{{\mathrm{2}}}$.
\par\smallskip
\noindent
\begin{tabular}{lcl}
  $ \mathcal{E} ^{ \gamma_{{\mathrm{1}}} }_{[  \gamma_{{\mathrm{2}}}  ]} $ & $\Coloneqq$ & $ \Box  \text{\ (if\ $\gamma_{{\mathrm{1}}} = \gamma_{{\mathrm{2}}}$)} \mid  \lambda\binder{ \bmttmv{x} }_{ \gamma_{{\mathrm{3}}} }^{ \bmttnt{A} }.  \mathcal{E} ^{ \gamma_{{\mathrm{3}}} }_{[  \gamma_{{\mathrm{2}}}  ]}   \mid   \mathcal{E} ^{ \gamma_{{\mathrm{1}}} }_{[  \gamma_{{\mathrm{2}}}  ]}    \bmttnt{M}  \mid  \bmttnt{M}     \mathcal{E} ^{ \gamma_{{\mathrm{1}}} }_{[  \gamma_{{\mathrm{2}}}  ]}   $ \\
  & $\mid$ & $ \mathbf{quo} ( \tau )\lbrace^{\binder{ \gamma_{{\mathrm{3}}} }\within{ \gamma_{{\mathrm{4}}} } }   \mathcal{E} ^{ \gamma_{{\mathrm{3}}} }_{[  \gamma_{{\mathrm{2}}}  ]}  \rbrace  \mid  \mathbf{unq} _{ \bmttnt{T_{{\mathrm{3}}}} }\lbrace^{ \gamma_{{\mathrm{3}}} }  \mathcal{E} ^{ \gamma_{{\mathrm{3}}} }_{[  \gamma_{{\mathrm{2}}}  ]}  \rbrace  \mid  \lambda \binder{ \gamma_{{\mathrm{3}}} }\within  \gamma_{{\mathrm{4}}} .  \mathcal{E} ^{ \gamma_{{\mathrm{1}}} }_{[  \gamma_{{\mathrm{2}}}  ]}   \mid   \mathcal{E} ^{ \gamma_{{\mathrm{1}}} }_{[  \gamma_{{\mathrm{2}}}  ]}  [  \gamma_{{\mathrm{3}}}  ] $ \\
\end{tabular}
\par\smallskip

Then, we define $\beta$-reduction on raw terms, written $ \bmttnt{M_{{\mathrm{1}}}}  \Rightarrow_{\beta}^{ \gamma }  \bmttnt{M_{{\mathrm{2}}}} $.
Unlike standard \textbeta{}-reduction, ours is a family of relations indexed by
classifiers. This classifier stands for the position where the reduction takes place.
\begin{definition}[$\beta$-Reduction]\label{def:beta}
$ \bmttnt{M_{{\mathrm{1}}}}  \Rightarrow_{\beta}^{ \gamma }  \bmttnt{M_{{\mathrm{2}}}} $ is defined by the following rules.
 \begin{align*}
    \mathcal{E} ^{ \gamma_{{\mathrm{1}}} }_{[  \gamma_{{\mathrm{2}}}  ]}   \bmttsym{[}   \bmttsym{(}   \lambda \binder{ \bmttmv{x} }\has@{ \gamma_{{\mathrm{3}}} } \bmttnt{A} \ldotp \bmttnt{M}   \bmttsym{)}   \bmttnt{N}   \bmttsym{]} &  \Rightarrow_{\beta}^{ \gamma_{{\mathrm{1}}} }   \mathcal{E} ^{ \gamma_{{\mathrm{1}}} }_{[  \gamma_{{\mathrm{2}}}  ]}   \bmttsym{[}   \bmttnt{M} [ \binder{ \gamma_{{\mathrm{3}}} } \coloneqq \gamma_{{\mathrm{2}}} , \binder{ \bmttmv{x} }  \coloneqq   \bmttnt{N}  ]   \bmttsym{]} \\
    \mathcal{E} ^{ \gamma_{{\mathrm{1}}} }_{[  \gamma_{{\mathrm{2}}}  ]}   \bmttsym{[}   \mathbf{unq} _{ \bmttnt{T} }\lbrace^{ \gamma_{{\mathrm{3}}} }  \mathbf{quo} (\binder{ \tau })\lbrace^{\binder{ \gamma_{{\mathrm{4}}} } \within \gamma_{{\mathrm{5}}} }  \bmttnt{M} \rbrace  \rbrace   \bmttsym{]} &  \Rightarrow_{\beta}^{ \gamma_{{\mathrm{1}}} }   \mathcal{E} ^{ \gamma_{{\mathrm{1}}} }_{[  \gamma_{{\mathrm{2}}}  ]}   \bmttsym{[}   \bmttnt{M} [ \binder{ \gamma_{{\mathrm{4}}} } \coloneqq \gamma_{{\mathrm{2}}} , \binder{ \tau }  \coloneqq   \bmttnt{T}  ]   \bmttsym{]}  \\
    \mathcal{E} ^{ \gamma_{{\mathrm{1}}} }_{[  \gamma_{{\mathrm{2}}}  ]}   \bmttsym{[}   \bmttsym{(}   \lambda \binder{ \gamma_{{\mathrm{3}}} }\within  \gamma_{{\mathrm{4}}} . \bmttnt{M}   \bmttsym{)} \gamma_{{\mathrm{5}}}   \bmttsym{]} &  \Rightarrow_{\beta}^{ \gamma_{{\mathrm{1}}} }   \mathcal{E} ^{ \gamma_{{\mathrm{1}}} }_{[  \gamma_{{\mathrm{2}}}  ]}   \bmttsym{[}   \bmttnt{M} [ \binder{ \gamma_{{\mathrm{3}}} } \coloneqq \gamma_{{\mathrm{5}}}  ]   \bmttsym{]} 
 \end{align*}
 We write\/ $ \Rightarrow_{\beta *}^{ \gamma } $ for the reflexive and transitive closure of\/ $ \Rightarrow_{\beta}^{ \gamma } $.
\end{definition}

\begin{theorem*}[Subject Reduction]\label{claim:subjectreduction}
 If\/ $ \Gamma ^{\position{ \gamma } }   \vdash  \bmttnt{M_{{\mathrm{1}}}}  \hasType  \bmttnt{A}$ and $ \bmttnt{M_{{\mathrm{1}}}}  \Rightarrow_{\beta}^{ \gamma }  \bmttnt{M_{{\mathrm{2}}}} $, then\/ $\Gamma  \vdash  \bmttnt{M_{{\mathrm{2}}}}  \hasType  \bmttnt{A}$.
\end{theorem*}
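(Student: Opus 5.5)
The proof goes by induction on the structure of the evaluation context $ \mathcal{E} ^{ \gamma }_{[  \gamma_{{\mathrm{2}}}  ]} $ in which the redex sits, with the base case ($\mathcal{E} = \Box$, so $\gamma_{{\mathrm{2}}} = \gamma$) discharged by \Zcref{claim:localsoundness}. To make the induction go through, I would first generalise the statement to a \emph{contextual} form. If $ \Gamma ^{\position{ \gamma } }   \vdash   \mathcal{E} ^{ \gamma }_{[  \gamma_{{\mathrm{2}}}  ]}   \bmttsym{[}  \bmttnt{R}  \bmttsym{]}  \hasType  \bmttnt{A}$, then inversion along $\mathcal{E}$ yields an extended context $\Gamma, \Delta$ with $ \mathrm{pos} ( \Gamma, \Delta )  = \gamma_{{\mathrm{2}}}$ and a type $\bmttnt{B}$ with $\Gamma, \Delta  \vdash  \bmttnt{R}  \hasType  \bmttnt{B}$. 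Moreover, any $\bmttnt{R'}$ with $\Gamma, \Delta  \vdash  \bmttnt{R'}  \hasType  \bmttnt{B}$ can be plugged back to give $\Gamma  \vdash   \mathcal{E} ^{ \gamma }_{[  \gamma_{{\mathrm{2}}}  ]}   \bmttsym{[}  \bmttnt{R'}  \bmttsym{]}  \hasType  \bmttnt{A}$.

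This decomposition/replugging lemma is proved by induction on $\mathcal{E}$, case by case:
\begin{itemize}
\item For $\lambda$-abstraction, $\Delta$ gains $ \binder{ \bmttmv{x} }\has@{ \gamma_{{\mathrm{3}}} } \bmttnt{A} $.
\item For $\mathbf{quo}$, $\Delta$ gains $ \tau : \mathord{\blacktriangleright} ^{\binder{ \gamma_{{\mathrm{3}}} } \within \gamma_{{\mathrm{4}}} } $.
\item For $\mathbf{unq}_{\bmttnt{T_{{\mathrm{3}}}}}$, $\Delta$ gains $ \mathord{\blacktriangleleft} _{ \bmttnt{T_{{\mathrm{3}}}} }^{ \gamma_{{\mathrm{3}}} } $.
\item For classifier abstraction, $\Delta$ gains $ \binder{ \gamma_{{\mathrm{3}}} } \within \gamma_{{\mathrm{4}}} $; this item leaves the position unchanged, which is consistent with the index $\gamma_{{\mathrm{1}}}$ being kept in that production.
\item Application and classifier application leave the context unchanged.
\end{itemize}
The side conditions of each rule (freshness, $\gamma \notin  \token{FC}( - ) $, $\Gamma \vdash \gamma_{{\mathrm{3}}} \preceq \gamma_{{\mathrm{1}}}$ in $\Box$-E, the $\preceq$ premise in $\forall$-E) mention only $\Gamma$, the surrounding types and the unchanged side subterm, never the hole's contents. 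They are therefore preserved verbatim when the hole's contents are replaced by a term of the same type.

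For the base case, $\Gamma, \Delta$ has position $\gamma_{{\mathrm{2}}}$, and the three redexes are exactly the patterns of \Zcref{claim:localsoundness} instantiated at position $\gamma_{{\mathrm{2}}}$; the reduct substitutes $\gamma_{{\mathrm{2}}}$ for the bound classifier, as in the definition of $\Rightarrow_{\beta}$. Each pattern reduces to the corresponding substitution lemma after inverting the redex:
\begin{itemize}
\item The $\to$ redex is handled by \Zcref{claim:varsubst}, with $\Gamma_{{\mathrm{2}}}$ empty.
\item The $\Box$ redex is handled by \Zcref{claim:rebasing}. Inverting $\Box$-E gives a context $\dots,  \mathord{\blacktriangleleft} _{ \bmttnt{T} }^{ \gamma_{{\mathrm{3}}} } $ together with $\vdash \gamma_{{\mathrm{5}}} \preceq \gamma_{{\mathrm{2}}}$. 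Inverting $\Box$-I under it adds $ \tau : \mathord{\blacktriangleright} ^{\binder{ \gamma_{{\mathrm{4}}} } \within \gamma_{{\mathrm{5}}} } $. Since $\gamma_{{\mathrm{4}}} \notin  \token{FC}( \bmttnt{A} ) $, the result type is unchanged by the substitution.
\item The $\forall$ redex is handled by \Zcref{claim:clssubst}, with $\Gamma_{{\mathrm{2}}}$ empty.
\end{itemize}

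The main obstacle is making the decomposition lemma honest with respect to classifier bookkeeping. Two points need care. First, the hole's position index must genuinely coincide with $ \mathrm{pos} ( \Gamma, \Delta ) $; for instance, in the $\mathbf{unq}$ production the position becomes $\gamma_{{\mathrm{3}}}$, matching $ \mathrm{pos} $ of a $\blacktriangleleft$ item. Second, the well-formedness of $\Gamma, \Delta$ must be available for the substitution lemmas, which requires that the $\mathrm{WF}$-$\blacktriangleleft$ premise was established by the original derivation. To handle this, I would first check that every typing rule's inversion exposes well-formed extended contexts. I would also verify that the substitution lemmas' conclusions, stated at the context $\Gamma_{{\mathrm{1}}}$, fit the replugging lemma without any weakening.
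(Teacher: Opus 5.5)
Your proposal is correct and takes the same route as the paper. The paper argues by induction on the derivation of $M_1 \Rightarrow_{\beta}^{\gamma} M_2$; since that relation is defined through evaluation contexts, this is exactly your induction on $\mathcal{E}$, with the base cases discharged by the Local Soundness Patterns, and your decomposition/replugging lemma just makes explicit the inductive step the paper leaves implicit (one small nit: in the $\to$ base case you should also cite the side condition $\gamma_3 \notin \mathbf{FC}(B)$ of $\to$-I to get $B[\gamma_3 \coloneqq \gamma_2] = B$, as you did in the $\Box$ case).
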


\subsection{Metatheory}
$\beta$-reduction behaves well in the following sense:

\begin{theorem*}[Strong Normalization]\label{claim:SN}
  If\/ $ \Gamma ^{\position{ \gamma } }   \vdash  \bmttnt{M}  \hasType  \bmttnt{A}$, then $\bmttnt{M}$ is strongly normalizing with respect to\/ $ \Rightarrow_{\beta}^{ \gamma } $.
\end{theorem*}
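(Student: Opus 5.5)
We prove strong normalization by a Tait--Girard reducibility argument, adapted so that candidates are indexed by a context together with a position. The guiding observation is that every $\beta$-redex of the system is the elimination of an introduction for one of the three connectives $\to$, $\Box^{\succeq\gamma}$, $\forall\gamma_1\within\gamma_2$. Moreover, each contraction is implemented by one of the three substitution operations: \Zcref{claim:varsubst}, \Zcref{claim:rebasing}, or \Zcref{claim:clssubst}.

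\textbf{Reducibility sets.} For a well-formed context $\Gamma$ with $\mathrm{pos}(\Gamma)=\gamma$ and a type $A$, we define by induction on $A$ a set $\mathcal{R}_\Gamma[A]$ of terms $M$ with $\Gamma \vdash M : A$.
\begin{itemize}
\item For an atom $p$, the set contains the terms strongly normalizing w.r.t.\ $\Rightarrow_\beta^{\gamma}$.
\item For $A\to B$, $M$ belongs to the set if, for every extension $\Gamma,\Gamma'$ with position $\gamma'$ satisfying $\Gamma,\Gamma'\vdash\gamma\preceq\gamma'$ and every $N\in\mathcal{R}_{\Gamma,\Gamma'}[A]$, we have $M\,N\in\mathcal{R}_{\Gamma,\Gamma'}[B]$. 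This is the Kripke-style clause; weakening/persistency (\Zcref{claim:monotonicity}) makes it sensible.
\item For $\Box^{\succeq\gamma_2}A$, $M$ belongs to the set if, for every extension $\Gamma,\Gamma'$ at position $\gamma_1$ and every witness $T$ making $\Gamma,\Gamma',\blacktriangleleft_T^{\gamma}$ well formed with $\gamma_2\preceq\gamma_1$, we have $\mathbf{unq}_T\{^{\gamma} M\}\in\mathcal{R}_{\Gamma,\Gamma'}[A]$.
\item For $\forall\gamma_1\within\gamma_2.A$, $M$ belongs to the set if $M\,\gamma_3\in\mathcal{R}_{\Gamma,\Gamma'}[A[\gamma_1:=\gamma_3]]$ for all extensions and all $\gamma_3$ with $\gamma_2\preceq\gamma_3$.
\end{itemize}
We then prove the usual candidate properties CR1--CR3 by induction on $A$: reducible terms are SN, reducibility is closed under reduction, and neutral terms all of whose one-step reducts are reducible are reducible. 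Here ``neutral'' means variables, applications, classifier applications, and $\mathbf{unq}$. In CR1 for the modal clause, we instantiate with the trivial witness $T=\varepsilon$ and $\blacktriangleleft^{\gamma}_\varepsilon$ returning to the current position. This is legal by \ref{rule:calc-mtrans-lift} and reflexivity, but it needs the bound $\gamma_2\preceq\gamma$; when that bound is unavailable we first extend $\Gamma$ by a fresh item that supplies it. Reductions of $\mathbf{unq}_T\{M\}$ then project to reductions of $M$.

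\textbf{Position-sensitivity.} One technical point is that $\Rightarrow_\beta^\gamma$ is indexed by a position, and a subterm under $\mathbf{quo}$ or $\mathbf{unq}$ reduces at a different position. We therefore first check that $\mathcal{E}^{\gamma_1}_{[\gamma_2]}$-contexts compose correctly. Concretely, a reduction of $M$ at its own position lifts to a reduction of $\mathbf{unq}_T\{^{\gamma}M\}$, and conversely, so SN of subterms transfers in both directions.

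\textbf{Fundamental lemma.} Next we prove that if $\Gamma\vdash M:A$, then for every ``reducible substitution'' into a target context $\Delta$, the substituted term lies in $\mathcal{R}_\Delta[A\theta]$. Such a substitution combines all three kinds: variables to reducible terms, classifiers to classifiers respecting $\preceq$, and $\tau$ to witnesses respecting $\sqsubseteq$. The proof is by induction on the typing derivation.
\begin{itemize}
\item The elimination cases are immediate from the definitions.
\item For the introduction cases we use CR3. Consider a redex such as $\mathbf{unq}_T\{\mathbf{quo}(\tau)\{^{\gamma_4\within\gamma_5}M\}\}$. Its contractum $M[\gamma_4:=\gamma_2,\tau:=T]$ is reducible by the induction hypothesis, applied with the substitution extended exactly as in \Zcref{claim:rebasing}. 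Reductions inside the body are handled by an inner induction on the SN measure.
\item The arrow and $\forall$ cases follow the same pattern, using \Zcref{claim:varsubst} and \Zcref{claim:clssubst} respectively.
\end{itemize}
Finally, applying the fundamental lemma to the identity substitution, where variables are reducible by CR3, together with CR1 yields the theorem.

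\textbf{Main obstacle.} I expect the hardest step to be the $\Box$ case. The reducibility of $\mathbf{quo}$ must be quantified over all future $\blacktriangleleft_T$ that might later eliminate it, and the substitution must be stable under the non-local context surgery of \Zcref{claim:rebasing}. That surgery deletes $\blacktriangleleft_T^{\gamma_2},\tau:\blacktriangleright$ and renames $\gamma_3$ to $\gamma_1$. To handle it, I would make the Kripke quantification in every clause range over arbitrary well-formed extensions and substitutions, not just suffix extensions, and verify a ``reducibility is preserved by context morphisms'' lemma before starting the fundamental lemma.
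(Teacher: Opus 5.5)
There is a genuine gap. The two places where the real work lies are either broken as you set them up or left as a promissory note. First, $\Rightarrow_{\beta}^{\gamma}$ really depends on $\gamma$. A redex at the outer position is contracted by substituting that position, e.g.\ $(\lambda x{:}^{\gamma_3}A.\,M)\,N \Rightarrow_{\beta}^{\gamma} M[\gamma_3:=\gamma, x:=N]$. This classifier can stay visible in the contractum, for instance in a $\mathbf{quo}$ bound or a $\Box^{\succeq\gamma_3}$ annotation.

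Your clauses for $\to$ and $\forall$ evaluate $M\,N$ and $M\,\gamma_3$ in extensions whose position $\gamma'$ differs from $\gamma$. Such extensions cannot be avoided, because a fresh hypothesis always moves the position. Inside $M\,N$ the subterm $M$ is therefore reduced by $\Rightarrow_{\beta}^{\gamma'}$, which breaks two things:
\begin{itemize}
\item CR1 for $A\to B$ only gives strong normalization of $M$ for $\Rightarrow_{\beta}^{\delta}$, where $\delta$ is the fresh classifier, not for $\Rightarrow_{\beta}^{\gamma}$.
\item The reducts of $M\,N$ coming from $M$ are $M''N$ with $M\Rightarrow_{\beta}^{\gamma'}M''$, not $M'N$ with $M\Rightarrow_{\beta}^{\gamma}M'$. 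So neither CR2 nor CR3 follows from the induction hypothesis.
\end{itemize}
Your position-sensitivity paragraph only treats nesting under $\mathbf{unq}$, so it does not cover this. A minor point: in CR1 for $\Box^{\succeq\gamma_2}$, when $\gamma$ and $\gamma_2$ are $\preceq$-incomparable no extension has a position above both, so the witness $\varepsilon$ is unusable. Extend by $\tau':\blacktriangleright^{\delta\within\gamma_2}$ and take $T=\tau'$ instead.

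Second, the $\Box$-introduction case of the fundamental lemma needs a notion of reducible substitution that can pass through items $\mathord{\blacktriangleleft}^{\gamma}_{T}$, which jump back to an arbitrary earlier classifier. It also needs a monotonicity lemma for such substitutions. You correctly name this as the main obstacle but never define either, and this is exactly the delicate part of normalization proofs for Kripke-style calculi.

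The idea you are missing is that none of this structure matters for termination, and the paper simply erases it:
\begin{itemize}
\item $|\Box^{\succeq\gamma}A| = |\forall\gamma_1\within\gamma_2.A| = |A|$, and $\mathbf{quo}$, $\mathbf{unq}$, classifier abstraction and classifier application are dropped.
\item Erasure sends well-typed terms to well-typed STLC terms and commutes with substitution.
\item A $\lambda$-redex becomes one STLC $\beta$-step. The other two redex kinds become identities but strictly decrease the number of non-STLC constructors, since they substitute only classifiers and witnesses and so duplicate nothing.
\end{itemize}
An infinite $\Rightarrow_{\beta}^{\gamma}$-sequence would therefore yield an infinite STLC reduction. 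Because erasure forgets classifiers, the position index disappears, which is exactly what your typed, Kripke-indexed candidates cannot achieve. Any repair of your argument, for example first proving via classifier erasure that strong normalization does not depend on the position and then using classifier-insensitive candidates, essentially collapses to this argument.
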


\begin{theorem*}[Confluence]\label{claim:confluence}
  If\/ $ \Gamma ^{\position{ \gamma } }   \vdash  \bmttnt{M_{{\mathrm{1}}}}  \hasType  \bmttnt{A}$, $ \bmttnt{M_{{\mathrm{1}}}}  \Rightarrow_{\beta *}^{ \gamma }  \bmttnt{M_{{\mathrm{2}}}} $ and\/ $ \bmttnt{M_{{\mathrm{1}}}}  \Rightarrow_{\beta *}^{ \gamma }  \bmttnt{M_{{\mathrm{3}}}} $, then there exists $\bmttnt{M_{{\mathrm{4}}}}$ such that $ \bmttnt{M_{{\mathrm{2}}}}  \Rightarrow_{\beta *}^{ \gamma }  \bmttnt{M_{{\mathrm{4}}}} $ and $ \bmttnt{M_{{\mathrm{3}}}}  \Rightarrow_{\beta *}^{ \gamma }  \bmttnt{M_{{\mathrm{4}}}} $.
\end{theorem*}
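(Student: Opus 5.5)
The plan is to prove confluence for typed terms by the Tait--Martin-Löf parallel reduction method, adapted to position-indexed reduction. I define a parallel reduction $M \Rrightarrow^{\gamma} N$, indexed by the position $\gamma$ just as $\Rightarrow_{\beta}^{\gamma}$ is. It is congruent in every term former, with the index updated the way the evaluation contexts update it:
\begin{itemize}
\item under $\lambda x^{A}_{\gamma_3}$ the body is reduced at $\gamma_3$;
\item under $\mathbf{quo}$ with binder $\gamma_3$ the body is reduced at $\gamma_3$;
\item under $\mathbf{unq}^{\gamma_3}$ the body is reduced at $\gamma_3$;
\item in all other cases the index is unchanged.
\end{itemize}
It also has three redex-contracting rules that first reduce the subterms in parallel and then perform the substitution. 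For example, if $M \Rrightarrow^{\gamma_3} M'$ and $N \Rrightarrow^{\gamma} N'$, then $(\lambda x^{A}_{\gamma_3}.M)\,N \Rrightarrow^{\gamma} M'[\gamma_3 \coloneqq \gamma, x \coloneqq N']$. The $\mathbf{unq}/\mathbf{quo}$ and classifier $\beta$ rules are analogous. Two facts are then routine inductions: $\Rightarrow_{\beta}^{\gamma} \subseteq \Rrightarrow^{\gamma} \subseteq \Rightarrow_{\beta*}^{\gamma}$, and $\Rrightarrow^{\gamma}$ is reflexive. So it suffices to show that $\Rrightarrow^{\gamma}$ has the diamond property on typed terms. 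Subject Reduction lets me stay within typed terms, since each step in a chain preserves $\Gamma^{\gamma} \vdash - : A$.

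The key intermediate results are substitution lemmas for parallel reduction, mirroring \Zcref{claim:varsubst}, \Zcref{claim:rebasing} and \Zcref{claim:clssubst}. The first is the variable case: if $M \Rrightarrow^{\delta} M'$ and $N \Rrightarrow^{\gamma_1} N'$, then $M[\gamma_2 \coloneqq \gamma_1, x \coloneqq N] \Rrightarrow^{\delta[\gamma_2 \coloneqq \gamma_1]} M'[\gamma_2 \coloneqq \gamma_1, x \coloneqq N']$. The second is the analogue for modal transition substitution $[\gamma_4 \coloneqq \gamma_1, \tau \coloneqq T]$. The third concerns classifier substitution alone: parallel reduction is stable under renaming of classifiers, with the index renamed accordingly. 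Each is proved by induction on the derivation of $M \Rrightarrow M'$, with the redex cases using the standard substitution-commutation identities on raw terms. Next, following Takahashi, I define the complete development $M^{*}$ for each index, which contracts every redex visible at the current level. I then prove that $M \Rrightarrow^{\gamma} N$ implies $N \Rrightarrow^{\gamma} M^{*}$, by induction on the derivation, using the substitution lemmas in the redex cases. The diamond property follows immediately. A standard strip/tiling argument then lifts it to $\Rightarrow_{\beta*}^{\gamma}$, which yields the theorem. As an alternative, Strong Normalization from \Zcref{claim:SN} together with Newman's lemma would reduce the problem to local confluence. Local confluence needs the same substitution lemmas, so the parallel-reduction route is no harder and also avoids relying on typing at each step.

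I expect the main obstacle to be the bookkeeping of the classifier index. A redex inside the argument $N$ of $(\lambda x_{\gamma_3}.M)\,N$ is reduced at the outer position $\gamma$. After substitution, however, the copies of $N$ sit in $M$ at positions $\delta \succeq \gamma_3$, and those positions become positions above $\gamma$ once $\gamma_3$ is renamed to $\gamma$. The evaluation contexts $\mathcal{E}^{\gamma_1}_{[\gamma_2]}$ only permit reductions whose hole position is determined syntactically. I must therefore check that reducing $N$ "at $\gamma$" remains a legal reduction at each occurrence's new index. This is the delicate point. It holds if $\Rightarrow^{\gamma}_{\beta}$ is essentially insensitive to the index apart from how that index propagates into contracted substitutions. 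That would mean the index matters only through the classifier substituted for $\gamma_2$ in the contractum, and I would first try to prove a lemma to that effect.

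The same issue arises for $\mathbf{unq}_T$ redexes under rebasing, where $\tau \coloneqq T$ concatenates witnesses inside nested $\mathbf{unq}$ annotations. There I would verify that substitution commutes with witness concatenation, so that the contracta of the two orders coincide syntactically.
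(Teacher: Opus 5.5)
The step that fails is your first substitution lemma for $\Rrightarrow$, and the escape route you sketch for it does not exist. It is true that the index enters only through the hole classifier $\gamma_2$, which both the $\to$-rule and the $\mathbf{unq}/\mathbf{quo}$-rule substitute for the bound classifier. That dependence is already fatal. Substituting $N$ for $x$ relocates the redexes of $N$ that sit at $N$'s own position to the positions of the occurrences of $x$ (for instance, the classifier of an enclosing $\lambda z_{\gamma_6}$), and there they contract to different terms.

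Here is a concrete instance. Let $\Gamma = w :^{\delta} \Box^{\succeq !} s$, whose position is $\delta$. Let $\mathit{run} = \lambda y^{\Box^{\succeq !} s}_{\gamma_5}.\,\mathbf{unq}_{\varepsilon}\{^{\gamma_5} y\}$ and $M_1 = (\lambda x^{s}_{\gamma_3}.\,\lambda z^{r}_{\gamma_6}.\,x)\,(\mathit{run}\,w)$. Then $\Gamma \vdash M_1 : r \to s$.
\begin{itemize}
\item Contracting the outer redex first gives $\lambda z^{r}_{\gamma_6}.\,(\mathit{run}\,w)$. Its only redex now sits at position $\gamma_6$, so the next step yields $\lambda z^{r}_{\gamma_6}.\,\mathbf{unq}_{\varepsilon}\{^{\gamma_6} w\}$.
\item Contracting the inner redex first, at position $\delta$, and then the outer one yields $\lambda z^{r}_{\gamma_6}.\,\mathbf{unq}_{\varepsilon}\{^{\delta} w\}$.
\end{itemize}
Both results are well typed and $\beta$-normal. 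They are not $\alpha$-equivalent, since $\gamma_6$ is bound while $\delta$ is free.

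Taking $M = \lambda z^{r}_{\gamma_6}.\,x$, $N = \mathit{run}\,w$ and $N' = \mathbf{unq}_{\varepsilon}\{^{\delta} w\}$ refutes your lemma directly. The same pair refutes the diamond property for $\Rrightarrow^{\delta}$, and also local confluence. It refutes the Confluence statement itself for $\Rightarrow^{\gamma}_{\beta}$ as defined, and hence the uniqueness of normal forms.

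So the difficulty you flagged is an obstruction rather than bookkeeping, and no choice of proof technique gets around it. Your fallback via Strong Normalization and Newman's lemma is the paper's own route, justified there by left-linearity, absence of critical overlaps, and commutation of disjoint reductions. It fails at exactly the same point. The example is a nested, non-overlapping pair of redexes, and orthogonality only yields local confluence when contraction does not depend on the surrounding context; here it does.

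To get a provable theorem you must change the statement or the calculus. One option is to claim confluence only up to erasure of classifier and witness annotations, where it reduces to STLC via $\lvert{-}\rvert$ and the erasure lemma. Another is to redesign the reduction rules so that the contractum no longer depends on the hole position.
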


\begin{corollary}[Uniqueness of $\beta$-Normal Form]
  If\/ $ \Gamma ^{\position{ \gamma } }   \vdash  \bmttnt{M_{{\mathrm{1}}}}  \hasType  \bmttnt{A}$, then $\bmttnt{M_{{\mathrm{1}}}}$ has a unique normal form with respect to\/ $ \Rightarrow_{\beta}^{ \gamma } $.
\end{corollary}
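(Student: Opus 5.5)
This is the standard corollary of the two preceding theorems. I would argue existence and uniqueness separately, working throughout at the fixed position $\gamma$ of $\Gamma$.

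\emph{Existence.} By Strong Normalization (\Zcref{claim:SN}), $\bmttnt{M_{{\mathrm{1}}}}$ admits no infinite $\Rightarrow_{\beta}^{\gamma}$-sequence. So any maximal reduction sequence from $\bmttnt{M_{{\mathrm{1}}}}$ is finite, and its last term $\bmttnt{N}$ is a normal form with $\bmttnt{M_{{\mathrm{1}}}} \Rightarrow_{\beta *}^{\gamma} \bmttnt{N}$. Such a maximal sequence exists: just keep choosing some redex while one exists.

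\emph{Uniqueness.} Suppose $\bmttnt{N_{{\mathrm{1}}}}$ and $\bmttnt{N_{{\mathrm{2}}}}$ are both normal forms with $\bmttnt{M_{{\mathrm{1}}}} \Rightarrow_{\beta *}^{\gamma} \bmttnt{N_{{\mathrm{i}}}}$. Confluence (\Zcref{claim:confluence}) applies, since $\bmttnt{M_{{\mathrm{1}}}}$ is well typed at $\Gamma^{\position{\gamma}}$. It gives $\bmttnt{M_{{\mathrm{4}}}}$ with $\bmttnt{N_{{\mathrm{i}}}} \Rightarrow_{\beta *}^{\gamma} \bmttnt{M_{{\mathrm{4}}}}$. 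A normal form has no outgoing $\Rightarrow_{\beta}^{\gamma}$ step, so each of these reflexive--transitive reductions must have length zero. Hence $\bmttnt{N_{{\mathrm{1}}}} = \bmttnt{M_{{\mathrm{4}}}} = \bmttnt{N_{{\mathrm{2}}}}$, up to $\alpha$-equivalence of bound variables, classifiers and witnesses.

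The only point needing care is that both theorems are stated relative to the same index $\gamma$. Here Confluence is applied directly to $\bmttnt{M_{{\mathrm{1}}}}$, and Strong Normalization is likewise used only for $\bmttnt{M_{{\mathrm{1}}}}$. So Subject Reduction (\Zcref{claim:subjectreduction}) is not even needed for the intermediate terms. I therefore expect no real obstacle.
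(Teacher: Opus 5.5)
Your proposal is correct and matches the intended argument. The paper states this as an immediate corollary, without a separate proof, of Strong Normalization (existence of a normal form) and Confluence (uniqueness), both taken with respect to the same index $\gamma$ exactly as you use them. Your remark that Subject Reduction is unnecessary is also right, since Confluence is already stated for multi-step reductions out of the original well-typed term.
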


In order to confirm strong normalization, it suffices to reduce it to strong normalization of simply typed lambda calculus, which is a common technique in modal lambda-calculi~\cite{book/MartiniM96, journals/jacm/Davies17}.
Finally, we confirm canonicity and the subformula property.

\begin{definition}\quitvmode\samepage
  \begin{enumerate}
    \item
      A term is said to be \emph{canonical}
      if its outermost term-former is for an introduction rule,
      and is said to be \emph{neutral} otherwise.
    \item
      A \emph{subformula} of a formula is a literal subexpression
      with some classifier possibly renamed.
  \end{enumerate}
\end{definition}

\begin{theorem*}[Canonicity]\label{claim:canonicity}
  If a term is well-typed, closed with respect to term variables, and $\beta$-normal,
  then it is canonical.
\end{theorem*}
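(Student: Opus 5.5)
The plan is to prove a stronger statement by induction on the structure of the term.

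\emph{Every well-typed, $\beta$-normal, neutral term has at least one free term variable.}

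Canonicity follows at once. A closed, well-typed, $\beta$-normal term has no free term variables, so by the stronger statement it cannot be neutral, and hence it is canonical. The stronger claim has to be stated for an arbitrary well-typed context $\Gamma$. The reason is that the premise of \ref{rule:type-bm-e} types its subterm under the extended context $\Gamma \bmttsym{,} \mathord{\blacktriangleleft}_{\bmttnt{T}}^{\gamma_{{\mathrm{2}}}}$, so the induction must range over all contexts, not just empty ones.

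The first ingredient is an inversion (generation) lemma, which I would establish beforehand. The typing rules are syntax-directed, so each term former determines the last rule of its typing derivation. In particular, a canonical term of type $\bmttnt{A} \mathbin{\rightarrow} \bmttnt{B}$ must be a $\lambda$-abstraction. A canonical term of type $\Box^{\mathord{\succeq}\gamma}\bmttnt{A}$ must be a $\mathbf{quo}$. A canonical term of type $\forall \gamma_{{\mathrm{1}}} \within \gamma_{{\mathrm{2}}}. \bmttnt{A}$ must be a classifier abstraction. This holds because each introduction rule produces exactly one type shape.

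The induction then proceeds by cases on the neutral term $\bmttnt{M}$.

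\begin{enumerate}
\item If $\bmttnt{M} = \bmttmv{x}$, then $\bmttmv{x}$ itself is free.
\item If $\bmttnt{M} = \bmttnt{M_{{\mathrm{1}}}}\,\bmttnt{M_{{\mathrm{2}}}}$, then by \ref{rule:type-to-e} the term $\bmttnt{M_{{\mathrm{1}}}}$ is well typed with an arrow type, and it is $\beta$-normal because it is a subterm of a normal term. The evaluation contexts reach every subterm, including those under $\mathbf{quo}$ and under binders, so normality is inherited by subterms. If $\bmttnt{M_{{\mathrm{1}}}}$ were canonical, inversion would make it a $\lambda$-abstraction, and $\bmttnt{M}$ would be a redex. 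So $\bmttnt{M_{{\mathrm{1}}}}$ is neutral, and the induction hypothesis gives a free variable of $\bmttnt{M_{{\mathrm{1}}}}$. That variable remains free in $\bmttnt{M}$, since application binds no term variables.
\item If $\bmttnt{M} = \mathbf{unq}_{\bmttnt{T}}\lbrace^{\gamma_{{\mathrm{2}}}} \bmttnt{N} \rbrace$, then $\bmttnt{N}$ is typed with a boxed type under $\Gamma \bmttsym{,} \mathord{\blacktriangleleft}_{\bmttnt{T}}^{\gamma_{{\mathrm{2}}}}$. It cannot be a $\mathbf{quo}$, since that would create the second redex of \Zcref{def:beta}. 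Hence $\bmttnt{N}$ is neutral, the induction hypothesis applies in the extended context, and $\mathbf{unq}$ binds no term variables.
\item If $\bmttnt{M} = \bmttnt{N}\,\gamma$, the argument is identical, using the third redex of \Zcref{def:beta}.
\end{enumerate}

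I expect the main obstacle to be the bookkeeping around positions. Reduction $\Rightarrow_{\beta}^{\gamma}$ is indexed by classifiers, so I must check that being $\beta$-normal is inherited by subterms at their own positions. Concretely, for a subterm sitting at a hole located at $\gamma_{{\mathrm{2}}}$, there must be an evaluation context $\mathcal{E}^{\gamma}_{[\gamma_{{\mathrm{2}}}]}$ that reaches it. I would verify this by a routine induction on the typing derivation, in parallel with the evaluation-context grammar: each typing rule moves the position exactly as the corresponding evaluation-context clause does. With that lemma in hand, the case analysis above is straightforward.
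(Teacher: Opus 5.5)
Your proposal is correct and follows the paper's proof. The paper also obtains canonicity by contradiction from an auxiliary lemma, proved by induction on the typing derivation with exactly your four neutral cases (Var, $\to$-E, $\Box$-E, $\forall$-E), stating that a $\beta$-normal neutral term has a free variable; the paper's version additionally records that this variable is declared in $\Gamma$ with a type containing the term's type as a subformula, so the lemma can be reused for the subformula property. Your position bookkeeping is lighter than you expect: the evaluation-context grammar places no constraint on the outer index, so it reaches every subterm occurrence purely syntactically, and $\beta$-normality is inherited by subterms without any appeal to typing.
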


\begin{theorem*}[Subformula Property]\label{claim:subformula-property}
  Suppose\/ $ \Gamma ^{\position{ \gamma } }   \vdash  \bmttnt{M}  \hasType  \bmttnt{A}$. If $\bmttnt{M}$ is normal with respect to\/ $ \Rightarrow_{\beta}^{ \gamma } $, then any subterm of $\bmttnt{M}$ satisfies at least one of the following:
  \begin{enumerate}
  \item\label{item:subformula-property:intro} Its type is a subformula of $\bmttnt{A}$;
  \item\label{item:subformula-property:elim} Its type is a subformula of $\bmttnt{B}$ for some $ \binder{ \bmttmv{x} }\has@{ \delta } \bmttnt{B}  \, \in \, \Gamma$.
  \end{enumerate}
\end{theorem*}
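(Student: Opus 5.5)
We prove the Subformula Property by induction on the structure of the $\beta$-normal term $\bmttnt{M}$. The induction hypothesis is strengthened to quantify over all contexts and positions, so that it covers subterms typed under extended contexts (after $\lambda$, $\mathbf{quo}$, $\mathbf{unq}$ or classifier abstraction). First we establish a normal/neutral classification. Call a term \emph{neutral} if it is a variable, an application $\bmttnt{M_{{\mathrm{1}}}}\,\bmttnt{M_{{\mathrm{2}}}}$ with $\bmttnt{M_{{\mathrm{1}}}}$ neutral, an instantiation $\bmttnt{M}\gamma$ with $\bmttnt{M}$ neutral, or an $\mathbf{unq}_{\bmttnt{T}}\lbrace^{\gamma}\bmttnt{M}\rbrace$ with $\bmttnt{M}$ neutral. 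Since the evaluation contexts $\mathcal{E}$ reach under every term former, including $\mathbf{quo}$, $\mathbf{unq}$ and both $\lambda$'s, normality with respect to $\Rightarrow_{\beta}^{\gamma}$ is inherited by every subterm at its own position. Then, exactly as in the Canonicity theorem, a normal elimination form cannot have an introduction form in head position, because that would form one of the three redexes of Definition~\ref{def:beta}. So every normal term is either canonical with normal immediate subterms, or neutral.

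We then prove two statements simultaneously, by induction on normal terms.
\begin{enumerate}
\item If $\bmttnt{M}$ is neutral, its type is a subformula of some hypothesis $\bmttnt{B}$ with $\binder{\bmttmv{x}}\has@{\delta}\bmttnt{B}\in\Gamma$, and so is the type of every subterm on its spine.
\item Every subterm of a normal $\bmttnt{M}$ satisfies clause~(\ref{item:subformula-property:intro}) or~(\ref{item:subformula-property:elim}) of the statement.
\end{enumerate}

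For introduction forms, the body's type is a direct subformula. For $\lambda\binder{\bmttmv{x}}\has@{\gamma}\bmttnt{A_{{\mathrm{1}}}}.\bmttnt{M}$ the body has type $\bmttnt{A_{{\mathrm{2}}}}$, and the new hypothesis $\bmttnt{A_{{\mathrm{1}}}}$ is itself a subformula of $\bmttnt{A_{{\mathrm{1}}}}\to\bmttnt{A_{{\mathrm{2}}}}$. The other hypotheses are unchanged, since $\mathbf{quo}$ adds only $\mathord{\blacktriangleright}$ and classifier abstraction adds only a bound. Hence, for a subterm of the body, membership in the type of the body or in an extended hypothesis transfers to $\bmttnt{A}$ or to $\Gamma$.

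For neutral terms, we walk the spine down to the head variable:
\begin{itemize}
\item For $\bmttnt{M_{{\mathrm{1}}}}\,\bmttnt{M_{{\mathrm{2}}}}$, the function part has type $\bmttnt{A_{{\mathrm{1}}}}\to\bmttnt{A_{{\mathrm{2}}}}$, which is a subformula of a hypothesis by (1). Both $\bmttnt{A_{{\mathrm{1}}}}$ and $\bmttnt{A_{{\mathrm{2}}}}$ are then subformulas of that hypothesis. The argument $\bmttnt{M_{{\mathrm{2}}}}$ is normal of type $\bmttnt{A_{{\mathrm{1}}}}$, so (2) applied to it yields subformulas of $\bmttnt{A_{{\mathrm{1}}}}$ or of hypotheses.
\item For $\mathbf{unq}$, the argument has type $\Box^{\succeq\gamma_3}\bmttnt{A}$, and $\bmttnt{A}$ is a literal subexpression of it.
\item For $\bmttnt{M}\gamma_3$, the type $\bmttnt{A}[\binder{\gamma_{{\mathrm{1}}}}\coloneqq\gamma_3]$ is a subformula of $\forall\gamma_1\within\gamma_2.\bmttnt{A}$, precisely because the definition allows classifier renaming.
\end{itemize}

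The main obstacle is the context change under $\mathbf{unq}$. Its argument is typed under $\Gamma,\mathord{\blacktriangleleft}^{\gamma_2}_{\bmttnt{T}}$. This context has the same term hypotheses, so the hypothesis clause transfers, but the judgment sits at a different position, so normality must be read as ``normal at $\gamma_2$''. We handle this by using the position-indexed inheritance established above. We also check that the Var rule only consults hypotheses literally in $\Gamma$, whatever the $\preceq$ side condition, so neutral heads always come from $\Gamma$ even after $\mathord{\blacktriangleleft}$ moves the position.

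The remaining bookkeeping is the iterated composition of classifier renamings. Renaming commutes with taking subexpressions, so a subformula of a renamed subformula is again a subformula up to renaming, and the relation is transitive.
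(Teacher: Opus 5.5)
Your proposal is correct and follows essentially the paper's route. The paper first proves a separate lemma that a well-typed $\beta$-normal neutral term has a type that is a subformula of some hypothesis in $\Gamma$, which is your statement~(1). It then proves the Subformula Property by induction on the typing derivation, with introduction cases handled by the induction hypothesis and elimination cases by that lemma plus the induction hypothesis, exactly as you do. Your simultaneous induction on terms, and your explicit remarks on position-indexed normality under $\mathbf{quo}$/$\mathbf{unq}$ and on transitivity of renaming-subformulas, are extra bookkeeping the paper leaves implicit.
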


\section{Staged Operational Semantics}\label{sec:staging}
This section establishes safety of the term-assignment system for \bml as a
staged programming language. Operational semantics for staged calculi commonly
respect the staging order: redexes at earlier stages are evaluated before later
stages~\cite{journals/jacm/Davies17,journals/corr/abs-1010-3806,conf/ppdp/YuseI06}.
Following this principle, we define \emph{staged reduction}, a weak
stage-respecting reduction relation obtained by restricting \textbeta-reduction.

In our calculus, modal transitions correspond to stage boundaries. An atomic witness $\tau$
introduced by a quotation $ \mathbf{quo} (\binder{ \tau })\lbrace^{\binder{ \gamma_{{\mathrm{1}}} } \within \gamma_{{\mathrm{2}}} }  \bmttnt{M} \rbrace $ marks one such
boundary, while the witness $\bmttnt{T}$ on an unquotation $ \mathbf{unq} _{ \bmttnt{T} }\lbrace^{ \gamma } \bmttnt{M} \rbrace $
records the boundaries it escapes. In particular, an unquotation with an empty witness $ \varepsilon $ crosses no
stage boundary and therefore represents runtime evaluation.

We begin by stratifying terms by modal transition witnesses. A witness records
the stage of a term relative to the top-level stage at which evaluation takes
place: $ \varepsilon $ denotes the top-level stage, whereas a non-empty
witness $\bmttnt{T}$ denotes a future stage. We write $ \bmttnt{M} ^{ \bmttnt{T} } $ for a
term valid at stage $\bmttnt{T}$.
Values include lambda abstractions, classifier abstractions, and quotations;
the first two reflect our weak-head reduction strategy. A quotation is a value
when its body contains no subterms at the surrounding top-level stage. Its body
is therefore indexed by $ \varepsilon $, relative to the generated stage itself.
\par\smallskip
\noindent
\begin{tabular}{llcl}
  $ \bmttnt{M} ^{ \bmttnt{T} } $ & $\in  \mathbf{Term}^{ \bmttnt{T} } $ & $\Coloneqq$ & $\bmttmv{x} \mid  \lambda \binder{ \bmttmv{x} }\has@{ \gamma } \bmttnt{A} \ldotp  \bmttnt{M} ^{ \bmttnt{T} }   \mid    \bmttnt{M_{{\mathrm{1}}}} ^{ \bmttnt{T} }    \bmttnt{M_{{\mathrm{2}}}}  ^{ \bmttnt{T} }  \mid  \mathbf{quo} (\binder{ \tau })\lbrace^{\binder{ \gamma_{{\mathrm{1}}} } \within \gamma_{{\mathrm{2}}} }   \bmttnt{M} ^{ \bmttnt{T}  \bmttsym{,}  \tau }  \rbrace $ \\
  & & $\mid$ & $ \mathbf{unq} _{ \bmttnt{T_{{\mathrm{1}}}} }\lbrace^{ \gamma }  \bmttnt{M} ^{ \bmttnt{T_{{\mathrm{2}}}} }  \rbrace  \text{\ (if\ $ \bmttnt{T_{{\mathrm{2}}}}  +  \bmttnt{T_{{\mathrm{1}}}}  = \bmttnt{T}$)} \mid  \lambda \binder{ \gamma_{{\mathrm{1}}} }\within  \gamma_{{\mathrm{2}}} .  \bmttnt{M} ^{ \bmttnt{T} }   \mid   \bmttnt{M} ^{ \bmttnt{T} }  \gamma $ \\
  $v$ & $\in  \mathbf{Val} $ & $\Coloneqq$ & $ \lambda \binder{ \bmttmv{x} }\has@{ \gamma } \bmttnt{A} \ldotp  \bmttnt{M} ^{  \varepsilon  }   \mid  \mathbf{quo} (\binder{ \tau })\lbrace^{\binder{ \gamma_{{\mathrm{1}}} } \within \gamma_{{\mathrm{2}}} }   \bmttnt{M} ^{  \varepsilon  }  \rbrace  \mid  \lambda \binder{ \gamma_{{\mathrm{1}}} }\within  \gamma_{{\mathrm{2}}} .  \bmttnt{M} ^{  \varepsilon  }  $
\end{tabular}
\par\smallskip
\noindent{}Then, we define evaluation contexts. $ \mathcal{E} ^{ \bmttnt{T_{{\mathrm{1}}}} @ \gamma_{{\mathrm{1}}} }_{[  \bmttnt{T_{{\mathrm{2}}}} @ \gamma_{{\mathrm{2}}}  ]} $ is an evaluation context located at the stage $\bmttnt{T_{{\mathrm{1}}}}$ and the classifier $\gamma_{{\mathrm{1}}}$, and whose hole is located at the stage $\bmttnt{T_{{\mathrm{2}}}}$ and the classifier $\gamma_{{\mathrm{2}}}$.
\par\smallskip
\noindent
\begin{tabular}{lcl}
  $ \mathcal{E} ^{ \bmttnt{T_{{\mathrm{1}}}} @ \gamma_{{\mathrm{1}}} }_{[  \bmttnt{T_{{\mathrm{2}}}} @ \gamma_{{\mathrm{2}}}  ]} $ & $\Coloneqq$ & $ \Box  \text{\ (if\ $\bmttnt{T_{{\mathrm{1}}}} = \bmttnt{T_{{\mathrm{2}}}}$ and $\gamma_{{\mathrm{1}}} = \gamma_{{\mathrm{2}}}$)} \mid  \lambda\binder{ \bmttmv{x} }_{ \gamma_{{\mathrm{3}}} }^{ \bmttnt{A} }.  \mathcal{E} ^{ \bmttnt{T_{{\mathrm{1}}}} @ \gamma_{{\mathrm{3}}} }_{[  \bmttnt{T_{{\mathrm{2}}}} @ \gamma_{{\mathrm{2}}}  ]}   \text{\ (if\ $\bmttnt{T_{{\mathrm{1}}}} \neq  \varepsilon $)} \mid   \mathcal{E} ^{ \bmttnt{T_{{\mathrm{1}}}} @ \gamma_{{\mathrm{1}}} }_{[  \bmttnt{T_{{\mathrm{2}}}} @ \gamma_{{\mathrm{2}}}  ]}     \bmttnt{M} ^{ \bmttnt{T_{{\mathrm{1}}}} }  $ \\
  & $\mid$ & $  \bmttnt{M} ^{ \bmttnt{T_{{\mathrm{1}}}} }      \mathcal{E} ^{ \bmttnt{T_{{\mathrm{1}}}} @ \gamma_{{\mathrm{1}}} }_{[  \bmttnt{T_{{\mathrm{2}}}} @ \gamma_{{\mathrm{2}}}  ]}    \mid  \mathbf{quo} ( \tau )\lbrace^{\binder{ \gamma_{{\mathrm{3}}} }\within{ \gamma_{{\mathrm{4}}} } }   \mathcal{E} ^{ \bmttnt{T_{{\mathrm{1}}}}  \bmttsym{,}  \tau @ \gamma_{{\mathrm{3}}} }_{[  \bmttnt{T_{{\mathrm{2}}}} @ \gamma_{{\mathrm{2}}}  ]}  \rbrace  \mid  \mathbf{unq} _{ \bmttnt{T_{{\mathrm{3}}}} }\lbrace^{ \gamma_{{\mathrm{3}}} }  \mathcal{E} ^{ \bmttnt{T_{{\mathrm{4}}}} @ \gamma_{{\mathrm{3}}} }_{[  \bmttnt{T_{{\mathrm{2}}}} @ \gamma_{{\mathrm{2}}}  ]}  \rbrace  \text{\ (if\ $ \bmttnt{T_{{\mathrm{4}}}}  +  \bmttnt{T_{{\mathrm{3}}}}  = \bmttnt{T_{{\mathrm{1}}}}$)}$ \\
  & $\mid$ & $ \lambda \binder{ \gamma_{{\mathrm{3}}} }\within  \gamma_{{\mathrm{4}}} .  \mathcal{E} ^{ \bmttnt{T_{{\mathrm{1}}}} @ \gamma_{{\mathrm{1}}} }_{[  \bmttnt{T_{{\mathrm{2}}}} @ \gamma_{{\mathrm{2}}}  ]}   \text{\ (if\ $\bmttnt{T_{{\mathrm{1}}}} \neq  \varepsilon $)} \mid   \mathcal{E} ^{ \bmttnt{T_{{\mathrm{1}}}} @ \gamma_{{\mathrm{1}}} }_{[  \bmttnt{T_{{\mathrm{2}}}} @ \gamma_{{\mathrm{2}}}  ]}  [  \gamma_{{\mathrm{3}}}  ] $
\end{tabular}
\par\smallskip
\noindent{}Now we define staged reduction $ \bmttnt{M_{{\mathrm{1}}}}  \Rightarrow_{st}^{ \gamma }  \bmttnt{M_{{\mathrm{2}}}} $, restricting redexes to the top-level stage $ \varepsilon $.
\begin{align*}
    \mathcal{E} ^{ \bmttnt{T} @ \gamma_{{\mathrm{1}}} }_{[   \varepsilon  @  \mathord{\boldsymbol{!} }   ]}   \bmttsym{[}    \bmttsym{(}   \lambda \binder{ \bmttmv{x} }\has@{ \gamma_{{\mathrm{2}}} } \bmttnt{A} \ldotp  \bmttnt{M_{{\mathrm{1}}}} ^{  \varepsilon  }    \bmttsym{)}   \bmttnt{M_{{\mathrm{2}}}}  ^{  \varepsilon  }   \bmttsym{]} &  \Rightarrow_{st}^{ \gamma_{{\mathrm{1}}} }   \mathcal{E} ^{ \bmttnt{T} @ \gamma_{{\mathrm{1}}} }_{[   \varepsilon  @  \mathord{\boldsymbol{!} }   ]}   \bmttsym{[}    \bmttnt{M_{{\mathrm{1}}}} ^{  \varepsilon  }  [ \binder{ \gamma_{{\mathrm{2}}} } \coloneqq  \mathord{\boldsymbol{!} }  , \binder{ \bmttmv{x} }  \coloneqq    \bmttnt{M_{{\mathrm{2}}}} ^{  \varepsilon  }   ]   \bmttsym{]}  \\
    \mathcal{E} ^{ \bmttnt{T_{{\mathrm{1}}}} @ \gamma_{{\mathrm{1}}} }_{[  \bmttnt{T_{{\mathrm{2}}}} @ \gamma_{{\mathrm{2}}}  ]}   \bmttsym{[}   \mathbf{unq} _{ \bmttnt{T_{{\mathrm{2}}}} }\lbrace^{  \mathord{\boldsymbol{!} }  }  \mathbf{quo} (\binder{ \tau })\lbrace^{\binder{ \gamma_{{\mathrm{3}}} } \within \gamma_{{\mathrm{4}}} }   \bmttnt{M_{{\mathrm{1}}}} ^{ \tau }  \rbrace  \rbrace   \bmttsym{]} &  \Rightarrow_{st}^{ \gamma_{{\mathrm{1}}} }   \mathcal{E} ^{ \bmttnt{T_{{\mathrm{1}}}} @ \gamma_{{\mathrm{1}}} }_{[  \bmttnt{T_{{\mathrm{2}}}} @ \gamma_{{\mathrm{2}}}  ]}   \bmttsym{[}    \bmttnt{M_{{\mathrm{1}}}} ^{ \tau }  [ \binder{ \gamma_{{\mathrm{3}}} } \coloneqq \gamma_{{\mathrm{2}}} , \binder{ \tau }  \coloneqq   \bmttnt{T_{{\mathrm{2}}}}  ]   \bmttsym{]}  \\
    \mathcal{E} ^{ \bmttnt{T_{{\mathrm{1}}}} @ \gamma_{{\mathrm{1}}} }_{[   \varepsilon  @  \mathord{\boldsymbol{!} }   ]}   \bmttsym{[}   \bmttsym{(}   \lambda \binder{ \gamma_{{\mathrm{2}}} }\within  \gamma_{{\mathrm{3}}} .  \bmttnt{M} ^{  \varepsilon  }    \bmttsym{)} \gamma_{{\mathrm{4}}}   \bmttsym{]} &  \Rightarrow_{st}^{ \gamma_{{\mathrm{1}}} }   \mathcal{E} ^{ \bmttnt{T_{{\mathrm{1}}}} @ \gamma_{{\mathrm{1}}} }_{[   \varepsilon  @  \mathord{\boldsymbol{!} }   ]}   \bmttsym{[}    \bmttnt{M} ^{  \varepsilon  }  [ \binder{ \gamma_{{\mathrm{2}}} } \coloneqq \gamma_{{\mathrm{4}}}  ]   \bmttsym{]} 
\end{align*}
$ \Rightarrow_{st*}^{ \gamma } $ is defined as the reflexive--transitive closure of $ \Rightarrow_{st}^{ \gamma } $.

We can confirm that staged reduction enjoys the expected safety properties.
First, we confirm that staged reduction is a restriction of $\beta$-reduction.
\begin{lemma*}\label{claim:staged-reduction-is-beta-reduction}
  If $ \bmttnt{M_{{\mathrm{1}}}}  \Rightarrow_{st}^{ \gamma }  \bmttnt{M_{{\mathrm{2}}}} $, then $ \bmttnt{M_{{\mathrm{1}}}}  \Rightarrow_{\beta}^{ \gamma }  \bmttnt{M_{{\mathrm{2}}}} $.
\end{lemma*}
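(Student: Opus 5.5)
The plan is to prove the statement by case analysis on the rule used for $ \bmttnt{M_{{\mathrm{1}}}}  \Rightarrow_{st}^{ \gamma }  \bmttnt{M_{{\mathrm{2}}}} $. Each staged-reduction rule has the form $\mathcal{E}[R] \Rightarrow_{st} \mathcal{E}[R']$, where $\mathcal{E}$ is a staged evaluation context $ \mathcal{E} ^{ \bmttnt{T_{{\mathrm{1}}}} @ \gamma_{{\mathrm{1}}} }_{[  \bmttnt{T_{{\mathrm{2}}}} @ \gamma_{{\mathrm{2}}}  ]} $. It therefore suffices to show two things. First, every staged evaluation context is an ordinary evaluation context $ \mathcal{E} ^{ \gamma_{{\mathrm{1}}} }_{[  \gamma_{{\mathrm{2}}}  ]} $ with the same hole position. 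Second, for the hole position $\gamma_2$ of that context, each staged redex/contractum pair $R \mapsto R'$ is literally an instance of the corresponding $\beta$-redex/contractum pair in the definition of $\Rightarrow_{\beta}$.

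For the first claim, define a forgetful map that erases the stage annotations. I will prove by induction on the structure of $ \mathcal{E} ^{ \bmttnt{T_{{\mathrm{1}}}} @ \gamma_{{\mathrm{1}}} }_{[  \bmttnt{T_{{\mathrm{2}}}} @ \gamma_{{\mathrm{2}}}  ]} $ that it is a member of the grammar for $ \mathcal{E} ^{ \gamma_{{\mathrm{1}}} }_{[  \gamma_{{\mathrm{2}}}  ]} $. Each production of the staged grammar (hole, $\lambda$-body, application left and right, $\mathbf{quo}$-body, $\mathbf{unq}$-body, classifier abstraction, classifier application) has a counterpart in the $\beta$ grammar. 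These counterparts track classifier positions identically: the $\lambda$ case moves to $\gamma_3$, the $\mathbf{quo}$ case moves to the bound classifier $\gamma_3$, and the $\mathbf{unq}$ case moves to the annotated $\gamma_3$. The staged side conditions ($\bmttnt{T_{{\mathrm{1}}}} \neq \varepsilon$, $\bmttnt{T_{{\mathrm{4}}}} + \bmttnt{T_{{\mathrm{3}}}} = \bmttnt{T_{{\mathrm{1}}}}$) and the stage restrictions on subterms $ \bmttnt{M} ^{ \bmttnt{T} } $ only restrict the staged grammar, so dropping them is harmless. A stage-indexed term is in particular a raw term, so the plugged side terms also match.

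For the second claim, I will check the three rules in turn. For the $\lambda$-rule, the hole is at $ \varepsilon  @  \mathord{\boldsymbol{!} } $, so the corresponding $\beta$-context has hole position $ \mathord{\boldsymbol{!} } $. The $\beta$ rule then contracts $(\lambda x^{\gamma_2}.M_1)M_2$ to $M_1[\gamma_2 \coloneqq {!}, x \coloneqq M_2]$, which is exactly the staged contractum. For the quote/unquote rule, the hole is at $\bmttnt{T_{{\mathrm{2}}}}@\gamma_2$, and the $\beta$ rule substitutes $\gamma_4 \coloneqq \gamma_2$ (the hole position) and $\tau \coloneqq T$ with $T = \bmttnt{T_{{\mathrm{2}}}}$. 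Modulo the naming of the bound classifier, this matches the staged rule. The annotation $ \mathord{\boldsymbol{!} } $ on $\mathbf{unq}$ is irrelevant, because the $\beta$ rule places no constraint on $\gamma_3$. The classifier-application rule matches verbatim.

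The main obstacle is the bookkeeping in the first claim. I must make sure that the classifier index of the hole in the staged context coincides with that of the erased $\beta$-context, in particular in the $\mathbf{unq}$ case, where the staged grammar keeps the classifier $\gamma_3$ but changes the stage. This is because the contractum's substitution target depends on it. Once that invariant is stated as part of the induction hypothesis (same outer classifier, same hole classifier), the remainder is a direct syntactic comparison.
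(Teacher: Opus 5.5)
Your proposal is correct and follows the same route as the paper, whose proof is a one-line ``induction on the structure of evaluation contexts''. Your two steps are an explicit unfolding of that induction: first, every staged context $\mathcal{E}^{T_1@\gamma_1}_{[T_2@\gamma_2]}$ erases to a $\beta$-context $\mathcal{E}^{\gamma_1}_{[\gamma_2]}$ with the same outer and hole classifiers; second, each of the three staged redexes is an instance of the corresponding $\beta$-redex, with the hole classifier as the substitution target.
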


We next establish preservation and progress, the standard safety properties for
typed lambda calculi.  In the statements below,
$ { \Gamma }^{  \varepsilon  }_{[   \varepsilon   ]} $ denotes a well-staged context that both starts and
ends at the top-level stage $ \varepsilon $; the formal definition is given in
the appendix.
\begin{theorem*}[Preservation]\label{claim:staged-reduction-preservation}
  If\/ $ { \Gamma }^{  \varepsilon   \position{ \gamma_{{\mathrm{1}}} } }_{[   \varepsilon   ]}   \vdash   \bmttnt{M_{{\mathrm{1}}}} ^{  \varepsilon  }   \hasType  \bmttnt{A}$ and $  \bmttnt{M_{{\mathrm{1}}}} ^{  \varepsilon  }   \Rightarrow_{st}^{ \gamma_{{\mathrm{1}}} }   \bmttnt{M_{{\mathrm{2}}}} ^{  \varepsilon  }  $,
  then $ { \Gamma }^{  \varepsilon   \position{ \gamma_{{\mathrm{1}}} } }_{[   \varepsilon   ]}   \vdash   \bmttnt{M_{{\mathrm{2}}}} ^{  \varepsilon  }   \hasType  \bmttnt{A}$.
\end{theorem*}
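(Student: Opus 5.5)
The plan is to reduce preservation to the already established \Zcref{claim:subjectreduction}, using \Zcref{claim:staged-reduction-is-beta-reduction}, and then check separately that the result still has the stage annotation $\varepsilon$.

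First, from $\bmttnt{M_{{\mathrm{1}}}}^{\varepsilon} \Rightarrow_{st}^{\gamma_{{\mathrm{1}}}} \bmttnt{M_{{\mathrm{2}}}}^{\varepsilon}$, \Zcref{claim:staged-reduction-is-beta-reduction} gives $\bmttnt{M_{{\mathrm{1}}}} \Rightarrow_{\beta}^{\gamma_{{\mathrm{1}}}} \bmttnt{M_{{\mathrm{2}}}}$. The typing hypothesis is a judgment whose context has position $\gamma_{{\mathrm{1}}}$. Since well-stagedness is extra information layered on an ordinary well-formed context, \Zcref{claim:subjectreduction} yields $\Gamma \vdash \bmttnt{M_{{\mathrm{2}}}} \hasType \bmttnt{A}$ in the same context.

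It then remains to show that the well-staged context annotation ${\Gamma}^{\varepsilon}_{[\varepsilon]}$ is untouched and that $\bmttnt{M_{{\mathrm{2}}}}$ indeed lies in $\mathbf{Term}^{\varepsilon}$.

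The context is literally unchanged, so the first part is immediate. For the second part I would prove an auxiliary stratification lemma: stage-indexed term classes are closed under the substitutions used in the three staged rules, and plugging into an evaluation context respects stage indices.

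\begin{itemize}
\item For the $\lambda$ and classifier rules, the redex sits at stage $\varepsilon$. The substituted term $\bmttnt{M_{{\mathrm{2}}}}^{\varepsilon}$ replaces variables that occur at arbitrary stages $\bmttnt{T}$ inside $\bmttnt{M_{{\mathrm{1}}}}^{\varepsilon}$. The grammar allows variables at every stage, and a term in $\mathbf{Term}^{\varepsilon}$ can be placed at stage $\bmttnt{T}$ only if $\mathbf{Term}^{\varepsilon}\subseteq\mathbf{Term}^{\bmttnt{T}}$. That inclusion should not hold in general, because of $\mathbf{unq}$ witnesses. The delicate point is therefore that $\mathbf{Term}^{\bmttnt{T}}$ must be read up to the cross-stage reading of variables, exactly as the grammar's base case $x$ permits.
\item For the unq/quo rule, I would use a lemma stating that $\bmttnt{M}^{\bmttnt{T}'}[\tau\coloneqq\bmttnt{T}_2]$ lies in $\mathbf{Term}^{\bmttnt{T}_2 + \bmttnt{T}'\setminus\tau}$. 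This follows by induction on $\bmttnt{M}$, replacing $\tau$ by $\bmttnt{T}_2$ in every witness. The result is a term at stage $\bmttnt{T}_2$, which matches the hole's stage.
\end{itemize}

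The plugging lemma states that if $\mathcal{E}^{\bmttnt{T}_1@\gamma_1}_{[\bmttnt{T}_2@\gamma_2]}[N]$ with $N\in\mathbf{Term}^{\bmttnt{T}_2}$, then the result is in $\mathbf{Term}^{\bmttnt{T}_1}$. It is proved by induction on $\mathcal{E}$, directly mirroring the grammar clauses.

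I expect the main obstacle to be the staging bookkeeping in the unq/quo case. Specifically, it requires showing that witness substitution $[\tau\coloneqq\bmttnt{T}_2]$ commutes correctly with the concatenation side conditions $\bmttnt{T}_2+\bmttnt{T}_1=\bmttnt{T}$ of nested $\mathbf{unq}$. I would first set up the induction on the stratified grammar, generalizing over the prefix of stages, so that each nested $\mathbf{unq}$ case reduces to associativity of $+$.
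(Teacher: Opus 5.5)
Your core argument is exactly the paper's proof, and it already finishes the job: the lemma that staged reduction is a restriction of $\beta$-reduction turns the staged step into a $\Rightarrow_{\beta}^{\gamma_1}$ step, and Subject Reduction then applies in the unchanged context of position $\gamma_1$.

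The stratification part is unnecessary. The statement writes the reduct as $M_2^{\varepsilon}$, so its membership in $\mathbf{Term}^{\varepsilon}$ is a hypothesis, not something you need to prove.

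If you want that closure property anyway, two corrections apply. First, the inclusion you doubt actually holds: $\mathbf{Term}^{S}\subseteq\mathbf{Term}^{T+S}$ follows by a direct induction, since prefixing $T$ preserves each $\mathbf{unq}$ side condition $T_2+T_1=S$. Hence $\mathbf{Term}^{\varepsilon}\subseteq\mathbf{Term}^{T}$, and the $\lambda$ case is routine. Second, your quo/unq lemma should conclude membership in $\mathbf{Term}^{T'[\tau\coloneqq T_2]}$, not $\mathbf{Term}^{T_2+T'\setminus\tau}$. With your version the induction breaks at a subterm $\mathbf{unq}_{\tau}\{N\}$ with $N\in\mathbf{Term}^{\varepsilon}$. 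There you need $N[\tau\coloneqq T_2]$ to remain in $\mathbf{Term}^{\varepsilon}$, which your formulation does not give.
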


For progress, we require the context to contain no variables at
the top-level stage, as expressed by
$ \mathbf{Dom}_{\token{V} }^{  \varepsilon  }(  { \Gamma }^{  \varepsilon  }_{[   \varepsilon   ]}  )  =  \emptyset $.  The operation
$ \mathbf{Dom}_{\token{V} }^{  \varepsilon  }(  \mathord{-}  ) $ is also defined in the appendix.
\begin{theorem*}[Progress]\label{claim:staged-reduction-progress}
  If\/ $ { \Gamma }^{  \varepsilon  }_{[   \varepsilon   ]}   \vdash   \bmttnt{M_{{\mathrm{1}}}} ^{  \varepsilon  }   \hasType  \bmttnt{A}$ and\/ $ \mathbf{Dom}_{\token{V} }^{  \varepsilon  }(  { \Gamma }^{  \varepsilon  }_{[   \varepsilon   ]}  )  =  \emptyset $,
  then there exists $ \bmttnt{M_{{\mathrm{2}}}} ^{  \varepsilon  } $ such that $  \bmttnt{M_{{\mathrm{1}}}} ^{  \varepsilon  }   \Rightarrow_{st}^{  \mathord{\boldsymbol{!} }  }   \bmttnt{M_{{\mathrm{2}}}} ^{  \varepsilon  }  $,
  or $  \bmttnt{M_{{\mathrm{1}}}} ^{  \varepsilon  }  \in  \mathbf{Val}  $.
\end{theorem*}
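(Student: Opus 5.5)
We prove Progress by induction on the typing derivation of $ \bmttnt{M_{{\mathrm{1}}}} ^{  \varepsilon  } $ or, equivalently, on its structure. It is convenient to strengthen the statement, because the body of a quotation lives at a non-empty stage. The generalised claim is as follows: for every stage $\bmttnt{T}$, if $ \bmttnt{M} ^{ \bmttnt{T} } $ is well typed under a well-staged context starting at $ \varepsilon $ and ending at $\bmttnt{T}$, and the context has no top-level variables, then one of two things holds. Either $\bmttnt{M}$ contains a top-level redex, i.e.\ $\bmttnt{M} = \mathcal{E}^{\bmttnt{T}@\gamma}_{[\bmttnt{T}'@\gamma']}[R]$ for one of the three staged redex shapes, so that $\bmttnt{M}$ steps by $\Rightarrow_{st}$. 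Or $\bmttnt{M}$ is \emph{stage-$\bmttnt{T}$ settled}, meaning it contains no subterm at stage $ \varepsilon $; for $\bmttnt{T} =  \varepsilon $ this will mean being a value. The statement of Progress is then the instance $\bmttnt{T} =  \varepsilon $.

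\textbf{Stage $ \varepsilon $.} We case on the last rule.
\begin{itemize}
\item \ref{rule:type-var} is impossible, since $ \mathbf{Dom}_{\token{V} }^{  \varepsilon  }(\Gamma) = \emptyset$.
\item Lambda abstractions and classifier abstractions are values directly, because the evaluation contexts do not descend under them at stage $ \varepsilon $.
\item For an application $ \bmttnt{M_{{\mathrm{1}}}}   \bmttnt{M_{{\mathrm{2}}}} $, we apply the induction hypothesis to $\bmttnt{M_{{\mathrm{1}}}}$ and then to $\bmttnt{M_{{\mathrm{2}}}}$, using the evaluation contexts $\mathcal{E}\,\bmttnt{M}$ and $\bmttnt{M}\,\mathcal{E}$. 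If both are values, a canonical-forms lemma forces $\bmttnt{M_{{\mathrm{1}}}}$ to be a $\lambda$-abstraction, and the first staged rule fires.
\item Classifier application is handled in the same way.
\item For $ \mathbf{quo} (\tau)\lbrace \bmttnt{M} \rbrace $ we use the generalised hypothesis at stage $\tau$: either the body steps inside the quotation context, or the quotation is a value.
\item For $ \mathbf{unq} _{ \bmttnt{T_{{\mathrm{1}}}} }\lbrace \bmttnt{M} \rbrace $ at stage $ \varepsilon $, the witness must be $ \varepsilon $, so the body is at stage $ \varepsilon $. By the induction hypothesis the body either steps or is a value of type $ \Box^{\mathord{\succeq} \gamma_{{\mathrm{3}}} } \bmttnt{A} $. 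Canonical forms then makes it a quotation, and the second staged rule fires with $\bmttnt{T_{{\mathrm{2}}}} =  \varepsilon $.
\end{itemize}
We establish the canonical-forms lemma beforehand as a separate small lemma, by inspection of the typing rules, in the style of Canonicity.

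\textbf{Stage $\bmttnt{T} \neq  \varepsilon $.} Here every constructor is traversed congruently by the evaluation contexts; this is exactly why $\lambda$ and classifier $\lambda$ are allowed as contexts when $\bmttnt{T}_1 \neq  \varepsilon $. The only interesting case is $ \mathbf{unq} _{ \bmttnt{T_{{\mathrm{1}}}} }\lbrace \bmttnt{M} \rbrace $ with $\bmttnt{T_{{\mathrm{2}}}} + \bmttnt{T_{{\mathrm{1}}}} = \bmttnt{T}$. If $\bmttnt{T_{{\mathrm{1}}}} = \bmttnt{T}$, the body lies at stage $ \varepsilon $, and we proceed as in the top-level $\mathbf{unq}$ case, obtaining the redex $ \mathbf{unq} _{ \bmttnt{T} }\lbrace \mathbf{quo}\ldots \rbrace $ via the second staged rule. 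Otherwise the body is at a shorter non-empty stage, and we recurse. A term containing no such escaping splice is settled.

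\textbf{Main obstacle.} The main obstacle is making the stage bookkeeping coherent. Two facts need care:
\begin{itemize}
\item The typing witnesses $\bmttnt{T}$ on $\mathbf{unq}$ must agree with the syntactic stage indices. In particular, a body typed under $\Gamma, \mathord{\blacktriangleleft}_{\bmttnt{T_{{\mathrm{1}}}}}$ must again be typed under a well-staged context ending at the appropriate prefix stage, and that context must still have no top-level variables.
\item The hole classifier in the second staged rule must be matched: the redex requires classifier $ \mathord{\boldsymbol{!} } $ on $\mathbf{unq}$.
\end{itemize}
For the second point we would show that, under a well-staged context whose stage-$ \varepsilon $ part contains no variables, the only classifier reachable by $\mathord{\blacktriangleleft}$ to stage $ \varepsilon $ is $ \mathord{\boldsymbol{!} } $, or one that is $\preceq$-related to it, which can be normalised. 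We would prove both points as auxiliary lemmas by induction on the appendix definition of well-staged contexts before running the main induction.
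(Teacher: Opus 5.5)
Your route is the paper's. The paper proves a Decomposition lemma by simultaneous structural induction: its stage-$\varepsilon$ part gives ``staged redex in an evaluation context, or value'', its non-empty-stage part gives ``redex, or $M^{\tau_1,T_1} = M_3^{T_1}$'' (your ``settled''), and Progress is read off from it. Your planned lemma that the position at stage $\varepsilon$ is $\boldsymbol{!}$ is the paper's lemma on positions of top-level contexts.

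The gap is in the Var case, which the paper singles out as the only delicate point. You dismiss it because $\mathbf{Dom}^{\varepsilon}_{\mathbf{V}}(\Gamma)=\emptyset$, but that only says no variable is declared \emph{at} stage $\varepsilon$. A well-staged $\varepsilon$-to-$\varepsilon$ context can still contain variables declared at later stages. The Var rule never consults stages; it only asks for $\Gamma \vdash \gamma_1 \preceq \mathrm{pos}(\Gamma)$. For instance, $\tau{:}\blacktriangleright^{\gamma_1 \succeq \boldsymbol{!}},\, x :^{\gamma_2} A,\, \blacktriangleleft^{\boldsymbol{!}}_{\tau}$ is well formed, has empty $\mathbf{Dom}^{\varepsilon}_{\mathbf{V}}$, and contains $x$. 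It is exactly the context of $M$ in $\mathbf{quo}(\tau)\{^{\gamma_1 \succeq \boldsymbol{!}}\, \lambda x{:}^{\gamma_2}A.\, \mathbf{unq}_{\tau}\{^{\boldsymbol{!}} M\}\}$. That is precisely your $T\neq\varepsilon$, $T_1 = T$ subcase, where you ``proceed as in the top-level case''.

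What actually excludes Var is two facts together:
\begin{enumerate}
\item $\mathrm{pos}(\Gamma)=\boldsymbol{!}$;
\item the inversion fact that $\Gamma\vdash\gamma\preceq\boldsymbol{!}$ forces $\gamma=\boldsymbol{!}$, and likewise $\Gamma\vdash T:\gamma\sqsubseteq\boldsymbol{!}$ forces $\gamma=\boldsymbol{!}$.
\end{enumerate}
The second holds because $\preceq$-Hyp, $\preceq$-Cls, $\preceq$-$\blacktriangleright$ and $\sqsubseteq$-$\blacktriangleright$ all put a freshly bound classifier on the right. So only reflexivity, lifting and transitivity can conclude with $\boldsymbol{!}$ there. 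Since a variable's classifier is a fresh binder, $\gamma_2\neq\boldsymbol{!}$, and Var cannot apply.

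You need the inversion fact anyway to prove your auxiliary lemma in its $\blacktriangleleft$ cases. It also settles your second obstacle outright: the classifier is exactly $\boldsymbol{!}$. Drop the ``$\preceq$-related to it, which can be normalised'' fallback. It could not work, because the staged rules match the literal $\boldsymbol{!}$ on $\mathbf{unq}$ and in the hole $\varepsilon@\boldsymbol{!}$, and no reduction rewrites classifiers.
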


Finally, we establish the safety of offline code generation: a well-staged term of bounded modal type evaluates to a code value whose body is well typed.\footnote{Similar properties are referred to as \emph{binding-time correctness}~\cite{journals/jacm/Davies17} and \emph{type-safe residualization}~\cite{conf/flops/HanadaI14}.}

\begin{theorem*}[Safety of Offline Code Generation]\label{claim:safety-of-offline-code-generation}
  If\/ $ {  \tau_{{\mathrm{1}}} : \mathord{\blacktriangleright} ^{\binder{ \gamma_{{\mathrm{1}}} } \within  \mathord{\boldsymbol{!} }  }   \bmttsym{,}  \Gamma }^{  \varepsilon   \position{ \gamma_{{\mathrm{2}}} } }_{[   \varepsilon   ]}   \bmttsym{,}   \mathord{\blacktriangleleft} _{ \tau_{{\mathrm{1}}} }^{  \mathord{\boldsymbol{!} }  }   \vdash   \bmttnt{M_{{\mathrm{1}}}} ^{  \varepsilon  }   \hasType   \Box^{\mathord{\succeq}  \gamma_{{\mathrm{2}}} }  \bmttnt{A} $,
  then there exists $v =  \mathbf{quo} (\binder{ \tau_{{\mathrm{2}}} })\lbrace^{\binder{ \gamma_{{\mathrm{3}}} } \within \gamma_{{\mathrm{2}}} }   \bmttnt{M_{{\mathrm{2}}}} ^{  \varepsilon  }  \rbrace $ such that
  $ \bmttnt{M_{{\mathrm{1}}}}  \Rightarrow_{st*}^{  \mathord{\boldsymbol{!} }  }  v $ and\/
  $  { \Gamma }^{  \varepsilon   \position{ \gamma_{{\mathrm{2}}} } }_{[   \varepsilon   ]}  [ \binder{ \gamma_{{\mathrm{1}}} } \coloneqq  \mathord{\boldsymbol{!} }   ]   \vdash     \bmttnt{M_{{\mathrm{2}}}} ^{  \varepsilon  }  [ \binder{ \gamma_{{\mathrm{3}}} } \coloneqq \gamma_{{\mathrm{2}}}  ]  [ \binder{ \gamma_{{\mathrm{1}}} } \coloneqq  \mathord{\boldsymbol{!} }   ]   \hasType   \bmttnt{A} [ \binder{ \gamma_{{\mathrm{1}}} } \coloneqq  \mathord{\boldsymbol{!} }   ] $.
\end{theorem*}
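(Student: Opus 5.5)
The plan combines three earlier results: Strong Normalization, Preservation, and Progress. A typing-inversion argument on values then extracts the typing of the quotation body, and the Rebasing lemma (with classifier substitution) finishes.

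\emph{Step 1: evaluation terminates in a value.} Start from the typing of $\bmttnt{M_{{\mathrm{1}}}}^{\varepsilon}$ in the context $\Delta = \tau_1 : \mathord{\blacktriangleright}^{\gamma_1 \within \mathord{\boldsymbol{!}}}, \Gamma, \mathord{\blacktriangleleft}_{\tau_1}^{\mathord{\boldsymbol{!}}}$, whose position is $\mathord{\boldsymbol{!}}$. The trailing $\mathord{\blacktriangleleft}_{\tau_1}^{\mathord{\boldsymbol{!}}}$ closes the stage opened by $\tau_1$, so $\Delta$ is a well-staged context ${\Delta}^{\varepsilon}_{[\varepsilon]}$. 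Every variable of $\Gamma$ lives at stage $\tau_1$, so $\mathbf{Dom}_{\token{V}}^{\varepsilon}(\Delta) = \emptyset$. This is the point where the appendix definitions of well-staged contexts and $\mathbf{Dom}_{\token{V}}^{\varepsilon}$ are needed; I would check it by induction on $\Gamma$.

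By \Zcref{claim:staged-reduction-is-beta-reduction}, any staged reduction sequence from $\bmttnt{M_{{\mathrm{1}}}}$ is a $\beta$-sequence at $\mathord{\boldsymbol{!}}$. Hence it is finite by Strong Normalization. Iterating Progress and Preservation along a maximal sequence gives $\bmttnt{M_{{\mathrm{1}}}} \Rightarrow_{st*}^{\mathord{\boldsymbol{!}}} v$ with $v \in \mathbf{Val}$ and $\Delta \vdash v : \Box^{\succeq \gamma_2} \bmttnt{A}$.

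\emph{Step 2: canonical forms.} Of the three value forms, only a quotation can inhabit $\Box^{\succeq\gamma_2}\bmttnt{A}$. So $v = \mathbf{quo}(\tau_2)\lbrace^{\gamma_3 \within \gamma_2} \bmttnt{M_{{\mathrm{2}}}}^{\varepsilon}\rbrace$. Inversion of the $\Box$-I rule gives
\[ \Delta, \tau_2 : \mathord{\blacktriangleright}^{\gamma_3 \within \gamma_2} \vdash \bmttnt{M_{{\mathrm{2}}}} : \bmttnt{A} \quad\text{with } \gamma_3 \notin \token{FC}(\bmttnt{A}). \]

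\emph{Step 3: rebasing.} Apply the Rebasing lemma with $\Gamma_1 = \tau_1 : \mathord{\blacktriangleright}^{\gamma_1 \within \mathord{\boldsymbol{!}}}, \Gamma$, whose position is $\gamma_2$, and with $T = \tau_1$. Its side condition is $\Gamma_1 \vdash \gamma_2 \preceq \gamma_2$, which holds by reflexivity. The conclusion is
\[ \tau_1 : \mathord{\blacktriangleright}^{\gamma_1 \within \mathord{\boldsymbol{!}}}, \Gamma \vdash \bmttnt{M_{{\mathrm{2}}}}[\gamma_3 \coloneqq \gamma_2, \tau_2 \coloneqq \tau_1] : \bmttnt{A}. \]
Since $\bmttnt{M_{{\mathrm{2}}}}$ is indexed by $\varepsilon$ relative to its own stage, it contains no $\mathbf{unq}$ escaping past $\tau_2$. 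So $\tau_2$ does not occur free and the witness substitution is vacuous.

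\emph{Step 4: removing the outer box.} It remains to eliminate $\tau_1 : \mathord{\blacktriangleright}^{\gamma_1 \within \mathord{\boldsymbol{!}}}$ by substituting $\gamma_1 \coloneqq \mathord{\boldsymbol{!}}$, which yields exactly the stated judgment. I expect this step to be the main obstacle. The natural tool is again Rebasing, applied to an empty prefix $\varepsilon, \mathord{\blacktriangleleft}_{\varepsilon}^{\mathord{\boldsymbol{!}}}$ (well-formed by $\sqsubseteq$-Lift of reflexivity) with $T=\varepsilon$. Its side condition $\vdash \mathord{\boldsymbol{!}} \preceq \mathord{\boldsymbol{!}}$ holds, but this requires first weakening the judgment to insert $\mathord{\blacktriangleleft}_{\varepsilon}^{\mathord{\boldsymbol{!}}}$ at the front, which would need a separate weakening lemma.

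If that route fails, I would instead prove a dedicated strengthening lemma by induction on typing derivations. Its claim: a derivation under $\tau_1 : \mathord{\blacktriangleright}^{\gamma_1\within\mathord{\boldsymbol{!}}}, \Gamma$ whose subject is $\varepsilon$-staged (so never mentions $\tau_1$ in a witness) remains valid after $\gamma_1 \coloneqq \mathord{\boldsymbol{!}}$ and deleting the opener. The key cases are the $\sqsubseteq$-judgments: any witness that used $\tau_1$ would have been a $\mathord{\blacktriangleleft}$ escaping to stage $\varepsilon$, which the staging discipline excludes. The remaining $\preceq$-facts about $\gamma_1$ become trivial, since $\mathord{\boldsymbol{!}}$ is least.
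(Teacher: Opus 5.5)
Your proposal is correct and follows the paper's proof essentially step for step: strong normalization via \Zcref{claim:staged-reduction-is-beta-reduction}, then Preservation and Progress to reach a value, inversion to a quotation, and Rebasing with the reflexive side condition $\gamma_2 \preceq \gamma_2$ and a vacuous $\tau_2$-substitution. For Step 4 the paper uses exactly your fallback, a dedicated lemma proved by induction on the derivation (\Zcref{claim:safety-offline-code-generation-aux}); your primary weakening-plus-Rebasing route with $T = \varepsilon$ would also work, because the extra $\tau_1 \coloneqq \varepsilon$ substitution it introduces is vacuous on both $\Gamma$ and the body, by the same staging argument.
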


\section{Relation to S4- and LTL-based Calculi}\label{sec:s4andltl}
To show that \bml retains the established modal accounts of staging while providing a foundation for additional staging capabilities, we relate it to the standard S4- and LTL-based calculi. For space reasons, we state only the main results here and defer the detailed constructions and proofs to the appendix.

The following theorem shows that \bml recovers S4 modal logic when the modality is restricted to $\Box^{ \exclam }$.
\begin{theorem}\label{claim:semantic-isomorphism}
The $\Box\to$-fragment of $\CS4$~\cite{alechina+2001categorical} is isomorphic, modulo logical equivalence, to the $\Box^{ \exclam }\to$-fragment of \bml.
\end{theorem}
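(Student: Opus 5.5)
We give translations in both directions and show that each round trip yields a logically equivalent formula. Define $(\cdot)^\natural$ from $\CS4$'s $\Box\to$-fragment to \bml by $p^\natural = p$, $(A\to B)^\natural = A^\natural \to B^\natural$, and $(\Box A)^\natural = \Box^{\mathord{\succeq}\exclam} A^\natural$. Define $(\cdot)^\flat$ in the converse direction by erasing the bound $\exclam$. The two translations are mutually inverse on the nose, so the content of the theorem is that they preserve derivability: $\vdash_{\CS4} A$ iff $\varepsilon \vdash A^\natural$. Logical equivalence, meaning interderivability of $A\to B$ and $B\to A$, is then respected in both directions, which gives the isomorphism of the Lindenbaum-style fragments.

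\emph{Semantic route.} We work through the Kripke characterizations: the birelational semantics of $\CS4$~\cite{alechina+2001categorical} and the Kripke Soundness and Kripke Completeness theorems for \bml. The key observation is that when the bound is $\exclam$, the condition $\rho(\exclam) \preceq_w e$ in the clause for $\Box^{\mathord{\succeq}\exclam}$ always holds, because $\exclam$ is $\preceq$-least. The clause therefore reduces to ``for all $e \sqsupseteq_w d$, $e \Vdash A$.'' The fragment also contains no classifier quantifiers, so $\rho$ is irrelevant beyond $\exclam$.

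\emph{From a $\CS4$ countermodel to a \bml-model.} Suppose we have a $\CS4$ model $(X,\le,R,V)$. We build a \bml-model with a single world $w$ and take $D_w = X \cup \{\exclam\}$, adding a fresh bottom element. We set $\preceq = \le \cup (\{\exclam\}\times D)$ and let $\sqsubseteq$ be the closure of $R \cup \preceq$ under composition. The stability conditions of $\CS4$ frames ($\le;R \subseteq R;\le$ and the related confluence) together with the requirement $\preceq\subseteq\sqsubseteq$ need checking here. The worry is that adding $\le$ into $\sqsubseteq$ could change $\Box$, but it does not: $\CS4$ already validates $\Box$ as $\le$-persistent and reflexive-transitive modal truth. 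We then verify, by induction on formulas, that $x\Vdash_{\CS4} A$ iff $w,x\Vdash A^\natural$; the single-world case makes the outer $\forall v\succcurlyeq w$ trivial.

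\emph{From a \bml-countermodel to a $\CS4$ model.} Conversely, given a \bml-model, we form a $\CS4$ model whose points are pairs $(w,d)$ with $d\in D_w$. The intuitionistic order is $(w,d)\le(v,e)$ iff $w\preccurlyeq v$ and $d\preceq_v e$, and the modal relation is $(w,d)R(w,e)$ iff $d\sqsubseteq_w e$. Here the \bml-stability conditions give the required $\CS4$ frame conditions, and Semantic Persistency gives that the valuation is upward closed. Induction again matches the two satisfaction relations.

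These two model transformations give $\Vdash_{\CS4} A \iff \varepsilon\Vdash A^\natural$. Combined with $\CS4$ soundness and completeness and the \bml Kripke Soundness and Kripke Completeness theorems, this yields the derivability correspondence.

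\emph{Main obstacle.} The hard step is matching the frame conditions, since $\CS4$ uses $\le;R\subseteq R;\le$ whereas \bml uses $\preceq\subseteq\sqsubseteq$. The $\Box$ clause must be shown invariant when $R$ is replaced by $R\cup\le$ closed transitively, in the direction from $\CS4$ to \bml. If this fails, the fallback is a syntactic argument: translate $\CS4$ Fitch derivations into \bml derivations using $\blacktriangleright^{\gamma\succeq\exclam}$ and $\blacktriangleleft^{\gamma}$, discharging the side conditions $\exclam\preceq\gamma$ with \ref{rule:itrans-cls}-style derivations and the well-formedness conditions with \ref{rule:mtrans-lift}. The converse direction would then use normalization and the Subformula Property to restrict to derivations that mention only $\exclam$-bounds.
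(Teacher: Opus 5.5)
Your proposal is correct and follows the paper's proof: a one-world \bml-model built from a $\CS4$-model by adding a fresh root and enlarging the modal relation to contain $\le$, and flattening on pairs $(w,d)$ for the converse, each followed by induction on formulas. Your reflexive--transitive closure of $R\cup{\le}$ coincides with the paper's ${\le};R$ by left-persistency $R;{\le}\subseteq{\le};R$, so the $\Box$ clause is unchanged; this settles your ``main obstacle'' without the syntactic fallback, and your only addition is turning the semantic correspondence into a derivability one via soundness and completeness on both sides, whereas the paper states the isomorphism in terms of semantic equivalence.
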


At the computational level, the following theorem shows that the standard S4- and LTL-based staging calculi admit type-preserving syntactic embeddings into the term-assignment system of \bml.
Thus, \bml retains their respective staging cores within a single calculus.
\begin{theorem}\label{claim:embed-s4-and-ltl}
The Kripke-style S4 modal lambda calculus~\cite{journals/jacm/DaviesP01} and \lamcirc~\cite{journals/jacm/Davies17} each admit a type-preserving syntactic embedding into the term-assignment system of\/ \bml.
\end{theorem}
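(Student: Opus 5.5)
We construct two translations, one for each source calculus, by induction on typing derivations, and check that each typing rule is simulated by derivable rules of \bml.

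\emph{Kripke-style S4 calculus.} Its contexts are stacks $\Gamma_0;\Gamma_1;\dots;\Gamma_n$ separated by $\mathord{\blacktriangleright}$. The type translation is $\Box A \mapsto \Box^{\succeq \exclam} A^\ast$, with $\to$ homomorphic.
\begin{itemize}
\item Each $\mathord{\blacktriangleright}$ becomes $\tau_i : \mathord{\blacktriangleright}^{\gamma_i \within \exclam}$ with fresh $\tau_i,\gamma_i$.
\item Each hypothesis $x{:}A$ becomes $x \has@{\delta} A^\ast$ with fresh $\delta$.
\item $\mathbf{box}\,M$ becomes $\mathbf{quo}(\tau)\{^{\gamma \within \exclam} M^\ast\}$. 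The side condition $\gamma\notin\token{FC}(A^\ast)$ holds because translated types mention only $\exclam$.
\item $\mathbf{unbox}_k\,M$, typed in the context with the last $k$ zones removed, becomes $\mathbf{unq}_T\{^{\gamma'} M^\ast\}$. Here $\gamma'$ is the position of the truncated prefix and $T$ concatenates the $k$ atomic witnesses $\tau$ from the removed $\mathord{\blacktriangleright}$s.
\end{itemize}

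The key point is that $\mathord{\blacktriangleleft}^{\gamma'}_T$ is well formed. This follows because $T$ witnesses $\gamma' \sqsubseteq \mathrm{pos}$ by chaining \ref{rule:calc-mtrans-open} steps, with the intermediate $\preceq$-steps from hypotheses absorbed via \ref{rule:calc-mtrans-lift} and right-stability. Each step uses an empty witness, so the witness stays $T$ up to $\varepsilon$-units.

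After the $\mathord{\blacktriangleleft}$, the context $\Gamma_{\le n-k}, \dots, \mathord{\blacktriangleleft}^{\gamma'}_T$ has position $\gamma'$. We need $\Gamma_{\le n-k}\vdash M^\ast : \Box^{\succeq\exclam}A^\ast$ to transfer to this larger context. For that we use weakening of the \bml system: inserting items after the prefix and then returning with $\mathord{\blacktriangleleft}$ to the same position. This is a routine induction; it can also be stated as a weakening lemma proved alongside \ref{claim:varsubst}. The bound condition $\exclam \preceq \gamma'$ holds because $\exclam$ is least, which follows by induction on the context using \ref{rule:itrans-hyp}, \ref{rule:itrans-open} and \ref{rule:itrans-cls}.

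For variables, the \ref{rule:type-var} premise $\delta \preceq \mathrm{pos}$ holds when $x$ lies in the current zone, since only \ref{rule:itrans-hyp} steps intervene.

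\emph{$\lamcirc$.} Types $\bigcirc A$ must express ``next stage, open in that stage's scope.'' Here the classifier structure does real work, and this is the main obstacle. We translate the judgment $\Gamma^0;\dots;\Gamma^n \vdash^n M : A$ by building a \bml context in which each stage $i$ has its own chain of hypotheses. A quotation $\mathtt{next}\,M$ at stage $i$ becomes $\mathbf{quo}(\tau)\{^{\gamma\within\delta_{i+1}} M^\ast\}$, where $\delta_{i+1}$ is the innermost classifier of stage $i+1$. This is exactly the LTL picture in \Zcref{fig:syntacticstructure:lamcirc}.

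Types then carry classifiers. The first thing to try is translating $\bigcirc A$ at stage $i$ as $\Box^{\succeq \delta} A^\ast$, where $\delta$ is the current innermost scope of stage $i+1$. Because bounds change as the next stage's context grows, the translation is indexed by the context. We then need a lemma that, when the next-stage scope extends from $\delta$ to $\delta'$ with $\delta \preceq \delta'$, values of $\Box^{\succeq\delta}A$ can be used where $\Box^{\succeq\delta'}A$ is expected.

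If this monotonicity fails syntactically, the fallback is to $\eta$-expand via the local completeness pattern in \ref{claim:localcompleteness}. That expansion re-quotes under the new bound using $\mathbf{unq}_\tau$, and it is valid since $\delta\preceq\delta'$ lets the inner quotation satisfy the $\Box$-E bound premise.

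$\mathtt{prev}\,M$ at stage $i+1$ becomes $\mathbf{unq}_\tau\{^{\gamma_i} M^\ast\}$, where $\gamma_i$ is the position of the stage-$i$ prefix. It is checked as in the S4 case, now with the bound premise $\delta_{i+1} \preceq \mathrm{pos}$ following from the chain of stage-$(i+1)$ hypotheses.

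Finally, for both translations, type preservation is the statement that translated derivations are \bml derivations, proved by induction on the source derivation using the lemmas above.
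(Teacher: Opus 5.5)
Your S4 half works and differs from the paper only in bookkeeping. You translate stacks deterministically and then need a weakening lemma to move $M^\ast$ into $\Gamma_{\mathrm{full}},\mathord{\blacktriangleleft}^{\gamma'}_T$. The paper instead makes the context translation a relation with a rule that appends $\mathord{\blacktriangleleft}^{\delta_0}_{\tau_1,\dots,\tau_k}$, so the induction hypothesis applies to $\Psi$ directly. (Also, the $\Box$-E bound premise is $\exclam\preceq\mathrm{pos}(\Gamma_{\mathrm{full}})$, not $\exclam\preceq\gamma'$, though it holds for the same reason.)

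The genuine gap is in the \lamcirc half. You fix the bound of $\bigcirc A$ to the next-stage scope current at the binding site and repair mismatches by $\eta$-expansion. That only gives the covariant coercion $\Box^{\succeq\delta}A\to\Box^{\succeq\delta'}A$ for $\delta\preceq\delta'$, but hypotheses with $\bigcirc$ in negative position need the reverse. Take
\[
  \vdash_0 \lambda f^{\bigcirc C\to\bigcirc D}.\,\mathbf{next}\{\lambda y^{C}.\,\mathbf{prev}\{f\,(\mathbf{next}\{y\})\}\} : (\bigcirc C\to\bigcirc D)\to\bigcirc(C\to D).
\]
Your scheme declares $f$ at $\Box^{\succeq\delta}C^\ast\to\Box^{\succeq\delta}D^\ast$ for some $\delta$ that exists when $f$ is bound. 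But $\mathbf{next}\{y\}$ mentions $y$, whose classifier $\delta_y$ (with $\delta\preceq\delta_y$) is created later. Its translation can therefore only have type $\Box^{\succeq\delta_y}C^\ast$, since any smaller bound makes $y$ invisible in the quoted scope. Passing it to $f$ requires $\Box^{\succeq\delta_y}C^\ast\to\Box^{\succeq\delta}C^\ast$. This is not semantically valid, because $\Box^{\succeq\delta}$ quantifies over more worlds. By Kripke soundness, then, no coercion of any kind exists, and the variable and application cases of your induction cannot be closed.

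The missing idea is classifier polymorphism, which the paper introduces precisely for this purpose.
\begin{itemize}
\item A level-$k$ hypothesis gets the generalized type $\Parens{A}_{@k\le l}^{\gamma_{k+1},\dots,\gamma_l}=\forall\delta_{k+1}\within\gamma_{k+1}.\cdots\forall\delta_l\within\gamma_l.\,\llparenthesis A\rrparenthesis_{@k\le l}^{\delta_{k+1},\dots,\delta_l}$.
\item Function types are translated as $\llparenthesis A\to B\rrparenthesis=\Parens{A}\to\llparenthesis B\rrparenthesis$, so argument positions are generalized as well.
\item A variable occurrence becomes $x\,\gamma_{k+1}\cdots\gamma_l$, instantiated at the current future-stage positions. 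This is legitimate because the positions recorded at binding time are $\preceq$ the current ones.
\item Each argument $N$ is wrapped as $\lambda\delta_{k+1}\within\gamma_{k+1}.\cdots\lambda\delta_l\within\gamma_l.\,\llparenthesis N\rrparenthesis$ and typed in a context extended by the matching classifier declarations.
\end{itemize}
In the example, $f\,\delta_y$ then has domain $\forall\delta'\within\delta_y.\,\Box^{\succeq\delta'}C^\ast$. It accepts $\lambda\delta'\within\delta_y.\,\mathbf{quo}(\tau')\{^{\gamma'\within\delta'}\,y\}$ because $\delta_y\preceq\delta'\preceq\gamma'$.

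Separately, an embedding should be injective. The paper checks this with forgetful maps that erase classifier and witness annotations and invert both translations. This is easy, but it is absent from your plan.
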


\section{Related Work}\label{sec:related}
\paragraph*{Modal-Logical Foundations for Multi-Stage Programming}
Since S4- and LTL-based calculi account for different fragments of
MetaML-style MSP, earlier work combined S4- and LTL-like modalities in a single
calculus~\cite{conf/esop/MoggiTBS99,conf/ppdp/YuseI06}. However, the two
modalities remain distinct and are not freely interchangeable. This makes the
distinction too coarse for code that is both scope-dependent and executable by
\texttt{run}.

\Textcite{conf/popl/TahaN03} introduced \lamalpha{} with
\emph{environment classifiers}, giving a finer-grained type-theoretic account
of MetaML-style MSP with CSP. Its modal-logical status was later investigated
by \Textcite{journals/corr/abs-1010-3806}, who proposed \lamtri{}, a modal
lambda calculus that generalizes \lamcirc{} by abstracting over stage
transitions. However, the logically founded core of \lamtri{} does not account
for MetaML-style CSP, leaving the modal foundation for CSP-dependent code
executed by \texttt{run} incomplete.

\bml is formulated as a generalization of Kripke/Fitch-style S4 modal lambda calculi~\cite{journals/jacm/DaviesP01,conf/fossacs/Clouston18, journals/pacmpl/ValliappanRC22}. Its use of classifiers is inspired by \emph{refined environment classifiers}~\cite{conf/aplas/KiselyovKS16}, which track scope dependencies of generated code. However, their formal calculus does not support MetaML-style CSP or classifier polymorphism. \bml goes beyond this system by internalizing scope dependencies
in a modal connective and integrating polymorphic classifiers, thereby accounting for CSP-dependent and scope-polymorphic code within a proof theory and Kripke semantics. The embeddings in \Zcref{sec:s4andltl} further show that \bml retains the standard S4 and LTL staging cores.

\emph{Contextual Modal Logic} ($\CML$)
~\cite{journals/tocl/NanevskiPP08,conf/esop/MuraseNI23} extends $\CS4$
with explicit contexts. The contextual modality $[\Gamma \vdash A]$
represents code valid under assumptions $\Gamma$, and thus gives another
account of open code. However, its binding discipline differs from
MetaML-style MSP: dependencies are represented by explicit contexts and
substitutions, rather than by direct binding from
surrounding lexical binders. Thus, $\CML$ does not directly account for
CSP-dependent programs such as \Zcref{fig:metaml-example-csp}; this reflects a
difference in design goals rather than a limitation.

\paragraph*{Hybridizing Modal and First-Order Languages}
\bml is not the first attempt to hybridize modal and first-order languages.
Hybrid logic extends modal logic with \emph{nominals}, which name individual
worlds, and satisfaction operators that make it possible to assert that a
formula holds at a named world~\cite{book/Brauner2011}. This gives modal
formulas a limited first-order flavor: worlds can be referred to explicitly
inside the object language while retaining the modal structure of accessibility
relations.

The classifiers of \bml play a role similar to nominals, since they name
elements of the underlying Kripke structure. However, the technical details and
design goals are rather different. Hybrid logic is primarily concerned with
reasoning about named worlds inside modal formulas, whereas \bml uses
classifiers to describe the scope-and-stage structure of programs. Moreover,
\bml integrates classifiers directly into its modal connective: $ \Box^{\mathord{\succeq}  \gamma }  \bmttnt{A} $
states that $\bmttnt{A}$ is available across a modal transition relative to the
scope named by $\gamma$. In hybrid logic, by contrast, the modal operators and
satisfaction statements remain separate pieces of syntax. Finally, nominals in
hybrid logic are themselves propositions, while classifiers in \bml are not
formulas; they are first-order objects used to index scopes and modalities.

Graded modal types also enrich modalities with indices, but their grades track
quantitative information such as resource usage~\cite{journals/pacmpl/OrchardLE19}.
Classifiers in \bml instead name scopes in the underlying Kripke structure, and
therefore serve as logical handles on lexical resources across stage
boundaries.

\section{Conclusion and Future Work}\label{sec:conclusion}
We proposed \bml, a constructive modal logic with scope-bounded modalities
and first-order-style quantification over scope names. We gave a
natural-deduction system and a sound and complete Kripke semantics, together
with a Curry--Howard calculus with standard metatheory and a safe staged
reduction semantics. \bml recovers the staging capabilities characterized by
the standard S4- and LTL-based accounts, while also accounting for explicit scope dependencies, as manifested by CSP. The
significance of this result is not merely that \bml accommodates a broader
class of staged programs. Rather, it demonstrates that the simple idea
underlying bounded modalities---making scope dependencies explicit in the
modality itself---provides an effective logical principle for unifying these
capabilities. These results establish \bml as a modal-logical foundation for
MetaML-style staging.

Beyond staging, modal lambda calculi have been used to study computational
effects~\cite{journals/iandc/Moggi91}, distributed
programming~\cite{conf/lics/VIICHP04}, functional reactive
programming~\cite{conf/icfp/Krishnaswami13}, and information-flow
control~\cite{conf/popl/AbadiBHR99}. Whether scope-bounded modalities can
offer similar benefits in these areas remains future work.

\bibliography{bibliography}

\clearpage
\appendix
\section{Full Definitions for Section~\ref{sec:logic:nd}}
\begin{definition}
  $ \token{FC}( \bmttnt{A} ) $ represents a set of free classifiers in $\bmttnt{A}$.
  \begin{align*}
     \token{FC}( \bmttnt{p} )  &=  \emptyset  \\
     \token{FC}( \bmttnt{A}  \mathbin{\rightarrow}  \bmttnt{B} )  &=   \token{FC}( \bmttnt{A} )  \cup  \token{FC}( \bmttnt{B} )   \\
     \token{FC}(  \Box^{\mathord{\succeq}  \gamma }  \bmttnt{A}  )  &=   \token{FC}( \bmttnt{A} )  \cup \bmttsym{\{}  \gamma  \bmttsym{\}} \\
     \token{FC}(  \forall  \gamma_{{\mathrm{1}}} \within  \gamma_{{\mathrm{2}}} . \bmttnt{A}  )  &=  \bmttsym{(}    \token{FC}( \bmttnt{A} )  - \bmttsym{\{}  \gamma_{{\mathrm{1}}}  \bmttsym{\}}   \bmttsym{)} \cup \bmttsym{\{}  \gamma_{{\mathrm{2}}}  \bmttsym{\}} 
  \end{align*}
\end{definition}

\begin{definition}
 $ \mathbf{Dom}_{\token{C} }( \Gamma ) $ represents a set of classifiers declared in $\Gamma$.
 \begin{align*}
    \mathbf{Dom}_{\token{C} }(  \varepsilon  )  &= \bmttsym{\{}   \exclam   \bmttsym{\}} \\
    \mathbf{Dom}_{\token{C} }( \Gamma  \bmttsym{,}   { \bmttnt{A} }^{\binder{ \gamma } }  )  &=   \mathbf{Dom}_{\token{C} }( \Gamma )  \cup \bmttsym{\{}  \gamma  \bmttsym{\}}  \\
    \mathbf{Dom}_{\token{C} }( \Gamma  \bmttsym{,}   \mathord{\blacktriangleright} ^{\binder{ \gamma_{{\mathrm{1}}} } \within \gamma_{{\mathrm{2}}} }  )  &=   \mathbf{Dom}_{\token{C} }( \Gamma )  \cup \bmttsym{\{}  \gamma_{{\mathrm{1}}}  \bmttsym{\}}  \\
    \mathbf{Dom}_{\token{C} }( \Gamma  \bmttsym{,}   \mathord{\blacktriangleleft} ^{ \gamma }  )  &=  \mathbf{Dom}_{\token{C} }( \Gamma )  \\
    \mathbf{Dom}_{\token{C} }( \Gamma  \bmttsym{,}   \binder{ \gamma_{{\mathrm{1}}} } \within \gamma_{{\mathrm{2}}}  )  &=   \mathbf{Dom}_{\token{C} }( \Gamma )  \cup \bmttsym{\{}  \gamma_{{\mathrm{1}}}  \bmttsym{\}} 
 \end{align*}
\end{definition}

\begin{definition}
 A well-formed context judgment $\vdash  \Gamma  \hasType \, \bmttkw{ctx}$ and well-formed type judgment $\Gamma  \vdash  \bmttnt{A}  \hasType \, \bmttkw{prop}$ are derived by rules listed in~\Zcref{fig:wellformedness-judgments}.
\end{definition}

\begin{figure}[bpt]
  \raggedright
  \fbox{\mbox{$\vdash  \Gamma  \hasType \, \bmttkw{ctx}$}}
  \vspace{0.5em}

  \centering
  \begin{prooftree}
    \caption{WF-Emp}\label{rule:wf-ctx-emp}
    \hypo{\rule{0pt}{1em}}
    \infer1{\vdash   \varepsilon   \hasType \, \bmttkw{ctx}}
  \end{prooftree}
  \quad
  \begin{prooftree}
    \caption{WF-Hyp}\label{rule:wf-ctx-hyp}
    \hypo{\vdash  \Gamma  \hasType \, \bmttkw{ctx}}
    \hypo{\Gamma  \vdash  \bmttnt{A}  \hasType \, \bmttkw{prop}}
    \hypo{\gamma \, \notin \,  \mathbf{Dom}_{\token{C} }( \Gamma ) }
    \infer3{\vdash  \Gamma  \bmttsym{,}   { \bmttnt{A} }^{\binder{ \gamma } }   \hasType \, \bmttkw{ctx}}
  \end{prooftree}
  \vspace{1em}

  \begin{prooftree}
    \caption{WF-$ \mathord{\blacktriangleright} $}\label{rule:wf-ctx-open}
    \hypo{\vdash  \Gamma  \hasType \, \bmttkw{ctx}}
    \hypo{\gamma_{{\mathrm{1}}} \, \notin \,  \mathbf{Dom}_{\token{C} }( \Gamma ) }
    \hypo{\gamma_{{\mathrm{2}}} \, \in \,  \mathbf{Dom}_{\token{C} }( \Gamma ) }
    \infer3{\vdash  \Gamma  \bmttsym{,}   \mathord{\blacktriangleright} ^{\binder{ \gamma_{{\mathrm{1}}} } \within \gamma_{{\mathrm{2}}} }   \hasType \, \bmttkw{ctx}}
  \end{prooftree}
  \quad
  \begin{prooftree}
    \caption{WF-$ \mathord{\blacktriangleleft} $}
    \hypo{\vdash   \Gamma ^{\position{ \gamma } }   \hasType \, \bmttkw{ctx}}
    \hypo{  \Gamma ^{\position{ \gamma } }  \vdash \delta \sqsubseteq \gamma }
    \infer2{\vdash   \Gamma ^{\position{ \gamma } }   \bmttsym{,}   \mathord{\blacktriangleleft} ^{ \delta }   \hasType \, \bmttkw{ctx}}
  \end{prooftree}
  \vspace{1em}

  \begin{prooftree}
    \caption{WF-cls}\label{rule:wf-ctx-cls}
    \hypo{\vdash  \Gamma  \hasType \, \bmttkw{ctx}}
    \hypo{\gamma_{{\mathrm{1}}} \, \notin \,  \mathbf{Dom}_{\token{C} }( \Gamma ) }
    \hypo{\gamma_{{\mathrm{2}}} \, \in \,  \mathbf{Dom}_{\token{C} }( \Gamma ) }
    \infer3{\vdash  \Gamma  \bmttsym{,}   \binder{ \gamma_{{\mathrm{1}}} } \within \gamma_{{\mathrm{2}}}   \hasType \, \bmttkw{ctx}}
  \end{prooftree}
  \vspace{1.5em}

  \raggedright
  \fbox{\mbox{$\Gamma  \vdash  \bmttnt{A}  \hasType \, \bmttkw{prop}$}}
  \vspace{0.5em}

  \centering
  \begin{prooftree}
    \caption{WF-atom}\label{rule:wf-prop-atom}
    \hypo{\vdash  \Gamma  \hasType \, \bmttkw{ctx}}
    \infer1{\Gamma  \vdash  \bmttnt{p}  \hasType \, \bmttkw{prop}}
  \end{prooftree}
  \quad
  \begin{prooftree}
    \caption{WF-$ \mathbin{\rightarrow} $}\label{rule:wf-prop-to}
    \hypo{\Gamma  \vdash  \bmttnt{A}  \hasType \, \bmttkw{prop}}
    \hypo{\Gamma  \vdash  \bmttnt{B}  \hasType \, \bmttkw{prop}}
    \infer2{\Gamma  \vdash  \bmttnt{A}  \mathbin{\rightarrow}  \bmttnt{B}  \hasType \, \bmttkw{prop}}
  \end{prooftree}
  \vspace{1em}

  \begin{prooftree}
    \caption{WF-$\Box$}\label{rule:wf-prop-bm}
    \hypo{\Gamma  \vdash  \bmttnt{A}  \hasType \, \bmttkw{prop}}
    \hypo{\gamma \, \in \,  \mathbf{Dom}_{\token{C} }( \Gamma ) }
    \infer2{\Gamma  \vdash   \Box^{\mathord{\succeq}  \gamma }  \bmttnt{A}   \hasType \, \bmttkw{prop}}
  \end{prooftree}
  \quad
  \begin{prooftree}
    \caption{WF-$\forall$}\label{rule:wf-prop-polycls}
    \hypo{\Gamma  \bmttsym{,}   \binder{ \gamma_{{\mathrm{1}}} } \within \gamma_{{\mathrm{2}}}   \vdash  \bmttnt{A}  \hasType \, \bmttkw{prop}}
    \infer1{\Gamma  \vdash   \forall  \gamma_{{\mathrm{1}}} \within  \gamma_{{\mathrm{2}}} . \bmttnt{A}   \hasType \, \bmttkw{prop}}
  \end{prooftree}

  \caption{Derivation Rules for Well-Formedness Judgments}
  \label{fig:wellformedness-judgments}
\end{figure}

\begin{definition}
 A classifier substitution $ \bmttsym{(}   \mathord{-}   \bmttsym{)} [ \binder{ \gamma_{{\mathrm{1}}} } \coloneqq \gamma_{{\mathrm{2}}}  ] $ is a meta operation on classifiers, contexts and types, which replaces free occurrences of $\gamma_{{\mathrm{1}}}$ with $\gamma_{{\mathrm{2}}}$.

 \begin{align*}
   \gamma_{{\mathrm{1}}} [ \binder{ \gamma_{{\mathrm{2}}} } \coloneqq \gamma_{{\mathrm{3}}}  ]  &= \begin{cases}
                      \gamma_{{\mathrm{3}}} & \text{\ if $\gamma_{{\mathrm{1}}}  \bmttsym{=}  \gamma_{{\mathrm{2}}}$} \\
                      \gamma_{{\mathrm{1}}} & \text{\ otherwise}
                     \end{cases} \\
  & \\
   \bmttnt{p} [ \binder{ \gamma_{{\mathrm{1}}} } \coloneqq \gamma_{{\mathrm{2}}}  ]  &= \bmttnt{p} \\
   \bmttsym{(}  \bmttnt{A}  \mathbin{\rightarrow}  \bmttnt{B}  \bmttsym{)} [ \binder{ \gamma_{{\mathrm{1}}} } \coloneqq \gamma_{{\mathrm{2}}}  ]  &=  \bmttnt{A} [ \binder{ \gamma_{{\mathrm{1}}} } \coloneqq \gamma_{{\mathrm{2}}}  ]   \mathbin{\rightarrow}    \bmttnt{B} [ \binder{ \gamma_{{\mathrm{1}}} } \coloneqq \gamma_{{\mathrm{2}}}  ]  \\
    \Box^{\mathord{\succeq}  \gamma_{{\mathrm{1}}} }  \bmttnt{A}  [ \binder{ \gamma_{{\mathrm{2}}} } \coloneqq \gamma_{{\mathrm{3}}}  ]  &=  \Box^{\mathord{\succeq}   \gamma_{{\mathrm{1}}} [ \binder{ \gamma_{{\mathrm{2}}} } \coloneqq \gamma_{{\mathrm{3}}}  ]  }   \bmttnt{A} [ \binder{ \gamma_{{\mathrm{2}}} } \coloneqq \gamma_{{\mathrm{3}}}  ]   \\
   \bmttsym{(}   \forall  \gamma_{{\mathrm{1}}} \within  \gamma_{{\mathrm{2}}} . \bmttnt{A}   \bmttsym{)} [ \binder{ \gamma_{{\mathrm{3}}} } \coloneqq \gamma_{{\mathrm{4}}}  ]  &=  \forall  \gamma_{{\mathrm{1}}} \within   \gamma_{{\mathrm{2}}} [ \binder{ \gamma_{{\mathrm{3}}} } \coloneqq \gamma_{{\mathrm{4}}}  ]  .  \bmttnt{A} [ \binder{ \gamma_{{\mathrm{3}}} } \coloneqq \gamma_{{\mathrm{4}}}  ]  \\
  & \qquad \text{where $\gamma_{{\mathrm{1}}} \, \notin \, \bmttsym{\{}  \gamma_{{\mathrm{3}}}  \bmttsym{,}  \gamma_{{\mathrm{4}}}  \bmttsym{\}}$}\\
  & \\
    \varepsilon  [ \binder{ \gamma_{{\mathrm{1}}} } \coloneqq \gamma_{{\mathrm{2}}}  ]  &=  \varepsilon  \\
   \bmttsym{(}   { \bmttnt{A} }^{\binder{ \gamma_{{\mathrm{1}}} } }   \bmttsym{,}  \Gamma  \bmttsym{)} [ \binder{ \gamma_{{\mathrm{2}}} } \coloneqq \gamma_{{\mathrm{3}}}  ]  &=  {  \bmttnt{A} [ \binder{ \gamma_{{\mathrm{2}}} } \coloneqq \gamma_{{\mathrm{3}}}  ]  }^{\binder{ \gamma_{{\mathrm{1}}} } }   \bmttsym{,}   \Gamma [ \binder{ \gamma_{{\mathrm{2}}} } \coloneqq \gamma_{{\mathrm{3}}}  ] \\
  & \qquad \text{where $\gamma_{{\mathrm{1}}} \, \notin \, \bmttsym{\{}  \gamma_{{\mathrm{2}}}  \bmttsym{,}  \gamma_{{\mathrm{3}}}  \bmttsym{\}}$}\\
   \bmttsym{(}   \mathord{\blacktriangleright} ^{\binder{ \gamma_{{\mathrm{1}}} } \within \gamma_{{\mathrm{2}}} }   \bmttsym{,}  \Gamma  \bmttsym{)} [ \binder{ \gamma_{{\mathrm{3}}} } \coloneqq \gamma_{{\mathrm{4}}}  ]  &=  \mathord{\blacktriangleright} ^{\binder{ \gamma_{{\mathrm{1}}} } \within   \gamma_{{\mathrm{2}}} [ \binder{ \gamma_{{\mathrm{3}}} } \coloneqq \gamma_{{\mathrm{4}}}  ]   }   \bmttsym{,}    \Gamma [ \binder{ \gamma_{{\mathrm{3}}} } \coloneqq \gamma_{{\mathrm{4}}}  ]  \\
  & \qquad \text{where $\gamma_{{\mathrm{1}}} \, \notin \, \bmttsym{\{}  \gamma_{{\mathrm{3}}}  \bmttsym{,}  \gamma_{{\mathrm{4}}}  \bmttsym{\}}$}\\
   \bmttsym{(}   \mathord{\blacktriangleleft} ^{ \gamma_{{\mathrm{1}}} }   \bmttsym{,}  \Gamma  \bmttsym{)} [ \binder{ \gamma_{{\mathrm{2}}} } \coloneqq \gamma_{{\mathrm{3}}}  ]  &=  \mathord{\blacktriangleleft} ^{   \gamma_{{\mathrm{1}}} [ \binder{ \gamma_{{\mathrm{2}}} } \coloneqq \gamma_{{\mathrm{3}}}  ]   }   \bmttsym{,}   \Gamma [ \binder{ \gamma_{{\mathrm{2}}} } \coloneqq \gamma_{{\mathrm{3}}}  ] \\
   \bmttsym{(}   \binder{ \gamma_{{\mathrm{1}}} } \within \gamma_{{\mathrm{2}}}   \bmttsym{,}  \Gamma  \bmttsym{)} [ \binder{ \gamma_{{\mathrm{3}}} } \coloneqq \gamma_{{\mathrm{4}}}  ]  &=  \binder{ \gamma_{{\mathrm{1}}} } \within   \gamma_{{\mathrm{2}}} [ \binder{ \gamma_{{\mathrm{3}}} } \coloneqq \gamma_{{\mathrm{4}}}  ]     \bmttsym{,}   \Gamma [ \binder{ \gamma_{{\mathrm{3}}} } \coloneqq \gamma_{{\mathrm{4}}}  ] \\
  & \qquad \text{where $\gamma_{{\mathrm{1}}} \, \notin \, \bmttsym{\{}  \gamma_{{\mathrm{3}}}  \bmttsym{,}  \gamma_{{\mathrm{4}}}  \bmttsym{\}}$}\\
 \end{align*}
\end{definition}

\section{Proofs for Section~\ref{sec:logic:semantics}}
\zref[claim]{claim:semantic-monotonicity}

\begin{proof}
  It suffices to check that for all $d' ⪰_w d$,
  if $w, d ⊨^{ρ} A$, then $w, d' ⊨^{ρ} A$.
  We proceed by induction on $A$.
  \begin{case}[$A \equiv p$]
    Follows from the fact that $V_w(p)$ is upward-closed.
  \end{case}
  \begin{case}[$A \equiv B \to C$]
    By definition.
  \end{case}
  \begin{case}[$A \equiv □^{⪰ γ} B$]
    Suppose $w, d ⊨^{ρ} □^{⪰ γ} B$ and $d ⪯_w d'$.
    Take $e ⊒_w d'$ such that $e ⪰_w ρ(γ)$.
    By left-stability we have $d ⊑_w e$, so that
    $w, e ⊩^{ρ} B$ since $w, d ⊨^{ρ} □^{⪰ γ} B$, which
    implies $w, d' ⊨^{ρ} □^{⪰ γ} B$.
  \end{case}
  \begin{case}[$A \equiv ∀\cls(γ_2 i> γ_1). B$]
    Immediate from the IH\@.\qedhere
  \end{case}
\end{proof}

\section{Proofs for Section~\ref{sec:logic:completeness}}
\begin{lemma}\label{claim:semantic-monotonicity:transitions}
  Suppose $w ≼ v$.
  \begin{enumerate}
  \item\label{item:semantic-monotonicity:scopes}
    If $w ⊩^{ρ} γ_1 ⪯ γ_2$,
    then $v ⊩^{ρ} γ_1 ⪯ γ_2$.
  \item\label{item:semantic-monotonicity:stages}
    If $w ⊩^{ρ} γ_1 ⊑ γ_2$,
    then $v ⊩^{ρ} γ_1 ⊑ γ_2$.
  \end{enumerate}
\end{lemma}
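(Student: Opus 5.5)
The statement follows directly from unfolding the semantic clause for transitions and using the monotonicity conditions built into the definition of a \bml-model. By definition, $w ⊩^{ρ} γ_1 \trianglelefteq γ_2$ means $ρ(γ_1) \trianglelefteq_w ρ(γ_2)$, and likewise for $v$. So the whole proof amounts to transferring a relation between two elements of $D_w$ to the same two elements regarded in $D_v$.

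For item~(1), assume $ρ(γ_1) \preceq_w ρ(γ_2)$. Since $w ≼ v$, the model conditions give $D_w \subseteq D_v$, so both elements still lie in $D_v$. They also give $d_1 \preceq_w d_2 \implies d_1 \preceq_v d_2$, hence $ρ(γ_1) \preceq_v ρ(γ_2)$, which is $v ⊩^{ρ} γ_1 ⪯ γ_2$.

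Item~(2) is identical, using the clause $d_1 \sqsubseteq_w d_2 \implies d_1 \sqsubseteq_v d_2$ instead.

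The only point needing care is the status of $ρ$. It is a $w$-assignment, so its values lie in $D_w$. By the inclusion $D_w \subseteq D_v$, and because the roots agree ($\exclam_w = \exclam_v$), the same $ρ$ is also a legitimate $v$-assignment. This root agreement is what makes $ρ(\exclam)$ unambiguous when $γ_1$ or $γ_2$ is $\exclam$.

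Nothing in the argument requires induction, and I expect no real obstacle; the lemma exists mainly to be cited later when proving that context satisfaction is persistent along $≼$.
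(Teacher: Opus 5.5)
Your proof is correct and is the argument the paper intends; its proof just says ``Straightforward''. You unfold the clause to $\rho(\gamma_1) \preceq_w \rho(\gamma_2)$ (resp.\ $\sqsubseteq_w$) and apply the corresponding inclusion condition of a model for $w \preccurlyeq v$, and your extra check that $\rho$ is still a legitimate $v$-assignment (since $D_w \subseteq D_v$ and the roots agree) is sound.
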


\begin{proof}
  Straightforward.
\end{proof}

\begin{corollary}\label{claim:semantic-monotonicity:contexts}
  If $w ≼ v$ and $w ⊩^{ρ} \Gamma$,
  then $v ⊩^{ρ} \Gamma$.
\end{corollary}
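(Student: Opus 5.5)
We argue by induction on the structure of $\Gamma$, following the clauses defining $\mathfrak{M}, w \Vdash^{\rho} \Gamma$. Each clause is a conjunction of three kinds of facts: the satisfaction of the shorter context, accessibility facts of the form $\mathfrak{M}, w \Vdash^{\rho} \gamma_1 \trianglelefteq \gamma_2$ with $\trianglelefteq \in \{\preceq, \sqsubseteq\}$, and, for hypotheses, a formula satisfaction $\mathfrak{M}, w, \rho(\gamma) \Vdash^{\rho} A$. It therefore suffices to show that each kind of fact transfers from $w$ to $v$ whenever $w \preccurlyeq v$.

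The base case $\Gamma = \varepsilon$ is trivial. In the inductive cases, the satisfaction of the shorter context transfers by the induction hypothesis, and the accessibility facts transfer by \Zcref{claim:semantic-monotonicity:transitions}. This covers the cases $\Gamma, \blacktriangleright^{\gamma_2 \succeq \gamma_1}$, $\Gamma, \blacktriangleleft^{\gamma}$ and $\Gamma, \gamma_2 \succeq \gamma_1$ completely.

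The only remaining case is $\Gamma, A^{\gamma}$, where we must also transfer $\mathfrak{M}, w, \rho(\gamma) \Vdash^{\rho} A$ to $\mathfrak{M}, v, \rho(\gamma) \Vdash^{\rho} A$. We obtain this from \Zcref{claim:semantic-monotonicity} with $d = e = \rho(\gamma)$, whose side condition $\rho(\gamma) \preceq_{v} \rho(\gamma)$ holds by reflexivity of $\preceq_v$. One subtlety: $\rho$ is a $w$-assignment, but since $D_w \subseteq D_v$ and $\exclam_w = \exclam_v$, it is also a $v$-assignment, so the interpretation at $v$ makes sense.

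We expect no real obstacle. The only point needing care is this reuse of $\rho$ across worlds, which is already implicit in the statement of \Zcref{claim:semantic-monotonicity}.
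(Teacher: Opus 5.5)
Your proof is correct and follows essentially the argument the paper intends. The paper states this corollary without proof, directly after the lemma that accessibility facts $\gamma_1 \trianglelefteq \gamma_2$ persist along $\preccurlyeq$, and the implicit proof is your induction on $\Gamma$: that lemma handles the accessibility conjuncts, and Semantic Persistency with $d = e = \rho(\gamma)$ handles the hypothesis conjunct. Your remark that $\rho$ is also a $v$-assignment, because $D_w \subseteq D_v$ and the roots coincide, is the right point of care.
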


\begin{lemma}[Substitution]\label{claim:semantic-substitution}
  \begin{math}
    w, d ⊩^{ρ ⋅ [γ_2 ↦ ρ(γ_1)] } A
    \iff
    w, d ⊩^{ρ} A\w[γ_2 / γ_1]
  \end{math}.
\end{lemma}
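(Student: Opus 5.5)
We prove the statement by structural induction on $A$, simultaneously for all $w$, $d$ and $\rho$. It is convenient to strengthen the claim to cover both satisfaction relations: for every formula $A$,
\[ w, d \vDash^{\rho\cdot[\gamma_2\mapsto\rho(\gamma_1)]} A \iff w, d \vDash^{\rho} A[\gamma_2 \coloneqq \gamma_1], \]
and likewise for $\Vdash$. The $\Vdash$ version follows from the $\vDash$ version by quantifying over $v \succcurlyeq w$. This works because $\rho$ is an arbitrary assignment, and an assignment into $D_w$ is also one into $D_v$, since $D_w \subseteq D_v$.

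Write $\rho' = \rho\cdot[\gamma_2\mapsto\rho(\gamma_1)]$. The basic observation is that $\rho'(\delta[\gamma_2\coloneqq\gamma_1]) $ agrees with $\rho'(\delta)$, and that $\rho'(\delta) = \rho(\delta[\gamma_2\coloneqq\gamma_1])$ for every classifier $\delta$. This is checked by comparing $\delta$ with $\gamma_2$.

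The cases go as follows.
\begin{itemize}
\item Atom $p$: immediate, since neither the valuation nor the formula mentions classifiers.
\item Implication: unfold the definition. The quantification over $e \succeq_w d$ is unchanged, and we apply the induction hypothesis (the $\Vdash$ form) to both sides.
\item Bounded modality $\Box^{\succeq\delta}B$: the substituted formula is $\Box^{\succeq\delta[\gamma_2\coloneqq\gamma_1]}B[\gamma_2\coloneqq\gamma_1]$. The side condition $\rho'(\delta)\preceq_w e$ coincides with $\rho(\delta[\gamma_2\coloneqq\gamma_1])\preceq_w e$ by the observation above. The body is handled by the induction hypothesis.
\item Quantifier $\forall\delta_1\within\delta_2.B$: by $\alpha$-renaming we may choose $\delta_1\notin\{\gamma_1,\gamma_2\}$, as the substitution definition requires. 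The lower bound again transports via the observation. For the body we need
\[ \rho'\cdot[\delta_1\mapsto e] = (\rho\cdot[\delta_1\mapsto e])\cdot[\gamma_2\mapsto (\rho\cdot[\delta_1\mapsto e])(\gamma_1)]. \]
This holds because $\delta_1\neq\gamma_1,\gamma_2$, so the two updates commute and $\gamma_1$'s value is unaffected. The induction hypothesis, applied with the extended assignment, then closes the case.
\end{itemize}

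The only real subtlety is the quantifier case. One must ensure the bound classifier is chosen fresh, which is justified because satisfaction is invariant under $\alpha$-renaming. One must also verify that the assignment updates commute. Everything else is definitional unfolding.
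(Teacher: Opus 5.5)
Your proof is correct and takes the same approach as the paper, whose proof is just ``by induction on $A$.'' Your treatment of the mutual recursion between $\Vdash$ and $\vDash$, the pointwise identity $(\rho\cdot[\gamma_2\mapsto\rho(\gamma_1)])(\delta)=\rho(\delta[\gamma_2\coloneqq\gamma_1])$, and the commutation of assignment updates for a fresh bound classifier in the $\forall$ case supplies the details that the paper's one-line proof leaves implicit.
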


\begin{proof}
  By induction on $A$.
\end{proof}

\zref[claim]{claim:Kripke-soundness}

\begin{proof}
  \zcref{item:Kripke-soundness:scopes,item:Kripke-soundness:stages} are
  mostly straightforward, so
  we show \zcref{item:Kripke-soundness:types} here.
  We proceed by induction on derivation.

  Suppose $Γ ⊢ A$.
  Assuming $w ⊩^{ρ} Γ$,
  we show $w, ρ(\pos(Γ)) ⊩^{ρ} A$.
  We analyze the last rule of the derivation:
  \begin{case}[\ref{rule:derive-hyp}]
    Assume
    \begin{prooftree*}
      \hypo{A^{\binder{γ}} \in Γ}
      \hypo{Γ ⊢ γ ⪯ \pos(Γ)}
      \infer2{Γ ⊢ A}
    \end{prooftree*}
    By assumption we have $w, ρ(γ) ⊩^{ρ} A$,
    and from~\zcref{item:Kripke-soundness:scopes},
    $ρ(γ) ⪯_w ρ(\pos(Γ))$ holds.
    By \Zcref{claim:semantic-monotonicity}
    we see $w, ρ(\pos(Γ)) ⊩^{ρ} A$.
  \end{case}
  \begin{case}[\ref{rule:derive-to-i}]
    Assume
    \begin{prooftree*}
      \hypo{Γ, B^{\binder{γ}} ⊢ C}
      \hypo{γ \notin \FC(C)}
      \infer2{Γ ⊢ B \to C}
    \end{prooftree*}
    Take $v ≽ w$ and $d ⪰_v ρ(\pos(Γ))$, and
    suppose $v, d ⊩^{ρ} B$.
    By \Zcref{claim:semantic-monotonicity:contexts}
    we have $v ⊩^{ρ} Γ$, so
    letting $ρ' = ρ ⋅ [γ ↦ d]$
    we obtain $v ⊩^{ρ'} Γ, B^{\binder{γ}}$.
    By the IH, $v, d ⊩^{ρ'} C$ holds, and so does
    $v, d ⊩^{ρ} C$ as $γ \notin \FC(C)$, which
    implies $w, ρ(\pos(Γ)) ⊩^{ρ} B \to C$.
  \end{case}
  \begin{case}[\ref{rule:derive-to-e}]
    Assume
    \begin{prooftree*}
      \hypo{Γ ⊢ B \to C}
      \hypo{Γ ⊢ B}
      \infer2{Γ ⊢ C}
    \end{prooftree*}
    By the IH, we have $w, ρ(\pos(Γ)) ⊩^{ρ} B \to C$ and
    $w, ρ(\pos(Γ)) ⊩^{ρ} B$.
    Since $w ≼ w$ and $ρ(\pos(Γ)) ⪯_w ρ(\pos(Γ))$,
    we obtain $w, ρ(\pos(Γ)) ⊩^{ρ} C$.
  \end{case}
  \begin{case}[\ref{rule:derive-bm-i}]
    Assume
    \begin{prooftree*}
      \hypo{Γ, \open(γ_2 i> γ_1) ⊢ B}
      \hypo{γ_2 \notin \FC(B)}
      \infer2{Γ ⊢ □^{⪰ γ_1} B}
    \end{prooftree*}
    Take $v ≽ w$ and $d ⊒_v ρ(\pos(Γ))$ with $d ≽_v ρ(γ_1)$.
    By \Zcref{claim:semantic-monotonicity:contexts}
    we have $v ⊩^{ρ} Γ$, so
    letting $ρ' = ρ ⋅ [γ_2 ↦ d]$
    we obtain $v ⊩^{ρ'} Γ, \open(γ_2 i> γ_1)$.
    By the IH, $v, d ⊩^{ρ'} B$ holds, and so does
    $v, d ⊩^{ρ} B$ as $γ_2 \notin \FC(B)$, which
    implies $w, ρ(\pos(Γ)) ⊩^{ρ} □^{⪰ γ_1} B$.
  \end{case}
  \begin{case}[\ref{rule:derive-bm-e}]
    Assume
    \begin{prooftree*}
      \hypo{Γ, ^{γ} ⊢ □^{⪰ γ_1} B}
      \hypo{Γ ⊢ γ_1 ⪯ \pos(Γ)}
      \infer2{Γ ⊢ B}
    \end{prooftree*}
    From $w ⊩^{ρ} Γ$,
    we have $w ⊩^{ρ} Γ, ^{γ}$, and
    by \zcref{item:Kripke-soundness:scopes},
    also $ρ(γ_1) ⪯_w ρ(\pos(Γ))$.
    Since $Γ, ^{γ}$ is well-formed,
    there should be a subderivation of $Γ ⊢ γ ⊑ \pos(Γ)$,
    which gives $ρ(γ) ⊑_w ρ(\pos(Γ))$
    by~\zcref{item:Kripke-soundness:stages}.
    Applying the IH to $Γ, ^{γ} ⊢ □^{⪰ γ_1} B$,
    we have $w, ρ(γ) ⊩^{ρ} □^{⪰ γ_1} B$, and
    from $w ≼ w$, we see $w, ρ(\pos(Γ)) ⊩^{ρ} B$.
  \end{case}
  \begin{case}[\ref{rule:derive-polycls-i}]
    Assume
    \begin{prooftree*}
      \hypo{Γ, \cls(γ_2 i> γ_1) ⊢ B}
      \infer1{Γ ⊢ ∀\cls(γ_2 i> γ_1). B}
    \end{prooftree*}
    Take $v ≽ w$ and $d ⪰_v ρ(γ_1)$.
    By \Zcref{claim:semantic-monotonicity:contexts} we have
    $v ⊩^{ρ} Γ$, so letting
    $ρ' = ρ ⋅ [γ_2 ↦ d]$ we obtain
    $v ⊩^{ρ'} Γ, \cls(γ_2 i> γ_1)$.
    By the IH, $v, ρ(\pos(Γ)) ⊩^{ρ'} B$ holds, which
    implies $w, ρ(\pos(Γ)) ⊩^{ρ} ∀\cls(γ_2 i> γ_1). B$.
  \end{case}
  \begin{case}[\ref{rule:derive-polycls-e}]
    Assume
    \begin{prooftree*}
      \hypo{Γ ⊢ ∀\cls(γ_2 i> γ_1). B}
      \hypo{Γ ⊢ γ_1 ⪯ γ}
      \infer2{Γ ⊢ B\w[γ_2 / γ]}
    \end{prooftree*}
    By the IH,
    we have $w, ρ(\pos(Γ)) ⊩^{ρ} ∀\cls(γ_2 i> γ_1). B$, and
    from~\zcref{item:Kripke-soundness:scopes}, $ρ(γ_1) ⪯_w ρ(γ)$ holds.
    Since $w ≼ w$,
    we obtain $w, ρ(\pos(Γ)) ⊩^{ρ ⋅ [γ_2 ↦ ρ(γ)]} B$, and
    \Zcref{claim:semantic-substitution} yields
    $w, ρ(\pos(Γ)) ⊩^{ρ} B\w[γ_2 / γ]$.
    \qedhere
  \end{case}
\end{proof}

\begin{lemma}\label{claim:admissible-rules-for-completeness}
  The following rules are admissible:
  \begin{enumerate}
    \item Inversion of\/ \zcref[noname]{rule:type-bm-i}:
      \begin{prooftree*}
        \hypo{Γ ⊢ □^{⪰ γ} A}
        \infer1{Γ, \open(γ' i> γ) ⊢ A}
      \end{prooftree*}
    \item General weakening for top-level subderivations:
      \begin{mathpar}
        \begin{prooftree}
          \hypo{Γ, ^{\tp}, Γ' ⊢ A}
          \infer1{Γ, Δ, ^{\tp}, Γ' ⊢ A}
        \end{prooftree}
        \and
        \begin{prooftree}
          \hypo{Γ, ^{\tp}, Γ' ⊢ γ_1 ⊴ γ_2}
          \infer1{Γ, Δ, ^{\tp}, Γ' ⊢ γ_1 ⊴ γ_2}
        \end{prooftree}
      \end{mathpar}
      where $\mathord{⊴} \in \{\mathord{⪯}, \mathord{⊑}\}$.
  \end{enumerate}
\end{lemma}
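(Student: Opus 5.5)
For the inversion of \ref{rule:derive-bm-i}, I will construct the derivation directly rather than by induction. Given $Γ ⊢ □^{⪰ γ} A$, first weaken it to $Γ, \open(γ' i> γ) ⊢ □^{⪰ γ} A$; this uses a routine weakening lemma for appending a fresh item that does not move relations among existing classifiers. Write $δ = \pos(Γ)$. Next extend the context with $^{δ}$: the context $Γ, \open(γ' i> γ), ^{δ}$ is well formed by \ref{rule:wf-ctx-shut}, because $Γ, \open(γ' i> γ) ⊢ δ ⊑ γ'$ holds by \ref{rule:mtrans-open}. Its position is $δ$ and it has the same prefix, so I must re-derive $□^{⪰ γ} A$ under it. This is a weakening past a $$ that returns to the old position, which I will prove together with the second item. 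Finally apply \ref{rule:derive-bm-e} with $γ_1 = γ'$, $γ_2 = δ$ and $γ_3 = γ$. Its side condition $⊢ γ ⪯ γ'$ is given by \ref{rule:itrans-open}. The conclusion is $Γ, \open(γ' i> γ) ⊢ A$, as required.

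For the general weakening, I will prove both forms, for $⊢ A$ and for $⊢ γ_1 ⊴ γ_2$, simultaneously by induction on the derivation of the judgment under $Γ, ^{\tp}, Γ'$. For the relational judgments the structural rules are immediate, with the following checks:
\begin{itemize}
\item \ref{rule:itrans-hyp} and \ref{rule:mtrans-open} refer to $\pos$ of a prefix. If that prefix lies inside $Γ'$ or $Γ$, positions are unchanged.
\item The one delicate subcase is a hypothesis or $$ item placed immediately after $^{\tp}$. The preceding position is $\tp$ in both contexts, because $\pos(Γ,Δ,^{\tp}) = \tp = \pos(Γ,^{\tp})$.
\item \ref{rule:itrans-cls}, \ref{rule:itrans-open}, \ref{rule:itrans-refl}, transitivity and \ref{rule:mtrans-lift} go through because membership and domains only grow.
\end{itemize}

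For the formula judgment, each rule's premises live under an extension of $Γ, ^{\tp}, Γ'$ at the tail, so the induction hypothesis applies with $Γ'$ enlarged. Side conditions on relations are handled by the relational part, and freshness conditions by $α$-renaming away from $\DomC(Δ)$. The premise of \ref{rule:derive-bm-e} appends a $$, and well-formedness of that premise needs $⊢ δ ⊑ \pos$ in the weakened context, which the relational part again supplies. \ref{rule:derive-hyp} needs the hypothesis to still occur, which it does.

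The main obstacle is the well-formedness of the weakened context itself. $Δ$ may introduce items whose presence must not clash with names in $Γ'$, and the position after $Δ$ is arbitrary. The $^{\tp}$ is exactly what resets the position, since $\tp ⊑ x$ is derivable for any $x$ via \ref{rule:mtrans-lift} from $\tp ⪯ x$. So I will assume, up to renaming, that $\DomC(Δ)$ is disjoint from the classifiers of $Γ'$ and $A$, and then check the WF rules along $Γ'$ using the relational weakening just established.
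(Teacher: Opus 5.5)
Your outline for item~1 follows the paper's proof. You obtain $\Gamma,\blacktriangleright^{\gamma'\succeq\gamma},\blacktriangleleft^{\mathrm{pos}(\Gamma)}\vdash\Box^{\succeq\gamma}A$, which is well formed by WF-$\blacktriangleleft$ and $\sqsubseteq$-$\blacktriangleright$ as you note. You then close with $\Box$-E, whose side condition $\gamma\preceq\gamma'$ comes from $\preceq$-$\blacktriangleright$. Item~2 is, as in the paper, an induction on derivations. Two steps are misjustified, however.

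First, the opening ``routine weakening'' to $\Gamma,\blacktriangleright^{\gamma'\succeq\gamma}\vdash\Box^{\succeq\gamma}A$ does not exist in that form. Appending $\blacktriangleright$ moves the position from $\mathrm{pos}(\Gamma)$ to $\gamma'$, and formula judgments depend on the position. For instance $p^{x}\vdash p$ holds, but $p^{x},\blacktriangleright^{\gamma'\succeq\boldsymbol{!}}\vdash p$ is not derivable. Hyp would need $x\preceq\gamma'$, and a model with $p$ true at $x$ and false at $\gamma'$ refutes it by soundness. For a boxed formula the conclusion happens to be derivable, by a 4-style argument using transitivity of $\sqsubseteq$, but not by weakening. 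You never use this judgment afterwards, so delete the step and work directly from $\Gamma\vdash\Box^{\succeq\gamma}A$.

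Second, the step you actually need, from $\Gamma\vdash\Box^{\succeq\gamma}A$ to $\Gamma,\blacktriangleright^{\gamma'\succeq\gamma},\blacktriangleleft^{\mathrm{pos}(\Gamma)}\vdash\Box^{\succeq\gamma}A$, is not an instance of item~2. $\Gamma$ contains no $\blacktriangleleft^{\boldsymbol{!}}$ to insert in front of, and the new $\blacktriangleleft$ targets $\mathrm{pos}(\Gamma)$ rather than $\boldsymbol{!}$. So ``proving it together with item~2'' only works if you explicitly state a more general insertion lemma: insert a well-formed block ending in $\blacktriangleleft^{\mathrm{pos}(\Gamma)}$. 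Your induction does go through for that lemma, since every position in the tail is unchanged.

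The paper avoids this extra lemma. The step is exactly the Persistency lemma of Section~\ref{sec:logic:nd} with $\Gamma_2=\blacktriangleright^{\gamma'\succeq\gamma},\blacktriangleleft^{\mathrm{pos}(\Gamma)}$, whose side condition $\mathrm{pos}(\Gamma)\preceq\mathrm{pos}(\Gamma)$ is $\preceq$-Refl. With that substitution your item~1 coincides with the paper's proof.

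Your checks for item~2 are correct and fill in what the paper leaves as ``by induction''. Positions after the inserted $\blacktriangleleft^{\boldsymbol{!}}$ agree in both contexts. Also, $\boldsymbol{!}\sqsubseteq x$ makes the inserted $\blacktriangleleft^{\boldsymbol{!}}$ well formed; it follows from $\boldsymbol{!}\preceq x$ and $\sqsubseteq$-Lift, where $\boldsymbol{!}\preceq x$ itself needs a short induction over the context.
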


\begin{proof}\quitvmode\samepage
  \begin{enumerate}
    \item
      Using \Zcref{claim:monotonicity}, we have
      \begin{prooftree*}
        \hypo{Γ ⊢ □^{⪰ γ} A}
        \infer[dashed]1{Γ, \open(γ' i> γ), ^{\pos(Γ)} ⊢ □^{⪰ γ} A}
        \by1(rule:type-bm-e){Γ, \open(γ' i> γ) ⊢ A}
      \end{prooftree*}
    \item
      By induction on derivation.
      \qedhere
  \end{enumerate}
\end{proof}

\zref[claim]{claim:truth-lemma}

\begin{proof}
  By induction on the size of~$A$,
  using \Zcref{claim:admissible-rules-for-completeness}.
  \begin{case}[$A \equiv p$]
    By definition.
  \end{case}
  \begin{case}[$A \equiv B \to C$]\quitvmode\samepage
    \proofsubparagraph{($\Longleftarrow$)}

    Suppose $Γ, ^{\tp} ⊢ □^{⪰ γ} (B \to C)$.
    Take $Δ \can{≽} Γ$ and $δ \can{⪰_{Δ}} γ$
    satisfying $Δ, δ ⊩^{\can{ρ_{Γ}}} B$.
    Since $\can{ρ_{Δ}} \restricted_{\DomC(Γ)} = \can{ρ_{Γ}}$,
    we have $Δ, δ \can{⊩} B$, and
    the IH yields $Δ, ^{\tp} ⊢ □^{⪰ δ} B$.
    Then we can derive
    \begin{prooftree*}
      \hypo{Γ, ^{\tp} ⊢ □^{⪰ γ} (B \to C)}
      \infer[dashed]1{Δ, ^{\tp} ⊢ □^{⪰ γ} (B \to C)}
      \infer[dashed]1{Δ, ^{\tp}, \open(δ' i> δ) ⊢ B \to C}
      \hypo{Δ, ^{\tp} ⊢ □^{⪰ δ} B}
      \infer[dashed]1{Δ, ^{\tp}, \open(δ' i> δ) ⊢ B}
      \by2(rule:type-to-e){Δ, ^{\tp}, \open(δ' i> δ) ⊢ C}
      \by1(rule:type-bm-i){Δ, ^{\tp} ⊢ □^{⪰ δ} C}
    \end{prooftree*}
    By the IH, $Δ, δ \can{⊩} C$ holds, and
    so does $Δ, δ ⊩^{\can{ρ_{Γ}}} C$, which
    implies $Γ, γ \can{⊩} B \to C$.

    \proofsubparagraph{($\Longrightarrow$)}

    We argue by contrapositive.
    Suppose $Γ, ^{\tp} ⊬ □^{⪰ γ} (B \to C)$.
    Let $Δ \equiv Γ, ^{\tp}, \open(γ' i> γ), B^{\binder{δ}}$.
    Then we must have $Δ, ^{\tp} ⊬ □^{⪰ δ} C$;
    otherwise we could derive
    \begin{prooftree*}
      \hypo{Δ, ^{\tp} ⊢ □^{⪰ δ} C}
      \by1(rule:type-bm-e)
        {Γ, ^{\tp}, \open(γ' i> γ), B^{\binder{δ}} ⊢ C}
      \by1(rule:type-to-i){Γ, ^{\tp}, \open(γ' i> γ) ⊢ B \to C}
      \by1(rule:type-bm-i){Γ, ^{\tp} ⊢ □^{⪰ γ} (B \to C)}
    \end{prooftree*}
    a contradiction.
    By the IH, we have $Δ, δ \can{⊮} C$, and also
    $Δ, δ ⊮^{\can{ρ_{Γ}}} C$ as $\FC(C) \subseteq \DomC(Γ)$.
    Applying a similar argument to
    \begin{prooftree*}
      \by0(rule:type-var){Δ, ^{\tp}, \open(δ' i> δ) ⊢ B}
      \by1(rule:type-bm-i){Δ, ^{\tp} ⊢ □^{⪰ δ} B}
    \end{prooftree*}
    yields $Δ, δ ⊩^{\can{ρ_{Γ}}} B$.
    Since $Δ \can{≽} Γ$ and $δ \can{⪰_{Δ}} γ$,
    we see $Γ, γ \can{⊮} B \to C$.
  \end{case}
  \begin{case}[$A \equiv {□^{⪰ γ_1} B}$]\quitvmode\samepage
    \proofsubparagraph{($\Longleftarrow$)}

    Suppose $Γ, ^{\tp} ⊢ □^{⪰ γ} □^{⪰ γ_1} B$.
    Take $Δ \can{≽} Γ$ and
    $δ \can{⊒_{Δ}} γ$ satisfying $δ \can{⪰_{Δ}} γ_1$.
    Then we can derive
    \begin{prooftree*}
      \hypo{Γ, ^{\tp} ⊢ □^{⪰ γ} □^{⪰ γ_1} B}
      \infer[dashed]1{
        Δ, ^{\tp}, \open(δ' i> δ), ^{γ}, ^{\tp}
        ⊢ □^{⪰ γ} □^{⪰ γ_1} B
      }
      \by1(rule:type-bm-e)
        {Δ, ^{\tp}, \open(δ' i> δ), ^{γ} ⊢ □^{⪰ γ_1} B}
      \by1(rule:type-bm-e)
        {Δ, ^{\tp}, \open(δ' i> δ) ⊢ B}
      \by1(rule:type-bm-i)
        {Δ, ^{\tp} ⊢ □^{⪰ δ} B}
    \end{prooftree*}
    By the IH, $Δ, δ \can{⊩} B$ holds,
    and so does $Δ, δ ⊩^{\can{ρ_{Γ}}} B$,
    which implies $Γ, γ \can{⊩} □^{⪰ γ_1} B$.

    \proofsubparagraph{($\Longrightarrow$)}

    We argue by contrapositive.
    Suppose $Γ, ^{\tp} ⊬ □^{⪰ γ} □^{⪰ γ_1} B$.
    Let $Δ \equiv Γ, ^{\tp}, \open(γ' i> γ), \open(δ i> γ_1)$.
    Then we must have $Δ, ^{\tp} ⊬ □^{⪰ δ} B$; otherwise
    we could derive
    \begin{prooftree*}
      \hypo{Δ, ^{\tp} ⊢ □^{⪰ δ} B}
      \by1(rule:type-bm-e)
        {Γ, ^{\tp}, \open(γ' i> γ), \open(δ i> γ_1) ⊢ B}
      \by1(rule:type-bm-i)
        {Γ, ^{\tp}, \open(γ' i> γ) ⊢ □^{⪰ γ_1} B}
      \by1(rule:type-bm-i)
        {Γ, ^{\tp} ⊢ □^{⪰ γ} □^{⪰ γ_1} B}
    \end{prooftree*}
    a contradiction.
    By the IH, we have $Δ, δ \can{⊮} B$ and
    hence $Δ, δ ⊮^{\can{ρ_{Γ}}} B$.
    As $Δ \can{≽} Γ$ and $δ \can{⊒_{Δ}} γ$ with $δ \can{⪰_{Δ}} γ_1$,
    we see $Γ, γ \can{⊮} □^{⪰ γ_1} B$.
  \end{case}
  \begin{case}[$A \equiv ∀\cls(γ_2 i> γ_1). B$]\quitvmode\samepage
    \proofsubparagraph{($\Longleftarrow$)}

    Suppose $Γ, ^{\tp} ⊢ □^{⪰ γ} (∀\cls(γ_2 i> γ_1). B)$.
    Take $Δ \can{≽} Γ$ and $δ \can{⪰_{Δ}} γ_1$.
    Then we can derive
    \begin{prooftree*}
      \hypo{Γ, ^{\tp} ⊢ □^{⪰ γ} (∀\cls(γ_2 i> γ_1). B)}
      \infer[dashed]1{Δ, ^{\tp} ⊢ □^{⪰ γ} (∀\cls(γ_2 i> γ_1). B)}
      \infer[dashed]1{Δ, ^{\tp}, \open(γ' i> γ) ⊢ ∀\cls(γ_2 i> γ_1). B}
      \by1(rule:type-polycls-e)
        {Δ, ^{\tp}, \open(γ' i> γ) ⊢ B\w[γ_2 / δ]}
      \by1(rule:type-bm-i)
        {Δ, ^{\tp} ⊢ □^{⪰ γ} B\w[γ_2 / δ]}
    \end{prooftree*}
    By the IH, $Δ, γ \can{⊩} B\w[γ_2 / δ]$ holds, and
    so does $Δ, γ ⊩^{\can{ρ_{Γ}} ⋅ [γ_2 ↦ δ]} B$, which
    yields $Γ, γ \can{⊩} ∀\cls(γ_2 i> γ_1). B$.

    \proofsubparagraph{($\Longrightarrow$)}

    We argue by contrapositive.
    Suppose $Γ, ^{\tp} ⊬ □^{⪰ γ} (∀\cls(γ_2 i> γ_1). B)$.
    Let $Δ \equiv Γ, ^{\tp}, \open(γ' i> γ), \cls(γ_2 i> γ_1)$.
    Then we must have $Δ, ^{\tp} ⊬ □^{⪰ γ} B$;
    otherwise we could derive
    \begin{prooftree*}
      \hypo{Δ, ^{\tp} ⊢ □^{⪰ γ} B}
      \by1(rule:type-bm-e)
        {Γ, ^{\tp}, \open(γ' i> γ), \cls(γ_2 i> γ_1) ⊢ B}
      \by1(rule:type-polycls-i)
        {Γ, ^{\tp}, \open(γ' i> γ) ⊢ ∀\cls(γ_2 i> γ_1). B}
      \by1(rule:type-bm-i)
        {Γ, ^{\tp} ⊢ □^{⪰ γ} (∀\cls(γ_2 i> γ_1). B)}
    \end{prooftree*}
    a contradiction.
    By the IH, we have $Δ, γ \can{⊮} B$, and hence
    $Δ, γ ⊮^{\can{ρ_{Γ}} ⋅ [γ_2 ↦ γ_2]} B$.
    Since $Δ \can{≽} Γ$, we see $Γ, γ \can{⊮} ∀\cls(γ_2 i> γ_1). B$.
    \qedhere
  \end{case}
\end{proof}

\zref[claim]{claim:Kripke-completeness}

\begin{proof}
  By the contrapositive, where
  we can take the canonical model as countermodel
  by \Zcref{claim:truth-lemma}.
\end{proof}

\section{Full Definitions and Proofs for Section~\ref{sec:calc}}
\begin{definition}
  $ \token{FC}( \bmttnt{M} ) $ and $ \token{FC}( \Gamma ) $ represent sets of free classifiers in $\bmttnt{M}$ and $\Gamma$, respectively.
 \begin{align*}
   \token{FC}( \bmttmv{x} )  &=  \emptyset  \\
   \token{FC}(  \lambda \binder{ \bmttmv{x} }\has@{ \gamma } \bmttnt{A} \ldotp \bmttnt{M}  )  &=  \bmttsym{(}    \token{FC}( \bmttnt{M} )  - \bmttsym{\{}  \gamma  \bmttsym{\}}   \bmttsym{)} \cup  \token{FC}( \bmttnt{A} )  \\
   \token{FC}(  \bmttnt{M_{{\mathrm{1}}}}   \bmttnt{M_{{\mathrm{2}}}}  )  &=   \token{FC}( \bmttnt{M_{{\mathrm{1}}}} )  \cup  \token{FC}( \bmttnt{M_{{\mathrm{2}}}} )  \\
   \token{FC}(  \mathbf{quo} (\binder{ \tau })\lbrace^{\binder{ \gamma_{{\mathrm{1}}} } \within \gamma_{{\mathrm{2}}} }  \bmttnt{M} \rbrace  )  &=  \bmttsym{(}    \token{FC}( \bmttnt{M} )  \cup \bmttsym{\{}  \gamma_{{\mathrm{2}}}  \bmttsym{\}}   \bmttsym{)} - \bmttsym{\{}  \gamma_{{\mathrm{1}}}  \bmttsym{\}} \\
   \token{FC}(  \mathbf{unq} _{ \bmttnt{T} }\lbrace^{ \gamma } \bmttnt{M} \rbrace  )  &=   \token{FC}( \bmttnt{M} )  \cup \bmttsym{\{}  \gamma  \bmttsym{\}} \\
   \token{FC}(  \lambda \binder{ \gamma_{{\mathrm{1}}} }\within  \gamma_{{\mathrm{2}}} . \bmttnt{M}  )  &=  \bmttsym{(}    \token{FC}( \bmttnt{M} )  - \bmttsym{\{}  \gamma_{{\mathrm{1}}}  \bmttsym{\}}   \bmttsym{)} \cup \bmttsym{\{}  \gamma_{{\mathrm{2}}}  \bmttsym{\}} \\
   \token{FC}(  \bmttnt{M} \gamma  )  &=   \token{FC}( \bmttnt{M} )  \cup \bmttsym{\{}  \gamma  \bmttsym{\}}  \\
  & \\
   \token{FC}(  \varepsilon  )  &=  \emptyset  \\
   \token{FC}(  \binder{ \bmttmv{x} }\has@{ \gamma } \bmttnt{A}   \bmttsym{,}  \Gamma )  &=  \bmttsym{(}    \token{FC}( \Gamma )  - \bmttsym{\{}  \gamma  \bmttsym{\}}   \bmttsym{)} \cup  \token{FC}( \bmttnt{A} )   \\
   \token{FC}(  \tau : \mathord{\blacktriangleright} ^{\binder{ \gamma_{{\mathrm{1}}} } \within \gamma_{{\mathrm{2}}} }   \bmttsym{,}  \Gamma )  &=  \bmttsym{(}    \token{FC}( \Gamma )  - \bmttsym{\{}  \gamma_{{\mathrm{1}}}  \bmttsym{\}}   \bmttsym{)} \cup \bmttsym{\{}  \gamma_{{\mathrm{2}}}  \bmttsym{\}}  \\
   \token{FC}(  \mathord{\blacktriangleleft} _{ \bmttnt{T} }^{ \gamma }   \bmttsym{,}  \Gamma )  &=   \token{FC}( \Gamma )  \cup \bmttsym{\{}  \gamma  \bmttsym{\}}  \\
   \token{FC}(  \binder{ \gamma_{{\mathrm{1}}} } \within \gamma_{{\mathrm{2}}}   \bmttsym{,}  \Gamma )  &=  \bmttsym{(}    \token{FC}( \Gamma )  - \bmttsym{\{}  \gamma_{{\mathrm{1}}}  \bmttsym{\}}   \bmttsym{)} \cup \bmttsym{\{}  \gamma_{{\mathrm{2}}}  \bmttsym{\}} 
 \end{align*}
\end{definition}

\begin{definition}
 $ \token{FT}( \bmttnt{M} ) $ and $ \token{FT}( \Gamma ) $ represent sets of free occurrences of atomic modal transitions in $\bmttnt{M}$ and $\Gamma$, respectively.
 \begin{align*}
   \token{FT}( \bmttmv{x} )  &=  \emptyset  \\
   \token{FT}(  \lambda \binder{ \bmttmv{x} }\has@{ \gamma } \bmttnt{A} \ldotp \bmttnt{M}  )  &=  \token{FT}( \bmttnt{M} )  \\
   \token{FT}(  \bmttnt{M_{{\mathrm{1}}}}   \bmttnt{M_{{\mathrm{2}}}}  )  &=   \token{FT}( \bmttnt{M_{{\mathrm{1}}}} )  \cup  \token{FT}( \bmttnt{M_{{\mathrm{2}}}} )   \\
   \token{FT}(  \mathbf{quo} (\binder{ \tau })\lbrace^{\binder{ \gamma_{{\mathrm{1}}} } \within \gamma_{{\mathrm{2}}} }  \bmttnt{M} \rbrace  )  &=   \token{FT}( \bmttnt{M} )  -  \{ \tau \}   \\
   \token{FT}(  \mathbf{unq} _{  \overrightarrow{ \tau }  }\lbrace^{ \gamma } \bmttnt{M} \rbrace  )  &=   \token{FT}( \bmttnt{M} )  \cup  \overrightarrow{ \tau }   \\
   \token{FT}(  \lambda \binder{ \gamma_{{\mathrm{1}}} }\within  \gamma_{{\mathrm{2}}} . \bmttnt{M}  )  &=  \token{FT}( \bmttnt{M} )  \\
   \token{FT}(  \bmttnt{M} \gamma  )  &=  \token{FT}( \bmttnt{M} )  \\
  & \\
   \token{FT}(  \varepsilon  )  &=  \emptyset  \\
   \token{FT}(  \binder{ \bmttmv{x} }\has@{ \gamma } \bmttnt{A}   \bmttsym{,}  \Gamma )  &=  \token{FT}( \Gamma )  \\
   \token{FT}(  \tau : \mathord{\blacktriangleright} ^{\binder{ \gamma_{{\mathrm{1}}} } \within \gamma_{{\mathrm{2}}} }   \bmttsym{,}  \Gamma )  &=   \token{FT}( \Gamma )  -  \{ \tau \}   \\
   \token{FT}(  \mathord{\blacktriangleleft} _{  \overrightarrow{ \tau }  }^{ \gamma }   \bmttsym{,}  \Gamma )  &=   \token{FT}( \Gamma )  \cup  \overrightarrow{ \tau }   \\
   \token{FT}(  \binder{ \gamma_{{\mathrm{1}}} } \within \gamma_{{\mathrm{2}}}   \bmttsym{,}  \Gamma )  &=  \token{FT}( \Gamma ) 
 \end{align*}
\end{definition}

\begin{definition}
 $ \token{FV}( \bmttnt{M} ) $ represents the set of free variables in $\bmttnt{M}$.
 \begin{align*}
   \token{FV}( \bmttmv{x} )  &=  \{ \bmttmv{x} \}                   \\
   \token{FV}(  \lambda \binder{ \bmttmv{x} }\has@{ \gamma } \bmttnt{A} \ldotp \bmttnt{M}  )  &=   \token{FV}( \bmttnt{M} )  -  \{ \bmttmv{x} \}   \\
   \token{FV}(  \bmttnt{M_{{\mathrm{1}}}}   \bmttnt{M_{{\mathrm{2}}}}  )  &=   \token{FV}( \bmttnt{M_{{\mathrm{1}}}} )  \cup  \token{FV}( \bmttnt{M_{{\mathrm{2}}}} )    \\
   \token{FV}(  \mathbf{quo} (\binder{ \tau })\lbrace^{\binder{ \gamma_{{\mathrm{1}}} } \within \gamma_{{\mathrm{2}}} }  \bmttnt{M} \rbrace  )  &=  \token{FV}( \bmttnt{M} )       \\
   \token{FV}(  \mathbf{unq} _{ \bmttnt{T} }\lbrace^{ \gamma } \bmttnt{M} \rbrace  )  &=  \token{FV}( \bmttnt{M} )               \\
   \token{FV}(  \lambda \binder{ \gamma_{{\mathrm{1}}} }\within  \gamma_{{\mathrm{2}}} . \bmttnt{M}  )  &=  \token{FV}( \bmttnt{M} )           \\
   \token{FV}(  \bmttnt{M} \gamma  )  &=  \token{FV}( \bmttnt{M} ) 
 \end{align*}
\end{definition}

\begin{definition}
 $ \mathbf{Dom}_{\token{C} }( \Gamma ) $ represents a set of classifiers declared in $\Gamma$.
 \begin{align*}
    \mathbf{Dom}_{\token{C} }(  \varepsilon  )  &= \bmttsym{\{}   \exclam   \bmttsym{\}} \\
    \mathbf{Dom}_{\token{C} }( \Gamma  \bmttsym{,}   \binder{ \bmttmv{x} }\has@{ \gamma } \bmttnt{A}  )  &=   \mathbf{Dom}_{\token{C} }( \Gamma )  \cup \bmttsym{\{}  \gamma  \bmttsym{\}}  \\
    \mathbf{Dom}_{\token{C} }( \Gamma  \bmttsym{,}   \tau : \mathord{\blacktriangleright} ^{\binder{ \gamma_{{\mathrm{1}}} } \within \gamma_{{\mathrm{2}}} }  )  &=   \mathbf{Dom}_{\token{C} }( \Gamma )  \cup \bmttsym{\{}  \gamma_{{\mathrm{1}}}  \bmttsym{\}}  \\
    \mathbf{Dom}_{\token{C} }( \Gamma  \bmttsym{,}   \mathord{\blacktriangleleft} _{ \bmttnt{T} }^{ \gamma }  )  &=  \mathbf{Dom}_{\token{C} }( \Gamma )  \\
    \mathbf{Dom}_{\token{C} }( \Gamma  \bmttsym{,}   \binder{ \gamma_{{\mathrm{1}}} } \within \gamma_{{\mathrm{2}}}  )  &=   \mathbf{Dom}_{\token{C} }( \Gamma )  \cup \bmttsym{\{}  \gamma_{{\mathrm{1}}}  \bmttsym{\}} 
 \end{align*}
\end{definition}

\begin{definition}
  $ \mathbf{Dom}_{\token{T} }( \Gamma ) $ represents a set of atomic modal transition witnesses declared in $\Gamma$.
  \begin{align*}
    \mathbf{Dom}_{\token{T} }(  \varepsilon  )  &=  \emptyset  \\
    \mathbf{Dom}_{\token{T} }(  \binder{ \bmttmv{x} }\has@{ \gamma } \bmttnt{A}   \bmttsym{,}  \Gamma )  &=  \mathbf{Dom}_{\token{T} }( \Gamma )  \\
    \mathbf{Dom}_{\token{T} }(  \tau : \mathord{\blacktriangleright} ^{\binder{ \gamma_{{\mathrm{1}}} } \within \gamma_{{\mathrm{2}}} }   \bmttsym{,}  \Gamma )  &=   \mathbf{Dom}_{\token{T} }( \Gamma )  \cup  \{ \tau \}   \\
    \mathbf{Dom}_{\token{T} }(  \mathord{\blacktriangleleft} _{ \bmttnt{T} }^{ \gamma }   \bmttsym{,}  \Gamma )  &=  \mathbf{Dom}_{\token{T} }( \Gamma )  \\
    \mathbf{Dom}_{\token{T} }(  \binder{ \gamma_{{\mathrm{1}}} } \within \gamma_{{\mathrm{2}}}   \bmttsym{,}  \Gamma )  &=  \mathbf{Dom}_{\token{T} }( \Gamma ) 
  \end{align*}
\end{definition}

\begin{definition}
  $ \mathbf{Dom}_{\token{V} }( \Gamma ) $ represents a set of variables declared in $\Gamma$.
  \begin{align*}
    \mathbf{Dom}_{\token{V} }(  \varepsilon  )  &=  \emptyset  \\
    \mathbf{Dom}_{\token{V} }(  \binder{ \bmttmv{x} }\has@{ \gamma } \bmttnt{A}   \bmttsym{,}  \Gamma )  &=   \{ \bmttmv{x} \}  \cup  \mathbf{Dom}_{\token{V} }( \Gamma )   \\
    \mathbf{Dom}_{\token{V} }(  \tau : \mathord{\blacktriangleright} ^{\binder{ \gamma_{{\mathrm{1}}} } \within \gamma_{{\mathrm{2}}} }   \bmttsym{,}  \Gamma )  &=  \mathbf{Dom}_{\token{V} }( \Gamma )  \\
    \mathbf{Dom}_{\token{V} }(  \mathord{\blacktriangleleft} _{ \bmttnt{T} }^{ \gamma }   \bmttsym{,}  \Gamma )  &=  \mathbf{Dom}_{\token{V} }( \Gamma )  \\
    \mathbf{Dom}_{\token{V} }(  \binder{ \gamma_{{\mathrm{1}}} } \within \gamma_{{\mathrm{2}}}   \bmttsym{,}  \Gamma )  &=  \mathbf{Dom}_{\token{V} }( \Gamma ) 
  \end{align*}
\end{definition}

\begin{definition}
 A classifier substitution $ \bmttsym{(}   \mathord{-}   \bmttsym{)} [ \binder{ \gamma_{{\mathrm{1}}} } \coloneqq \gamma_{{\mathrm{2}}}  ] $ is a meta operation on terms and contexts, which replaces free occurrences of $\gamma_{{\mathrm{1}}}$ with $\gamma_{{\mathrm{2}}}$.
 \begin{align*}
   \bmttmv{x} [ \binder{ \gamma_{{\mathrm{1}}} } \coloneqq \gamma_{{\mathrm{2}}}  ]  &=  \bmttmv{x} \\
   \bmttsym{(}   \lambda \binder{ \bmttmv{x} }\has@{ \gamma_{{\mathrm{1}}} } \bmttnt{A} \ldotp \bmttnt{M}   \bmttsym{)} [ \binder{ \gamma_{{\mathrm{2}}} } \coloneqq \gamma_{{\mathrm{3}}}  ]  &=   \lambda \binder{ \bmttmv{x} }\has@{ \gamma_{{\mathrm{1}}} }  \bmttnt{A} [ \binder{ \gamma_{{\mathrm{2}}} } \coloneqq \gamma_{{\mathrm{3}}}  ]  \ldotp \bmttsym{(}   \bmttnt{M} [ \binder{ \gamma_{{\mathrm{2}}} } \coloneqq \gamma_{{\mathrm{3}}}  ]   \bmttsym{)}  \\
  & \qquad \text{where $\gamma_{{\mathrm{1}}} \, \notin \, \bmttsym{\{}  \gamma_{{\mathrm{2}}}  \bmttsym{,}  \gamma_{{\mathrm{3}}}  \bmttsym{\}}$}\\
   \bmttsym{(}   \bmttnt{M}   \bmttnt{N}   \bmttsym{)} [ \binder{ \gamma_{{\mathrm{1}}} } \coloneqq \gamma_{{\mathrm{2}}}  ]  &=  \bmttsym{(}   \bmttnt{M} [ \binder{ \gamma_{{\mathrm{1}}} } \coloneqq \gamma_{{\mathrm{2}}}  ]   \bmttsym{)}   \bmttsym{(}   \bmttnt{N} [ \binder{ \gamma_{{\mathrm{1}}} } \coloneqq \gamma_{{\mathrm{2}}}  ]   \bmttsym{)}  \\
    \mathbf{quo} (\binder{ \tau })\lbrace^{\binder{ \gamma_{{\mathrm{1}}} } \within \gamma_{{\mathrm{2}}} }  \bmttnt{M} \rbrace  [ \binder{ \gamma_{{\mathrm{3}}} } \coloneqq \gamma_{{\mathrm{4}}}  ]  &=  \mathbf{quo} (\binder{ \tau })\lbrace^{\binder{ \gamma_{{\mathrm{1}}} } \within  \gamma_{{\mathrm{2}}} [ \binder{ \gamma_{{\mathrm{3}}} } \coloneqq \gamma_{{\mathrm{4}}}  ]  }   \bmttnt{M} [ \binder{ \gamma_{{\mathrm{3}}} } \coloneqq \gamma_{{\mathrm{4}}}  ]  \rbrace  \\
  & \qquad \text{where $\gamma_{{\mathrm{1}}} \, \notin \, \bmttsym{\{}  \gamma_{{\mathrm{3}}}  \bmttsym{,}  \gamma_{{\mathrm{4}}}  \bmttsym{\}}$}\\
    \mathbf{unq} _{ \bmttnt{T} }\lbrace^{ \gamma_{{\mathrm{1}}} } \bmttnt{M} \rbrace  [ \binder{ \gamma_{{\mathrm{2}}} } \coloneqq \gamma_{{\mathrm{3}}}  ]  &=  \mathbf{unq} _{ \bmttnt{T} }\lbrace^{  \gamma_{{\mathrm{1}}} [ \binder{ \gamma_{{\mathrm{2}}} } \coloneqq \gamma_{{\mathrm{3}}}  ]  }  \bmttnt{M} [ \binder{ \gamma_{{\mathrm{2}}} } \coloneqq \gamma_{{\mathrm{3}}}  ]  \rbrace  \\
   \bmttsym{(}   \lambda \binder{ \gamma_{{\mathrm{1}}} }\within  \gamma_{{\mathrm{2}}} . \bmttnt{M}   \bmttsym{)} [ \binder{ \gamma_{{\mathrm{3}}} } \coloneqq \gamma_{{\mathrm{4}}}  ]  &=  \lambda \binder{ \gamma_{{\mathrm{1}}} }\within   \gamma_{{\mathrm{2}}} [ \binder{ \gamma_{{\mathrm{3}}} } \coloneqq \gamma_{{\mathrm{4}}}  ]  . \bmttsym{(}   \bmttnt{M} [ \binder{ \gamma_{{\mathrm{3}}} } \coloneqq \gamma_{{\mathrm{4}}}  ]   \bmttsym{)} \\
  & \qquad \text{where $\gamma_{{\mathrm{1}}} \, \notin \, \bmttsym{\{}  \gamma_{{\mathrm{3}}}  \bmttsym{,}  \gamma_{{\mathrm{4}}}  \bmttsym{\}}$}\\
   \bmttsym{(}   \bmttnt{M} \gamma_{{\mathrm{1}}}   \bmttsym{)} [ \binder{ \gamma_{{\mathrm{2}}} } \coloneqq \gamma_{{\mathrm{3}}}  ]  &=  \bmttsym{(}   \bmttnt{M} [ \binder{ \gamma_{{\mathrm{2}}} } \coloneqq \gamma_{{\mathrm{3}}}  ]   \bmttsym{)}   \gamma_{{\mathrm{1}}} [ \binder{ \gamma_{{\mathrm{2}}} } \coloneqq \gamma_{{\mathrm{3}}}  ]    \\
  & \\
    \varepsilon  [ \binder{ \gamma_{{\mathrm{1}}} } \coloneqq \gamma_{{\mathrm{2}}}  ]  &=  \varepsilon  \\
   \bmttsym{(}   \binder{ \bmttmv{x} }\has@{ \gamma_{{\mathrm{1}}} } \bmttnt{A}   \bmttsym{,}  \Gamma  \bmttsym{)} [ \binder{ \gamma_{{\mathrm{2}}} } \coloneqq \gamma_{{\mathrm{3}}}  ]  &=   \binder{ \bmttmv{x} }\has@{ \gamma_{{\mathrm{1}}} }  \bmttnt{A} [ \binder{ \gamma_{{\mathrm{2}}} } \coloneqq \gamma_{{\mathrm{3}}}  ]    \bmttsym{,}  \Gamma [ \binder{ \gamma_{{\mathrm{2}}} } \coloneqq \gamma_{{\mathrm{3}}}  ] \\
  & \qquad \text{where $\gamma_{{\mathrm{1}}} \, \notin \, \bmttsym{\{}  \gamma_{{\mathrm{2}}}  \bmttsym{,}  \gamma_{{\mathrm{3}}}  \bmttsym{\}}$}\\
   \bmttsym{(}   \tau : \mathord{\blacktriangleright} ^{\binder{ \gamma_{{\mathrm{1}}} } \within \gamma_{{\mathrm{2}}} }   \bmttsym{,}  \Gamma  \bmttsym{)} [ \binder{ \gamma_{{\mathrm{3}}} } \coloneqq \gamma_{{\mathrm{4}}}  ]  &=  \tau : \mathord{\blacktriangleright} ^{\binder{ \gamma_{{\mathrm{1}}} } \within   \gamma_{{\mathrm{2}}} [ \binder{ \gamma_{{\mathrm{3}}} } \coloneqq \gamma_{{\mathrm{4}}}  ]   }   \bmttsym{,}    \Gamma [ \binder{ \gamma_{{\mathrm{3}}} } \coloneqq \gamma_{{\mathrm{4}}}  ]  \\
  & \qquad \text{where $\gamma_{{\mathrm{1}}} \, \notin \, \bmttsym{\{}  \gamma_{{\mathrm{3}}}  \bmttsym{,}  \gamma_{{\mathrm{4}}}  \bmttsym{\}}$}\\
   \bmttsym{(}   \mathord{\blacktriangleleft} _{ \bmttnt{T} }^{ \gamma_{{\mathrm{1}}} }   \bmttsym{,}  \Gamma  \bmttsym{)} [ \binder{ \gamma_{{\mathrm{2}}} } \coloneqq \gamma_{{\mathrm{3}}}  ]  &=   \mathord{\blacktriangleleft} _{ \bmttnt{T} }^{   \gamma_{{\mathrm{1}}} [ \binder{ \gamma_{{\mathrm{2}}} } \coloneqq \gamma_{{\mathrm{3}}}  ]   }   \bmttsym{,}  \Gamma [ \binder{ \gamma_{{\mathrm{2}}} } \coloneqq \gamma_{{\mathrm{3}}}  ] \\
   \bmttsym{(}   \binder{ \gamma_{{\mathrm{1}}} } \within \gamma_{{\mathrm{2}}}   \bmttsym{,}  \Gamma  \bmttsym{)} [ \binder{ \gamma_{{\mathrm{3}}} } \coloneqq \gamma_{{\mathrm{4}}}  ]  &=   \binder{ \gamma_{{\mathrm{1}}} } \within   \gamma_{{\mathrm{2}}} [ \binder{ \gamma_{{\mathrm{3}}} } \coloneqq \gamma_{{\mathrm{4}}}  ]     \bmttsym{,}  \Gamma [ \binder{ \gamma_{{\mathrm{3}}} } \coloneqq \gamma_{{\mathrm{4}}}  ] \\
  & \qquad \text{where $\gamma_{{\mathrm{1}}} \, \notin \, \bmttsym{\{}  \gamma_{{\mathrm{3}}}  \bmttsym{,}  \gamma_{{\mathrm{4}}}  \bmttsym{\}}$}\\
 \end{align*}
\end{definition}

\begin{definition}
  $ \bmttnt{T_{{\mathrm{1}}}} [ \binder{ \tau } \coloneqq \bmttnt{T_{{\mathrm{2}}}}  ] $ is a modal transition witness that is obtained by replacing free occurrences of $\tau$ in $\bmttnt{T_{{\mathrm{1}}}}$ with $\bmttnt{T_{{\mathrm{2}}}}$. Formally, it is defined as follows:
  \begin{align*}
      \varepsilon  [ \binder{ \tau } \coloneqq \bmttnt{T_{{\mathrm{2}}}}  ]  &=  \varepsilon  \\
     \bmttsym{(}  \tau_{{\mathrm{1}}}  \bmttsym{,}  \bmttnt{T_{{\mathrm{1}}}}  \bmttsym{)} [ \binder{ \tau } \coloneqq \bmttnt{T_{{\mathrm{2}}}}  ]  &= \begin{cases}
       \bmttnt{T_{{\mathrm{2}}}}  +  \bmttsym{(}   \bmttnt{T_{{\mathrm{1}}}} [ \binder{ \tau } \coloneqq \bmttnt{T_{{\mathrm{2}}}}  ]   \bmttsym{)}  & \text{\ where $\tau_{{\mathrm{1}}}  \bmttsym{=}  \tau$} \\
      \tau_{{\mathrm{1}}}  \bmttsym{,}  \bmttsym{(}   \bmttnt{T_{{\mathrm{1}}}} [ \binder{ \tau } \coloneqq \bmttnt{T_{{\mathrm{2}}}}  ]   \bmttsym{)} & \text{\ otherwise}
    \end{cases}
  \end{align*}
\end{definition}

\begin{definition}
  An atomic modal transition witness substitution $ \bmttsym{(}   \mathord{-}   \bmttsym{)} [ \binder{ \gamma_{{\mathrm{1}}} } \coloneqq \gamma_{{\mathrm{2}}} , \binder{ \tau }  \coloneqq   \bmttnt{T}  ] $ is a meta operation on terms and contexts, which replaces free occurrences of $\gamma_{{\mathrm{1}}}$ and $\tau$ with $\gamma_{{\mathrm{2}}}$ and $\bmttnt{T}, respectively$.
  \begin{align*}
     \bmttmv{x} [ \binder{ \gamma_{{\mathrm{1}}} } \coloneqq \gamma_{{\mathrm{2}}} , \binder{ \tau }  \coloneqq   \bmttnt{T}  ]  &=  \bmttmv{x} \\
     \bmttsym{(}   \lambda \binder{ \bmttmv{x} }\has@{ \gamma_{{\mathrm{1}}} } \bmttnt{A} \ldotp \bmttnt{M}   \bmttsym{)} [ \binder{ \gamma_{{\mathrm{2}}} } \coloneqq \gamma_{{\mathrm{3}}} , \binder{ \tau }  \coloneqq   \bmttnt{T}  ]  &=   \lambda \binder{ \bmttmv{x} }\has@{ \gamma_{{\mathrm{1}}} }  \bmttnt{A} [ \binder{ \gamma_{{\mathrm{2}}} } \coloneqq \gamma_{{\mathrm{3}}}  ]  \ldotp \bmttsym{(}   \bmttnt{M} [ \binder{ \gamma_{{\mathrm{2}}} } \coloneqq \gamma_{{\mathrm{3}}} , \binder{ \tau }  \coloneqq   \bmttnt{T}  ]   \bmttsym{)}  \\
    & \qquad \text{where $\gamma_{{\mathrm{1}}} \, \notin \, \bmttsym{\{}  \gamma_{{\mathrm{2}}}  \bmttsym{,}  \gamma_{{\mathrm{3}}}  \bmttsym{\}}$}\\
     \bmttsym{(}   \bmttnt{M}   \bmttnt{N}   \bmttsym{)} [ \binder{ \gamma_{{\mathrm{1}}} } \coloneqq \gamma_{{\mathrm{2}}} , \binder{ \tau }  \coloneqq   \bmttnt{T}  ]  &=  \bmttsym{(}   \bmttnt{M} [ \binder{ \gamma_{{\mathrm{1}}} } \coloneqq \gamma_{{\mathrm{2}}} , \binder{ \tau }  \coloneqq   \bmttnt{T}  ]   \bmttsym{)}   \bmttsym{(}   \bmttnt{N} [ \binder{ \gamma_{{\mathrm{1}}} } \coloneqq \gamma_{{\mathrm{2}}} , \binder{ \tau }  \coloneqq   \bmttnt{T}  ]   \bmttsym{)}  \\
      \mathbf{quo} (\binder{ \tau_{{\mathrm{1}}} })\lbrace^{\binder{ \gamma_{{\mathrm{1}}} } \within \gamma_{{\mathrm{2}}} }  \bmttnt{M} \rbrace  [ \binder{ \gamma_{{\mathrm{3}}} } \coloneqq \gamma_{{\mathrm{4}}} , \binder{ \tau_{{\mathrm{2}}} }  \coloneqq   \bmttnt{T}  ]  &=  \mathbf{quo} (\binder{ \tau_{{\mathrm{1}}} })\lbrace^{\binder{ \gamma_{{\mathrm{1}}} } \within  \gamma_{{\mathrm{2}}} [ \binder{ \gamma_{{\mathrm{3}}} } \coloneqq \gamma_{{\mathrm{4}}}  ]  }   \bmttnt{M} [ \binder{ \gamma_{{\mathrm{3}}} } \coloneqq \gamma_{{\mathrm{4}}} , \binder{ \tau_{{\mathrm{2}}} }  \coloneqq   \bmttnt{T}  ]  \rbrace  \\
    & \qquad \text{where $\gamma_{{\mathrm{1}}} \, \notin \, \bmttsym{\{}  \gamma_{{\mathrm{3}}}  \bmttsym{,}  \gamma_{{\mathrm{4}}}  \bmttsym{\}}$ and $\tau_{{\mathrm{1}}} \, \notin \,   \{ \tau_{{\mathrm{2}}} \}  \cup \bmttnt{T} $}\\
      \mathbf{unq} _{ \bmttnt{T_{{\mathrm{1}}}} }\lbrace^{ \gamma_{{\mathrm{1}}} } \bmttnt{M} \rbrace  [ \binder{ \gamma_{{\mathrm{2}}} } \coloneqq \gamma_{{\mathrm{3}}} , \binder{ \tau }  \coloneqq   \bmttnt{T_{{\mathrm{2}}}}  ]  &=  \mathbf{unq} _{  \bmttnt{T_{{\mathrm{1}}}} [ \binder{ \tau } \coloneqq \bmttnt{T_{{\mathrm{2}}}}  ]  }\lbrace^{  \gamma_{{\mathrm{1}}} [ \binder{ \gamma_{{\mathrm{2}}} } \coloneqq \gamma_{{\mathrm{3}}}  ]  }  \bmttnt{M} [ \binder{ \gamma_{{\mathrm{2}}} } \coloneqq \gamma_{{\mathrm{3}}} , \binder{ \tau }  \coloneqq   \bmttnt{T_{{\mathrm{2}}}}  ]  \rbrace  \\
     \bmttsym{(}   \lambda \binder{ \gamma_{{\mathrm{1}}} }\within  \gamma_{{\mathrm{2}}} . \bmttnt{M}   \bmttsym{)} [ \binder{ \gamma_{{\mathrm{3}}} } \coloneqq \gamma_{{\mathrm{4}}} , \binder{ \tau }  \coloneqq   \bmttnt{T}  ]  &=  \lambda \binder{ \gamma_{{\mathrm{1}}} }\within   \gamma_{{\mathrm{2}}} [ \binder{ \gamma_{{\mathrm{3}}} } \coloneqq \gamma_{{\mathrm{4}}}  ]  . \bmttsym{(}   \bmttnt{M} [ \binder{ \gamma_{{\mathrm{3}}} } \coloneqq \gamma_{{\mathrm{4}}} , \binder{ \tau }  \coloneqq   \bmttnt{T}  ]   \bmttsym{)} \\
    & \qquad \text{where $\gamma_{{\mathrm{1}}} \, \notin \, \bmttsym{\{}  \gamma_{{\mathrm{3}}}  \bmttsym{,}  \gamma_{{\mathrm{4}}}  \bmttsym{\}}$}\\
     \bmttsym{(}   \bmttnt{M} \gamma_{{\mathrm{1}}}   \bmttsym{)} [ \binder{ \gamma_{{\mathrm{2}}} } \coloneqq \gamma_{{\mathrm{3}}} , \binder{ \tau }  \coloneqq   \bmttnt{T}  ]  &=  \bmttsym{(}   \bmttnt{M} [ \binder{ \gamma_{{\mathrm{2}}} } \coloneqq \gamma_{{\mathrm{3}}} , \binder{ \tau }  \coloneqq   \bmttnt{T}  ]   \bmttsym{)}   \gamma_{{\mathrm{1}}} [ \binder{ \gamma_{{\mathrm{2}}} } \coloneqq \gamma_{{\mathrm{3}}}  ]    \\
    & \\
      \varepsilon  [ \binder{ \gamma_{{\mathrm{1}}} } \coloneqq \gamma_{{\mathrm{2}}} , \binder{ \tau }  \coloneqq   \bmttnt{T}  ]  &=  \varepsilon  \\
     \bmttsym{(}   \binder{ \bmttmv{x} }\has@{ \gamma_{{\mathrm{1}}} } \bmttnt{A}   \bmttsym{,}  \Gamma  \bmttsym{)} [ \binder{ \gamma_{{\mathrm{2}}} } \coloneqq \gamma_{{\mathrm{3}}} , \binder{ \tau }  \coloneqq   \bmttnt{T}  ]  &=   \binder{ \bmttmv{x} }\has@{ \gamma_{{\mathrm{1}}} }  \bmttnt{A} [ \binder{ \gamma_{{\mathrm{2}}} } \coloneqq \gamma_{{\mathrm{3}}}  ]    \bmttsym{,}  \Gamma [ \binder{ \gamma_{{\mathrm{2}}} } \coloneqq \gamma_{{\mathrm{3}}} , \binder{ \tau }  \coloneqq   \bmttnt{T}  ] \\
    & \qquad \text{where $\gamma_{{\mathrm{1}}} \, \notin \, \bmttsym{\{}  \gamma_{{\mathrm{2}}}  \bmttsym{,}  \gamma_{{\mathrm{3}}}  \bmttsym{\}}$}\\
     \bmttsym{(}   \tau_{{\mathrm{1}}} : \mathord{\blacktriangleright} ^{\binder{ \gamma_{{\mathrm{1}}} } \within \gamma_{{\mathrm{2}}} }   \bmttsym{,}  \Gamma  \bmttsym{)} [ \binder{ \gamma_{{\mathrm{3}}} } \coloneqq \gamma_{{\mathrm{4}}} , \binder{ \tau_{{\mathrm{2}}} }  \coloneqq   \bmttnt{T}  ]  &=  \tau_{{\mathrm{1}}} : \mathord{\blacktriangleright} ^{\binder{ \gamma_{{\mathrm{1}}} } \within   \gamma_{{\mathrm{2}}} [ \binder{ \gamma_{{\mathrm{3}}} } \coloneqq \gamma_{{\mathrm{4}}}  ]   }   \bmttsym{,}    \Gamma [ \binder{ \gamma_{{\mathrm{3}}} } \coloneqq \gamma_{{\mathrm{4}}} , \binder{ \tau_{{\mathrm{2}}} }  \coloneqq   \bmttnt{T}  ]  \\
    & \qquad \text{where $\gamma_{{\mathrm{1}}} \, \notin \, \bmttsym{\{}  \gamma_{{\mathrm{3}}}  \bmttsym{,}  \gamma_{{\mathrm{4}}}  \bmttsym{\}}$ and $\tau_{{\mathrm{1}}} \, \notin \,   \{ \tau_{{\mathrm{2}}} \}  \cup \bmttnt{T} $}\\
     \bmttsym{(}   \mathord{\blacktriangleleft} _{ \bmttnt{T_{{\mathrm{1}}}} }^{ \gamma_{{\mathrm{1}}} }   \bmttsym{,}  \Gamma  \bmttsym{)} [ \binder{ \gamma_{{\mathrm{2}}} } \coloneqq \gamma_{{\mathrm{3}}} , \binder{ \tau }  \coloneqq   \bmttnt{T_{{\mathrm{2}}}}  ]  &=    \mathord{\blacktriangleleft} _{  \bmttnt{T_{{\mathrm{1}}}} [ \binder{ \tau } \coloneqq \bmttnt{T_{{\mathrm{2}}}}  ]  }^{ \gamma_{{\mathrm{1}}} }  [ \binder{ \gamma_{{\mathrm{2}}} } \coloneqq \gamma_{{\mathrm{3}}}  ]   \bmttsym{,}  \Gamma [ \binder{ \gamma_{{\mathrm{2}}} } \coloneqq \gamma_{{\mathrm{3}}} , \binder{ \tau }  \coloneqq   \bmttnt{T_{{\mathrm{2}}}}  ] \\
     \bmttsym{(}   \binder{ \gamma_{{\mathrm{1}}} } \within \gamma_{{\mathrm{2}}}   \bmttsym{,}  \Gamma  \bmttsym{)} [ \binder{ \gamma_{{\mathrm{3}}} } \coloneqq \gamma_{{\mathrm{4}}} , \binder{ \tau }  \coloneqq   \bmttnt{T}  ]  &=   \binder{ \gamma_{{\mathrm{1}}} } \within   \gamma_{{\mathrm{2}}} [ \binder{ \gamma_{{\mathrm{3}}} } \coloneqq \gamma_{{\mathrm{4}}}  ]     \bmttsym{,}  \Gamma [ \binder{ \gamma_{{\mathrm{3}}} } \coloneqq \gamma_{{\mathrm{4}}} , \binder{ \tau }  \coloneqq   \bmttnt{T}  ] \\
    & \qquad \text{where $\gamma_{{\mathrm{1}}} \, \notin \, \bmttsym{\{}  \gamma_{{\mathrm{3}}}  \bmttsym{,}  \gamma_{{\mathrm{4}}}  \bmttsym{\}}$}
  \end{align*}
\end{definition}

\begin{definition}
A variable substitution $ \bmttsym{(}   \mathord{-}   \bmttsym{)} [ \binder{ \gamma_{{\mathrm{1}}} } \coloneqq \gamma_{{\mathrm{2}}} , \binder{ \bmttmv{x} }  \coloneqq   \bmttnt{M}  ] $ is a meta operation on terms that replaces free occurrences of $\gamma_{{\mathrm{1}}}$ and $\bmttmv{x}$ with $\gamma_{{\mathrm{2}}}$ and $\bmttnt{M}$, respectively.
 \begin{align*}
   \bmttmv{x} [ \binder{ \gamma_{{\mathrm{1}}} } \coloneqq \gamma_{{\mathrm{2}}} , \binder{ \bmttmv{y} }  \coloneqq   \bmttnt{M}  ]  &=  \begin{cases}
                             M & \text{\ where $\bmttmv{x}  \bmttsym{=}  \bmttmv{y}$} \\
                             x & \text{\ otherwise}
                            \end{cases} \\
   \bmttsym{(}   \lambda \binder{ \bmttmv{x} }\has@{ \gamma_{{\mathrm{1}}} } \bmttnt{A} \ldotp \bmttnt{M}   \bmttsym{)} [ \binder{ \gamma_{{\mathrm{2}}} } \coloneqq \gamma_{{\mathrm{3}}} , \binder{ \bmttmv{y} }  \coloneqq   \bmttnt{N}  ]  &=   \lambda \binder{ \bmttmv{x} }\has@{ \gamma_{{\mathrm{1}}} }  \bmttnt{A} [ \binder{ \gamma_{{\mathrm{2}}} } \coloneqq \gamma_{{\mathrm{3}}}  ]  \ldotp \bmttsym{(}   \bmttnt{M} [ \binder{ \gamma_{{\mathrm{2}}} } \coloneqq \gamma_{{\mathrm{3}}} , \binder{ \bmttmv{y} }  \coloneqq   \bmttnt{N}  ]   \bmttsym{)}  \\
  & \qquad \text{where $\gamma_{{\mathrm{1}}} \, \notin \, \bmttsym{\{}  \gamma_{{\mathrm{2}}}  \bmttsym{,}  \gamma_{{\mathrm{3}}}  \bmttsym{\}}$ and $\bmttmv{x} \, \notin \,   \{ \bmttmv{y} \}  \cup  \token{FV}( \bmttnt{N} )  $}\\
   \bmttsym{(}   \bmttnt{M_{{\mathrm{1}}}}   \bmttnt{M_{{\mathrm{2}}}}   \bmttsym{)} [ \binder{ \gamma_{{\mathrm{1}}} } \coloneqq \gamma_{{\mathrm{2}}} , \binder{ \bmttmv{x} }  \coloneqq   \bmttnt{N}  ]  &=   \bmttnt{M_{{\mathrm{1}}}} [ \binder{ \gamma_{{\mathrm{1}}} } \coloneqq \gamma_{{\mathrm{2}}} , \binder{ \bmttmv{x} }  \coloneqq   \bmttnt{N}  ]      \bmttnt{M_{{\mathrm{2}}}} [ \binder{ \gamma_{{\mathrm{1}}} } \coloneqq \gamma_{{\mathrm{2}}} , \binder{ \bmttmv{x} }  \coloneqq   \bmttnt{N}  ]    \\
    \mathbf{quo} (\binder{ \tau })\lbrace^{\binder{ \gamma_{{\mathrm{1}}} } \within \gamma_{{\mathrm{2}}} }  \bmttnt{M} \rbrace  [ \binder{ \gamma_{{\mathrm{3}}} } \coloneqq \gamma_{{\mathrm{4}}} , \binder{ \bmttmv{x} }  \coloneqq   \bmttnt{N}  ]  &=  \mathbf{quo} (\binder{ \tau })\lbrace^{\binder{ \gamma_{{\mathrm{1}}} } \within  \gamma_{{\mathrm{2}}} [ \binder{ \gamma_{{\mathrm{3}}} } \coloneqq \gamma_{{\mathrm{4}}}  ]  }   \bmttnt{M} [ \binder{ \gamma_{{\mathrm{3}}} } \coloneqq \gamma_{{\mathrm{4}}} , \binder{ \bmttmv{x} }  \coloneqq   \bmttnt{N}  ]  \rbrace  \\
  & \qquad \text{where $\gamma_{{\mathrm{1}}} \, \notin \, \bmttsym{\{}  \gamma_{{\mathrm{3}}}  \bmttsym{,}  \gamma_{{\mathrm{4}}}  \bmttsym{\}}$}\\
    \mathbf{unq} _{ \bmttnt{T} }\lbrace^{ \gamma_{{\mathrm{1}}} } \bmttnt{M} \rbrace  [ \binder{ \gamma_{{\mathrm{2}}} } \coloneqq \gamma_{{\mathrm{3}}} , \binder{ \bmttmv{x} }  \coloneqq   \bmttnt{N}  ]  &=  \mathbf{unq} _{ \bmttnt{T} }\lbrace^{  \gamma_{{\mathrm{1}}} [ \binder{ \gamma_{{\mathrm{2}}} } \coloneqq \gamma_{{\mathrm{3}}}  ]  }  \bmttnt{M} [ \binder{ \gamma_{{\mathrm{2}}} } \coloneqq \gamma_{{\mathrm{3}}} , \binder{ \bmttmv{x} }  \coloneqq   \bmttnt{N}  ]  \rbrace  \\
   \bmttsym{(}   \lambda \binder{ \gamma_{{\mathrm{1}}} }\within  \gamma_{{\mathrm{2}}} . \bmttnt{M}   \bmttsym{)} [ \binder{ \gamma_{{\mathrm{3}}} } \coloneqq \gamma_{{\mathrm{4}}} , \binder{ \bmttmv{x} }  \coloneqq   \bmttnt{N}  ]  &=  \lambda \binder{ \gamma_{{\mathrm{1}}} }\within   \gamma_{{\mathrm{2}}} [ \binder{ \gamma_{{\mathrm{3}}} } \coloneqq \gamma_{{\mathrm{4}}}  ]  . \bmttsym{(}   \bmttnt{M} [ \binder{ \gamma_{{\mathrm{3}}} } \coloneqq \gamma_{{\mathrm{4}}} , \binder{ \bmttmv{x} }  \coloneqq   \bmttnt{N}  ]   \bmttsym{)} \\
  & \qquad \text{where $\gamma_{{\mathrm{1}}} \, \notin \, \bmttsym{\{}  \gamma_{{\mathrm{3}}}  \bmttsym{,}  \gamma_{{\mathrm{4}}}  \bmttsym{\}}$}\\
   \bmttsym{(}   \bmttnt{M} \gamma_{{\mathrm{1}}}   \bmttsym{)} [ \binder{ \gamma_{{\mathrm{2}}} } \coloneqq \gamma_{{\mathrm{3}}} , \binder{ \bmttmv{x} }  \coloneqq   \bmttnt{N}  ]  &=   \bmttnt{M} [ \binder{ \gamma_{{\mathrm{2}}} } \coloneqq \gamma_{{\mathrm{3}}} , \binder{ \bmttmv{x} }  \coloneqq   \bmttnt{N}  ]    \gamma_{{\mathrm{1}}} [ \binder{ \gamma_{{\mathrm{2}}} } \coloneqq \gamma_{{\mathrm{3}}}  ]   
 \end{align*}
\end{definition}

\Zcref{claim:varsubst,claim:rebasing,claim:clssubst} are proved as parts of the following lemmas.

\begin{lemma}[Variable Substitution (Full Version)]
 Suppose $\Delta_{{\mathrm{1}}} =  \Gamma_{{\mathrm{1}}} ^{\position{ \gamma_{{\mathrm{1}}} } }   \bmttsym{,}   \binder{ \bmttmv{x} }\has@{ \gamma_{{\mathrm{2}}} } \bmttnt{A}   \bmttsym{,}  \Gamma_{{\mathrm{2}}}$, and $\Delta_{{\mathrm{2}}} =  \Gamma_{{\mathrm{1}}}  \bmttsym{,}  \Gamma_{{\mathrm{2}}} [ \binder{ \gamma_{{\mathrm{2}}} } \coloneqq \gamma_{{\mathrm{1}}}  ] $. Then, the following statements hold.
 \begin{enumerate}
 \item $\vdash  \Delta_{{\mathrm{1}}}  \hasType \, \bmttkw{ctx} \implies\ \vdash  \Delta_{{\mathrm{2}}}  \hasType \, \bmttkw{ctx}$.
 \item $\Delta_{{\mathrm{1}}}  \vdash  \bmttnt{A}  \hasType \, \bmttkw{type} \implies \Delta_{{\mathrm{2}}}  \vdash   \bmttnt{A} [ \binder{ \gamma_{{\mathrm{2}}} } \coloneqq \gamma_{{\mathrm{1}}}  ]   \hasType \, \bmttkw{type}$.
 \item $\Delta_{{\mathrm{1}}}  \vdash  \delta_{{\mathrm{1}}}  \preceq  \delta_{{\mathrm{2}}} \implies \Delta_{{\mathrm{2}}}  \vdash   \delta_{{\mathrm{1}}} [ \binder{ \gamma_{{\mathrm{2}}} } \coloneqq \gamma_{{\mathrm{1}}}  ]   \preceq   \delta_{{\mathrm{2}}} [ \binder{ \gamma_{{\mathrm{2}}} } \coloneqq \gamma_{{\mathrm{1}}}  ] $.
 \item $ \Delta_{{\mathrm{1}}} \vdash \bmttnt{T} : \delta_{{\mathrm{1}}} \sqsubseteq \delta_{{\mathrm{2}}}  \implies  \Delta_{{\mathrm{2}}} \vdash \bmttnt{T} :  \delta_{{\mathrm{1}}} [ \binder{ \gamma_{{\mathrm{2}}} } \coloneqq \gamma_{{\mathrm{1}}}  ]  \sqsubseteq  \delta_{{\mathrm{2}}} [ \binder{ \gamma_{{\mathrm{2}}} } \coloneqq \gamma_{{\mathrm{1}}}  ]  $.
 \item $\Delta_{{\mathrm{1}}}  \vdash  \bmttnt{M_{{\mathrm{1}}}}  \hasType  \bmttnt{B}$ and $\Gamma_{{\mathrm{1}}}  \vdash  \bmttnt{M_{{\mathrm{2}}}}  \hasType  \bmttnt{A} \implies \Delta_{{\mathrm{2}}}  \vdash   \bmttnt{M_{{\mathrm{1}}}} [ \binder{ \gamma_{{\mathrm{2}}} } \coloneqq \gamma_{{\mathrm{1}}} , \binder{ \bmttmv{x} }  \coloneqq   \bmttnt{M_{{\mathrm{2}}}}  ]   \hasType   \bmttnt{B} [ \binder{ \gamma_{{\mathrm{2}}} } \coloneqq \gamma_{{\mathrm{1}}}  ] $.
 \end{enumerate}
\end{lemma}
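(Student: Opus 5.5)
We prove the five statements in order, each by induction on the derivation of its hypothesis judgment, with each later item allowed to use the earlier ones. Throughout, the induction generalizes over the suffix $\Gamma_2$: we prove the claims for every context of the form $\Delta_1 = \Gamma_1^{\position{\gamma_1}}, \binder{x}\has@{\gamma_2} A, \Gamma_2$, so that rules extending the context on the right (\ref{rule:type-to-i}, \ref{rule:type-bm-i}, \ref{rule:type-polycls-i}, and the well-formedness rules) are handled by applying the induction hypothesis with $\Gamma_2$ extended by one item. Before starting, we fix by $\alpha$-renaming that classifiers bound in $\Gamma_2$ or in $M_1$ avoid $\gamma_1,\gamma_2$ and $\FC(M_2)$, and that term variables bound in $M_1$ avoid $\FV(M_2)$. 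With these conventions the substitution definitions commute with the rule structure.

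Items (1) and (2) go by simultaneous induction on the well-formedness derivations. The only nontrivial checks are membership side conditions. Since $\mathbf{Dom}_{\token{C}}(\Delta_2)$ equals $\mathbf{Dom}_{\token{C}}(\Delta_1)$ with $\gamma_2$ removed, every occurrence of $\gamma_2$ is replaced by $\gamma_1 \in \mathbf{Dom}_{\token{C}}(\Gamma_1)$, and freshness conditions are preserved. The case \ref{rule:calc-ctx-wf-shut} needs item (4), so in effect (1)--(4) are proved by a single mutual induction.

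Item (3) is checked rule by rule. \ref{rule:itrans-refl}, \ref{rule:gtrans-trans}, \ref{rule:itrans-cls} and \ref{rule:itrans-open} commute directly with the substitution. The key case is \ref{rule:itrans-hyp}. If the hypothesis used is the removed one, the conclusion is $\delta \preceq \gamma_2$ with $\delta = \mathrm{pos}(\Gamma_1) = \gamma_1$, and it becomes $\gamma_1 \preceq \gamma_1$, which holds by \ref{rule:itrans-refl}. If instead the hypothesis lies in $\Gamma_2$, the previous position may be $\gamma_2$; this happens when it is the first item of $\Gamma_2$ or follows a position-preserving item. After substitution the position becomes $\gamma_1$, which is exactly $\mathrm{pos}(\Gamma_1)$ in $\Delta_2$, so the rule still applies. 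To make this precise we use the auxiliary fact $\mathrm{pos}(\Delta_2) = \mathrm{pos}(\Delta_1)[\binder{\gamma_2}\coloneqq\gamma_1]$, proved by an easy induction on $\Gamma_2$. Item (4) is analogous. For \ref{rule:calc-mtrans-open}, the witness $\tau$ is unchanged, and its source position transforms by the same auxiliary fact.

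Item (5) is by induction on the typing derivation of $M_1$. For \ref{rule:type-var} with the substituted variable $x$, we need $\Delta_2 \vdash M_2 : A[\binder{\gamma_2}\coloneqq\gamma_1]$. Since $\gamma_2 \notin \FC(A)$ by well-formedness of $\Delta_1$, this type is just $A$. Two observations finish the case:
\begin{enumerate}
\item the side condition $\gamma_2 \preceq \mathrm{pos}(\Delta_1)$ becomes $\gamma_1 \preceq \mathrm{pos}(\Delta_2)$ by item (3);
\item $\Gamma_1 \vdash M_2 : A$ can be transported to $\Delta_2$ by \Zcref{claim:monotonicity}, which applies since this is exactly the persistency situation ($\Gamma_1^{\position{\gamma_1}}$ extended by $\Gamma_2[\ldots]$ with $\gamma_1 \preceq$ the new position).
\end{enumerate}
Persistency is stated for the logic, but its term-level analogue holds by the same argument; we assume it, or prove it as a separate weakening lemma first. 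Other variables and all remaining rules use the induction hypothesis together with items (3) and (4) for side conditions. In \ref{rule:type-bm-e}, the extended context $\Delta_1, \mathord{\blacktriangleleft}_T^{\gamma}$ is again of the required shape. In \ref{rule:type-polycls-e}, we need the commutation $B[\binder{\delta_1}\coloneqq\delta_3][\binder{\gamma_2}\coloneqq\gamma_1] = B[\binder{\gamma_2}\coloneqq\gamma_1][\binder{\delta_1}\coloneqq\delta_3[\binder{\gamma_2}\coloneqq\gamma_1]]$, a standard substitution lemma on types.

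We expect the main obstacle to be the variable case of item (5), where we must transport $M_2$ from $\Gamma_1$ into $\Delta_2$ across possibly many $\mathord{\blacktriangleright}$/$\mathord{\blacktriangleleft}$ items in $\Gamma_2$. This only works because the \ref{rule:type-var} side condition forces a $\preceq$-path back to $\gamma_1$. The first thing to try is the persistency/weakening lemma, proved separately by induction on typing, which generalizes over suffix contexts exactly as done here.
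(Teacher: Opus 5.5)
Your proposal is correct and follows the same route as the paper. The paper proves all five items by mutual induction on the antecedent derivations and invokes Persistency (\Zcref{claim:monotonicity}) to transport $M_2$ from $\Gamma_1$ into $\Delta_2$ in the variable case. Your write-up fills in what the paper's two-line proof leaves implicit: the position fact $\mathrm{pos}(\Delta_2)=\mathrm{pos}(\Delta_1)[\binder{\gamma_2}\coloneqq\gamma_1]$, which the $\preceq$-Hyp, $\sqsubseteq$-$\blacktriangleright$ and WF-$\blacktriangleleft$ cases need, and the use of a term-level version of Persistency.
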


\begin{proof}
 By mutual induction on the derivations of the antecedent judgments in the five statements.
 To prove the case of typing judgment, we use \Zcref{claim:monotonicity} for the base case where $\bmttnt{M_{{\mathrm{1}}}}$ is variable.
\end{proof}

\begin{lemma}[Rebasing (Full Version)]
  Suppose $\Delta_{{\mathrm{1}}} = ( \Gamma_{{\mathrm{1}}} ^{\position{ \gamma_{{\mathrm{1}}} } }   \bmttsym{,}   \mathord{\blacktriangleleft} _{ \bmttnt{T_{{\mathrm{1}}}} }^{ \gamma_{{\mathrm{2}}} }   \bmttsym{,}   \tau_{{\mathrm{1}}} : \mathord{\blacktriangleright} ^{\binder{ \gamma_{{\mathrm{3}}} } \within \gamma_{{\mathrm{4}}} }   \bmttsym{,}  \Gamma_{{\mathrm{2}}})$, and $\Delta_{{\mathrm{2}}} =  \Gamma_{{\mathrm{1}}}  \bmttsym{,}  \Gamma_{{\mathrm{2}}} [ \binder{ \gamma_{{\mathrm{3}}} } \coloneqq \gamma_{{\mathrm{1}}} , \binder{ \tau_{{\mathrm{1}}} }  \coloneqq   \bmttnt{T_{{\mathrm{1}}}}  ] $. If $\Gamma_{{\mathrm{1}}}  \vdash  \gamma_{{\mathrm{4}}}  \preceq  \gamma_{{\mathrm{1}}}$, then the following statements hold,
  \begin{enumerate}
  \item $\vdash  \Delta_{{\mathrm{1}}}  \hasType \, \bmttkw{ctx} \implies \ \vdash  \Delta_{{\mathrm{2}}}  \hasType \, \bmttkw{ctx}$.
  \item $\Delta_{{\mathrm{1}}}  \vdash  \bmttnt{A}  \hasType \, \bmttkw{type} \implies \Delta_{{\mathrm{2}}}  \vdash   \bmttnt{A} [ \binder{ \gamma_{{\mathrm{3}}} } \coloneqq \gamma_{{\mathrm{1}}}  ]   \hasType \, \bmttkw{type}$.
  \item $\Delta_{{\mathrm{1}}}  \vdash  \delta_{{\mathrm{1}}}  \preceq  \delta_{{\mathrm{2}}} \implies \Delta_{{\mathrm{2}}}  \vdash   \delta_{{\mathrm{1}}} [ \binder{ \gamma_{{\mathrm{3}}} } \coloneqq \gamma_{{\mathrm{1}}}  ]   \preceq   \delta_{{\mathrm{2}}} [ \binder{ \gamma_{{\mathrm{3}}} } \coloneqq \gamma_{{\mathrm{1}}}  ] $.
  \item $ \Delta_{{\mathrm{1}}} \vdash \bmttnt{T_{{\mathrm{2}}}} : \delta_{{\mathrm{1}}} \sqsubseteq \delta_{{\mathrm{2}}}  \implies  \Delta_{{\mathrm{2}}} \vdash  \bmttnt{T_{{\mathrm{2}}}} [ \binder{ \tau_{{\mathrm{1}}} } \coloneqq \bmttnt{T_{{\mathrm{1}}}}  ]  :  \delta_{{\mathrm{1}}} [ \binder{ \gamma_{{\mathrm{3}}} } \coloneqq \gamma_{{\mathrm{1}}}  ]  \sqsubseteq  \delta_{{\mathrm{2}}} [ \binder{ \gamma_{{\mathrm{3}}} } \coloneqq \gamma_{{\mathrm{1}}}  ]  $.
  \item $\Delta_{{\mathrm{1}}}  \vdash  \bmttnt{M_{{\mathrm{1}}}}  \hasType  \bmttnt{A} \implies \Delta_{{\mathrm{2}}}  \vdash   \bmttnt{M_{{\mathrm{1}}}} [ \binder{ \gamma_{{\mathrm{3}}} } \coloneqq \gamma_{{\mathrm{1}}} , \binder{ \tau_{{\mathrm{1}}} }  \coloneqq   \bmttnt{T_{{\mathrm{1}}}}  ]   \hasType   \bmttnt{A} [ \binder{ \gamma_{{\mathrm{3}}} } \coloneqq \gamma_{{\mathrm{1}}}  ] $.
  \end{enumerate}
\end{lemma}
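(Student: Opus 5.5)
The plan is a mutual induction on the derivations of the five antecedent judgments, following the variable substitution lemma. The context $\Delta_1$ is generalised over an arbitrary suffix $\Gamma_2$. Throughout we write $\sigma$ for the substitution $[\gamma_3 \coloneqq \gamma_1, \tau_1 \coloneqq T_1]$, which leaves the prefix $\Gamma_1$ untouched. Two facts about the deleted items drive the argument. The item $\mathord{\blacktriangleleft}_{T_1}^{\gamma_2}$ is well formed, so $\Gamma_1 \vdash T_1 : \gamma_2 \sqsubseteq \gamma_1$ holds by \ref{rule:calc-ctx-wf-shut}. The hypothesis gives $\Gamma_1 \vdash \gamma_4 \preceq \gamma_1$. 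Every use of the deleted items $\mathord{\blacktriangleleft}$ and $\tau_1:\mathord{\blacktriangleright}$ in a derivation over $\Delta_1$ must be re-established in $\Delta_2$ from these two facts.

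I will first prove statements (3) and (4), since (1), (2) and (5) depend on them.

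For (3), the rules \ref{rule:itrans-refl}, \ref{rule:itrans-hyp}, \ref{rule:itrans-cls} and Trans are treated as follows. If the relevant item lies in $\Gamma_1$ or $\Gamma_2$, it persists in $\Delta_2$ up to $\sigma$, and $\sigma$ commutes with the rule. For \ref{rule:itrans-hyp} with a hypothesis at the head of $\Gamma_2$, the position of the prefix in $\Delta_1$ is $\gamma_3$, which becomes $\gamma_1$, and this is exactly $\mathrm{pos}(\Gamma_1)$ in $\Delta_2$, so the rule still applies. The only genuinely new case is \ref{rule:itrans-open} for the deleted $\tau_1:\mathord{\blacktriangleright}^{\gamma_3 \succeq \gamma_4}$, which yields $\gamma_4 \preceq \gamma_3$. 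After $\sigma$ this is $\gamma_4 \preceq \gamma_1$, which holds by the hypothesis weakened to $\Delta_2$. Weakening is available because extending a context on the right only adds derivation rules.

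For (4), most rules are again congruences. For \ref{rule:calc-mtrans-lift}, we use (3). For \ref{rule:calc-mtrans-trans}, note that $(T+T')\sigma = T\sigma + T'\sigma$. In \ref{rule:calc-mtrans-open} for $\tau_1$ itself, the judgment is $\tau_1 : \gamma_2 \sqsubseteq \gamma_3$, where $\gamma_2 = \mathrm{pos}(\Gamma_1, \mathord{\blacktriangleleft}^{\gamma_2})$. After substitution it becomes $T_1 : \gamma_2 \sqsubseteq \gamma_1$, which is the fact recorded by well-formedness. For an atomic witness introduced by a $\mathord{\blacktriangleright}$ at the start of $\Gamma_2$, the source position changes from $\gamma_3$ to $\gamma_1$, consistent with $\sigma$.

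Statements (1) and (2) then follow by induction, using (4) for \ref{rule:calc-ctx-wf-shut} items inside $\Gamma_2$. Freshness side conditions hold because $\gamma_3$ and $\tau_1$ are removed from the domain, and bound names may be renamed apart from $\gamma_1$ and $T_1$.

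For (5), I go by induction on the typing derivation. In \ref{rule:type-var}, a variable in $\Gamma_1$ or $\Gamma_2$ survives in $\Delta_2$, and the $\preceq$ premise transfers by (3). In \ref{rule:type-bm-e}, the premise has the form $\Delta_1, \mathord{\blacktriangleleft}_T^{\delta} \vdash M : \Box^{\succeq\gamma} A$. The extended context is again of the shape $\Gamma_1, \mathord{\blacktriangleleft}, \tau_1:\mathord{\blacktriangleright}, (\Gamma_2, \mathord{\blacktriangleleft})$, so the induction hypothesis applies, and the witness becomes $T\sigma$, matching the substitution clause for $\mathbf{unq}$. Binder rules extend $\Gamma_2$ in the same way. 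Side conditions such as $\gamma \notin \mathrm{FC}(A)$ are preserved because the bound classifier is distinct from $\gamma_1$ and $\gamma_3$.

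The main obstacle I expect is the position bookkeeping in the base cases, where $\Gamma_2$ is empty or begins with items that refer to the current position. One instance is a \ref{rule:type-bm-e} whose premise at position $\gamma_3$ needs $\gamma_3$-relative relations that must become $\gamma_1$-relative. Another is the case where $\Gamma_2$ itself contains a $\mathord{\blacktriangleleft}^{\gamma_3}$. I will handle these by checking that $\mathrm{pos}(\Delta_1)\sigma = \mathrm{pos}(\Delta_2)$ for every suffix, proved by induction on $\Gamma_2$. I will also use the fact that all relations in $\Delta_1$ that pass through the deleted items factor through the two facts above.
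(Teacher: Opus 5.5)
Your proposal is correct and follows the paper's proof, which is a mutual induction on the first derivation of each of the five statements. You also spell out what the paper leaves implicit: uses of the deleted items are replaced by $\Gamma_1 \vdash T_1 : \gamma_2 \sqsubseteq \gamma_1$ (from well-formedness of $\mathord{\blacktriangleleft}_{T_1}^{\gamma_2}$) and the hypothesis $\Gamma_1 \vdash \gamma_4 \preceq \gamma_1$, and positions commute with the substitution. One caveat: the relational judgments in (3) and (4) presuppose a well-formed $\Delta_2$, so ``(3),(4) before (1)'' must be read within the same mutual induction (e.g.\ an outer induction on $\Gamma_2$), which your framing already permits.
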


\begin{proof}
 By mutual induction on the first derivation of each statement.
\end{proof}

\begin{lemma}[Classifier Substitution (Full Version)]
 Suppose $\Delta_{{\mathrm{1}}} = \Gamma_{{\mathrm{1}}}  \bmttsym{,}   \binder{ \gamma_{{\mathrm{1}}} } \within \gamma_{{\mathrm{2}}}   \bmttsym{,}  \Gamma_{{\mathrm{2}}}$ and $\Delta_{{\mathrm{2}}} =  \Gamma_{{\mathrm{1}}}  \bmttsym{,}  \Gamma_{{\mathrm{2}}} [ \binder{ \gamma_{{\mathrm{1}}} } \coloneqq \gamma_{{\mathrm{3}}}  ] $. Given $\Gamma_{{\mathrm{1}}}  \vdash  \gamma_{{\mathrm{2}}}  \preceq  \gamma_{{\mathrm{3}}}$, then the following statements hold.
 \begin{enumerate}
  \item $\Delta_{{\mathrm{1}}}  \vdash  \bmttnt{A}  \hasType \, \bmttkw{type} \implies \Delta_{{\mathrm{2}}}  \vdash   \bmttnt{A} [ \binder{ \gamma_{{\mathrm{1}}} } \coloneqq \gamma_{{\mathrm{3}}}  ]   \hasType \, \bmttkw{type}$.
  \item $\vdash  \Delta_{{\mathrm{1}}}  \hasType \, \bmttkw{ctx} \implies \ \vdash  \Delta_{{\mathrm{2}}}  \hasType \, \bmttkw{ctx}$.
  \item $\Delta_{{\mathrm{1}}}  \vdash  \delta_{{\mathrm{1}}}  \preceq  \delta_{{\mathrm{2}}} \implies \Delta_{{\mathrm{2}}}  \vdash   \delta_{{\mathrm{1}}} [ \binder{ \gamma_{{\mathrm{1}}} } \coloneqq \gamma_{{\mathrm{3}}}  ]   \preceq   \delta_{{\mathrm{2}}} [ \binder{ \gamma_{{\mathrm{1}}} } \coloneqq \gamma_{{\mathrm{3}}}  ] $.
  \item $ \Delta_{{\mathrm{1}}} \vdash \bmttnt{T} : \delta_{{\mathrm{1}}} \sqsubseteq \delta_{{\mathrm{2}}}  \implies  \Delta_{{\mathrm{2}}} \vdash \bmttnt{T} :  \delta_{{\mathrm{1}}} [ \binder{ \gamma_{{\mathrm{1}}} } \coloneqq \gamma_{{\mathrm{3}}}  ]  \sqsubseteq  \delta_{{\mathrm{2}}} [ \binder{ \gamma_{{\mathrm{1}}} } \coloneqq \gamma_{{\mathrm{3}}}  ]  $.
  \item $\Delta_{{\mathrm{1}}}  \vdash  \bmttnt{M}  \hasType  \bmttnt{A} \implies \Delta_{{\mathrm{2}}}  \vdash   \bmttnt{M} [ \binder{ \gamma_{{\mathrm{1}}} } \coloneqq \gamma_{{\mathrm{3}}}  ]   \hasType   \bmttnt{A} [ \binder{ \gamma_{{\mathrm{1}}} } \coloneqq \gamma_{{\mathrm{3}}}  ] $.
 \end{enumerate}
\end{lemma}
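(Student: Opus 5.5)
The proof is a mutual induction on the derivations of the antecedent judgments in the five statements, following the Variable Substitution and Rebasing lemmas. The statements are ordered so that well-formedness of types and contexts can be settled before the relational and typing judgments. We write $\sigma$ for $[\gamma_1 \coloneqq \gamma_3]$ and assume, by $\alpha$-renaming, that all classifiers bound in $\Gamma_2$, $M$ and $A$ are distinct from $\gamma_1$ and $\gamma_3$.

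\textbf{Step 1: well-formedness.} For types (statement 1), the only nontrivial case is WF-$\Box$. Here we need $\gamma[\sigma] \in \mathbf{Dom}_{\mathrm{C}}(\Delta_2)$. This holds because either $\gamma = \gamma_1$, which is replaced by $\gamma_3 \in \mathbf{Dom}_{\mathrm{C}}(\Gamma_1)$ (well-formedness of $\Gamma_1 \vdash \gamma_2 \preceq \gamma_3$ guarantees this), or $\gamma$ is declared in $\Gamma_1$ or $\Gamma_2$ and survives the substitution. WF-$\forall$ goes through by the induction hypothesis on the extended context. Contexts (statement 2) are handled by induction on $\Gamma_2$. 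The WF-$\blacktriangleleft$ case needs the transition premise $\Delta_1 \vdash T : \delta \sqsubseteq \mathrm{pos}(\cdot)$ to be carried over. This uses statement 4 applied to a shorter context, so the simultaneous induction should be organized by the length of $\Gamma_2$ together with derivation height. We also note that $\mathrm{pos}(\Delta_1)[\sigma] = \mathrm{pos}(\Delta_2)$: the item $\gamma_1 \succeq \gamma_2$ does not move the position, and every position in $\Gamma_2$ is either a classifier declared there or $\gamma_1$ via $\blacktriangleleft^{\gamma_1}$.

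\textbf{Step 2: the relational judgments.} Statements 3 and 4 go through rule by rule. Refl, Trans and Lift commute with $\sigma$. The rules $\preceq$-Hyp, $\preceq$-$\blacktriangleright$ and $\sqsubseteq$-$\blacktriangleright$ refer to items that either lie in $\Gamma_1$, where they are unchanged, or lie in $\Gamma_2$, where their substituted image still occurs in $\Delta_2$. For $\sqsubseteq$-$\blacktriangleright$ the prefix is handled by the position observation from Step 1. The key case is $\preceq$-Cls applied to the removed item $\gamma_1 \succeq \gamma_2$ itself, which yields $\Delta_1 \vdash \gamma_2 \preceq \gamma_1$. After substitution we must show $\Delta_2 \vdash \gamma_2 \preceq \gamma_3$. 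This follows from the hypothesis $\Gamma_1 \vdash \gamma_2 \preceq \gamma_3$ by weakening to the extension $\Delta_2$ of $\Gamma_1$. We expect weakening of relational judgments along context extension to be a straightforward separate induction, since each rule only inspects the existence of items and prefix positions, both of which are stable under appending. The witness $T$ is never touched, because classifier substitution does not act on witnesses.

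\textbf{Step 3: typing.} Statement 5 is then routine. The Var case uses statement 3 on the premise $\gamma \preceq \mathrm{pos}$, together with membership of the substituted hypothesis. The $\Box$-E case uses statements 3 and 2 on the $\blacktriangleleft_T$-extended context. The binder cases ($\to$-I, $\Box$-I, $\forall$-I) apply the induction hypothesis to the extended context, with freshness side conditions preserved by the $\alpha$-renaming convention. $\forall$-E needs the usual commutation $(A[\gamma' \coloneqq \delta])[\sigma] = A[\sigma][\gamma' \coloneqq \delta[\sigma]]$ for $\gamma' \neq \gamma_1, \gamma_3$, which is a standard syntactic lemma.

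The main obstacle is the bookkeeping in Step 1 and Step 2 around positions and the WF-$\blacktriangleleft$ dependency between the statements, rather than any single case.
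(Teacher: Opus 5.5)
Your proposal is correct and follows the same route as the paper: a mutual induction on the derivations of the five antecedent judgments. It also spells out the one non-routine case the paper leaves implicit, namely $\preceq$-Cls on the removed item $\gamma_1 \succeq \gamma_2$, which you discharge by weakening $\Gamma_1 \vdash \gamma_2 \preceq \gamma_3$ to $\Delta_2$ (this weakening is admissible because the relational rules only inspect items and prefix positions). One small slip: positions inside $\Gamma_2$ can also be classifiers of $\Gamma_1$, either via $\blacktriangleleft^{\delta}$ with $\delta \in \mathbf{Dom}_{\mathrm{C}}(\Gamma_1)$ or as $\mathrm{pos}(\Gamma_1)$ itself when $\Gamma_2$ moves no position; $\mathrm{pos}(\Delta_1)[\sigma] = \mathrm{pos}(\Delta_2)$ still holds regardless, because $\sigma$ acts itemwise and $\gamma_1 \notin \mathbf{Dom}_{\mathrm{C}}(\Gamma_1)$.
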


\begin{proof}
 By mutual induction on the first derivation of each statement.
\end{proof}

\zref[claim]{claim:localsoundness}
\begin{proof}
 Easy to prove with \Zcref{claim:varsubst, claim:rebasing, claim:clssubst}.
\end{proof}

\zref[claim]{claim:localcompleteness}
\begin{proof}
 Easy to prove with \Zcref{claim:monotonicity}.
\end{proof}

\zref[claim]{claim:subjectreduction}
\begin{proof}
 By induction on the derivation of $ \bmttnt{M_{{\mathrm{1}}}}  \Rightarrow_{\beta}^{ \gamma }  \bmttnt{M_{{\mathrm{2}}}} $. For base cases, we apply \Zcref{claim:localsoundness}.
\end{proof}

To prove \Zcref{claim:SN}, we follow the proof strategy by \Textcite{book/MartiniM96}, who reduced strong normalization of the S4 modal lambda calculus to that of the simply typed lambda calculus~\cite{book/SorensenU2006}.

We first define a translation from types and terms of our calculus to those of STLC.

\begin{definition}
  $ \lvert  \mathord{-}  \rvert $ is a translation from types, contexts and terms of our calculus to those of STLC, which is defined as follows:
  \begin{align*}
     \lvert \bmttnt{p} \rvert  &= \bmttnt{p} \\
     \lvert \bmttnt{A}  \mathbin{\rightarrow}  \bmttnt{B} \rvert  &=  \lvert \bmttnt{A} \rvert   \mathbin{\rightarrow}   \lvert \bmttnt{B} \rvert  \\
     \lvert  \Box^{\mathord{\succeq}  \gamma }  \bmttnt{A}  \rvert  &=  \lvert \bmttnt{A} \rvert  \\
     \lvert  \forall  \gamma_{{\mathrm{1}}} \within  \gamma_{{\mathrm{2}}} . \bmttnt{A}  \rvert  &=  \lvert \bmttnt{A} \rvert  \\
    & \\
     \lvert  \varepsilon  \rvert  &=  \varepsilon  \\
     \lvert  \binder{ \bmttmv{x} }\has@{ \gamma } \bmttnt{A}   \bmttsym{,}  \Gamma \rvert  &= \bmttmv{x}  \hasType   \lvert \bmttnt{A} \rvert   \bmttsym{,}   \lvert \Gamma \rvert  \\
     \lvert  \tau : \mathord{\blacktriangleright} ^{\binder{ \gamma_{{\mathrm{1}}} } \within \gamma_{{\mathrm{2}}} }   \bmttsym{,}  \Gamma \rvert  &=  \lvert \Gamma \rvert  \\
     \lvert  \mathord{\blacktriangleleft} _{ \bmttnt{T} }^{ \gamma }   \bmttsym{,}  \Gamma \rvert  &=  \lvert \Gamma \rvert  \\
     \lvert  \binder{ \gamma_{{\mathrm{1}}} } \within \gamma_{{\mathrm{2}}}   \bmttsym{,}  \Gamma \rvert  &=  \lvert \Gamma \rvert  \\
    & \\
     \lvert \bmttmv{x} \rvert  &= \bmttmv{x} \\
     \lvert  \lambda \binder{ \bmttmv{x} }\has@{ \gamma } \bmttnt{A} \ldotp \bmttnt{M}  \rvert  &=  \lambda  \bmttmv{x} :  \lvert \bmttnt{A} \rvert   .   \lvert \bmttnt{M} \rvert   \\
     \lvert   \bmttnt{M_{{\mathrm{1}}}}   \bmttnt{M_{{\mathrm{2}}}}   \rvert  &=    \lvert \bmttnt{M_{{\mathrm{1}}}} \rvert   \    \lvert \bmttnt{M_{{\mathrm{2}}}} \rvert    \\
     \lvert  \mathbf{quo} (\binder{ \tau })\lbrace^{\binder{ \gamma_{{\mathrm{1}}} } \within \gamma_{{\mathrm{2}}} }  \bmttnt{M} \rbrace  \rvert  &=  \lvert \bmttnt{M} \rvert  \\
     \lvert  \mathbf{unq} _{ \bmttnt{T} }\lbrace^{ \gamma } \bmttnt{M} \rbrace  \rvert  &=  \lvert \bmttnt{M} \rvert  \\
     \lvert  \lambda \binder{ \gamma_{{\mathrm{1}}} }\within  \gamma_{{\mathrm{2}}} . \bmttnt{M}  \rvert  &=  \lvert \bmttnt{M} \rvert  \\
     \lvert  \bmttnt{M} \gamma  \rvert  &=  \lvert \bmttnt{M} \rvert 
  \end{align*}
\end{definition}

\begin{lemma}\label{claim:tost-preserves-typing}
  If\/ $\Gamma  \vdash  \bmttnt{M}  \hasType  \bmttnt{A}$, then $  \lvert \Gamma \rvert  \vdash^{\rightarrow}  \lvert \bmttnt{M} \rvert  \hasType  \lvert \bmttnt{A} \rvert  $.
\end{lemma}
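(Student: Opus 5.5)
We prove the claim by induction on the derivation of $\Gamma \vdash M : A$, doing a case analysis on the last typing rule. Two auxiliary facts are needed first, both by straightforward induction on the type $A$ and on the term $M$, since $\lvert{-}\rvert$ ignores classifiers altogether. The first is $\lvert A[\gamma_1 \coloneqq \gamma_2]\rvert = \lvert A\rvert$. The second is that $\lvert{-}\rvert$ commutes with the classifier and witness substitutions on terms.

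The key preliminary observation concerns the variable rule. Suppose $x \has@{\gamma_1} A \in \Gamma$ with $\Gamma \vdash \gamma_1 \preceq \mathrm{pos}(\Gamma)$. The translation $\lvert\Gamma\rvert$ keeps every term-variable declaration, because it simply erases $\blacktriangleright$, $\blacktriangleleft$ and classifier items. Hence $x : \lvert A\rvert \in \lvert\Gamma\rvert$, and the STLC variable rule applies directly. The side condition $\gamma_1 \preceq \mathrm{pos}(\Gamma)$ is simply discarded. Because erasure only makes variables more visible, no weakening lemma for STLC is required beyond what the variable rule already provides.

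The remaining cases go as follows.
\begin{itemize}
\item $\to$-I: we have $\lvert \Gamma, x \has@{\gamma} A_1\rvert = x : \lvert A_1\rvert, \lvert\Gamma\rvert$, up to the order in which the translation lists the context, which is immaterial for STLC contexts treated as sets or with exchange. We then apply STLC abstraction.
\item $\to$-E: this is immediate from the two induction hypotheses.
\item $\Box$-I: $\lvert \Gamma, \tau : \blacktriangleright^{\gamma_1 \within \gamma_2}\rvert = \lvert\Gamma\rvert$ and $\lvert \mathbf{quo}\{M\}\rvert = \lvert M\rvert$, so the induction hypothesis is already the goal, since $\lvert \Box^{\succeq\gamma_2} A\rvert = \lvert A\rvert$.
\item $\Box$-E: this works symmetrically. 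The translation of $\Gamma, \blacktriangleleft^{\gamma_2}_T$ is $\lvert\Gamma\rvert$, and the relational premise $\gamma_3 \preceq \gamma_1$ is dropped.
\item $\forall$-I: this is the same as $\Box$-I, using $\lvert \Gamma, \gamma_1 \within \gamma_2\rvert = \lvert\Gamma\rvert$.
\item $\forall$-E: we need $\lvert A[\gamma_1 \coloneqq \gamma_3]\rvert = \lvert A\rvert$, which is the auxiliary fact above.
\end{itemize}

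The only mildly delicate point is bookkeeping for the context translation. The paper defines $\lvert{-}\rvert$ on contexts by recursion from the left end ($x \has@{\gamma} A, \Gamma$), whereas the typing rules extend contexts on the right. I would first record that $\lvert \Gamma, \Delta\rvert = \lvert\Gamma\rvert, \lvert\Delta\rvert$ by induction on $\Gamma$. I would also rely on the standard exchange property of STLC, or equivalently treat STLC contexts as finite maps. This matters because variable names are assumed distinct, so shadowing causes no trouble. Apart from this, the proof is routine, and I expect no real obstacle.
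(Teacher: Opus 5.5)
Your proof is correct and takes the same route as the paper, which simply proves the lemma by induction on the typing derivation. Your case analysis fills in exactly those routine cases: variable declarations survive erasure, the $\blacktriangleright$, $\blacktriangleleft$ and classifier items and the relational premises are dropped, and $\lvert A[\gamma_1 \coloneqq \gamma_3]\rvert = \lvert A\rvert$ handles $\forall$-E. (You do not need exchange in the $\to$-I case: once you have $\lvert\Gamma,\Delta\rvert = \lvert\Gamma\rvert,\lvert\Delta\rvert$, the extended context translates to exactly $\lvert\Gamma\rvert, x : \lvert A_1\rvert$.)
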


\begin{proof}
  By induction on the derivation of $\Gamma  \vdash  \bmttnt{M}  \hasType  \bmttnt{A}$.
\end{proof}

\begin{lemma}\label{claim:tost-preserves-subst}
  $ \lvert  \bmttnt{M_{{\mathrm{1}}}} [ \binder{ \gamma_{{\mathrm{1}}} } \coloneqq \gamma_{{\mathrm{2}}} , \binder{ \bmttmv{x} }  \coloneqq   \bmttnt{M_{{\mathrm{2}}}}  ]  \rvert  =  \lvert \bmttnt{M_{{\mathrm{1}}}} \rvert   \bmttsym{[}  \bmttmv{x}  \coloneqq   \lvert \bmttnt{M_{{\mathrm{2}}}} \rvert   \bmttsym{]}$
\end{lemma}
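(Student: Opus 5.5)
We proceed by structural induction on $\bmttnt{M_{{\mathrm{1}}}}$, proving the equation $|\bmttnt{M_{{\mathrm{1}}}}[\gamma_1 \coloneqq \gamma_2, x \coloneqq \bmttnt{M_{{\mathrm{2}}}}]| = |\bmttnt{M_{{\mathrm{1}}}}|[x \coloneqq |\bmttnt{M_{{\mathrm{2}}}}|]$ directly from the clauses of variable substitution and of the translation $|\mathord{-}|$. No typing information is needed. The equation is purely syntactic, and we work up to $\alpha$-equivalence, as is implicit in the side conditions of the substitution definition. The key observation is that $|\mathord{-}|$ forgets all classifiers and all modal transition witnesses. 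Consequently the classifier component $\gamma_1 \coloneqq \gamma_2$ of the substitution has no effect on the image, and only the term component survives.

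As a preliminary auxiliary fact, we show by induction on $\bmttnt{A}$ that $|\bmttnt{A}[\gamma_1 \coloneqq \gamma_2]| = |\bmttnt{A}|$. This holds because the type translation erases the bound $\gamma$ of $\Box^{\succeq\gamma}$ and the binder of $\forall$, and classifier substitution does not change the shape of a type. This fact is needed for the type annotation in the $\lambda$-abstraction case.

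The cases of the main induction are as follows.
\begin{itemize}
\item \emph{Variable $y$.} If $y = x$, both sides are $|\bmttnt{M_{{\mathrm{2}}}}|$. Otherwise, both sides are $y$.
\item \emph{Abstraction $\lambda y^{\gamma}{:}\bmttnt{A}.\bmttnt{M}$.} We rename so that $y \notin \{x\} \cup \mathrm{FV}(\bmttnt{M_{{\mathrm{2}}}})$, and we check that $\mathrm{FV}(|\bmttnt{M_{{\mathrm{2}}}}|) \subseteq \mathrm{FV}(\bmttnt{M_{{\mathrm{2}}}})$ by a trivial induction, so that the STLC substitution also passes under the binder. We then use the auxiliary fact for the annotation $|\bmttnt{A}[\gamma_1 \coloneqq \gamma_2]| = |\bmttnt{A}|$ and the induction hypothesis for the body.
\item \emph{Application.} This is direct from the induction hypothesis.
\item \emph{The erased constructs.} These are $\mathbf{quo}$, $\mathbf{unq}_{\bmttnt{T}}$, classifier abstraction and classifier application. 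Substitution commutes with the constructor, possibly changing classifier annotations, but not witnesses, since variable substitution leaves $\bmttnt{T}$ untouched. The translation then drops the constructor, so both sides reduce to the induction hypothesis on the immediate subterm.
\end{itemize}

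The only potential obstacle is the binder bookkeeping in the $\lambda$ case. Freshness conditions must be arranged so that the capture-avoiding substitution in our calculus and the one in STLC agree. Classifier binders (in $\mathbf{quo}$ and classifier $\lambda$) are irrelevant after erasure, so their side conditions pose no difficulty. Everything else is routine.
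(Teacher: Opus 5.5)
Your proposal is correct and matches the paper's proof, which is simply structural induction on $M_1$. Your auxiliary fact that the type translation ignores classifier substitution, and your freshness bookkeeping in the $\lambda$ case, are exactly the routine details the paper leaves implicit.
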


\begin{proof}
  By induction on the structure of $\bmttnt{M_{{\mathrm{1}}}}$.
\end{proof}

\begin{lemma}\label{claim:tost-preserves-beta}
  If $ \bmttnt{M_{{\mathrm{1}}}}  \Rightarrow_{\beta}^{ \gamma }  \bmttnt{M_{{\mathrm{2}}}} $, then $ \lvert \bmttnt{M_{{\mathrm{1}}}} \rvert   \Rightarrow_{\beta}   \lvert \bmttnt{M_{{\mathrm{2}}}} \rvert $ or $ \lvert \bmttnt{M_{{\mathrm{1}}}} \rvert  =  \lvert \bmttnt{M_{{\mathrm{2}}}} \rvert $.
  In particular, if the redex is of the form $ \bmttsym{(}   \lambda \binder{ \bmttmv{x} }\has@{ \gamma } \bmttnt{A} \ldotp \bmttnt{M}   \bmttsym{)}   \bmttnt{N} $, then $ \lvert \bmttnt{M_{{\mathrm{1}}}} \rvert   \Rightarrow_{\beta}   \lvert \bmttnt{M_{{\mathrm{2}}}} \rvert $ and $ \lvert \bmttnt{M_{{\mathrm{1}}}} \rvert  =  \lvert \bmttnt{M_{{\mathrm{2}}}} \rvert $ otherwise.
\end{lemma}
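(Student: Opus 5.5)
The plan is induction on the derivation of $M_1 \Rightarrow_\beta^{\gamma} M_2$, which amounts to induction on the evaluation context $\mathcal{E}^{\gamma_1}_{[\gamma_2]}$ surrounding the contracted redex. First I would show that erasure commutes with context plugging. Define $\lvert\mathcal{E}\rvert$ by applying the term translation to $\mathcal{E}$ with $\lvert\Box\rvert = \Box$. Every clause of $\mathcal{E}$ is sent either to an STLC context clause or erased entirely:
\begin{itemize}
\item the clauses for $\lambda x$ and application go to the corresponding STLC clauses;
\item the clauses for $\mathbf{quo}$, $\mathbf{unq}$, classifier abstraction and classifier application are erased.
\end{itemize}
So $\lvert\mathcal{E}\rvert$ is an ordinary one-hole STLC context, and a routine induction on $\mathcal{E}$ gives $\lvert \mathcal{E}[N]\rvert = \lvert\mathcal{E}\rvert[\lvert N\rvert]$. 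Full STLC $\beta$-reduction is closed under arbitrary contexts. Hence if $\lvert N\rvert \Rightarrow_\beta \lvert N'\rvert$ then $\lvert\mathcal{E}[N]\rvert \Rightarrow_\beta \lvert\mathcal{E}[N']\rvert$, and if $\lvert N\rvert = \lvert N'\rvert$ the plugged erasures are equal. This reduces the claim to the three base redexes.

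The three base cases are as follows.
\begin{itemize}
\item \emph{Term $\beta$-redex.} Here $\lvert(\lambda x\has@{\gamma_3}A.\,M)\,N\rvert = (\lambda x{:}\lvert A\rvert.\lvert M\rvert)\,\lvert N\rvert$, which STLC-reduces in one step to $\lvert M\rvert[x\coloneqq\lvert N\rvert]$. By \Zcref{claim:tost-preserves-subst} this equals $\lvert M[\gamma_3\coloneqq\gamma_2, x\coloneqq N]\rvert$, which is exactly the erasure of the contractum, so this case gives a genuine step.
\item \emph{Modal redex.} Here $\lvert\mathbf{unq}_T\{\mathbf{quo}(\tau)\{M\}\}\rvert = \lvert M\rvert$. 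For the contractum I need an auxiliary fact: $\lvert M[\gamma_4\coloneqq\gamma_2,\tau\coloneqq T]\rvert = \lvert M\rvert$. This says that erasure ignores classifiers and witnesses, and follows by induction on $M$, since neither classifiers nor witnesses appear in $\lvert\cdot\rvert$ of any clause (type annotations are erased through $\lvert A\rvert$, which ignores classifiers). So the two sides are equal.
\item \emph{Classifier redex.} The analogous fact $\lvert M[\gamma_3\coloneqq\gamma_5]\rvert = \lvert M\rvert$ gives $\lvert(\lambda\gamma_3\within\gamma_4.M)\gamma_5\rvert = \lvert M\rvert = \lvert M[\gamma_3\coloneqq\gamma_5]\rvert$, so again the two sides are equal.
\end{itemize}

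The \emph{in particular} clause then follows from these cases, since a genuine step arises exactly from the term-level redex and equality from the other two. For the term-level case, equality cannot additionally hold, because STLC $\beta$-reduction is irreflexive on the relevant terms; this is only needed for the dichotomy as stated, and I would note it rather than rely on it.

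The main obstacle is bookkeeping in the auxiliary facts. The substitutions $[\gamma\coloneqq\gamma']$ and $[\gamma\coloneqq\gamma',\tau\coloneqq T]$ must be checked to be invisible under $\lvert\cdot\rvert$, including the type annotation on $\lambda$-binders, where $\lvert A[\gamma\coloneqq\gamma']\rvert = \lvert A\rvert$ needs its own small induction on types. I would state that type-level lemma first, then the term-level invariance, then the context-plugging lemma, and only then run the main induction.
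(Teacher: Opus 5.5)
Your proposal is correct and is essentially the paper's argument: induction on the evaluation context, with the ordinary $\beta$-redex mapped to an STLC step via \Zcref{claim:tost-preserves-subst}, the modal and classifier redexes erasing to equality, and the context cases closed by compatibility of STLC reduction. Your erased-context and substitution-invariance lemmas simply make explicit what the paper leaves implicit. The aside on irreflexivity is unnecessary: the statement does not claim the two alternatives are exclusive, and on raw terms such as a self-applied $\lambda x.\,x\,x$ exclusivity actually fails, so you are right not to rely on it.
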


\begin{proof}
  By induction on the structure of the evaluation context. In the base case, an ordinary \textbeta{}-redex maps to an STLC \textbeta{}-step by \Zcref{claim:tost-preserves-subst}, whereas the other two redexes erase to equality. The inductive cases follow by compatibility of STLC reduction.
\end{proof}

\zref[claim]{claim:SN}

\begin{proof}
  We argue by contradiction. Suppose that $\bmttnt{M}= \bmttnt{M_{{\mathrm{1}}}}$ begins an infinite reduction sequence
  \[
    \bmttnt{M_{{\mathrm{1}}}}  \Rightarrow_{\beta}^{ \gamma }  \bmttnt{M_{{\mathrm{2}}}}  \Rightarrow_{\beta}^{ \gamma }  \bmttnt{M_{{\mathrm{3}}}}  \Rightarrow_{\beta}^{ \gamma }  \cdots
  \]
  By \Zcref{claim:tost-preserves-typing}, $ \lvert \bmttnt{M_{{\mathrm{1}}}} \rvert $ is a well-typed term of the simply typed lambda calculus.

  By \Zcref{claim:tost-preserves-beta}, this sequence is mapped to
  \[
     \lvert \bmttnt{M_{{\mathrm{1}}}} \rvert   \Rightarrow_{\beta}^{?}   \lvert \bmttnt{M_{{\mathrm{2}}}} \rvert   \Rightarrow_{\beta}^{?}   \lvert \bmttnt{M_{{\mathrm{3}}}} \rvert   \Rightarrow_{\beta}^{?}  \cdots,
  \]
  where $ \Rightarrow_{\beta}^{?} $ denotes $( \Rightarrow_{\beta} ) \cup (=)$.
  We claim that this sequence contains infinitely many proper $ \Rightarrow_{\beta} $-steps.
  Indeed, suppose that it contains only finitely many such steps.
  Then there is some $\bmttnt{k}$ such that
  \[
     \lvert  { \bmttnt{M} }_{ \bmttnt{k} }  \rvert  =  \lvert  { \bmttnt{M} }_{ \bmttnt{k}  \bmttsym{+}  1 }  \rvert  =  \lvert  { \bmttnt{M} }_{ \bmttnt{k}  \bmttsym{+}  2 }  \rvert  = \cdots.
  \]
  Hence every step in the tail
  \[
     { \bmttnt{M} }_{ \bmttnt{k} }   \Rightarrow_{\beta}^{ \gamma }   { \bmttnt{M} }_{ \bmttnt{k}  \bmttsym{+}  1 }   \Rightarrow_{\beta}^{ \gamma }   { \bmttnt{M} }_{ \bmttnt{k}  \bmttsym{+}  2 }   \Rightarrow_{\beta}^{ \gamma }  \cdots
  \]
  contracts a redex of the form $ \bmttsym{(}   \lambda \binder{ \gamma_{{\mathrm{1}}} }\within  \gamma_{{\mathrm{2}}} . \bmttnt{N}   \bmttsym{)} \gamma_{{\mathrm{3}}} $ or
  $ \mathbf{unq} _{ \bmttnt{T} }\lbrace^{ \gamma_{{\mathrm{1}}} }  \mathbf{quo} (\binder{ \tau })\lbrace^{\binder{ \gamma_{{\mathrm{2}}} } \within \gamma_{{\mathrm{3}}} }  \bmttnt{N} \rbrace  \rbrace $.
  Each such contraction strictly decreases the number of non-STLC term constructors, where classifier and modal-transition annotations are ignored. Indeed, classifier \textbeta{}-reduction removes a classifier abstraction and application, while quote--unquote reduction removes a quotation and an unquotation.

  Therefore the translated sequence contains infinitely many proper $ \Rightarrow_{\beta} $-steps.
  By selecting those steps, we obtain an infinite $ \Rightarrow_{\beta} $-reduction sequence in STLC, contradicting the strong normalization of STLC~\cite{book/SorensenU2006}.
\end{proof}

\zref[claim]{claim:confluence}

\begin{proof}
  From Newman's lemma~\cite{book/SorensenU2006} and \Zcref{claim:SN}, we need
  only to prove weak confluence.
  Weak confluence follows by a standard analysis of two competing one-step reductions. The reduction rules are left-linear and have no critical overlaps, while reductions in disjoint subterms commute.
\end{proof}

\begin{lemma}\label{claim:neutral-terms}
  Suppose\/ $Γ ⊢ M \has A$.
  If $M$ is $\beta$-normal and neutral,
  then there exists some $\var(x : B)@{γ} \in Γ$ such that
  $x \in \FV(M)$ and
  $A$ is a subformula of~$B$.
\end{lemma}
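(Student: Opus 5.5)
We argue by induction on the structure of $M$ (equivalently, on the typing derivation of $Γ ⊢ M \has A$), carrying a \emph{strengthened} claim so that the induction goes through the elimination spine. A neutral term has one of four shapes: a variable $x$, an application $M_1\,M_2$, an unquotation $\mathbf{unq}_T\{^{γ} M'\}$, or a classifier application $M'\,γ$. In every non-variable case there is a distinguished \emph{head} subterm ($M_1$ or $M'$), which must itself be neutral. This is because $M$ is $\beta$-normal. If the head were canonical, then $M$ would be a $\beta$-redex:
\begin{itemize}
\item a $\lambda$-abstraction applied to an argument is an ordinary $\beta$-redex;
\item $\mathbf{unq}$ of a $\mathbf{quo}$ is a quote--unquote redex;
\item a classifier abstraction applied to a classifier is a classifier redex.
\end{itemize}
Each of these is contractible by \Zcref{def:beta} at the position of the hole, since evaluation contexts reach every subterm. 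Typing fixes which introduction form the head could be, so these are the only cases to rule out. Subterms of normal terms are normal, so the head is normal and neutral, and the induction hypothesis applies to it.

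\textbf{Base case.} If $M = x$, the rule \ref{rule:type-var} gives $\var(x : A)@{γ'} ∈ Γ$ with $x ∈ \FV(M)$, and $A$ is trivially a subformula of itself.

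\textbf{Inductive steps.} In each case the head's type $A'$ is obtained from a hypothesis $\var(x : B)@{δ} ∈ Γ'$ by the induction hypothesis, where $A'$ is a subformula of $B$.
\begin{itemize}
\item For $M_1\,M_2$, the head has type $A' = A_1 \to A$, so $A$ is a subformula of $A_1 \to A$, hence of $B$.
\item For $\mathbf{unq}$, the head has type $\Box^{⪰γ_3} A$, whose immediate subformula is $A$.
\item For $M'\,γ_3$, the head has type $∀\cls(γ_1 i> γ_2). A_0$ and $A = A_0\w[γ_1/γ_3]$. This is a subformula of the head's type precisely because the paper's notion of subformula allows classifiers to be renamed.
\end{itemize}
Transitivity of the subformula relation, taken up to classifier renaming (a composite of renamings being again a renaming), then gives that $A$ is a subformula of $B$. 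Free variables of the head are free variables of $M$, since none of these constructors binds term variables, so $x ∈ \FV(M)$.

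\textbf{Main obstacle: the $\mathbf{unq}$ case.} The premise of \ref{rule:type-bm-e} is typed in the extended context $Γ, \mathord{\blacktriangleleft}_T^{γ_2}$ rather than $Γ$. The induction hypothesis therefore yields a hypothesis in $Γ, \mathord{\blacktriangleleft}_T^{γ_2}$, which must be transported back to $Γ$. Since $\mathord{\blacktriangleleft}$ declares no variables, $\Gamma$ and $Γ, \mathord{\blacktriangleleft}_T^{γ_2}$ have the same variable declarations, with the same types and labels. So we state the induction over arbitrary contexts and note that the witnessing declaration $\var(x : B)@{δ}$ literally belongs to $Γ$. The same argument applies to the typing of $M_2$ in $\to$-E, although that case is not needed here since $M_2$ is not the head. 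No variable hypothesis is ever added along the spine, because only introduction rules (\ref{rule:type-to-i}) extend the context with variables, and the spine consists of elimination rules. This is why the claim is stated for neutral terms.
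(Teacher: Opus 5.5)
Your proposal is correct and follows the paper's own proof. Both argue by induction on the typing derivation over the four neutral forms (variable, $\to$-E, $\Box$-E, $\forall$-E), use $\beta$-normality to force the head to be neutral, and inherit the witnessing hypothesis from the induction hypothesis. You make two steps explicit that the paper uses silently: the witness obtained in $\Gamma, \mathord{\blacktriangleleft}_T^{\gamma_2}$ already lies in $\Gamma$ because $\mathord{\blacktriangleleft}$ declares no variables, and the subformula relation is transitive up to classifier renaming.
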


\begin{proof}
  By induction on derivation.
  For the last rule of the derivation, there are four possibilities:
  \begin{case}[\ref{rule:type-var}]
    Obvious.
  \end{case}
  \begin{case}[\ref{rule:type-to-e}]
    Assume
    \begin{prooftree*}
      \hypo{Γ ⊢ N \has B \to C}
      \hypo{Γ ⊢ P \has B}
      \infer2{Γ ⊢ N P \has C}
    \end{prooftree*}
    Since $N P$ is $\beta$-normal, $N$ is $\beta$-normal and neutral.
    By the IH there exists some $\var(x : D)@{δ} \in Γ$ such that
    $x \in \FV(N)$ and
    $B \to C$ is a subformula of~$D$.
    Then $x \in \FV(N P)$ and $C$ is also a subformula of~$D$, hence
    $\var(x : D)@{δ}$ meets the condition.
  \end{case}
  \begin{case}[\ref{rule:type-bm-e}]
    Assume
    \begin{prooftree*}
      \hypo{Γ, ^{γ} ⊢ N \has □^{⪰ γ_1} B}
      \hypo{Γ ⊢ γ_1 ⪯ \pos(Γ)}
      \infer2{Γ ⊢ \unq@{γ}{N} \has B}
    \end{prooftree*}
    Since $\unq@{γ}{N}$ is $\beta$-normal, $N$ is $\beta$-normal and neutral.
    By the IH there exists some $\var(x : D)@{δ} \in Γ, ^{γ}$ such that
    $x \in \FV(N)$ and
    $□^{⪰ γ_1} B$ is a subformula of~$D$.
    Then $x \in \FV(\unq@{γ}{N})$ and 
    $B$ is also a subformula of~$D$, hence
    $\var(x : D)@{δ}$ meets the condition.
  \end{case}
  \begin{case}[\ref{rule:type-polycls-e}]
    Assume
    \begin{prooftree*}
      \hypo{Γ ⊢ N \has ∀\cls(γ_2 i> γ_1). B}
      \hypo{Γ ⊢ γ_1 ⪯ γ}
      \infer2{Γ ⊢ Nγ \has B\w[γ_2 / γ]}
    \end{prooftree*}
    Since $Nγ$ is $\beta$-normal, $N$ is $\beta$-normal and neutral.
    By the IH there exists some $\var(x : D)@{δ} \in Γ$ such that
    $x \in \FV(N)$ and
    $∀\cls(γ_2 i> γ_1). B$ is a subformula of~$D$.
    Then $x \in \FV(Nγ)$ and 
    $B\w[γ_2 / γ]$ is also a subformula of~$D$, hence
    $\var(x : D)@{δ}$ meets the condition.
    Notice that classifier renaming~$\w[γ_2 / γ]$ here
    is allowed in the definition of subformula.
    \qedhere
  \end{case}
\end{proof}

\zref[claim]{claim:canonicity}

\begin{proof}
  If not canonical,
  by \Zcref{claim:neutral-terms} it contains a free variable, which
  contradicts the assumption.
\end{proof}

\zref[claim]{claim:subformula-property}

\begin{proof}
  By induction on derivation.
  Since the term $M$ itself clearly
  satisfies~\zcref{item:subformula-property:intro},
  it suffices to check condition for proper subterms.
  \begin{case}[\ref{rule:type-var}]
    No proper subterm exists.
  \end{case}
  \begin{case}[\ref{rule:type-to-i}]
    Assume
    \begin{prooftree*}
      \hypo{Γ, \var(y : B)@{γ} ⊢ N \has C}
      \hypo{γ \notin \FC(C)}
      \infer2{Γ ⊢ λ\var(y : B)@{γ}. N \has B \to C}
    \end{prooftree*}
    Since all proper subterms of $λ\var(y : B)@{γ}. N$ are a subterm of $N$,
    by the IH there are three possibilities for their types:
    \begin{subcase}[a]
      A subformula of $C$.
      Then it is also a subformula of $B \to C$, so
      \zcref{item:subformula-property:intro} holds.
    \end{subcase}
    \begin{subcase}[b]
      A subformula of $B$.
      This is also the case \zcref{item:subformula-property:intro}.
    \end{subcase}
    \begin{subcase}[c]
      A subformula of $D$ for some $\var(x : D)@{δ} \in Γ$.
      Yields \zcref{item:subformula-property:elim}.
    \end{subcase}
  \end{case}
  \begin{case}[\ref{rule:type-to-e}]\label{case:subformula-property:to-elim}
    Assume
    \begin{prooftree*}
      \hypo{Γ ⊢ N \has B \to C}
      \hypo{Γ ⊢ P \has B}
      \infer2{Γ ⊢ N P \has C}
    \end{prooftree*}
    Then $N$ is $\beta$-normal and neutral, so
    applying \Zcref{claim:neutral-terms} to $N$,
    we see that there exists some $\var(x : D)@{δ} \in Γ$
    such that $B \to C$ is a subformula of $D$.
    Together with the IH,
    we see that \zcref{item:subformula-property:elim} holds
    for any proper subterm of $N P$.
  \end{case}
  \begin{case}[\ref{rule:type-bm-i}]
    Follows from the IH\@.
  \end{case}
  \begin{case}[\ref{rule:type-bm-e}]
    Similar to the $\to$-E case.
  \end{case}
  \begin{case}[\ref{rule:type-polycls-i}]
    Follows from the IH\@.
  \end{case}
  \begin{case}[\ref{rule:type-polycls-e}]
    Similar to the $\to$-E case.
    \qedhere
  \end{case}
\end{proof}

\section{Full Proofs for \Zcref{sec:staging}}
\RestateClaim{claim:staged-reduction-is-beta-reduction}
\begin{proof}
  By induction on the structure of evaluation contexts.
\end{proof}

\RestateClaim{claim:staged-reduction-preservation}
\begin{proof}
  This follows directly from \Zcref{claim:subjectreduction, claim:staged-reduction-is-beta-reduction}.
\end{proof}

\begin{definition}
  $ { \Gamma }^{ \bmttnt{T_{{\mathrm{1}}}} }_{[  \bmttnt{T_{{\mathrm{2}}}}  ]} $ is a staged typing context which starts at the stage $\bmttnt{T_{{\mathrm{1}}}}$ and ends at the stage $\bmttnt{T_{{\mathrm{2}}}}$.
  \par
  \smallskip
  \begin{tabular}{lcl}
    $ { \Gamma }^{ \bmttnt{T_{{\mathrm{1}}}} }_{[  \bmttnt{T_{{\mathrm{2}}}}  ]} $ & $\Coloneqq$ & $ \varepsilon  \text{\ (if $\bmttnt{T_{{\mathrm{1}}}} = \bmttnt{T_{{\mathrm{2}}}}$)} \mid  { \Gamma }^{ \bmttnt{T_{{\mathrm{1}}}} }_{[  \bmttnt{T_{{\mathrm{2}}}}  ]}   \bmttsym{,}   \binder{ \bmttmv{x} }\has@{ \gamma } \bmttnt{A}  \mid  { \Gamma }^{ \bmttnt{T_{{\mathrm{1}}}} }_{[  \bmttnt{T_{{\mathrm{3}}}}  ]}   \bmttsym{,}   \tau : \mathord{\blacktriangleright} ^{\binder{ \gamma_{{\mathrm{1}}} } \within \gamma_{{\mathrm{2}}} } \text{\ (if $\bmttnt{T_{{\mathrm{2}}}} = \bmttnt{T_{{\mathrm{3}}}}  \bmttsym{,}  \tau$)}$ \\
    &$\mid$& $ { \Gamma }^{ \bmttnt{T_{{\mathrm{1}}}} }_{[   \bmttnt{T_{{\mathrm{2}}}}  +  \bmttnt{T_{{\mathrm{3}}}}   ]}   \bmttsym{,}   \mathord{\blacktriangleleft} _{ \bmttnt{T_{{\mathrm{3}}}} }^{ \gamma }  \mid  { \Gamma }^{ \bmttnt{T_{{\mathrm{1}}}} }_{[  \bmttnt{T_{{\mathrm{2}}}}  ]}   \bmttsym{,}   \binder{ \gamma_{{\mathrm{1}}} } \within \gamma_{{\mathrm{2}}} $
  \end{tabular}
\end{definition}

\begin{definition}
  $ \mathbf{Dom}_{\token{V} }^{ \bmttnt{T_{{\mathrm{1}}}} }(  { \Gamma }^{ \bmttnt{T_{{\mathrm{2}}}} }_{[  \bmttnt{T_{{\mathrm{3}}}}  ]}  ) $ represents the set of variables that appear at the stage $\bmttnt{T_{{\mathrm{1}}}}$ in the staged typing context $ { \Gamma }^{ \bmttnt{T_{{\mathrm{2}}}} }_{[  \bmttnt{T_{{\mathrm{3}}}}  ]} $.
  \begin{align*}
     \mathbf{Dom}_{\token{V} }^{ \bmttnt{T_{{\mathrm{1}}}} }(  \varepsilon  )  &=  \emptyset  \\
     \mathbf{Dom}_{\token{V} }^{ \bmttnt{T_{{\mathrm{1}}}} }(  { \Gamma }^{ \bmttnt{T_{{\mathrm{2}}}} }_{[  \bmttnt{T_{{\mathrm{3}}}}  ]}   \bmttsym{,}   \binder{ \bmttmv{x} }\has@{ \gamma } \bmttnt{A}  )  &= \begin{cases}
        \mathbf{Dom}_{\token{V} }^{ \bmttnt{T_{{\mathrm{1}}}} }(  { \Gamma }^{ \bmttnt{T_{{\mathrm{2}}}} }_{[  \bmttnt{T_{{\mathrm{3}}}}  ]}  )  \cup  \{ \bmttmv{x} \}   \text{\ if $\bmttnt{T_{{\mathrm{1}}}} = \bmttnt{T_{{\mathrm{3}}}}$} \\
       \mathbf{Dom}_{\token{V} }^{ \bmttnt{T_{{\mathrm{1}}}} }(  { \Gamma }^{ \bmttnt{T_{{\mathrm{2}}}} }_{[  \bmttnt{T_{{\mathrm{3}}}}  ]}  )  \text{\ otherwise}
    \end{cases} \\
     \mathbf{Dom}_{\token{V} }^{ \bmttnt{T_{{\mathrm{1}}}} }(  { \Gamma }^{ \bmttnt{T_{{\mathrm{2}}}} }_{[  \bmttnt{T_{{\mathrm{3}}}}  ]}   \bmttsym{,}   \tau : \mathord{\blacktriangleright} ^{\binder{ \gamma_{{\mathrm{1}}} } \within \gamma_{{\mathrm{2}}} }  )  &=  \mathbf{Dom}_{\token{V} }^{ \bmttnt{T_{{\mathrm{1}}}} }(  { \Gamma }^{ \bmttnt{T_{{\mathrm{2}}}} }_{[  \bmttnt{T_{{\mathrm{3}}}}  ]}  )  \\
     \mathbf{Dom}_{\token{V} }^{ \bmttnt{T_{{\mathrm{1}}}} }(  { \Gamma }^{ \bmttnt{T_{{\mathrm{2}}}} }_{[  \bmttnt{T_{{\mathrm{3}}}}  ]}   \bmttsym{,}   \mathord{\blacktriangleleft} _{ \bmttnt{T_{{\mathrm{4}}}} }^{ \gamma }  )  &=  \mathbf{Dom}_{\token{V} }^{ \bmttnt{T_{{\mathrm{1}}}} }(  { \Gamma }^{ \bmttnt{T_{{\mathrm{2}}}} }_{[  \bmttnt{T_{{\mathrm{3}}}}  ]}  )  \\
     \mathbf{Dom}_{\token{V} }^{ \bmttnt{T_{{\mathrm{1}}}} }(  { \Gamma }^{ \bmttnt{T_{{\mathrm{2}}}} }_{[  \bmttnt{T_{{\mathrm{3}}}}  ]}   \bmttsym{,}   \binder{ \gamma_{{\mathrm{1}}} } \within \gamma_{{\mathrm{2}}}  )  &=  \mathbf{Dom}_{\token{V} }^{ \bmttnt{T_{{\mathrm{1}}}} }(  { \Gamma }^{ \bmttnt{T_{{\mathrm{2}}}} }_{[  \bmttnt{T_{{\mathrm{3}}}}  ]}  ) 
  \end{align*}
\end{definition}

\begin{lemma}\label{claim:init-cls-inversion}
  \begin{enumerate}
    \item\label{item:init-cls-inversion:scope} If\/ $\Gamma  \vdash  \gamma  \preceq   \mathord{\boldsymbol{!} } $, then $\gamma =  \exclam $.
    \item\label{item:init-cls-inversion:stage} If\/ $ \Gamma \vdash \bmttnt{T} : \gamma \sqsubseteq  \mathord{\boldsymbol{!} }  $, then $\gamma =  \exclam $.
  \end{enumerate}
\end{lemma}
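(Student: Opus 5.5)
Both parts go by induction on the derivation of the judgment. We do the $\preceq$ part first and then use it for the $\sqsubseteq$ part. The key fact is that $\exclam$ is never the classifier \emph{introduced} by a context item. Indeed, $\exclam \in \mathbf{Dom}_{\mathrm{C}}(\varepsilon)$, so $\exclam \in \mathbf{Dom}_{\mathrm{C}}(\Gamma)$ for every $\Gamma$. Each item that binds a new classifier (a hypothesis $x :^{\gamma} A$, an opening $\tau : \blacktriangleright$, or a classifier declaration) requires that classifier to lie outside $\mathbf{Dom}_{\mathrm{C}}$ of the preceding context. This is the freshness side condition of \ref{rule:wf-ctx-hyp}, \ref{rule:wf-ctx-open} and \ref{rule:wf-ctx-cls}. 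Since well-formedness of $\Gamma$ is assumed for these judgments (cf.\ \Zcref{fig:derive-trans}), no binding item in $\Gamma$ binds $\exclam$.

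For part (1), suppose the last rule derives $\Gamma \vdash \gamma \preceq \exclam$.
\begin{itemize}
\item \ref{rule:itrans-refl}: immediately $\gamma = \exclam$.
\item \ref{rule:itrans-hyp}, \ref{rule:itrans-cls} and \ref{rule:itrans-open}: the right-hand classifier is the one freshly bound by the item ($x :^{\gamma_2} A$, $\gamma_1 \succeq \gamma_2$, or $\blacktriangleright^{\gamma_1 \succeq \gamma_2}$). By the observation above it cannot be $\exclam$, so these cases are vacuous.
\item \ref{rule:gtrans-trans}: we have $\Gamma \vdash \gamma \preceq \gamma'$ and $\Gamma \vdash \gamma' \preceq \exclam$. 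The IH on the second premise gives $\gamma' = \exclam$. The IH on the first, now with right-hand side $\exclam$, gives $\gamma = \exclam$.
\end{itemize}

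For part (2), we induct on $\Gamma \vdash T : \gamma \sqsubseteq \exclam$.
\begin{itemize}
\item \ref{rule:calc-mtrans-lift}: the premise is $\Gamma \vdash \gamma \preceq \exclam$, and part (1) gives $\gamma = \exclam$.
\item \ref{rule:calc-mtrans-open}: the target is the classifier freshly bound by $\tau : \blacktriangleright^{\gamma_2 \succeq \gamma_3}$, which cannot be $\exclam$, so the case is vacuous.
\item \ref{rule:calc-mtrans-trans}: apply the IH twice, exactly as in the transitivity case of part (1).
\end{itemize}

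The only point needing care is the freshness argument. We must check that well-formedness of $\Gamma$ propagates to the subderivations, which holds since the rules never change the context. We must also check that the fresh classifier in the conclusion of each base rule is literally the bound one, and not something renamed.
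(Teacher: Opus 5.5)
Your proof is correct and follows the paper's argument: induction on the derivation, applying the induction hypothesis twice in the transitivity cases and using part (1) in the lift case. The paper silently omits the base rules whose right-hand classifier is freshly bound (the hypothesis, classifier-declaration and $\blacktriangleright$ rules). Your freshness argument from context well-formedness, using that $\mathord{\boldsymbol{!}}$ already lies in the domain of the empty context, makes explicit the step the paper leaves implicit.
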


\begin{proof}
  \begin{description}
    \item[\Zcref{item:init-cls-inversion:scope}] By induction on the derivation of $\Gamma  \vdash  \gamma  \preceq   \mathord{\boldsymbol{!} } $.
      \begin{description}
        \item[Case \textmd{\Zcref{rule:itrans-refl}}:] The rule requires that $\gamma =  \exclam $.
        \item[Case \textmd{$ \preceq $-Trans}:] We have $\Gamma  \vdash  \gamma  \preceq  \gamma'$ and $\Gamma  \vdash  \gamma'  \preceq   \mathord{\boldsymbol{!} } $ for some $\gamma'$. By induction hypothesis, $\gamma' =  \exclam $. Hence, $\gamma =  \exclam $ by induction hypothesis.
      \end{description}
    \item[\Zcref{item:init-cls-inversion:stage}] By induction on the derivation of $ \Gamma \vdash \bmttnt{T} : \gamma \sqsubseteq  \mathord{\boldsymbol{!} }  $.
      \begin{description}
        \item[Case \textmd{\Zcref{rule:mtrans-lift}}:] We have $\Gamma  \vdash  \gamma  \preceq   \mathord{\boldsymbol{!} } $. By \Zcref{item:init-cls-inversion:scope}, we have $\gamma =  \exclam $.
        \item[Case \textmd{$ \sqsubseteq $-Trans}:] We have $ \Gamma \vdash \bmttnt{T_{{\mathrm{1}}}} : \gamma \sqsubseteq \gamma' $ and $ \Gamma \vdash \bmttnt{T_{{\mathrm{2}}}} : \gamma' \sqsubseteq  \mathord{\boldsymbol{!} }  $ for some $\gamma'$, $\bmttnt{T_{{\mathrm{1}}}}$ and $\bmttnt{T_{{\mathrm{2}}}}$. By induction hypothesis, $\gamma' =  \exclam $. Hence, $\gamma =  \exclam $ by induction hypothesis.
      \end{description}
  \end{description}
\end{proof}

\begin{lemma}\label{claim:top-level-context-position}
  If\/ $\vdash   { \Gamma }^{  \varepsilon  }_{[   \varepsilon   ]}   \hasType \, \bmttkw{ctx}$ and $ \mathbf{Dom}_{\token{V} }^{  \varepsilon  }(  { \Gamma }^{  \varepsilon  }_{[   \varepsilon   ]}  )  =  \emptyset $, then $ \mathrm{pos} (  { \Gamma }^{  \varepsilon  }_{[   \varepsilon   ]}  )  =  \exclam $.
\end{lemma}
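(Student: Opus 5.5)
The proof goes by induction on the grammar of staged contexts. The statement is too weak to induct on directly: the prefixes of ${ \Gamma }^{ \varepsilon }_{[ \varepsilon ]}$ may end at arbitrary stages $\bmttnt{T}$. I would therefore prove a strengthened invariant for every well-formed ${ \Gamma }^{ \varepsilon }_{[ \bmttnt{T} ]}$ with $\mathbf{Dom}_{\token{V}}^{ \varepsilon }({ \Gamma }^{ \varepsilon }_{[ \bmttnt{T} ]}) = \emptyset$, consisting of two parts.

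(a) If $\bmttnt{T} = \varepsilon$, then $\mathrm{pos}(\Gamma) = \exclam$.

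(b) Write $\bmttnt{T} = \tau_1, \ldots, \tau_n$. Each $\tau_i$ is declared in $\Gamma$ by an item $\tau_i : \mathord{\blacktriangleright}^{\binder{\delta_i} \within \_}$, and the item for $\tau_1$ was added at a prefix of stage $\varepsilon$ whose position is $\exclam$.

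Part (b) follows from (a) applied to that prefix. The structural point is that, at stage $\varepsilon$ with no variables, only a $\mathord{\blacktriangleleft}$ item can move the position. Classifier items $\binder{\gamma_1} \within \gamma_2$ leave the position unchanged, and $\mathord{\blacktriangleright}$ items leave stage $\varepsilon$.

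The cases are as follows. Empty and classifier items are immediate from the definition of $\mathrm{pos}$. A variable item cannot occur at stage $\varepsilon$, by the hypothesis on $\mathbf{Dom}_{\token{V}}^{ \varepsilon }$, and at later stages it changes neither (a) nor (b). A $\mathord{\blacktriangleright}$ item extends the stage by $\tau$ and records its source position, which gives (b); when the stage was $\varepsilon$, that source is $\exclam$ by the IH. The real case is ${ \Gamma }^{ \varepsilon }_{[ \bmttnt{T_2} + \bmttnt{T_3} ]}, \mathord{\blacktriangleleft}^{\gamma}_{\bmttnt{T_3}}$. Well-formedness (\ref{rule:calc-ctx-wf-shut}) gives $\Gamma \vdash \bmttnt{T_3} : \gamma \sqsubseteq \mathrm{pos}(\Gamma)$, and we must show $\gamma$ is the position of the prefix where the first atom of $\bmttnt{T_3}$ was opened. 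In particular, when $\bmttnt{T_2} = \varepsilon$ we need $\gamma = \exclam$.

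For this I would prove an auxiliary inversion lemma, by induction on derivations of the witnessed judgment using the three rules of \Zcref{fig:termassignment}. The claim is: if $\Gamma \vdash \bmttnt{T} : \gamma \sqsubseteq \delta$ with $\bmttnt{T} = \tau, \bmttnt{T}'$, and $\tau$ is introduced by $\Gamma_1^{\position{\gamma_1}}, \tau : \mathord{\blacktriangleright}^{\cdots}$, then $\Gamma \vdash \gamma \preceq \gamma_1$. The cases are:
\begin{itemize}
\item For \ref{rule:calc-mtrans-open}, the source is exactly $\gamma_1$ and reflexivity applies.
\item For \ref{rule:calc-mtrans-trans} with $\bmttnt{T_1} = \varepsilon$, the left premise is obtained only by \ref{rule:calc-mtrans-lift}, or by \ref{rule:calc-mtrans-trans} with two empty witnesses. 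A sub-induction shows that an empty witness always yields a $\preceq$ judgment, and transitivity of $\preceq$ then composes it with the IH.
\item For \ref{rule:calc-mtrans-trans} with $\bmttnt{T_1}$ nonempty, the IH is applied to the left premise.
\end{itemize}

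Applying this with $\bmttnt{T_3}$, whose first atom was opened from position $\exclam$ by (b), yields $\Gamma \vdash \gamma \preceq \exclam$. Then \Zcref{claim:init-cls-inversion} (item~\ref{item:init-cls-inversion:scope}) gives $\gamma = \exclam$, which establishes (a). When $\bmttnt{T_2} \neq \varepsilon$, invariant (b) is simply inherited from the prefix, since the $\tau$'s of $\bmttnt{T_2}$ are those declared earlier.

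I expect the main obstacle to be the bookkeeping of (b). One must identify which $\mathord{\blacktriangleright}$ item declares the first atom of the remaining stage after the context has popped back through several $\mathord{\blacktriangleleft}$ items. This relies on witnesses being stacks that $\mathord{\blacktriangleleft}$ pops from the right, which is how the staged-context grammar is set up. The inversion lemma itself should be routine.
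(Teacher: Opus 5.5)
Your proof is correct, but it is organized differently from the paper's.

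\textbf{How the paper argues.} The paper does not strengthen the statement. It inducts on the structure of $\Gamma^{\varepsilon}_{[\varepsilon]}$. In the case of a final $\mathord{\blacktriangleleft}^{\gamma_4}_{\tau_1, T}$, it decomposes the context as $\Gamma_1, \tau_1 : \mathord{\blacktriangleright}, \Gamma_2$, where $\Gamma_1$ is at stage $\varepsilon$. It then applies the induction hypothesis to this deeper prefix $\Gamma_1$, which is a sub-derivation in the staged-context grammar, to get $\mathrm{pos}(\Gamma_1) = \exclam$. Next it asserts, without proof, that inverting the witness derivation at $\tau_1$ yields an empty-witness judgment $\varepsilon : \gamma_4 \sqsubseteq \exclam$. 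It finishes with the stage item of \Zcref{claim:init-cls-inversion}.

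\textbf{How your route compares.} Your claim that the statement is too weak to induct on directly holds only for one-step induction. The paper avoids the problem by reaching back to the prefix where $\tau_1$ was opened, and that is exactly the information your invariant (b) carries forward. Your inversion lemma, with its sub-lemma that empty witnesses arise only from $\preceq$, is the witness-factoring step the paper states in a single line. You conclude in $\preceq$ and use the scope item rather than the stage item, which is equivalent given your sub-lemma. The paper's route is shorter. Yours makes explicit both the stack discipline of stages and the factoring of witness derivations, which the paper uses tacitly.

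\textbf{One subcase to add.} When $T_2 = T_3 = \varepsilon$, i.e.\ a $\mathord{\blacktriangleleft}_{\varepsilon}$ at stage $\varepsilon$, there is no first atom, so your inversion lemma does not apply. This case still follows at once from (a) for the prefix together with the stage item of \Zcref{claim:init-cls-inversion}.
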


\begin{proof}
  By induction on the structure of $ { \Gamma }^{  \varepsilon  }_{[   \varepsilon   ]} $.

  \begin{description}
    \item[Case \textmd{$ { \Gamma }^{  \varepsilon  }_{[   \varepsilon   ]}  =  \varepsilon $}:]
      Immediate from the definition of $ \mathrm{pos} (  \varepsilon  ) $.

    \item[Case \textmd{$ { \Gamma }^{  \varepsilon  }_{[   \varepsilon   ]}  =  { \Gamma_{{\mathrm{1}}} }^{  \varepsilon  }_{[   \varepsilon   ]}   \bmttsym{,}   \binder{ \bmttmv{x} }\has@{ \gamma_{{\mathrm{1}}} } \bmttnt{A} $}:]
      This case is impossible since $ \mathbf{Dom}_{\token{V} }^{  \varepsilon  }(  { \Gamma }^{  \varepsilon  }_{[   \varepsilon   ]}  )  =  \emptyset $.

    \item[Case \textmd{$ { \Gamma }^{  \varepsilon  }_{[   \varepsilon   ]}  =  { \Gamma_{{\mathrm{1}}} }^{  \varepsilon  }_{[   \varepsilon   ]}   \bmttsym{,}   \binder{ \gamma_{{\mathrm{1}}} } \within \gamma_{{\mathrm{2}}} $}:]
      Since classifier declarations do not change the current classifier,
      the result follows immediately from the induction hypothesis for
      $ { \Gamma_{{\mathrm{1}}} }^{  \varepsilon  }_{[   \varepsilon   ]} $.

    \item[Case \textmd{$ { \Gamma }^{  \varepsilon  }_{[   \varepsilon   ]}  =  { \Gamma_{{\mathrm{1}}} }^{  \varepsilon   \position{ \gamma_{{\mathrm{1}}} } }_{[   \varepsilon   ]}   \bmttsym{,}   \mathord{\blacktriangleleft} _{  \varepsilon  }^{ \gamma_{{\mathrm{2}}} } $}:]
      Inverting well-formedness gives
      $  { \Gamma_{{\mathrm{1}}} }^{  \varepsilon  }_{[   \varepsilon   ]}  \vdash  \varepsilon  : \gamma_{{\mathrm{2}}} \sqsubseteq \gamma_{{\mathrm{1}}} $.
      The induction hypothesis gives $\gamma_{{\mathrm{1}}} =  \exclam $; hence
      $  { \Gamma_{{\mathrm{1}}} }^{  \varepsilon  }_{[   \varepsilon   ]}  \vdash  \varepsilon  : \gamma_{{\mathrm{2}}} \sqsubseteq  \mathord{\boldsymbol{!} }  $.
      By \Zcref{item:init-cls-inversion:stage}, $\gamma_{{\mathrm{2}}} =  \exclam $.
      Hence, $ \mathrm{pos} (  { \Gamma }^{  \varepsilon  }_{[   \varepsilon   ]}  )  = \gamma_{{\mathrm{2}}} =  \exclam $.

    \item[Case \textmd{$ { \Gamma }^{  \varepsilon  }_{[   \varepsilon   ]}  =  {  { \Gamma_{{\mathrm{1}}} }^{  \varepsilon  }_{[   \varepsilon   ]}   \bmttsym{,}   \tau_{{\mathrm{1}}} : \mathord{\blacktriangleright} ^{\binder{ \gamma_{{\mathrm{1}}} } \within \gamma_{{\mathrm{2}}} }   \bmttsym{,}  \Gamma_{{\mathrm{2}}} }^{  \varepsilon   \position{ \gamma_{{\mathrm{3}}} } }_{[  \bmttnt{T}  ]}   \bmttsym{,}   \mathord{\blacktriangleleft} _{ \tau_{{\mathrm{1}}}  \bmttsym{,}  \bmttnt{T} }^{ \gamma_{{\mathrm{4}}} } $}:]
      Inverting well-formedness gives
      \[
          {  { \Gamma_{{\mathrm{1}}} }^{  \varepsilon  }_{[   \varepsilon   ]}   \bmttsym{,}   \tau_{{\mathrm{1}}} : \mathord{\blacktriangleright} ^{\binder{ \gamma_{{\mathrm{1}}} } \within \gamma_{{\mathrm{2}}} }   \bmttsym{,}  \Gamma_{{\mathrm{2}}} }^{  \varepsilon   \position{ \gamma_{{\mathrm{3}}} } }_{[  \bmttnt{T}  ]}  \vdash \tau_{{\mathrm{1}}}  \bmttsym{,}  \bmttnt{T} : \gamma_{{\mathrm{4}}} \sqsubseteq \gamma_{{\mathrm{3}}} .
      \]
      The induction hypothesis for $ { \Gamma_{{\mathrm{1}}} }^{  \varepsilon  }_{[   \varepsilon   ]} $ gives
      $ \mathrm{pos} (  { \Gamma_{{\mathrm{1}}} }^{  \varepsilon  }_{[   \varepsilon   ]}  )  =  \exclam $.  Inverting the displayed modal-transition
      derivation at the first witness $\tau_{{\mathrm{1}}}$, and using
      \ref{rule:calc-mtrans-open}, therefore yields an empty-witness prefix
      \[
          {  { \Gamma_{{\mathrm{1}}} }^{  \varepsilon  }_{[   \varepsilon   ]}   \bmttsym{,}   \tau_{{\mathrm{1}}} : \mathord{\blacktriangleright} ^{\binder{ \gamma_{{\mathrm{1}}} } \within \gamma_{{\mathrm{2}}} }   \bmttsym{,}  \Gamma_{{\mathrm{2}}} }^{  \varepsilon   \position{ \gamma_{{\mathrm{3}}} } }_{[  \bmttnt{T}  ]}  \vdash  \varepsilon  : \gamma_{{\mathrm{4}}} \sqsubseteq  \mathord{\boldsymbol{!} }  .
      \]
      By \Zcref{item:init-cls-inversion:stage}, $\gamma_{{\mathrm{4}}} =  \exclam $.
      Hence, $ \mathrm{pos} (  { \Gamma }^{  \varepsilon  }_{[   \varepsilon   ]}  )  = \gamma_{{\mathrm{4}}} =  \exclam $.
  \end{description}
\end{proof}

\begin{lemma}[Decomposition]\label{claim:staged-reduction-decomposition}
  \begin{enumerate}
    \item\label{item:staged-reduction-decomposition:runtime}   If\/ $ { \Gamma }^{  \varepsilon  }_{[   \varepsilon   ]}   \vdash   \bmttnt{M_{{\mathrm{1}}}} ^{  \varepsilon  }   \hasType  \bmttnt{A}$ and\/ $ \mathbf{Dom}_{\token{V} }^{  \varepsilon  }(  { \Gamma }^{  \varepsilon  }_{[   \varepsilon   ]}  )  =  \emptyset $, then there exists a staged evaluation context $ \mathcal{E} ^{  \varepsilon  @  \mathord{\boldsymbol{!} }  }_{[  \bmttnt{T} @ \gamma  ]} $ and a staged redex $ \bmttnt{M_{{\mathrm{2}}}} ^{ \bmttnt{T} } $ such that $ \bmttnt{M_{{\mathrm{1}}}} ^{  \varepsilon  }  =  \mathcal{E} ^{  \varepsilon  @  \mathord{\boldsymbol{!} }  }_{[  \bmttnt{T} @ \gamma  ]}   \bmttsym{[}   \bmttnt{M_{{\mathrm{2}}}} ^{ \bmttnt{T} }   \bmttsym{]}$, or $ \bmttnt{M_{{\mathrm{1}}}} ^{  \varepsilon  } $ is a value.
    \item\label{item:staged-reduction-decomposition:future}   If\/ $ { \Gamma }^{  \varepsilon   \position{ \gamma_{{\mathrm{1}}} } }_{[  \tau_{{\mathrm{1}}}  \bmttsym{,}  \bmttnt{T_{{\mathrm{1}}}}  ]}   \vdash   \bmttnt{M_{{\mathrm{1}}}} ^{ \tau_{{\mathrm{1}}}  \bmttsym{,}  \bmttnt{T_{{\mathrm{1}}}} }   \hasType  \bmttnt{A}$ and\/ $ \mathbf{Dom}_{\token{V} }^{  \varepsilon  }(  { \Gamma }^{  \varepsilon   \position{ \gamma_{{\mathrm{1}}} } }_{[  \tau_{{\mathrm{1}}}  \bmttsym{,}  \bmttnt{T_{{\mathrm{1}}}}  ]}  )  =  \emptyset $, then there exists a staged evaluation context $ \mathcal{E} ^{ \tau_{{\mathrm{1}}}  \bmttsym{,}  \bmttnt{T_{{\mathrm{1}}}} @ \gamma_{{\mathrm{1}}} }_{[  \bmttnt{T_{{\mathrm{2}}}} @ \gamma_{{\mathrm{2}}}  ]} $ and a staged redex $ \bmttnt{M_{{\mathrm{2}}}} ^{ \bmttnt{T_{{\mathrm{2}}}} } $ such that $ \bmttnt{M_{{\mathrm{1}}}} ^{ \tau_{{\mathrm{1}}}  \bmttsym{,}  \bmttnt{T_{{\mathrm{1}}}} }  =  \mathcal{E} ^{ \tau_{{\mathrm{1}}}  \bmttsym{,}  \bmttnt{T_{{\mathrm{1}}}} @ \gamma_{{\mathrm{1}}} }_{[  \bmttnt{T_{{\mathrm{2}}}} @ \gamma_{{\mathrm{2}}}  ]}   \bmttsym{[}   \bmttnt{M_{{\mathrm{2}}}} ^{ \bmttnt{T_{{\mathrm{2}}}} }   \bmttsym{]}$, or $ \bmttnt{M_{{\mathrm{1}}}} ^{ \tau_{{\mathrm{1}}}  \bmttsym{,}  \bmttnt{T_{{\mathrm{1}}}} }  =  \bmttnt{M_{{\mathrm{3}}}} ^{ \bmttnt{T_{{\mathrm{1}}}} } $ for some term $ \bmttnt{M_{{\mathrm{3}}}} ^{ \bmttnt{T_{{\mathrm{1}}}} } $.
  \end{enumerate}
\end{lemma}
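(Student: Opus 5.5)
I will prove both parts simultaneously by induction on the typing derivation of $M_1$, doing a case analysis on the last typing rule. The two statements must be proved together because a quotation at stage $\varepsilon$ has its body at stage $\tau$, which calls for part 2. Conversely, an unquotation $\mathbf{unq}_{T_3}\{^{\gamma}N\}$ at stage $\tau_1,T_1$ has its body at stage $T_4$ with $T_4+T_3=\tau_1,T_1$. Here $T_4$ may be either $\varepsilon$ or non-empty, so the argument falls back on part 1 or part 2 respectively. The staged-context and $\mathbf{Dom}^{\varepsilon}_{\mathrm{V}}$ hypotheses are preserved when the context is extended with $\tau{:}\blacktriangleright$, $\blacktriangleleft_T$, classifier declarations, or variables at non-top-level stages. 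This is what lets the induction hypotheses apply.

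For part 1, the Var case is impossible. A variable at stage $\varepsilon$ would lie in $\mathbf{Dom}^{\varepsilon}_{\mathrm{V}}$, which is empty. Abstractions, classifier abstractions, and quotations are values, the last as long as their body is of the form $M^{\varepsilon}$ relative to the generated stage. I would handle quotations by applying part 2 to the body. Either the body decomposes, which gives a context $\mathbf{quo}(\tau)\{\mathcal{E}\}$ around it, or the body is of the form $M_3^{\varepsilon}$ and the whole quotation is a value. For applications $M N$, I apply the IH to $M$. If $M$ decomposes, I wrap the context as $\mathcal{E}\,N$. If $M$ is a value, canonical forms (inversion of typing at an arrow type) make it a $\lambda$, so $M\,N$ is itself a staged redex with the empty context $\Box$; the redex rule only requires $M_2^{\varepsilon}$. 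Classifier application is handled the same way. For $\mathbf{unq}_T\{^{\gamma}N\}$ at stage $\varepsilon$ we need $T=\varepsilon$, so $N$ is at stage $\varepsilon$, and I apply the IH to $N$. If $N$ is a value it must be a quotation by canonical forms, giving the redex $\mathbf{unq}_{\varepsilon}\{^{\gamma}\mathbf{quo}\ldots\}$. The redex needs the unquote classifier to be $!$. Here I use \Zcref{claim:top-level-context-position} together with \Zcref{claim:init-cls-inversion}: the position of the extended context $\Gamma,\blacktriangleleft^{\gamma}_{\varepsilon}$ forces $\gamma={!}$. The same lemma gives the hole classifier $!$ required by the first and third redex rules.

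For part 2, the shape of the argument is the same, but at stage $\tau_1,T_1$ the evaluation contexts also descend under $\lambda$-binders, since the side condition $T_1\neq\varepsilon$ holds. Variables, abstractions, applications, and classifier forms therefore either decompose through their subterms or contain no top-level material at all. In the latter case they are literally of the form $M_3^{T_1}$ after stripping the leading $\tau_1$. To see this, I use the observation that a term in $\mathbf{Term}^{\tau_1,T_1}$ all of whose $\mathbf{unq}$ witnesses leave at least one residual atom inside is also generated by the grammar of $\mathbf{Term}^{T_1}$. The interesting case is $\mathbf{unq}_{T_3}\{^{\gamma}N\}$ with $T_3=\tau_1,T_1$, so that $N$ sits at stage $\varepsilon$. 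I apply part 1 to $N$ in the context $\Gamma,\blacktriangleleft^{\gamma}_{T_3}$. Either I obtain a context, or $N$ is a quotation value, which yields the redex $\mathbf{unq}_{T_2}\{^{!}\mathbf{quo}\ldots\}$ with $T_2=T_3$.

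The main obstacle is again showing that the unquote classifier is $!$ in this case. The argument is the last case of \Zcref{claim:top-level-context-position}: the well-formedness witness $\tau_1,T$ for $\blacktriangleleft$ is inverted at its first atom $\tau_1$, which yields an empty-witness transition into the position $!$ of the prefix. A second, bookkeeping obstacle is the stage-index arithmetic in the $\mathbf{Term}^{T}$ grammar, together with checking that the remainder condition $M_3^{T_1}$ really covers every non-decomposable case. I would discharge it with an auxiliary lemma, proved by structural induction, relating $\mathbf{Term}^{\tau,T}$ and $\mathbf{Term}^{T}$.
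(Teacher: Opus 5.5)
Your plan matches the paper's proof: a simultaneous induction on the term (your induction on the syntax-directed derivation amounts to the same thing), with the remaining cases read off from the staged evaluation-context grammar. You also make explicit two points the paper leaves implicit. First, the $\mathbf{unq}$ classifier must be $\mathord{\boldsymbol{!}}$, which you obtain from the first-atom inversion in the last case of \Zcref{claim:top-level-context-position}. Second, you spell out how $\mathbf{Term}^{\tau,T}$ relates to $\mathbf{Term}^{T}$.

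The step that fails as written is the Var case of part~1, which is the one point the paper singles out. The hypothesis $\mathbf{Dom}^{\varepsilon}_{\mathrm{V}}(\Gamma)=\emptyset$ only says that no variable is \emph{declared} at stage $\varepsilon$. But Var is position-based, not stage-based; that is exactly what makes cross-stage persistence typable. So emptiness of $\mathbf{Dom}^{\varepsilon}_{\mathrm{V}}$ does not by itself exclude a variable occurrence at stage $\varepsilon$.

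For example, $\tau{:}\mathord{\blacktriangleright}^{\delta\succeq\mathord{\boldsymbol{!}}},\ x{:}^{\gamma}A,\ \mathord{\blacktriangleleft}^{\mathord{\boldsymbol{!}}}_{\tau}$ is a well-formed $\Gamma^{\varepsilon}_{[\varepsilon]}$ with empty $\mathbf{Dom}^{\varepsilon}_{\mathrm{V}}$, yet it contains $x$. What actually blocks Var here is its premise $\Gamma\vdash\gamma\preceq\mathrm{pos}(\Gamma)$. The argument you need runs as follows:
\begin{enumerate}
\item $\mathrm{pos}(\Gamma)=\mathord{\boldsymbol{!}}$ by \Zcref{claim:top-level-context-position}.
\item So Var would require $\Gamma\vdash\gamma\preceq\mathord{\boldsymbol{!}}$, which forces $\gamma=\mathord{\boldsymbol{!}}$ by \Zcref{claim:init-cls-inversion}.
\item This contradicts freshness of hypothesis classifiers: WF-Hyp requires $\gamma\notin\mathbf{Dom}_{\mathrm{C}}(\Gamma)$, while $\mathord{\boldsymbol{!}}\in\mathbf{Dom}_{\mathrm{C}}(\Gamma)$ always.
\end{enumerate}
You already use both lemmas in the $\mathbf{unq}$ case, so the repair is immediate, but the justification you currently give for the Var case is not valid.
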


\begin{proof}
  We prove the two statements simultaneously, by structural induction on
  $ \bmttnt{M_{{\mathrm{1}}}} ^{  \varepsilon  } $ and $ \bmttnt{M_{{\mathrm{1}}}} ^{ \tau_{{\mathrm{1}}}  \bmttsym{,}  \bmttnt{T_{{\mathrm{1}}}} } $, respectively.

  The only point worth noting is that the variable case cannot occur in
  \Zcref{item:staged-reduction-decomposition:runtime}.  By
  \Zcref{claim:top-level-context-position}, $ \mathrm{pos} (  { \Gamma }^{  \varepsilon  }_{[   \varepsilon   ]}  )  =  \exclam $.
  Inversion of \ref{rule:type-var} would therefore require a declaration
  $ \binder{ \bmttmv{x} }\has@{ \gamma } \bmttnt{A}  \, \in \,  { \Gamma }^{  \varepsilon  }_{[   \varepsilon   ]} $ together with $ { \Gamma }^{  \varepsilon  }_{[   \varepsilon   ]}   \vdash  \gamma  \preceq   \mathord{\boldsymbol{!} } $.
  By \Zcref{claim:init-cls-inversion}, this forces $\gamma =  \exclam $,
  which is incompatible with the construction of a well-formed staged context
  at stage $ \varepsilon $.  All remaining cases follow directly from the induction hypotheses and the definition of staged evaluation contexts.
\end{proof}

\RestateClaim{claim:staged-reduction-progress}
\begin{proof}
  This follows directly from \Zcref{claim:staged-reduction-decomposition}.
\end{proof}

\begin{lemma}\label{claim:safety-offline-code-generation-aux}
  If\/ $ {  \tau : \mathord{\blacktriangleright} ^{\binder{ \gamma_{{\mathrm{1}}} } \within  \mathord{\boldsymbol{!} }  }   \bmttsym{,}  \Gamma }^{  \varepsilon  }_{[  \bmttnt{T}  ]}   \vdash   \bmttnt{M} ^{ \bmttnt{T} }   \hasType  \bmttnt{A}$,
  then $  { \Gamma }^{  \varepsilon  }_{[  \bmttnt{T}  ]}  [ \binder{ \gamma_{{\mathrm{1}}} } \coloneqq  \mathord{\boldsymbol{!} }   ]   \vdash    \bmttnt{M} ^{ \bmttnt{T} }  [ \binder{ \gamma_{{\mathrm{1}}} } \coloneqq  \mathord{\boldsymbol{!} }   ]   \hasType   \bmttnt{A} [ \binder{ \gamma_{{\mathrm{1}}} } \coloneqq  \mathord{\boldsymbol{!} }   ] $.
\end{lemma}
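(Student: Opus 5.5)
The plan is to obtain this as an instance of \Zcref{claim:rebasing}, with the ambient context taken to be empty. The leading item $\tau : \mathord{\blacktriangleright}^{\binder{\gamma_1} \within \mathord{\boldsymbol{!}}}$ is attached to the position $\mathord{\boldsymbol{!}}$, since $\mathrm{pos}(\varepsilon) = \mathord{\boldsymbol{!}}$. Collapsing $\gamma_1$ onto $\mathord{\boldsymbol{!}}$ is therefore exactly the rebasing substitution $[\binder{\gamma_1} \coloneqq \mathord{\boldsymbol{!}}, \binder{\tau} \coloneqq \varepsilon]$ applied against a trivial shut $\mathord{\blacktriangleleft}^{\mathord{\boldsymbol{!}}}_{\varepsilon}$.

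First, I will check that $\varepsilon, \mathord{\blacktriangleleft}^{\mathord{\boldsymbol{!}}}_{\varepsilon}$ is well formed. By \ref{rule:calc-ctx-wf-shut} this needs $\varepsilon \vdash \varepsilon : \mathord{\boldsymbol{!}} \sqsubseteq \mathord{\boldsymbol{!}}$, which follows from \ref{rule:itrans-refl} and \ref{rule:calc-mtrans-lift}.

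Second, I will prove an auxiliary weakening lemma. It turns any derivation under $\tau : \mathord{\blacktriangleright}^{\binder{\gamma_1} \within \mathord{\boldsymbol{!}}}, \Gamma$ into one under $\mathord{\blacktriangleleft}^{\mathord{\boldsymbol{!}}}_{\varepsilon}, \tau : \mathord{\blacktriangleright}^{\binder{\gamma_1} \within \mathord{\boldsymbol{!}}}, \Gamma$. The proof is by simultaneous induction on the well-formedness, $\preceq$-, $\sqsubseteq$- and typing derivations. The inserted shut moves the position from $\mathord{\boldsymbol{!}}$ to $\mathord{\boldsymbol{!}}$, so no position changes. It also adds no classifiers and no variables. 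Hence every rule instance, including \ref{rule:calc-mtrans-open} for $\tau$, whose source is still $\mathord{\boldsymbol{!}}$, transfers verbatim. This is the same style of argument as the general weakening in \Zcref{claim:admissible-rules-for-completeness}.

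Third, I will apply \Zcref{claim:rebasing} with $\Gamma_1 = \varepsilon$, $\gamma_1 = \gamma_2 = \gamma_4 = \mathord{\boldsymbol{!}}$ and $T = \varepsilon$. Its side condition $\varepsilon \vdash \mathord{\boldsymbol{!}} \preceq \mathord{\boldsymbol{!}}$ holds by \ref{rule:itrans-refl}. This yields
\[
  \Gamma[\binder{\gamma_1} \coloneqq \mathord{\boldsymbol{!}}, \binder{\tau} \coloneqq \varepsilon] \vdash M[\binder{\gamma_1} \coloneqq \mathord{\boldsymbol{!}}, \binder{\tau} \coloneqq \varepsilon] \hasType A[\binder{\gamma_1} \coloneqq \mathord{\boldsymbol{!}}].
\]

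The remaining step, and the one I expect to be the main obstacle, is to identify this conclusion with the stated one: the $\tau$-part of the substitution has to disappear, and the staging indices have to be reinterpreted. Stages in $\Gamma^{\varepsilon}_{[T]}$ and $M^{T}$ are measured after the leading $\mathord{\blacktriangleright}$. So I will argue by induction on the staged grammar that every witness recorded in $\Gamma$ and $M$ lies within the stage stack built after $\tau$: each $\mathord{\blacktriangleleft}_{T_3}$ and each $\mathbf{unq}_{T_3}$ pops only atomic witnesses introduced inside $\Gamma$ or $M$. It follows that $\tau \notin \mathbf{FT}(\Gamma) \cup \mathbf{FT}(M)$, and then the substitution $[\binder{\tau} \coloneqq \varepsilon]$ is the identity. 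If the staging invariant does allow $\tau$ to be popped, I will instead keep the $\tau \coloneqq \varepsilon$ substitution explicit. In that case erasing $\tau$ from a witness shifts stage indices by exactly the leading $\tau$, so the result still lies in $\mathbf{Term}^{T}$ in the sense used by the statement, relative to the new empty start.
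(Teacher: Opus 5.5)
Your argument is correct, but it takes a different route from the paper. The paper proves this lemma by a direct induction on the typing derivation under $\tau : \blacktriangleright^{\gamma_1 \succeq \mathord{\boldsymbol{!}}}, \Gamma$. In effect it re-proves the special case of rebasing in which the leading stage is collapsed onto the root.

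You instead insert a trivial shut $\blacktriangleleft^{\mathord{\boldsymbol{!}}}_{\varepsilon}$ at the front. This is harmless, since it declares nothing and both $\varepsilon$ and $\blacktriangleleft^{\mathord{\boldsymbol{!}}}_{\varepsilon}$ have position $\mathord{\boldsymbol{!}}$, so the source of $\tau$ is unchanged. You then read the result off the Rebasing lemma with $\Gamma_1 = \varepsilon$, $T = \varepsilon$ and $\gamma_4 = \mathord{\boldsymbol{!}}$.

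Your route buys reuse: the only new work is a routine front-weakening and a syntactic check. It also makes explicit that discharging the outer stage is exactly quote/unquote rebasing at the root with the empty witness, which is what \texttt{run} does. Finally, it isolates where the staging hypothesis enters: only to show that $[\tau \coloneqq \varepsilon]$ is vacuous. Read relative to the point after $\tau$, the staged $\Gamma^{\varepsilon}_{[T]}$ and $M^{T}$ can only pop witnesses they themselves introduce. Given freshness of witness names, this yields $\tau \notin \mathbf{FT}(\Gamma) \cup \mathbf{FT}(M)$. The paper makes the same observation when it calls the witness substitution vacuous in the offline code-generation theorem. The paper's direct induction avoids the auxiliary weakening, at the cost of duplicating the rebasing argument.

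One caution concerns your fallback branch. It would not prove the statement as written, because the stated conclusion applies only $[\gamma_1 \coloneqq \mathord{\boldsymbol{!}}]$. If $\tau$ occurred free, that conclusion would contain a dangling $\tau$ and be ill-formed. The branch is unnecessary, however, because the staged hypothesis excludes that case, so your proof should simply commit to the first branch.
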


\begin{proof}
  By induction on the derivation of $ {  \tau : \mathord{\blacktriangleright} ^{\binder{ \gamma_{{\mathrm{1}}} } \within  \mathord{\boldsymbol{!} }  }   \bmttsym{,}  \Gamma }^{  \varepsilon  }_{[  \bmttnt{T}  ]}   \vdash   \bmttnt{M} ^{ \bmttnt{T} }   \hasType  \bmttnt{A}$.
\end{proof}

\RestateClaim{claim:safety-of-offline-code-generation}
\begin{proof}
  By \Zcref{claim:staged-reduction-is-beta-reduction} and
  \Zcref{claim:SN}, staged reduction from $\bmttnt{M_{{\mathrm{1}}}}$ terminates.
  Let $v$ be a normal form such that
  $ \bmttnt{M_{{\mathrm{1}}}}  \Rightarrow_{st*}^{  \mathord{\boldsymbol{!} }  }  v $.
  By repeated application of
  \Zcref{claim:staged-reduction-preservation}, we have
  \[
     {  \tau_{{\mathrm{1}}} : \mathord{\blacktriangleright} ^{\binder{ \gamma_{{\mathrm{1}}} } \within  \mathord{\boldsymbol{!} }  }   \bmttsym{,}  \Gamma }^{  \varepsilon   \position{ \gamma_{{\mathrm{2}}} } }_{[   \varepsilon   ]}   \bmttsym{,}   \mathord{\blacktriangleleft} _{ \tau_{{\mathrm{1}}} }^{  \mathord{\boldsymbol{!} }  }   \vdash   v ^{  \varepsilon  }   \hasType   \Box^{\mathord{\succeq}  \gamma_{{\mathrm{2}}} }  \bmttnt{A} .
  \]
  The context on the left has no variables at stage $ \varepsilon $.
  Since $v$ cannot take a staged-reduction step,
  \Zcref{claim:staged-reduction-progress} implies that it is a value.

  By inspection of the forms of values and inversion of this typing
  judgment, $v$ must be a quotation.  Thus, for some
  $\tau_{{\mathrm{2}}}$, $\gamma_{{\mathrm{3}}}$, and $\bmttnt{M_{{\mathrm{2}}}}$,
  \[
    v =  \mathbf{quo} (\binder{ \tau_{{\mathrm{2}}} })\lbrace^{\binder{ \gamma_{{\mathrm{3}}} } \within \gamma_{{\mathrm{2}}} }   \bmttnt{M_{{\mathrm{2}}}} ^{  \varepsilon  }  \rbrace .
  \]
  Inverting the typing derivation further yields
  \[
     {  \tau_{{\mathrm{1}}} : \mathord{\blacktriangleright} ^{\binder{ \gamma_{{\mathrm{1}}} } \within  \mathord{\boldsymbol{!} }  }   \bmttsym{,}  \Gamma }^{  \varepsilon   \position{ \gamma_{{\mathrm{2}}} } }_{[   \varepsilon   ]}   \bmttsym{,}   \mathord{\blacktriangleleft} _{ \tau_{{\mathrm{1}}} }^{  \mathord{\boldsymbol{!} }  }   \bmttsym{,}   \tau_{{\mathrm{2}}} : \mathord{\blacktriangleright} ^{\binder{ \gamma_{{\mathrm{3}}} } \within \gamma_{{\mathrm{2}}} }   \vdash   \bmttnt{M_{{\mathrm{2}}}} ^{  \varepsilon  }   \hasType  \bmttnt{A},
  \]
  with $\gamma_{{\mathrm{3}}} \, \notin \,  \token{FC}( \bmttnt{A} ) $.
  By reflexivity, the prefix ending in $ { \Gamma }^{  \varepsilon   \position{ \gamma_{{\mathrm{2}}} } }_{[   \varepsilon   ]} $
  derives $\gamma_{{\mathrm{2}}} \, \preceq \, \gamma_{{\mathrm{2}}}$.  Hence, applying
  \Zcref{claim:rebasing} to the adjacent $ \mathord{\blacktriangleleft} _{ \tau_{{\mathrm{1}}} }^{  \mathord{\boldsymbol{!} }  } $ and
  $ \tau_{{\mathrm{2}}} : \mathord{\blacktriangleright} ^{\binder{ \gamma_{{\mathrm{3}}} } \within \gamma_{{\mathrm{2}}} } $ gives
  \[
     {  \tau_{{\mathrm{1}}} : \mathord{\blacktriangleright} ^{\binder{ \gamma_{{\mathrm{1}}} } \within  \mathord{\boldsymbol{!} }  }   \bmttsym{,}  \Gamma }^{  \varepsilon   \position{ \gamma_{{\mathrm{2}}} } }_{[   \varepsilon   ]}   \vdash    \bmttnt{M_{{\mathrm{2}}}} ^{  \varepsilon  }  [ \binder{ \gamma_{{\mathrm{3}}} } \coloneqq \gamma_{{\mathrm{2}}}  ]   \hasType  \bmttnt{A}.
  \]
  Here the substitution of $\tau_{{\mathrm{1}}}$ for $\tau_{{\mathrm{2}}}$ is vacuous because
  $ \bmttnt{M_{{\mathrm{2}}}} ^{  \varepsilon  } $ contains no free occurrence of $\tau_{{\mathrm{2}}}$, and
  $ \bmttnt{A} [ \binder{ \gamma_{{\mathrm{3}}} } \coloneqq \gamma_{{\mathrm{2}}}  ]  = \bmttnt{A}$ by the side condition above.
  Applying \Zcref{claim:safety-offline-code-generation-aux} to this judgment gives
  \[
      { \Gamma }^{  \varepsilon   \position{ \gamma_{{\mathrm{2}}} } }_{[   \varepsilon   ]}  [ \binder{ \gamma_{{\mathrm{1}}} } \coloneqq  \mathord{\boldsymbol{!} }   ]   \vdash     \bmttnt{M_{{\mathrm{2}}}} ^{  \varepsilon  }  [ \binder{ \gamma_{{\mathrm{3}}} } \coloneqq \gamma_{{\mathrm{2}}}  ]  [ \binder{ \gamma_{{\mathrm{1}}} } \coloneqq  \mathord{\boldsymbol{!} }   ]   \hasType   \bmttnt{A} [ \binder{ \gamma_{{\mathrm{1}}} } \coloneqq  \mathord{\boldsymbol{!} }   ] ,
  \]
  as required.
\end{proof}

\section{Full Contents for \Zcref{sec:s4andltl}}
\subsection{Semantic Comparison to $\CS4$}
We designed \bml as a generalization of $\logicS4$, and we formally confirm this fact in this section.
We interpret the S4 modal operator $\Box$ as a bounded modal operator $\Box^{\succeq \exclam }$, with respect to empty resource. For formal comparison, we consider a restricted language of \bml and a language of $\logicS4$:
\begin{align*}
   \mathcal{L}_{\exclam}  \ni \bmttnt{A}, \bmttnt{B} & \Coloneqq \bmttnt{p} \mid \bmttnt{A}  \mathbin{\rightarrow}  \bmttnt{B} \mid  \Box^{\mathord{\succeq}   \mathord{\boldsymbol{!} }  }  \bmttnt{A} \text{;} &
   \mathcal{L}_{\Box}  \ni A, B & \Coloneqq \bmttnt{p} \mid A  \mathbin{\rightarrow}  B \mid  \Box A \text{.}
\end{align*}
And we define $ \lvert  \mathord{-}  \rvert  :  \mathcal{L}_{\exclam}  \to  \mathcal{L}_{\Box} $ and $ ({  \mathord{-}  })^{\mathord{\succeq} \exclam}  :  \mathcal{L}_{\Box}  \to  \mathcal{L}_{\exclam} $ as translations that swap $\Box^{\succeq  \exclam }$ and $\Box$.

\begin{definition}
  We define two functions $ \lvert  \mathord{-}  \rvert  :  \mathcal{L}_{\exclam}  \to  \mathcal{L}_{\Box} $ and $ ({  \mathord{-}  })^{\mathord{\succeq} \exclam}  :  \mathcal{L}_{\Box}  \to  \mathcal{L}_{\exclam} $ as follows:
  \begin{align*}
     \lvert \bmttnt{p} \rvert  &= \bmttnt{p}                         &  ({ \bmttnt{p} })^{\mathord{\succeq} \exclam}  &= \bmttnt{p} \\
     \lvert \bmttnt{A}  \mathbin{\rightarrow}  \bmttnt{B} \rvert  &=  \lvert \bmttnt{A} \rvert   \mathbin{\rightarrow}   \lvert \bmttnt{B} \rvert  &  ({ A  \mathbin{\rightarrow}  B })^{\mathord{\succeq} \exclam}  &=  ({ A })^{\mathord{\succeq} \exclam}   \mathbin{\rightarrow}   ({ B })^{\mathord{\succeq} \exclam}  \\
     \lvert  \Box^{\mathord{\succeq}   \mathord{\boldsymbol{!} }  }  \bmttnt{A}  \rvert  &=  \Box  \lvert \bmttnt{A} \rvert    &  ({  \Box A  })^{\mathord{\succeq} \exclam}  &=  \Box^{\mathord{\succeq}   \mathord{\boldsymbol{!} }  }   ({ A })^{\mathord{\succeq} \exclam}  
  \end{align*}
\end{definition}

We examine the correspondence between \bml and $\logicS4$ semantically through these translations.

\Textcite{alechina+2001categorical} introduced
a birelational model for \emph{$\CS4$},
a \emph{constructive} variant of $\logicS4$:

\begin{definition}[$\CS4$-Model~\cite{alechina+2001categorical}]
  A\/ \emph{$\CS4$-model}\footnote{%
    For simplicity, we omit \emph{fallible worlds} from the definition
    because $\bot$ is not considered in this paper, but
    our model can be extended to having them as well.%
  }
  is a quadruple\/ $\tuple{W, \mathord{⪯}, R, V}$, where
  \begin{itemize}
    \item
      $\tuple{W, \mathord{⪯}}$ is a nonempty preordered set,
    \item
      $R$ is a preorder on~$W$ with condition:
      \begin{itemize}
        \item
          \emph{Left-persistency:}
          $\parens{R \mathrel{;} \mathord{⪯}} \subseteq \parens{\mathord{⪯} \mathrel{;} R}$,
          and
      \end{itemize}
    \item
      $V$ assigns each atom~$p$ to an upward-closed subset of~$W$.
  \end{itemize}
\end{definition}
Then we define satisfaction of a modal operator $ M , \bmttnt{w}  \vDash_{\CS4}   \Box A  $ as $\forall v. w ⪯ v \Rightarrow \forall u. v \mathrel{R} u \Rightarrow  M , \bmttnt{u}  \vDash_{\CS4}  A $.
We show that \bml models and $\CS4$ models can be translated into each other.
From $\CS4$ models to \bml models, we take steps as described below, which we call \emph{one-point-model construction}:
\begin{lemma*}[Stabilization]\label{claim:stabilization}
  Given a\/ $\CS4$-model $M =  \langle W ,  \preceq ,   R  ,  V \rangle $. Define\/ $ \sqsubseteq $ as\/ $ \mathrel{(\mathord{  \mathrel{\mathord{ \preceq } \mathbin{;} \mathord{  R  } }  }) } $. Then $ M ^{*}  =  \langle W ,  \preceq ,  \sqsubseteq ,  V \rangle $ is a stable\/ $\CS4$-model.
\end{lemma*}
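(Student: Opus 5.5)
We must show that $M^{*} = \langle W, \preceq, \sqsubseteq, V\rangle$ with $\sqsubseteq \,=\, (\preceq ; R)$ is again a $\CS4$-model. We must also show that it is \emph{stable}, i.e.\ $\preceq \subseteq \sqsubseteq$, or equivalently left- and right-stability $(\preceq;\sqsubseteq)\subseteq\sqsubseteq$ and $(\sqsubseteq;\preceq)\subseteq\sqsubseteq$. The components $\langle W,\preceq\rangle$ and $V$ are unchanged, so the plan is to verify, in order: (1) $\sqsubseteq$ is a preorder; (2) stability; (3) left-persistency $(\sqsubseteq;\preceq)\subseteq(\preceq;\sqsubseteq)$. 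Each step is a short relational calculation using only reflexivity and transitivity of $\preceq$ and $R$ together with left-persistency of $R$.

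For reflexivity, $w \preceq w$ and $w \mathrel{R} w$ give $w \sqsubseteq w$. For transitivity we need $(\preceq;R;\preceq;R)\subseteq(\preceq;R)$. Apply left-persistency of $R$ to the middle $R;\preceq$ to rewrite it as $\preceq;R$. This yields $\preceq;\preceq;R;R$, which collapses to $\preceq;R$ by transitivity of both relations. For stability, $\preceq = (\preceq;\mathrm{id}) \subseteq (\preceq;R)$ by reflexivity of $R$. We also record the two one-sided forms. Left-stability $(\preceq;\preceq;R)\subseteq(\preceq;R)$ is immediate. Right-stability $(\preceq;R;\preceq)\subseteq(\preceq;\preceq;R)\subseteq(\preceq;R)$ again uses left-persistency of $R$.

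Left-persistency for the new relation asks $(\sqsubseteq;\preceq)\subseteq(\preceq;\sqsubseteq)$. This follows from right-stability, since $\sqsubseteq;\preceq \subseteq \sqsubseteq = \mathrm{id};\sqsubseteq \subseteq \preceq;\sqsubseteq$.

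There is no real obstacle here; the only care point is consistently using the direction of left-persistency, $R;\preceq \subseteq \preceq;R$, when commuting relations, which is needed in both transitivity and right-stability. Although not asked, I would also remark that the $\CS4$ satisfaction of $\Box$ is the same in $M$ and $M^{*}$, because $\forall v\succeq w.\,\forall u.\,vRu$ quantifies exactly over $\preceq;R$-successors. This observation is what the subsequent one-point-model construction will use.
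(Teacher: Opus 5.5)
Your proof is correct and follows the paper's argument: transitivity of $\sqsubseteq$ comes from left-persistency of $R$, and stability comes from reflexivity of $R$. You also spell out two checks the paper leaves implicit: reflexivity of $\sqsubseteq$, and left-persistency of the new relation, which you derive from right-stability.
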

\begin{proof}
  The transitivity of $\mathord{⊑}$ follows from the left-persistency of~$R$,
  and the stability follows from reflexivity of $R$.
\end{proof}

\begin{lemma}\label{claim:stabilization:equivalence}
  Given a\/ $\CS4$-model~$M$.
  For any $A ∈ ℒ_{□}$, the following are equivalent:
  \begin{enumerate}
    \item $M, w ⊨_{\CS4} A$;
    \item $M^{*}, w ⊨_{\CS4} A$.
  \end{enumerate}
\end{lemma}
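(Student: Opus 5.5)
We prove the equivalence by structural induction on $A \in ℒ_{□}$, simultaneously for all worlds $w$. $M$ and $M^{*}$ share the carrier $W$, the intuitionistic order $⪯$, and the valuation $V$. So the clauses for atoms and for implication are identical in the two models, and those cases follow at once from the induction hypothesis. (For implication, the quantification over $v ⪰ w$ is the same in both models, and the IH applies at each such $v$.)

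The only substantive case is $A \equiv □B$. In $M$, the formula $□B$ holds at $w$ iff for all $v ⪰ w$ and all $u$ with $v \mathrel{R} u$, $M, u ⊨_{\CS4} B$. In $M^{*}$ the same clause is used with $R$ replaced by $⊑ \,=\, (⪯ \mathbin{;} R)$: for all $v ⪰ w$ and all $u$ with $v ⪯ v' \mathrel{R} u$ for some $v'$, $M^{*}, u ⊨_{\CS4} B$. By the induction hypothesis we may replace $M^{*}, u ⊨_{\CS4} B$ by $M, u ⊨_{\CS4} B$, so it remains to compare the two quantifier ranges.

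For (2)$\Rightarrow$(1), take $v' = v$, using reflexivity of $⪯$. Then every $u$ with $v \mathrel{R} u$ also satisfies $v \mathrel{⊑} u$, so the $M^{*}$ condition implies the $M$ condition.

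For (1)$\Rightarrow$(2), suppose $v ⪰ w$, $v ⪯ v'$, and $v' \mathrel{R} u$. By transitivity of $⪯$ we get $v' ⪰ w$. The hypothesis (1), instantiated at $v'$ and $u$, then gives $M, u ⊨_{\CS4} B$.

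The thing to be careful about is that no appeal to left-persistency or persistence lemmas is needed here. The absorption of the extra $⪯$-step is entirely due to the outer universal quantifier over $⪯$-successors built into the $\CS4$ clause for $□$, together with transitivity of $⪯$. If one instead worked with the forcing relation without that outer quantifier, one would have to invoke persistence of truth along $⪯$ in $M$. The present definition makes that unnecessary.
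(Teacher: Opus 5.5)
Your proof is correct and follows the paper's own argument: induction on $A$, where the atom and implication cases are immediate because the carrier, $\preceq$ and $V$ are shared. In the case $\Box B$, the extra $\preceq$-step in $\sqsubseteq = (\preceq \mathbin{;} R)$ is absorbed by transitivity of $\preceq$ in the direction from $M$ to $M^{*}$ and by reflexivity of $\preceq$ in the converse, exactly as in the paper.
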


\begin{proof}
  By induction on $A$. Here we check the case $A ≡ □B$:
  \begin{alignat*}{3}
          &M, w ⊨_{\CS4} □B \\
    \iff\null& ∀ v ⪰ w. ∀ u ∈ R(v).\null&&\parens[\big]{M, &&u ⊨_{\CS4} B} \\
    \iff\null& ∀ v ⪰ w. ∀ u ⊒ v.        &&\parens[\big]{M, &&u ⊨_{\CS4} B}
    \tag{\textasteriskcentered}\label{equation:stabilization} \\
    \iff\null& ∀ v ⪰ w. ∀ u ⊒ v.        &&\parens[\big]{M^{*},\null &&u ⊨_{\CS4} B}
    \tag{by IH} \\
    \iff\null& M^{*}, w ⊨_{\CS4} □B
  \end{alignat*}
  where the left-to-right direction of \eqref{equation:stabilization}
  follows from transitivity of $\mathord{⪯}$, and
  the converse follows from reflexivity of $\mathord{⪯}$.
\end{proof}

\begin{lemma*}[Root-Extension]\label{claim:root-extension}
  Given a stable\/ $\CS4$-model $M =  \langle W ,  \preceq ,  \sqsubseteq ,  V \rangle $. Define $ M _{!}  =  \langle  W _{  \exclam  }  ,   \preceq _{  \exclam  }  ,   \sqsubseteq _{  \exclam  }  ,   V _{  \exclam  }  \rangle $ as follows:
  \begin{itemize}
  \item $ W _{  \exclam  }  = W \amalg \{ \exclam \}$;
  \item $\bmttnt{w} \,  \preceq _{  \exclam  }  \, \bmttnt{v} \iff \bmttnt{w} =  \exclam  \Or \bmttnt{w} \, \preceq \, \bmttnt{v}$;
  \item $\bmttnt{w} \,  \sqsubseteq _{  \exclam  }  \, \bmttnt{v} \iff \bmttnt{w} =  \exclam  \Or \bmttnt{w} \, \sqsubseteq \, \bmttnt{v}$;
  \item $\bmttnt{w} \, \in \,  V _{  \exclam  }   \bmttsym{(}  \bmttnt{p}  \bmttsym{)} \iff V  \bmttsym{(}  \bmttnt{p}  \bmttsym{)} = W$ if $\bmttnt{w} =  \exclam $, and $\bmttnt{w} \, \in \, V  \bmttsym{(}  \bmttnt{p}  \bmttsym{)}$ otherwise.
  \end{itemize}
  Then $ M _{!} $ is a stable\/ $\CS4$-model with a root\/ $ \exclam $, namely, a\/ \bml-structure.
\end{lemma*}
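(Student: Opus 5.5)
The plan is to verify each axiom of a stable $\CS4$-model with root directly from the case definitions, splitting every check according to whether the elements involved are the new point $\exclam$ or lie in the original $W$. Throughout we use the hypothesis that $M$ is already stable, i.e.\ $\preceq$ and $\sqsubseteq$ are preorders on $W$ with $\mathord{\preceq}\subseteq\mathord{\sqsubseteq}$, and that $V(p)$ is upward closed in $W$.

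First, $\preceq_{\exclam}$ is a preorder. For reflexivity, $\exclam\preceq_{\exclam}\exclam$ holds by the first disjunct, and for $w\in W$ we use reflexivity of $\preceq$. For transitivity, suppose $w\preceq_{\exclam}v\preceq_{\exclam}u$. If $w=\exclam$ we are done. Otherwise $w\preceq v$ with $w\in W$, which forces $v\in W$ (no element of $W$ is below $\exclam$). Then $v\neq\exclam$ gives $v\preceq u$, and transitivity in $M$ applies. The identical argument shows $\sqsubseteq_{\exclam}$ is a preorder.

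The root condition is immediate, since $\exclam\preceq_{\exclam}v$ for every $v$. Stability $\mathord{\preceq_{\exclam}}\subseteq\mathord{\sqsubseteq_{\exclam}}$ is checked the same way: if $w=\exclam$ the first disjunct applies, and otherwise we use $\mathord{\preceq}\subseteq\mathord{\sqsubseteq}$ in $M$. Left-persistency is then automatic for a stable, transitive $\sqsubseteq_{\exclam}$: we have $\mathrel{\sqsubseteq_{\exclam};\preceq_{\exclam}}\subseteq\mathrel{\sqsubseteq_{\exclam};\sqsubseteq_{\exclam}}\subseteq\mathord{\sqsubseteq_{\exclam}}$, and $\mathord{\sqsubseteq_{\exclam}}\subseteq\mathrel{\preceq_{\exclam};\sqsubseteq_{\exclam}}$ by reflexivity of $\preceq_{\exclam}$. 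The same calculation yields the left- and right-stability forms stated in the definition of a \bml-structure.

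The only genuinely delicate step is upward closure of $V_{\exclam}(p)$. The difficulty is that $\exclam$ sits below every world, so $\exclam$ may lie in $V_{\exclam}(p)$ only if all of $W$ does. This is exactly why the definition places $\exclam$ in $V_{\exclam}(p)$ precisely when $V(p)=W$. We argue by cases. If $\exclam\in V_{\exclam}(p)$ and $\exclam\preceq_{\exclam}v$, then either $v=\exclam$, which is trivial, or $v\in W=V(p)$. If instead $w\in W\cap V_{\exclam}(p)$ and $w\preceq_{\exclam}v$, then $v\in W$ and $w\preceq v$, so upward closure of $V(p)$ in $M$ gives $v\in V(p)\subseteq V_{\exclam}(p)$.

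Collecting these facts, $M_{!}$ is a stable $\CS4$-model with least element $\exclam$. Reading it as $\langle W_{\exclam},\preceq_{\exclam},\sqsubseteq_{\exclam},V_{\exclam},\exclam\rangle$, this is exactly a \bml-structure.
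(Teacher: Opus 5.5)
Your proposal is correct and is the same direct, case-by-case verification that the paper's proof dismisses as ``Straightforward''. You check every required condition soundly: both preorders, the root, stability, left-persistency, and upward closure of $V_{\exclam}(p)$. You also correctly identify the clause $V(p)=W$ for $\exclam$ as exactly what makes $V_{\exclam}(p)$ upward closed.
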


\begin{proof}
  Straightforward.
\end{proof}

\begin{lemma}\label{claim:root-extension:equivalence}
  Given a stable\/ $\CS4$-model~$M$.
  For any $A ∈ ℒ_{□}$, the following are equivalent:
  \begin{enumerate}
    \item $M, w ⊨_{\CS4} A$;
    \item $M_{\tp}, w ⊨_{\CS4} A$.
  \end{enumerate}
\end{lemma}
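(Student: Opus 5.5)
The proof goes by induction on $A \in \mathcal{L}_{\Box}$, for an arbitrary $w \in W$ (always a world of the original model, never the new root $\exclam$). The key observation is that for $w \in W$ the sets $\{v \mid w \preceq_{\exclam} v\}$ and $\{u \mid v \sqsubseteq_{\exclam} u\}$ coincide with those of $M$. The reason is that, by definition of $\preceq_{\exclam}$ and $\sqsubseteq_{\exclam}$, the new point $\exclam$ only has outgoing edges: no $w \in W$ satisfies $w \preceq_{\exclam} \exclam$ or $w \sqsubseteq_{\exclam} \exclam$, since the disjuncts require either the source to be $\exclam$ or a relation of $M$, which lives on $W$. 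I will isolate this as a preliminary observation: for $w \in W$, every $\preceq_{\exclam}$-successor and every $\sqsubseteq_{\exclam}$-successor of $w$ lies in $W$ and is already a successor in $M$, and conversely.

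With this in hand, each case is direct. For an atom $p$, when $w \in W$ we have $w \in V_{\exclam}(p) \iff w \in V(p)$ by the second clause of the definition of $V_{\exclam}$. For $A \equiv B \to C$, the satisfaction clause quantifies over $v \succeq_{\exclam} w$. By the observation these are exactly the $v \succeq w$ in $W$, so the induction hypothesis applied at each such $v$ (which lies in $W$) gives the equivalence. For $A \equiv \Box B$, the clause quantifies over $v \succeq_{\exclam} w$ and then $u \sqsupseteq_{\exclam} v$. Applying the observation twice, once for $\preceq_{\exclam}$ from $w$ and once for $\sqsubseteq_{\exclam}$ from $v \in W$, shows that all such $u$ lie in $W$ and range over the same set as in $M$, so the induction hypothesis at $u$ closes the case.

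The only delicate point is making sure the induction hypothesis is never invoked at $\exclam$ itself. There the atomic clause changes (``$V(p) = W$''), and the equivalence need not hold pointwise. The observation above guarantees that the relevant quantifiers starting from a world of $W$ never reach $\exclam$, so the statement only needs to hold, and only is claimed, for $w \in W$. The obstacle is therefore just stating the observation carefully. I do not expect the definition of $V_{\exclam}(p)$ at $\exclam$ to be used here; it matters only for upward-closedness in \Zcref{claim:root-extension}.
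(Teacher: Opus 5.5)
Your proof is correct and follows the same route as the paper, which simply says ``by induction on $A$.'' The observation you isolate is exactly what makes each inductive case go through. For $w \in W$, the new root $\exclam$ has no incoming $\preceq_{\exclam}$- or $\sqsubseteq_{\exclam}$-edges from $W$, so the quantifiers in the $\to$ and $\Box$ clauses range over the same successors as in $M$, and $V_{\exclam}$ agrees with $V$ on $W$.
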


\begin{proof}
  By induction on $A$.
\end{proof}

\begin{lemma*}[One-Point Model]\label{claim:one-point-model}
  Given a\/ \bml-structure $M$. Define $ { M }_{*} $ as
  \[  \langle \bmttsym{\{}  \bmttsym{*}  \bmttsym{\}} ,   \{\langle \bmttsym{*} , \bmttsym{*} \rangle\}  ,   \{ \bmttsym{*} \mapsto M  \}  \rangle \text{.} \]
  Then $ { M }_{*} $ is a\/ \bml-model.
\end{lemma*}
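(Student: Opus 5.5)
The plan is to check the clauses of the definition of a \bml-model directly for the triple $\langle \{*\}, \{\langle *,*\rangle\}, \{* \mapsto M\}\rangle$. There are three groups of conditions, and I would verify them in order: the frame, the fibres, and the monotonicity conditions along $\preccurlyeq$.

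First, the frame $\langle \{*\}, \preccurlyeq\rangle$ with $\preccurlyeq = \{\langle *,*\rangle\}$ is a nonempty preordered set. Reflexivity holds since $* \preccurlyeq *$. Transitivity is trivial on a singleton.

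Second, each fibre must be a \bml-structure. The only fibre is $M_* = M$, which is a \bml-structure by hypothesis.

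Third, the monotonicity conditions for $w_1 \preccurlyeq w_2$ need to be checked. The only instance is $w_1 = w_2 = *$, so each condition reduces to a tautology:
\begin{itemize}
\item $D_* \subseteq D_*$;
\item $d_1 \preceq_* d_2 \implies d_1 \preceq_* d_2$, and likewise for $\sqsubseteq_*$;
\item $V_*(p) \subseteq V_*(p)$;
\item $\exclam_* = \exclam_*$.
\end{itemize}

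No step is a genuine obstacle. The only point needing care is bookkeeping: the components of $M$ must be identified with $\langle D_*, \preceq_*, \sqsubseteq_*, V_*, \exclam_*\rangle$ exactly as written in the definition of \bml-model. No earlier lemma such as \Zcref{claim:stabilization} or \Zcref{claim:root-extension} is needed here. Those lemmas only matter later, when this construction is composed with them to pass from a $\CS4$-model to a \bml-model.
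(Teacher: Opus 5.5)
Your proposal is correct and takes the same approach as the paper, whose proof is simply ``Obvious''. The singleton frame is trivially a nonempty preorder, the only fibre is the given \bml-structure $M$, and every monotonicity condition collapses to a tautology at $w_1 = w_2 = *$.
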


\begin{proof}
  Obvious.
\end{proof}

\begin{lemma}\label{claim:one-point-model:equivalence}
  Given a $\theLogic$-structure~$M$.
  Define a $∗$-assignment~$\tp$ for the one-point model~$M_{∗}$ as $\tp ↦ \tp$.
  Then, for any $A ∈ ℒ_{□}$, the following are equivalent:
  \begin{enumerate}
    \item $M, w ⊨_{\CS4} A$;
    \item $M_{∗}, ∗, w ⊩^{\tp} \toB|A|$.
  \end{enumerate}
\end{lemma}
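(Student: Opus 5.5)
We argue by induction on $A \in ℒ_{□}$, showing that for every $w$ in the \bml-structure $M = \langle D, \preceq, \sqsubseteq, V, \exclam\rangle$ (read as a stable $\CS4$-model, with $\preceq$ intuitionistic and $\sqsubseteq$ modal), $M, w ⊨_{\CS4} A$ holds iff $M_{∗}, ∗, w ⊩^{\tp} \toB|A|$. First we simplify the one-point model. Since $W = \{∗\}$ and $∗ \preccurlyeq ∗$ is the only relation, $M_{∗}, ∗, w ⊩^{\tp} B$ coincides with $M_{∗}, ∗, w ⊨^{\tp} B$ for every $B$. The assignment $\tp$ sends $\exclam$ to the root of $M$, which is $\preceq$-least, so the guard $\rho(\exclam) \preceq e$ in the clause for $\Box^{\succeq \exclam}$ always holds. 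The clauses therefore reduce to three cases. For atoms, $w \in V(p)$. For implication, $\forall e \succeq w.\,(\ldots \Rightarrow \ldots)$. For the modality, $\forall e \sqsupseteq w.\ M_{∗}, ∗, e ⊩^{\tp} \toB|B|$.

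The atom case is immediate. The implication case follows directly from the induction hypothesis, because the $\CS4$ clause for $\to$ quantifies over the same $\preceq$-successors as the \bml clause.

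The $□$ case needs more care. The $\CS4$ clause reads $\forall v \succeq w.\ \forall u.\ v \sqsubseteq u \Rightarrow M,u ⊨_{\CS4} B$, whereas the \bml clause reads $\forall u \sqsupseteq w.\ M_{∗},∗,u ⊩^{\tp} \toB|B|$. From the $\CS4$ side to the \bml side, we take $v = w$ by reflexivity of $\preceq$ and then apply the induction hypothesis. From the \bml side to the $\CS4$ side, given $w \preceq v \sqsubseteq u$, left-stability ($\preceq;\sqsubseteq \subseteq \sqsubseteq$) gives $w \sqsubseteq u$, and the induction hypothesis finishes the case. This is the same calculation as step (\textasteriskcentered) in \Zcref{claim:stabilization:equivalence}.

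The main obstacle is the $□$ case, and specifically its dependence on stability to collapse the $\preceq;\sqsubseteq$ composite into a single $\sqsubseteq$ step. Here it is supplied by the definition of a \bml-structure. A secondary point is checking that the trivial-frame and root-assignment simplifications are exact, so that no hidden quantification over $W$ survives.
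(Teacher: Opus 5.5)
Your proposal is correct and follows the paper's proof: induction on $A$, with the atom and implication cases immediate, and the $\Box$ case using stability to collapse $\preceq;\sqsubseteq$ into a single $\sqsubseteq$ step (and reflexivity of $\preceq$ for the converse). Your explicit observations are details the paper's chain of equivalences uses without stating: the one-point frame makes $\Vdash$ coincide with $\vDash$, and the root assignment makes the bound $\rho(\exclam)\preceq e$ trivially true.
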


\begin{proof}
  By induction on $A$. There are three cases:
  \begin{case}[$A ≡ α$]
    \begin{align*}
               & M, w ⊨_{\CS4} α \\
      \iff\null& w ∈ V(α) \\
      \iff\null& M_{∗}, ∗, w ⊩^{\tp} \toB|α|
    \end{align*}
  \end{case}
  \begin{case}[$A ≡ B \to C$]
    \begin{alignat*}{6}
      &\mathrlap{M, w ⊨_{\CS4} B \to C} \\
      \iff\null& ∀ v ⪰ w. \parens[\big]{
        M, &&v ⊨_{\CS4}\null &&B &&\implies
        M, &&v ⊨_{\CS4}\null &&C
      }
      \\
      \iff\null& ∀ v ⪰ w. \parens[\big]{
        M_{∗}, ∗,\null &&v ⊩^{\tp} &&\toB|B| &&\implies
        M_{∗}, ∗,\null &&v ⊩^{\tp} &&\toB|C|
      }
      \tag{by IH}
      \\
      \iff&
      \mathrlap{M_{∗}, ∗, w ⊩^{\tp} \toB|(B \to C)|}
    \end{alignat*}
  \end{case}
  \begin{case}[$A ≡ {□ B}$]
    \begin{alignat*}{4}
      &M, w ⊨_{\CS4} □B
      \\
      \iff\null& ∀ v ⪰ w. ∀ u ⊒ v.\null &&\parens[\big]{M, &&u ⊨_{\CS4}\null &&B}
      \\
      \iff\null& ∀ u ⊒ w. &&\parens[\big]{M, &&u ⊨_{\CS4} &&B}
      \tag{$⊑$ is stable}
      \\
      \iff\null& ∀ u ⊒ w. &&\parens[\big]{M_{∗}, ∗,\null &&u ⊩^{\tp} &&\toB|B|}
      \tag{by IH}
      \\
      \iff\null& \mathrlap{M_{∗}, ∗, w ⊩^{\tp} □^{⪰ \tp} \toB|B|}
      \tag*{\qedhere}
    \end{alignat*}
  \end{case}
\end{proof}

By combining these steps, we can construct a \bml-model from a $\CS4$-model. We can confirm that such a \bml-model behaves equivalently to the original $\CS4$-model:

\begin{theorem*}\label{claim:CS4-to-BML}
  Given a\/ $\CS4$-model $M =  \langle W ,  \preceq ,   R  ,  V \rangle $. Define $\mathfrak{M}$ as $ { \bmttsym{(}   M ^{*}   \bmttsym{)} }_{!*} $ and a\/ $\bmttsym{*}$-assignment $ \exclam $ for\/ $\mathfrak{M}$ as\/ $ \exclam  \mapsto  \exclam $. Then\/ $\mathfrak{M}$ is a\/ \bml-model, and for any $A \in  \mathcal{L}_{\Box} $, the following are equivalent:
  \begin{itemize}
  \item $ M , \bmttnt{w}  \vDash_{\CS4}  A $;
  \item $ \mathfrak{M} ,  \bmttsym{*} ,  \bmttnt{w}    \Vdash  ^{  \exclam  }   ({ A })^{\mathord{\succeq} \exclam}  $.
  \end{itemize}
\end{theorem*}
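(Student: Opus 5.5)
The plan is to compose the three constructions already established and chain their equivalence lemmas. First, apply the Stabilization lemma to $M$ to get $M^{*} = \langle W, \preceq, \sqsubseteq, V\rangle$ with $\sqsubseteq \,=\, (\preceq ; R)$, a stable $\CS4$-model. By Lemma~\ref{claim:stabilization:equivalence}, $M, w \vDash_{\CS4} A$ iff $M^{*}, w \vDash_{\CS4} A$ for all $A \in \mathcal{L}_{\Box}$.

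Next, apply Root-Extension to $M^{*}$ to get $(M^{*})_{!}$, a stable $\CS4$-model with root $\exclam$, hence a \bml-structure. By Lemma~\ref{claim:root-extension:equivalence}, satisfaction at every $w \in W$ is unchanged. Here $w$ ranges over the original $W$, not the new root, so the equivalence is only needed at worlds of $W$.

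Finally, apply the One-Point Model lemma to the \bml-structure $(M^{*})_{!}$. This gives the \bml-model $\mathfrak{M} = ((M^{*})_{!})_{*}$, which proves the first claim. Lemma~\ref{claim:one-point-model:equivalence}, with the assignment $\exclam \mapsto \exclam$, then gives $(M^{*})_{!}, w \vDash_{\CS4} A$ iff $\mathfrak{M}, *, w \Vdash^{\exclam} (A)^{\succeq\exclam}$. Chaining the three biconditionals yields the statement.

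The main obstacle is checking that the hypotheses line up at each link. The one-point equivalence lemma uses stability of $\sqsubseteq$ in its $\Box$ case, so I will verify that the root-extended relation remains stable. The new pairs are all of the form $\exclam \preceq_{\exclam} v$ and $\exclam \sqsubseteq_{\exclam} v$, so inclusion $\preceq_{\exclam} \subseteq \sqsubseteq_{\exclam}$ holds.

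I will also confirm that the valuation at $\exclam$ is upward closed. It is: $\exclam \in V_{\exclam}(p)$ only when $V(p) = W$, which contains everything above $\exclam$.

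A further point is that $\Vdash$ in the one-point model quantifies over $v \succcurlyeq *$, which is trivially just $*$. So $\Vdash$ and $\vDash$ coincide there, which is what Lemma~\ref{claim:one-point-model:equivalence} already exploits. The rest is bookkeeping.
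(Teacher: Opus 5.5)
Your proposal is correct and matches the paper's proof. The paper likewise obtains that $\mathfrak{M}$ is a $\mathbf{BML}$-model from the Stabilization, Root-Extension, and One-Point Model lemmas, and then chains the three corresponding equivalence lemmas at $w \in W$. Your additional checks (stability of the root-extended relation and upward closure of the valuation at the new root) are just the details behind the paper's ``straightforward'' proof of Root-Extension.
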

\begin{proof}
  It follows that $𝔐$ is a $\theLogic$-model from
  \Zcref{claim:stabilization,claim:root-extension,claim:one-point-model}, and
  for any $A ∈ ℒ_{□}$, we have
  \begin{alignat*}{4}
              & M, &&w ⊨_{\CS4}\null && A \\
     \iff\null& M^{*},        \null && w ⊨_{\CS4}\null &  & A
     \tag{by \Zcref{claim:stabilization:equivalence}}       \\
     \iff\null& (M^{*})_{\tp},\null && w ⊨_{\CS4}      &  & A
     \tag{by \Zcref{claim:root-extension:equivalence}}   \\
     \iff\null& 𝔐, ∗,         \null && w ⊩^{\tp}       &  & \toB|A|
     \tag{by \Zcref{claim:one-point-model:equivalence}}
     \\
     \tag*{\qedhere}
  \end{alignat*}
\end{proof}

Conversely, we construct a $\CS4$-model from a \bml-model via \emph{flattening}.

\begin{definition}[Flattening]\label{definition:flattening}
  Let\/ $\mathfrak{M} =  \langle W ,   \preccurlyeq  ,   \lbrace  M _{ \bmttnt{w} }  \rbrace_{ \bmttnt{w} \in W }  \rangle $ be a\/ \bml-model, with each $ M _{ \bmttnt{w} } $ as\/ $ \langle  D _{ \bmttnt{w} }  ,   \preceq _{ \bmttnt{w} }  ,   \sqsubseteq _{ \bmttnt{w} }  ,   V _{ \bmttnt{w} }  ,    \exclam  _{ \bmttnt{w} }  \rangle $. Then a\/ $\CS4$-model $ \mathfrak{M} _{+}  =  \langle  W _{+}  ,   \preceq _{+}  ,    R  _{+}  ,   V _{+}  \rangle $ is defined as follows:
  \begin{itemize}
  \item $ W _{+}  = \sum_{\bmttnt{w} \, \in \, W} D _{ \bmttnt{w} } $;
  \item $ \langle \bmttnt{w} , \bmttnt{d} \rangle  \,  \preceq _{+}  \,  \langle \bmttnt{w'} , \bmttnt{d'} \rangle  \iff \bmttnt{w} \,  \preccurlyeq  \, \bmttnt{w'}$ and $\bmttnt{d} \,  \preceq _{ \bmttnt{w'} }  \, \bmttnt{d'}$;
  \item $ \langle \bmttnt{w} , \bmttnt{d} \rangle  \,   R  _{+}  \,  \langle \bmttnt{w'} , \bmttnt{d'} \rangle  \iff \bmttnt{w} = \bmttnt{w'}$ and $\bmttnt{d} \,  \sqsubseteq _{ \bmttnt{w'} }  \, \bmttnt{d'}$;
  \item $ \langle \bmttnt{w} , \bmttnt{d} \rangle  \, \in \,  V _{+}   \bmttsym{(}  \bmttnt{p}  \bmttsym{)} \iff \bmttnt{d} \, \in \,  V _{ \bmttnt{w} }   \bmttsym{(}  \bmttnt{p}  \bmttsym{)}$.
  \end{itemize}
\end{definition}

\begin{lemma*}\label{claim:flattening}
  $ \mathfrak{M} _{+} $ is a\/ $\CS4$-model.
\end{lemma*}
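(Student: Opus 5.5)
The plan is to verify the clauses of the $\CS4$-model definition one at a time for $\mathfrak{M}_{+}$, each by unfolding Definition~\ref{definition:flattening} and appealing to the corresponding axiom of a \bml-model or of a \bml-structure. Nonemptiness of $W_{+}$ comes first: $W$ is nonempty and every $D_w$ contains $\exclam_w$. Next, $\preceq_{+}$ is a preorder. Reflexivity follows from reflexivity of $\preccurlyeq$ and of $\preceq_w$. For transitivity, suppose $\langle w,d\rangle \preceq_{+} \langle w',d'\rangle \preceq_{+} \langle w'',d''\rangle$. Then $d \preceq_{w'} d'$, and this transfers to $d \preceq_{w''} d'$ by the monotonicity clause of a \bml-model (using $w' \preccurlyeq w''$). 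Combined with $d' \preceq_{w''} d''$ and transitivity of $\preceq_{w''}$, this gives $d \preceq_{w''} d''$, while $w \preccurlyeq w''$ comes from transitivity of $\preccurlyeq$. The relation $R_{+}$ is a preorder directly, since it lives inside a single fibre $D_w$ and $\sqsubseteq_w$ is a preorder.

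Upward closure of $V_{+}(p)$ is handled similarly. Suppose $\langle w,d\rangle \in V_{+}(p)$ and $\langle w,d\rangle \preceq_{+} \langle w',d'\rangle$. From $d \in V_w(p) \subseteq V_{w'}(p)$ and the fact that $V_{w'}(p)$ is upward closed under $\preceq_{w'}$, we get $d' \in V_{w'}(p)$. One point needs checking here: $d \in D_{w'}$ must hold, and it does because $D_w \subseteq D_{w'}$.

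The main step is left-persistency, $(R_{+} ; \preceq_{+}) \subseteq (\preceq_{+} ; R_{+})$. Suppose $\langle w,d\rangle \mathrel{R_{+}} \langle w,e\rangle \preceq_{+} \langle w',e'\rangle$, so that $d \sqsubseteq_w e$, $w \preccurlyeq w'$ and $e \preceq_{w'} e'$. I would choose the intermediate point $\langle w',d\rangle$. Then $\langle w,d\rangle \preceq_{+} \langle w',d\rangle$ holds by reflexivity of $\preceq_{w'}$ and membership $d \in D_{w'}$. It remains to show $d \sqsubseteq_{w'} e'$. Monotonicity of $\sqsubseteq$ along $\preccurlyeq$ gives $d \sqsubseteq_{w'} e$, and then right-stability of the \bml-structure $M_{w'}$ ($\sqsubseteq ; \preceq \subseteq \sqsubseteq$) gives $d \sqsubseteq_{w'} e'$. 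This is exactly where the stability condition is used. I expect the only real care needed to be this bookkeeping of which fibre each relation lives in, so that every comparison is made in the same $M_{w'}$; no step should be deeper than that.
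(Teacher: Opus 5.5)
Your proof is correct and follows the paper's own argument. Transitivity of $\preceq_{+}$ comes from transferring $\preceq$ along $\preccurlyeq$, and left-persistency comes from the intermediate point $\langle w',d\rangle$ together with monotonicity of $\sqsubseteq$ and right-stability. You also write out the routine clauses (nonemptiness, reflexivity, $R_{+}$ being a preorder, upward closure of $V_{+}(p)$), which the paper only calls straightforward.
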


\begin{proof}
  Most of the conditions are straightforward.
  The transitivity of $\mathord{⪯_+}$ follows from
  that of $\mathord{⪯_u}$:
  if $⟨w, d⟩ ⪯_+ ⟨v, e⟩$ and
  $⟨v, e⟩ ⪯_+ ⟨u, f⟩$, then
  $⟨w, d⟩ ⪯_+ ⟨u, f⟩$:
  \[
    \includegraphics[
      page=1,
      alt={%
        The intuitionistic relation of the flattened model
        satisfies transitivity.%
      },
    ]{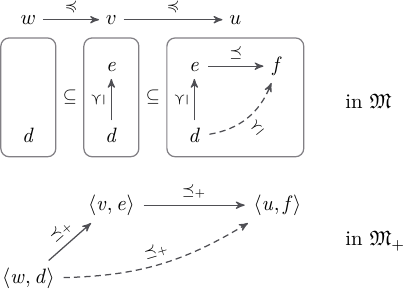}
  \]
  Left-persistency follows from right-stability:
  if $⟨w, d⟩ \mathrel{R_+} ⟨w, e⟩ ⪯_+ ⟨v, e'⟩$, then
  $⟨w, d⟩ ⪯_+ ⟨v, d⟩ \mathrel{R_+} ⟨v, e'⟩$:
  \[
    \includegraphics[
      page=2,
      alt={%
        The modal relation of the flattened model
        satisfies left-persistency.%
      },
    ]{flattened-models.pdf}
    \qedhere
  \]
\end{proof}

\begin{theorem*}\label{claim:flattening:equivalence}
  Given a\/ \bml-model\/ $\mathfrak{M}$. For any $\bmttnt{A} \in  \mathcal{L}_{\exclam} $, the following are equivalent:
  \begin{itemize}
  \item $ \mathfrak{M} ,  \bmttnt{w} ,  \bmttnt{d}    \Vdash  ^{ \rho }  \bmttnt{A} $;
  \item $  \mathfrak{M} _{+}  ,  \langle \bmttnt{w} , \bmttnt{d} \rangle   \vDash_{\CS4}   \lvert \bmttnt{A} \rvert  $.
  \end{itemize}
\end{theorem*}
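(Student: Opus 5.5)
The proof goes by induction on $\bmttnt{A} \in \mathcal{L}_{\exclam}$, for fixed $\mathfrak{M}$ and $\rho$, and simultaneously for all $\bmttnt{w} \in W$ and $\bmttnt{d} \in D_{\bmttnt{w}}$. Each case unfolds the two satisfaction relations and matches the quantifier patterns through Definition~\ref{definition:flattening}.

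Two facts about the flattened model are used throughout. First, $\langle \bmttnt{w},\bmttnt{d}\rangle \preceq_{+} \langle \bmttnt{v},\bmttnt{e}\rangle$ unfolds to exactly the pair of quantifiers ``$\forall \bmttnt{v} \succcurlyeq \bmttnt{w}.\ \forall \bmttnt{e} \succeq_{\bmttnt{v}} \bmttnt{d}$''. This makes sense because $D_{\bmttnt{w}} \subseteq D_{\bmttnt{v}}$, so $\bmttnt{d}$ is an element of every later structure. Second, $R_{+}$ never changes the first component.

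\emph{Atoms.} Suppose $\mathfrak{M},\bmttnt{w},\bmttnt{d} \Vdash^{\rho} \bmttnt{p}$. This means $\bmttnt{d} \in V_{\bmttnt{v}}(\bmttnt{p})$ for all $\bmttnt{v} \succcurlyeq \bmttnt{w}$. By reflexivity of $\preccurlyeq$ and the inclusions $V_{\bmttnt{w}}(\bmttnt{p}) \subseteq V_{\bmttnt{v}}(\bmttnt{p})$, this is equivalent to $\bmttnt{d} \in V_{\bmttnt{w}}(\bmttnt{p})$. By definition, that is $\langle \bmttnt{w},\bmttnt{d}\rangle \in V_{+}(\bmttnt{p})$.

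\emph{Implication.} Unfolding $\Vdash$ and then $\vDash$ for $\bmttnt{B} \mathbin{\rightarrow} \bmttnt{C}$ gives two nested quantifiers: for all $\bmttnt{v} \succcurlyeq \bmttnt{w}$ and all $\bmttnt{e} \succeq_{\bmttnt{v}} \bmttnt{d}$, we require $\mathfrak{M},\bmttnt{v},\bmttnt{e} \Vdash^{\rho} \bmttnt{B}$ to imply $\mathfrak{M},\bmttnt{v},\bmttnt{e} \Vdash^{\rho} \bmttnt{C}$. There is an outer $\forall \bmttnt{v}' \succcurlyeq \bmttnt{w}$ and an inner $\forall \bmttnt{v} \succcurlyeq \bmttnt{v}'$ with $\bmttnt{e} \succeq_{\bmttnt{v}'} \bmttnt{d}$. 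These collapse to the form above by transitivity of $\preccurlyeq$, together with preservation of $\preceq$ along $\preccurlyeq$ in one direction and reflexivity in the other. The collapsed condition is exactly $\forall \langle \bmttnt{v},\bmttnt{e}\rangle \succeq_{+} \langle \bmttnt{w},\bmttnt{d}\rangle$. The induction hypothesis for $\bmttnt{B}$ and $\bmttnt{C}$ then yields the $\CS4$ clause for $\lvert\bmttnt{B}\rvert \mathbin{\rightarrow} \lvert\bmttnt{C}\rvert$.

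\emph{Bounded box.} This case is the only one needing the structural axioms, so I expect it to be the main (though modest) obstacle. The left side unfolds to: for all $\bmttnt{v} \succcurlyeq \bmttnt{w}$ and all $\bmttnt{e} \sqsupseteq_{\bmttnt{v}} \bmttnt{d}$, if $\rho(\exclam) \preceq_{\bmttnt{v}} \bmttnt{e}$ then $\mathfrak{M},\bmttnt{v},\bmttnt{e} \Vdash^{\rho} \bmttnt{B}$. The outer $\preccurlyeq$-quantifier is absorbed as in the implication case, using preservation of $\sqsubseteq$ along $\preccurlyeq$. The bound is trivially satisfied: $\rho(\exclam) = \exclam_{\bmttnt{w}} = \exclam_{\bmttnt{v}}$ by the model condition on roots, and $\exclam_{\bmttnt{v}}$ is $\preceq_{\bmttnt{v}}$-least.

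The right side unfolds to: for all $\bmttnt{v} \succcurlyeq \bmttnt{w}$, all $\bmttnt{d}' \succeq_{\bmttnt{v}} \bmttnt{d}$, and all $\bmttnt{e} \sqsupseteq_{\bmttnt{v}} \bmttnt{d}'$, we have $\langle \bmttnt{v},\bmttnt{e}\rangle \vDash_{\CS4} \lvert\bmttnt{B}\rvert$. Here $R_{+}$ stays inside the fibre over $\bmttnt{v}$. To remove the extra $\bmttnt{d}'$:
\begin{itemize}
\item For one direction, use left-stability: from $\bmttnt{d} \preceq_{\bmttnt{v}} \bmttnt{d}' \sqsubseteq_{\bmttnt{v}} \bmttnt{e}$ we get $\bmttnt{d} \sqsubseteq_{\bmttnt{v}} \bmttnt{e}$.
\item For the other direction, take $\bmttnt{d}' = \bmttnt{d}$.
\end{itemize}
After this both sides quantify over the same pairs $\langle \bmttnt{v},\bmttnt{e}\rangle$, and the induction hypothesis for $\bmttnt{B}$ closes the case.
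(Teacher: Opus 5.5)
Your proof is correct and follows the paper's proof: induction on $A$, with atoms handled by monotonicity of $V$, implication by matching $\preceq_{+}$ with the pair of quantifiers, and the box case by left-stability in one direction and $d' = d$ in the other. Your explicit check that the bound $\rho(\exclam) \preceq_{v} e$ is vacuous is a detail the paper leaves implicit. The ``outer/inner'' quantifier collapse you describe is unnecessary: unfolding $\Vdash$ once already gives a single $\forall v \succcurlyeq w$, because the $\vDash$ clauses for $\to$ and $\Box^{\succeq\gamma}$ stay at the same world, so there is nothing to absorb.
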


\begin{proof}
  By induction on $A$. There are three cases:
  \begin{case}[$A ≡ α$]
    \begin{align*}
      &\hphantom{\null\iff\null}
      𝔐, w, d ⊩^{ρ} α \\
      &\iff ∀ v ≽ w. d ∈ V_v(α) \\
      &\iff d ∈ V_w(α) \tag{$V_w(α)$ is increasing} \\
      &\iff ⟨w, d⟩ ∈ V_+(α) \\
      &\iff 𝔐_+, ⟨w, d⟩ ⊨_{\CS4} \toF|α|
    \end{align*}
  \end{case}
  \begin{case}[$A ≡ B \to C$]\setlength{\multlinegap}{0.5em}
    \begin{align*}
      &\hphantom{\null\iff\null}
      𝔐, w, d ⊩^{ρ} B \to C
      \\
      &\iff
      ∀ v ≽ w. ∀ e ⪰_v d. \left\{
        \begin{multlined}
          𝔐, v, e ⊩^{ρ} B \\[-\jot]
          \implies
          𝔐, v, e ⊩^{ρ} C
        \end{multlined}
      \right.
      \\
      &\iff
      ∀ v ≽ w. ∀ e ⪰_v d. \left\{
        \begin{multlined}
          𝔐_+, ⟨v, e⟩ ⊨_{\CS4} \toF|B| \\
          \mathopen{}\implies
          𝔐_+, ⟨v, e⟩ ⊨_{\CS4} \toF|C|
        \end{multlined}
      \right.
      \tag{IH}
      \\
      &\iff
      ∀ ⟨v, e⟩ ⪰_+ ⟨w, d⟩. \left\{
        \begin{multlined}
          𝔐_+, ⟨v, e⟩ ⊨_{\CS4} \toF|B| \\
          \mathopen{}\implies
          𝔐_+, ⟨v, e⟩ ⊨_{\CS4} \toF|C|
        \end{multlined}
      \right.
      \\
      &\iff
      𝔐_+, ⟨w, d⟩ ⊨_{\CS4} \toF|B \to C|
    \end{align*}
  \end{case}
  \begin{case}[$A ≡ {□^{⪰ \tp} B}$]
    \begin{align*}
      &\hphantom{\null\iff\null}
      𝔐, w, d ⊩^{ρ} □^{⪰ \tp} B
      \\
      &\iff
      ∀ v ≽ w. ∀ e ⊒_v d.
        \parens*{𝔐, v, e ⊩^{ρ} B}
      \\
      &\iff
      ∀ v ≽ w. ∀ e ⊒_v d.
        \parens[\Big]{𝔐_+, ⟨v, e⟩ ⊨_{\CS4} \toF|B|}
      \tag{IH}
      \\
      \begin{split}
        &\iff
        ∀ ⟨v, d'⟩ ⪰_+ ⟨w, d⟩. \\[-\jot]
        &\qquad\qquad
        ∀ ⟨v, e⟩ ∈ \mathrel{R_+}(⟨v, d'⟩).
        \parens[\Big]{𝔐_+, ⟨v, e⟩ ⊨_{\CS4} \toF|B|}
      \end{split}
      \tag{\textdagger}\label{equation:flattening:box}
      \\
      &\iff
      𝔐_+, ⟨w, d⟩ ⊨_{\CS4} \toF|□^{⪰ \tp} B|
    \end{align*}
    where
    the left-to-right direction of \eqref{equation:flattening:box} follows
    from left-stability:
    \[
      \includegraphics[
        page=3,
        alt={%
          The left-stability of each \bml-structure ensures
          the equivalence of satisfaction.%
        },
      ]{flattened-models.pdf}
      \qedhere
    \]
  \end{case}
\end{proof}

These theorems indicate that the two constructions, namely $ { \bmttsym{(}   M ^{*}   \bmttsym{)} }_{!*} $ and $ \mathfrak{M} _{+} $, are pseudo-inverse operations that preserve satisfaction, leading to the following characterization:

\begin{theorem*}\label{claim:semantic-isomorphism-appendix}
  The $ \mathcal{L}_{\Box} $-fragment of\/ $\CS4$ is isomorphic to the $ \mathcal{L}_{\exclam} $-fragment of\/ \bml up to logical equivalence.
\end{theorem*}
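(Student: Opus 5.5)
The plan is to read ``isomorphic up to logical equivalence'' as the conjunction of two facts. First, the translations $\lvert - \rvert$ and $({-})^{\succeq\exclam}$ are mutually inverse on formulas; this is immediate by induction, since they merely swap $\Box$ and $\Box^{\succeq\exclam}$. Second, they preserve and reflect validity: for every $A \in \mathcal{L}_{\Box}$, $A$ is $\CS4$-valid iff $({A})^{\succeq\exclam}$ is \bml-valid (at every $\mathfrak{M}, w, d, \rho$). Together these show that the two translations induce mutually inverse bijections between the Lindenbaum-style quotients of the fragments modulo logical equivalence. The relevant equivalence notions are mutual implication valid in all models, i.e.\ validity of $A\to B$ and $B \to A$. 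Since both translations commute with $\to$, it suffices to transfer validity of single formulas.

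For the direction from $\CS4$-validity to \bml-validity, suppose $A$ is $\CS4$-valid and fix a \bml-model $\mathfrak{M}$, a world $w$, a point $d \in D_w$ and an assignment $\rho$. I apply the theorem on flattening (the result just before the statement) to $({A})^{\succeq\exclam} \in \mathcal{L}_{\exclam}$. It gives $\mathfrak{M}, w, d \Vdash^{\rho} ({A})^{\succeq\exclam}$ iff $\mathfrak{M}_+, \langle w,d\rangle \vDash_{\CS4} \lvert ({A})^{\succeq\exclam}\rvert = A$. Lemma~\ref{claim:flattening} shows that $\mathfrak{M}_+$ is a $\CS4$-model, so the right-hand side holds by validity of $A$.

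For the converse, suppose $({A})^{\succeq\exclam}$ is \bml-valid and fix a $\CS4$-model $M$ and a world $w$. I apply the theorem on $\mathfrak{M}=(M^{*})_{!*}$ from the one-point-model construction. It gives $M,w\vDash_{\CS4} A$ iff $\mathfrak{M},*,w \Vdash^{\exclam} ({A})^{\succeq\exclam}$, and the latter holds by \bml-validity because $\mathfrak{M}$ is a \bml-model. The same two arguments, applied to $A\to B$ and $B\to A$, give preservation and reflection of logical equivalence in both directions. Well-definedness on equivalence classes follows, as do the facts that the maps are mutually inverse and that they respect $\to$ and the box.

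The main obstacle is ensuring that the restricted fragment really only sees $\exclam$. The assignment $\rho$ must only matter through $\rho(\exclam)=\exclam_w$, and the flattening theorem's $\Box^{\succeq\exclam}$ case must use the fact that $\exclam_w \preceq_w e$ always holds. This fact holds because $\exclam$ is the least element and is preserved along $\preccurlyeq$, so the bound is vacuous. I would check this carefully and note that formulas of $\mathcal{L}_{\exclam}$ have no free classifiers other than $\exclam$. This makes validity independent of $\rho$, so it does not matter that the construction in the converse direction fixes the particular assignment $\exclam\mapsto\exclam$.
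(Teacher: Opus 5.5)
Your proposal is correct and takes the same approach as the paper. The paper's proof combines the flattening equivalence for $\mathfrak{M}_+$ with the $(M^{*})_{!*}$ construction theorem, taking $\lvert - \rvert$ and $({-})^{\succeq\exclam}$ as the mutually inverse translations. You additionally spell out what the paper leaves implicit: each of the two theorems supplies one direction of the transfer of validity, and hence of logical equivalence.
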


\begin{proof}
  Follows from \Zcref{claim:CS4-to-BML,claim:flattening:equivalence}, where
  $\toF|\mathord{-}|$ and $\toB|(\mathord{-})|$ are the isomorphisms.
\end{proof}

This theorem corresponds to \Zcref{claim:semantic-isomorphism} in the main text.

\subsection{Comparison to Kripke-style S4 Modal Lambda Calculus}
\Textcite{journals/jacm/DaviesP01} provided a modal lambda calculus that corresponds to their Kripke-style natural-deduction proof system. For convenience, we call it \lambox in this paper. \Zcref{fig:lamboxsynxtax} provides the definition of \lambox. By forgetting proof terms, we obtain a natural-deduction proof system.

\begin{figure}[bpt]
 \begin{tabular}{clll}
  \textbf{Variables} & $\bmttmv{x}, \bmttmv{y}$ && \\
  \textbf{Types}     & $A, B$ & $\Coloneqq$ & $\bmttnt{p} \mid A  \mathbin{\rightarrow}  B \mid  \Box A  $ \\
  \textbf{Terms}     & $ {M }, {N }$ & $\Coloneqq$ & $\bmttmv{x} \mid  \lambda  \bmttmv{x} ^{ A } .  {M }  \mid  {M }   {N } $ \\
  && $\mid$ & $\mathbf{box} \, \bmttsym{\{}  {M }  \bmttsym{\}} \mid  \mathbf{unbox} _{ \bmttnt{k} }\{ {M } \}  $ \\
   \textbf{Context} & $ {\Gamma }, {\Delta }$ & $\Coloneqq$ & $ \varepsilon  \mid {\Gamma }  \bmttsym{,}  \bmttmv{x}  \hasType  A$ \\
   \textbf{Context Stack} & ${\Psi}$ & $\Coloneqq$ & ${\Gamma } \mid  {\Psi} \mathbin{;} {\Gamma } $  \\
 \end{tabular}
 \vspace{1em}

 \begin{mathpar}
   \begin{prooftree}
     \caption{Var}\label{rule:lambox-type-var}
     \hypo{\bmttmv{x}  \hasType  A \, \in \, {\Gamma }}
     \infer1{  {\Psi} \mathbin{;} {\Gamma }  \vdash_{\logicS4}  \bmttmv{x} \hasType A }
   \end{prooftree}
   \and
   \begin{prooftree}
     \caption{$ \mathbin{\rightarrow} $-I}\label{rule:lambox-type-to-i}
     \hypo{  {\Psi} \mathbin{;} {\Gamma }  \bmttsym{,}  \bmttmv{x}  \hasType  A  \vdash_{\logicS4}  {M } \hasType B }
     \infer1{  {\Psi} \mathbin{;} {\Gamma }  \vdash_{\logicS4}   \lambda  \bmttmv{x} ^{ A } .  {M }  \hasType A  \mathbin{\rightarrow}  B }
   \end{prooftree}
   \and
   \begin{prooftree}
     \caption{$ \mathbin{\rightarrow} $-E}\label{rule:lambox-type-to-e}
     \hypo{  {\Psi} \mathbin{;} {\Gamma }  \vdash_{\logicS4}  {M } \hasType A  \mathbin{\rightarrow}  B }
     \hypo{  {\Psi} \mathbin{;} {\Gamma }  \vdash_{\logicS4}  {N } \hasType A }
     \infer2{  {\Psi} \mathbin{;} {\Gamma }  \vdash_{\logicS4}   {M }   {N }  \hasType B }
   \end{prooftree}
   \\
   \begin{prooftree}
     \caption{$ \Box $-I}\label{rule:lambox-type-nec-i}
     \hypo{  {\Psi} \mathbin{;}  \varepsilon   \vdash_{\logicS4}  {M } \hasType A }
     \infer1{ {\Psi} \vdash_{\logicS4}  \mathbf{box} \, \bmttsym{\{}  {M }  \bmttsym{\}} \hasType  \Box A  }
   \end{prooftree}
   \and
   \begin{prooftree}
     \caption{$ \Box $-E}\label{rule:lambox-type-nec-e}
     \hypo{ {\Psi} \vdash_{\logicS4}  {M } \hasType  \Box A  }
     \infer1{  {\Psi} \mathbin{;}   {\Delta } _{ 1 }  \mathbin{;}\dotsb\mathbin{;}  {\Delta } _{ \bmttnt{k} }    \vdash_{\logicS4}   \mathbf{unbox} _{ \bmttnt{k} }\{ {M } \}  \hasType A }
   \end{prooftree}
 \end{mathpar}

 \caption{Syntax and Typing Rules of \lambox}
 \label{fig:lamboxsynxtax}
\end{figure}

$\lambda^\Box$ can be regarded as a restricted version of our lambda calculus, in which quoted code is always closed. This means that a box type $ \Box A $ corresponds to a bounded modal type with an initial classifier $ \Box^{\mathord{\succeq}   \mathord{\boldsymbol{!} }  }  \bmttnt{A} $. The whole definition of the translation is provided in~\Zcref{fig:lamboxtranslation}.

\begin{figure}[hbpt]
 \begin{align*}
  \intertext{\textbf{Terms} $ ( {M } )_{ \overrightarrow{\gamma} ,  \overrightarrow{ \tau }  }^{\mathord{\succeq} \exclam} $}
   ( \bmttmv{x} )_{ \overrightarrow{\gamma} ,  \overrightarrow{ \tau }  }^{\mathord{\succeq} \exclam}  &= \bmttmv{x} \\
   (  \lambda  \bmttmv{x} ^{ A } .  {M }  )_{ \overrightarrow{\gamma}  \bmttsym{,}  \delta_{{\mathrm{1}}} ,  \overrightarrow{ \tau }  }^{\mathord{\succeq} \exclam}  &=  \lambda \binder{ \bmttmv{x} }\has@{ \delta_{{\mathrm{2}}} }  ({ A })^{\mathord{\succeq} \exclam}  \ldotp  ( {M } )_{ \overrightarrow{\gamma}  \bmttsym{,}  \delta_{{\mathrm{2}}} ,  \overrightarrow{ \tau }  }^{\mathord{\succeq} \exclam}   \\
  & \text{where $\delta_{{\mathrm{2}}}$ is fresh} \\
   (  {M }   {N }  )_{ \overrightarrow{\gamma} ,  \overrightarrow{ \tau }  }^{\mathord{\succeq} \exclam}  &=    ( {M } )_{ \overrightarrow{\gamma} ,  \overrightarrow{ \tau }  }^{\mathord{\succeq} \exclam}     ( {N } )_{ \overrightarrow{\gamma} ,  \overrightarrow{ \tau }  }^{\mathord{\succeq} \exclam}  \\
   ( \mathbf{box} \, \bmttsym{\{}  {M }  \bmttsym{\}} )_{ \overrightarrow{\gamma} , \bmttnt{T} }^{\mathord{\succeq} \exclam}  &=  \mathbf{quo} (\binder{ \tau })\lbrace^{\binder{ \delta } \within  \mathord{\boldsymbol{!} }  }   ( {M } )_{ \bmttsym{(}  \overrightarrow{\gamma}  \bmttsym{,}  \delta  \bmttsym{)} , \bmttsym{(}  \bmttnt{T}  \bmttsym{,}  \tau  \bmttsym{)} }^{\mathord{\succeq} \exclam}  \rbrace \\
  & \qquad\text{where $\delta$ and $\tau$ are fresh}\\
   (  \mathbf{unbox} _{ \bmttnt{k} }\{ {M } \}  )_{ \bmttsym{(}  \overrightarrow{\gamma}  \bmttsym{,}    \delta _{ 0 }  ,\dotsc,  \delta _{ \bmttnt{k} }    \bmttsym{)} , \bmttsym{(}  \bmttnt{T}  \bmttsym{,}    \tau _{ 1 }  ,\dotsc,  \tau _{ \bmttnt{k} }    \bmttsym{)} }^{\mathord{\succeq} \exclam}  &=  \mathbf{unq} _{   \tau _{ 1 }  ,\dotsc,  \tau _{ \bmttnt{k} }   }\lbrace^{  \delta _{ 0 }  }  ( {M } )_{ \bmttsym{(}  \overrightarrow{\gamma}  \bmttsym{,}   \delta _{ 0 }   \bmttsym{)} , \bmttnt{T} }^{\mathord{\succeq} \exclam}  \rbrace 
 \end{align*}

 \medskip
 \textbf{Contexts} $ {\Psi}  \rightsquigarrow  \Gamma  /  \overrightarrow{\gamma} ,  \bmttnt{T} $
 \begin{mathpar}
   \begin{prooftree}
     \hypo{}
     \infer1{  \varepsilon   \rightsquigarrow   \varepsilon   /    \mathord{\boldsymbol{!} }   ,   \varepsilon  }
   \end{prooftree}
   \and
   \begin{prooftree}
     \hypo{  {\Psi} \mathbin{;} {\Gamma }   \rightsquigarrow  \Gamma  /  \bmttsym{(}  \overrightarrow{\gamma}  \bmttsym{,}  \delta_{{\mathrm{1}}}  \bmttsym{)} ,  \bmttnt{T} }
     \hypo{\delta_{{\mathrm{2}}}\ \text{is fresh}}
     \infer2{  {\Psi} \mathbin{;} {\Gamma }  \bmttsym{,}  \bmttmv{x}  \hasType  A   \rightsquigarrow  \Gamma  \bmttsym{,}   \binder{ \bmttmv{x} }\has@{ \delta_{{\mathrm{2}}} }  ({ A })^{\mathord{\succeq} \exclam}    /  \bmttsym{(}  \overrightarrow{\gamma}  \bmttsym{,}  \delta_{{\mathrm{2}}}  \bmttsym{)} ,  \bmttnt{T} }
   \end{prooftree}
   \\
   \begin{prooftree}
     \hypo{ {\Psi}  \rightsquigarrow  \Gamma  /  \overrightarrow{\gamma} ,  \bmttnt{T} }
      \hypo{\text{$\delta$ and $\tau$ are fresh} }
      \infer2{  {\Psi} \mathbin{;}  \varepsilon    \rightsquigarrow  \Gamma  \bmttsym{,}   \tau : \mathord{\blacktriangleright} ^{\binder{ \delta } \within  \mathord{\boldsymbol{!} }  }   /  \bmttsym{(}  \overrightarrow{\gamma}  \bmttsym{,}  \delta  \bmttsym{)} ,  \bmttsym{(}  \bmttnt{T}  \bmttsym{,}  \tau  \bmttsym{)} }
    \end{prooftree}
    \and
    \begin{prooftree}
      \hypo{  {\Psi} \mathbin{;}   {\Delta } _{ 1 }  \mathbin{;}\dotsb\mathbin{;}  {\Delta } _{ \bmttnt{k} }     \rightsquigarrow  \Gamma  /  \bmttsym{(}  \overrightarrow{\gamma}  \bmttsym{,}    \delta _{ 0 }  ,\dotsc,  \delta _{ \bmttnt{k} }    \bmttsym{)} ,  \bmttsym{(}  \bmttnt{T}  \bmttsym{,}    \tau _{ 1 }  ,\dotsc,  \tau _{ \bmttnt{k} }    \bmttsym{)} }
      \infer1{ {\Psi}  \rightsquigarrow  \Gamma  \bmttsym{,}   \mathord{\blacktriangleleft} _{   \tau _{ 1 }  ,\dotsc,  \tau _{ \bmttnt{k} }   }^{  \delta _{ 0 }  }   /  \bmttsym{(}  \overrightarrow{\gamma}  \bmttsym{,}   \delta _{ 0 }   \bmttsym{)} ,  \bmttnt{T} }
    \end{prooftree}
 \end{mathpar}

 \caption{Translation from \lambox to our lambda calculus}
 \label{fig:lamboxtranslation}
\end{figure}

The term translation $ ( {M } )_{ \overrightarrow{\gamma} , \bmttnt{T} }^{\mathord{\succeq} \exclam} $ carries a sequence of classifiers $\overrightarrow{\gamma}$, which represents positions for each past stage, and a modal transition witness $\bmttnt{T}$, which represents each stage transition. 
The context translation judgment
$ {\Psi}  \rightsquigarrow  \Gamma  /  \overrightarrow{\gamma} ,  \bmttnt{T} $
states that ${\Psi}$ is translated to $\Gamma$, with $\overrightarrow{\gamma}$ recording the positions of past stages and $\bmttnt{T}$ recording the stage transitions between them.

\begin{theorem}\label{claim:embed-kripke-style-cs4-full}
 If\/ $ {\Psi} \vdash_{\logicS4}  {M } \hasType A $ and\/ $ {\Psi}  \rightsquigarrow  \Gamma  /  \overrightarrow{\gamma} ,  \bmttnt{T} $, then\/ $\Gamma  \vdash   ( {M } )_{ \overrightarrow{\gamma} , \bmttnt{T} }^{\mathord{\succeq} \exclam}   \hasType   ({ A })^{\mathord{\succeq} \exclam} $ holds.
\end{theorem}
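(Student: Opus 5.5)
The plan is to prove the statement by induction on the derivation of $ {\Psi} \vdash_{\logicS4}  {M } \hasType A $, generalising over $\Gamma$, $\overrightarrow{\gamma}$ and $\bmttnt{T}$. Before starting, I would establish an auxiliary invariant of the context translation $ {\Psi}  \rightsquigarrow  \Gamma  /  \overrightarrow{\gamma} ,  \bmttnt{T} $ by induction on that judgment. It has four parts:
\begin{itemize}
\item $\Gamma$ is well formed;
\item the last classifier of $\overrightarrow{\gamma}$ is $ \mathrm{pos} ( \Gamma ) $;
\item the number of atomic witnesses in $\bmttnt{T}$ equals the number of semicolons in ${\Psi}$;
\item for the last $k{+}1$ entries $ \delta _{ 0 } ,\dotsc, \delta _{ \bmttnt{k} } $ of $\overrightarrow{\gamma}$ and last $k$ witnesses $ \tau _{ 1 } ,\dotsc, \tau _{ \bmttnt{k} } $ of $\bmttnt{T}$, we have $ \Gamma \vdash  \tau _{ 1 } ,\dotsc, \tau _{ \bmttnt{k} }  :  \delta _{ 0 }  \sqsubseteq  \mathrm{pos} ( \Gamma )  $.
\end{itemize}
The last part is proved by composing the $\tau_i$ with \ref{rule:calc-mtrans-open} and \ref{rule:calc-mtrans-trans}, and lifting the intermediate hypothesis steps with \ref{rule:calc-mtrans-lift} via \ref{rule:itrans-hyp}. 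I would also record that every variable $x \hasType A$ in the last component ${\Gamma }$ of ${\Psi}$ is translated to some $ \binder{ \bmttmv{x} }\has@{ \delta } ({ A })^{\mathord{\succeq} \exclam} $ lying after the last $\mathord{\blacktriangleright}$ of $\Gamma$, with only hypothesis items in between. It then follows that $\Gamma \vdash \delta \preceq  \mathrm{pos} ( \Gamma ) $ by \ref{rule:itrans-hyp} and \ref{rule:gtrans-trans}.

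The cases go as follows.
\begin{itemize}
\item \textbf{Var.} This follows from the recorded fact via \ref{rule:type-var}.
\item \textbf{$\to$-I.} Extend the translation by the hypothesis rule with the fresh $\delta_2$ used in the term translation, apply the induction hypothesis, and conclude with \ref{rule:type-to-i}. The side condition $\delta_2\notin \token{FC}(({B})^{\mathord{\succeq} \exclam})$ holds because translated types only mention $ \exclam $.
\item \textbf{$\to$-E.} Immediate.
\item \textbf{$\Box$-I.} Use the $\mathord{\blacktriangleright}$-rule of the context translation to get $  {\Psi} \mathbin{;}  \varepsilon    \rightsquigarrow  \Gamma  \bmttsym{,}   \tau : \mathord{\blacktriangleright} ^{\binder{ \delta } \within  \mathord{\boldsymbol{!} }  }   / \ldots$, apply the induction hypothesis, and conclude with \ref{rule:type-bm-i}.
\end{itemize}

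The main obstacle is the \textbf{$\Box$-E} case. There the translation assumption is for $  {\Psi} \mathbin{;}   {\Delta } _{ 1 }  \mathbin{;}\dotsb\mathbin{;}  {\Delta } _{ \bmttnt{k} } $, and the induction hypothesis needs a translation of the shorter stack ${\Psi}$. I would therefore prove an inversion lemma: from $  {\Psi} \mathbin{;}   {\Delta } _{ 1 }  \mathbin{;}\dotsb\mathbin{;}  {\Delta } _{ \bmttnt{k} }     \rightsquigarrow  \Gamma  /  (\overrightarrow{\gamma} ,  \delta _{ 0 } ,\dotsc, \delta _{ \bmttnt{k} } ) ,  (\bmttnt{T} , \tau _{ 1 } ,\dotsc, \tau _{ \bmttnt{k} } ) $ one gets, via the $\mathord{\blacktriangleleft}$-rule, $ {\Psi}  \rightsquigarrow  \Gamma  \bmttsym{,}   \mathord{\blacktriangleleft} _{  \tau _{ 1 } ,\dotsc, \tau _{ \bmttnt{k} }  }^{  \delta _{ 0 }  }   /  (\overrightarrow{\gamma} , \delta _{ 0 } ) ,  \bmttnt{T} $. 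The delicate point is that this is exactly the $\mathord{\blacktriangleleft}$-rule read forwards, and the invariant supplies the premise of \ref{rule:calc-ctx-wf-shut}. Hence the extended context is well formed and its position is $ \delta _{ 0 } $, matching the last entry of the classifier sequence.

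Applying the induction hypothesis then gives $\Gamma, \mathord{\blacktriangleleft} _{ \tau _{ 1 } ,\dotsc, \tau _{ \bmttnt{k} } }^{  \delta _{ 0 }  } \vdash ( {M } )^{\mathord{\succeq} \exclam}_{\ldots} \hasType  \Box^{\mathord{\succeq}   \mathord{\boldsymbol{!} }  }  ({ A })^{\mathord{\succeq} \exclam} $. Finally, \ref{rule:type-bm-e} applies because its side condition $\Gamma \vdash  \exclam  \preceq  \mathrm{pos} ( \Gamma ) $ holds. This needs a small lemma that $ \exclam $ is $\preceq$-below every classifier of a well-formed context, which I would prove by induction on the context using \ref{rule:itrans-hyp}, \ref{rule:itrans-open} and \ref{rule:itrans-cls}. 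For items $ \mathord{\blacktriangleleft} ^{\gamma}$ the position is an earlier classifier, so the lemma reduces to the induction hypothesis. The freshness bookkeeping and the well-formedness of translated types are routine.
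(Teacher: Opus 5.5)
Your skeleton matches the paper's proof, which spells out only the $\Box$-E case: induction on the $\lambda^{\Box}$ derivation, reading the $\mathord{\blacktriangleleft}$ translation rule forwards to get $\Gamma,\mathord{\blacktriangleleft}^{\delta_0}_{\tau_1,\dots,\tau_k}$, applying the induction hypothesis, and closing with $\Box$-E. The gap is in the auxiliary facts for the other cases. Your ``recorded fact'' for Var is false once the $\mathord{\blacktriangleleft}$ translation rule has been used. That rule appends $\mathord{\blacktriangleleft}^{\delta_0}$ after a $\mathord{\blacktriangleright}$ without removing the $\mathord{\blacktriangleright}$ it closes.

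Take $x{:}\Box p;\,\varepsilon\vdash_{\mathbf{S4}}\mathbf{unbox}_1\{x\}:p$. Its context translates to $x{:}^{\gamma_x}\Box^{\succeq\boldsymbol{!}}p,\ \tau{:}\mathord{\blacktriangleright}^{\delta\succeq\boldsymbol{!}}$ with sequence $(\gamma_x,\delta)$ and witness $(\tau)$. Your $\Box$-E step then hands the induction hypothesis, for the Var premise $x{:}\Box p\vdash_{\mathbf{S4}}x:\Box p$, the context $x{:}^{\gamma_x}\Box^{\succeq\boldsymbol{!}}p,\ \tau{:}\mathord{\blacktriangleright}^{\delta\succeq\boldsymbol{!}},\ \mathord{\blacktriangleleft}^{\gamma_x}_{\tau}$ with sequence $(\gamma_x)$. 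Here $x$ is in the last (only) component, yet it lies before the only $\mathord{\blacktriangleright}$, and the items after it are not hypotheses. So the fact fails, and so does its inductive proof at the $\mathord{\blacktriangleleft}$ rule. The Var case is therefore unjustified in exactly the contexts the $\Box$-E case produces. In this instance $\gamma_x\preceq\gamma_x$ is just reflexivity, but once further hypotheses are added after such an excursion, the needed $\preceq$-chain must pass through the $\mathord{\blacktriangleleft}$ item.

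Your fourth invariant has the same blind spot. It only mentions transitions ending at $\mathrm{pos}(\Gamma)$, so it cannot be re-established after the $\mathord{\blacktriangleleft}$ rule, whose new position $\delta_0$ is an interior entry of the old sequence. Its proof sketch, which assumes only hypothesis steps between consecutive $\mathord{\blacktriangleright}$'s, ignores excursions.

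The repair is to index the invariants by the entries of $\vec\gamma=(\gamma_0,\dots,\gamma_m)$ and $T=(\tau_1,\dots,\tau_m)$ rather than by the shape of $\Gamma$:
\begin{enumerate}
\item[(a)] $\Gamma$ is well formed, $\vec\gamma$ has one entry per component of $\Psi$ (so $|\vec\gamma|=|T|+1$), and $\gamma_m=\mathrm{pos}(\Gamma)$;
\item[(b)] $\Gamma\vdash\tau_i:\gamma_{i-1}\sqsubseteq\gamma_i$ for every $i$;
\item[(c)] if $x{:}A$ lies in the $i$-th component of $\Psi$ and is translated with classifier $\delta$, then $\Gamma\vdash\delta\preceq\gamma_i$.
\end{enumerate}
These follow by induction on the translation judgment.
\begin{itemize}
\item The hypothesis rule uses $\preceq$-Hyp, since the position just before the new hypothesis is the replaced last entry, together with $\preceq$-Trans, $\sqsubseteq$-Lift and $\sqsubseteq$-Trans.
\item The $\mathord{\blacktriangleright}$ rule uses $\sqsubseteq$-$\mathord{\blacktriangleright}$.
\item The $\mathord{\blacktriangleleft}$ rule only truncates $\vec\gamma$ and $T$, so it needs nothing beyond weakening of $\preceq$/$\sqsubseteq$-judgments. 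Its WF-$\mathord{\blacktriangleleft}$ premise comes from composing (b) with $\sqsubseteq$-Trans, or from $\sqsubseteq$-Lift of reflexivity when $k=0$.
\end{itemize}
Var then follows from (a) and (c), and the rest of your argument, including the $\Box$-E case and the $\boldsymbol{!}\preceq\mathrm{pos}(\Gamma)$ lemma, goes through unchanged.
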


\begin{proof}
 By induction on the derivation of $ {\Psi} \vdash_{\logicS4}  {M } \hasType A $. We demonstrate the case of \ref{rule:lambox-type-nec-e}.

 \textit{Case} \ref{rule:lambox-type-nec-e}: We have a derivation
 \begin{center}
   \begin{prooftree}
     \hypo{ {\Psi} \vdash_{\logicS4}  {M }' \hasType  \Box A  }
     \infer1[\ref{rule:lambox-type-nec-e}]{  {\Psi} \mathbin{;}   {\Gamma } _{ 1 }  \mathbin{;}\dotsb\mathbin{;}  {\Gamma } _{ \bmttnt{k} }    \vdash_{\logicS4}   \mathbf{unbox} _{ \bmttnt{k} }\{ {M }' \}  \hasType A }
   \end{prooftree}
 \end{center}
 Decomposing $\overrightarrow{\gamma}$ to $\overrightarrow{\gamma}'  \bmttsym{,}    \delta _{ 0 }  ,\dotsc,  \delta _{ \bmttnt{k} }  $ and $\bmttnt{T}$ to $\bmttnt{T'}  \bmttsym{,}    \tau _{ 1 }  ,\dotsc,  \tau _{ \bmttnt{k} }  $, we derive $ {\Psi}  \rightsquigarrow  \Delta  \bmttsym{,}   \mathord{\blacktriangleleft} _{   \tau _{ 1 }  ,\dotsc,  \tau _{ \bmttnt{k} }   }^{  \delta _{ 0 }  }   /  \bmttsym{(}  \overrightarrow{\gamma}'  \bmttsym{,}   \delta _{ 0 }   \bmttsym{)} ,  \bmttnt{T'} $. Then we can apply the induction hypothesis, and get $\Delta  \bmttsym{,}   \mathord{\blacktriangleleft} _{   \tau _{ 1 }  ,\dotsc,  \tau _{ \bmttnt{k} }   }^{  \delta _{ 0 }  }   \vdash   ( {M }' )_{ \bmttsym{(}  \overrightarrow{\gamma}'  \bmttsym{,}   \delta _{ 0 }   \bmttsym{)} , \bmttnt{T'} }^{\mathord{\succeq} \exclam}   \hasType   \Box^{\mathord{\succeq}   \mathord{\boldsymbol{!} }  }   ({ A })^{\mathord{\succeq} \exclam}  $. We apply \ref{rule:type-bm-e} to derive $\Delta  \vdash   \mathbf{unq} _{   \tau _{ 1 }  ,\dotsc,  \tau _{ \bmttnt{k} }   }\lbrace^{  \delta _{ 0 }  }  ( {M }' )_{ \bmttsym{(}  \overrightarrow{\gamma}'  \bmttsym{,}   \delta _{ 0 }   \bmttsym{)} , \bmttnt{T'} }^{\mathord{\succeq} \exclam}  \rbrace   \hasType   ({ A })^{\mathord{\succeq} \exclam} $, which is what we want.
\end{proof}

We also confirm that the translation is injective. We define forgetful translation $ \lvert  \mathord{-}  \rvert_{\logicS4} $ as follows ($ \token{len}( \bmttnt{T} ) $ represents the number of atomic witnesses in $\bmttnt{T}$):
\begin{align*}
   \lvert \bmttnt{p} \rvert_{\logicS4}  &= \bmttnt{p} \\
   \lvert \bmttnt{A}  \mathbin{\rightarrow}  \bmttnt{B} \rvert_{\logicS4}  &=  \lvert \bmttnt{A} \rvert_{\logicS4}   \mathbin{\rightarrow}   \lvert \bmttnt{B} \rvert_{\logicS4}  \\
   \lvert  \Box^{\mathord{\succeq}  \gamma }  \bmttnt{A}  \rvert_{\logicS4}  &=  \Box  \lvert \bmttnt{A} \rvert_{\logicS4}   \\
   \lvert  \forall  \gamma_{{\mathrm{1}}} \within  \gamma_{{\mathrm{2}}} . \bmttnt{A}  \rvert_{\logicS4}  &=  \lvert \bmttnt{A} \rvert_{\logicS4} \\
  & \\
   \lvert \bmttmv{x} \rvert_{\logicS4}  &= \bmttmv{x} \\
   \lvert  \lambda \binder{ \bmttmv{x} }\has@{ \gamma } \bmttnt{A} \ldotp \bmttnt{M}  \rvert_{\logicS4}  &=  \lambda  \bmttmv{x} ^{  \lvert \bmttnt{A} \rvert_{\logicS4}  } .   \lvert \bmttnt{M} \rvert_{\logicS4}   \\
   \lvert  \bmttnt{M}   \bmttnt{N}  \rvert_{\logicS4}  &=   \lvert \bmttnt{M} \rvert_{\logicS4}     \lvert \bmttnt{N} \rvert_{\logicS4}   \\
   \lvert  \mathbf{quo} (\binder{ \tau })\lbrace^{\binder{ \gamma_{{\mathrm{1}}} } \within \gamma_{{\mathrm{2}}} }  \bmttnt{M} \rbrace  \rvert_{\logicS4}  &= \mathbf{box} \, \bmttsym{\{}   \lvert \bmttnt{M} \rvert_{\logicS4}   \bmttsym{\}} \\
   \lvert  \mathbf{unq} _{ \bmttnt{T} }\lbrace^{ \gamma } \bmttnt{M} \rbrace  \rvert_{\logicS4}  &=  \mathbf{unbox} _{  \token{len}( \bmttnt{T} )  }\{  \lvert \bmttnt{M} \rvert_{\logicS4}  \}  \\
   \lvert  \lambda \binder{ \gamma_{{\mathrm{1}}} }\within  \gamma_{{\mathrm{2}}} . \bmttnt{M}  \rvert_{\logicS4}  &=  \lvert \bmttnt{M} \rvert_{\logicS4} 
\end{align*}

\begin{theorem}\label{claim:embed-kripke-style-cs4-forget}
  \begin{enumerate}
    \item $ \lvert  ({ A })^{\mathord{\succeq} \exclam}  \rvert_{\logicS4}  = \bmttnt{A}$
    \item $ \lvert  ( {M } )_{ \overrightarrow{\gamma} , \bmttnt{T} }^{\mathord{\succeq} \exclam}  \rvert_{\logicS4}  = {M }$
  \end{enumerate}
\end{theorem}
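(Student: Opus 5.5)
Both parts go by structural induction, on the type $A$ for the first and on the $\lambda^\Box$ term $M$ for the second, using nothing more than the defining equations of $(\mathord{-})^{\mathord{\succeq}\exclam}$, $(\mathord{-})^{\mathord{\succeq}\exclam}_{\overrightarrow{\gamma},T}$ and $\lvert\mathord{-}\rvert_{\logicS4}$.

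For the first part, the atom case is immediate. The arrow case follows from the two induction hypotheses, since both translations are homomorphic on $\to$. For the box case, $(\Box A)^{\mathord{\succeq}\exclam}=\Box^{\mathord{\succeq}\exclam}(A)^{\mathord{\succeq}\exclam}$, and $\lvert\Box^{\mathord{\succeq}\gamma}B\rvert_{\logicS4}=\Box\lvert B\rvert_{\logicS4}$ ignores the bound classifier, so the induction hypothesis finishes it. The translation from $\lambda^\Box$ never produces $\forall$, so the $\forall$ clause of the forgetful map does not arise.

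For the second part, I strengthen the statement to be uniform in the indices: for all $\overrightarrow{\gamma}$ and $T$ for which the translation is defined, $\lvert (M)^{\mathord{\succeq}\exclam}_{\overrightarrow{\gamma},T}\rvert_{\logicS4}=M$. Generalising the indices is needed because the recursive calls change them. The cases are as follows.
\begin{itemize}
\item Variables and applications are immediate.
\item For $\lambda x^A.M$, the translation gives $\lambda x@\delta_2{:}(A)^{\mathord{\succeq}\exclam}.(M)_{\overrightarrow{\gamma},\delta_2,T}$, which forgets to $\lambda x^{\lvert (A)^{\mathord{\succeq}\exclam}\rvert_{\logicS4}}.\lvert (M)_{\dots}\rvert_{\logicS4}$. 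This equals $\lambda x^{A}.M$ by part~1 and the induction hypothesis at the modified index.
\item For $\mathbf{box}\{M\}$, the translation is a quotation, which forgets to $\mathbf{box}\{\lvert(M)_{(\overrightarrow{\gamma},\delta),(T,\tau)}\rvert_{\logicS4}\}$, and the induction hypothesis applies.
\item For $\mathbf{unbox}_k\{M\}$, the translation is $\mathbf{unq}_{\tau_1,\dotsc,\tau_k}\{\dots\}$, which forgets to $\mathbf{unbox}_{\mathrm{len}(\tau_1,\dotsc,\tau_k)}\{\dots\}$.
\end{itemize}

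The one point that needs care is the $\mathbf{unbox}$ case: we need $\mathrm{len}(\tau_1,\dotsc,\tau_k)=k$. The translation clause splits off exactly $k$ atomic witnesses as the suffix of $T$, so this holds provided the index $T$ is long enough for the split, i.e.\ its length is at least $k$. I handle this by stating the generalised claim only for indices arising from some context translation $\Psi\rightsquigarrow\Gamma/\overrightarrow{\gamma},T$ with $\Psi\vdash_{\logicS4}M:A$. For such indices, $\mathrm{len}(T)$ equals the number of $\mathbf{box}$/$\varepsilon$-separated stages in $\Psi$, which is at least $k$ by typing. Otherwise I simply read the statement as concerning only indices for which the translation is defined. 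Finally, the classifiers and witnesses chosen fresh in the binder cases are erased by $\lvert\mathord{-}\rvert_{\logicS4}$, so the result does not depend on those choices.
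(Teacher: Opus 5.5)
Your proof is correct and follows the paper's route: structural induction on $A$, then on $M$ with the indices generalised, using part~1 to recover the type annotation in the $\lambda$ case. Restricting the $\mathbf{unbox}$ case to indices that arise from context translations is unnecessary, because whenever the $\mathbf{unbox}_k$ clause is defined it has already split off exactly $k$ atomic witnesses, so $\mathrm{len}(\tau_1,\dotsc,\tau_k)=k$ holds by construction.
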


\begin{proof}
  By induction on the structure of $A$ and ${M }$, respectively.
\end{proof}

Then, the $\logicS4$ part of \Zcref{claim:embed-s4-and-ltl} follows from \Zcref{claim:embed-kripke-style-cs4-full} and \Zcref{claim:embed-kripke-style-cs4-forget}.

\subsection{Comparison to \lamcirc}
\begin{figure}[bt]
 \begin{tabular}{clcl}
  \textbf{Variables}     & $\bmttmv{x}, \bmttmv{y}$ && \\
  \textbf{Staging Level} & $\bmttnt{k}$ & $\in$ & $\mathbb{N}$ \\
  \textbf{Types}         & $A^{\circ}, B^{\circ}$ & $\Coloneqq$ & $\bmttnt{p} \mid A^{\circ}  \mathbin{\rightarrow}  B^{\circ} \mid  \mdsmwhtcircle A^{\circ}  $ \\
  \textbf{Terms}         & $ {M^{\circ} }, {N^{\circ} }$ & $\Coloneqq$ & $\bmttmv{x} \mid  \lambda  \bmttmv{x} ^{ A^{\circ} } .  {M^{\circ} }  \mid  {M^{\circ} } \  {N^{\circ} } $ \\
  & & $\mid$ & $\mathbf{next} \, \bmttsym{\{}  {M^{\circ} }  \bmttsym{\}} \mid \mathbf{prev} \, \bmttsym{\{}  {M^{\circ} }  \bmttsym{\}} $ \\
  \textbf{Context}       & $ \Gamma^{\circ}, \Delta^{\circ}$ & $\Coloneqq$ & $ \varepsilon  \mid \Gamma^{\circ}  \bmttsym{,}   \bmttmv{x} :^{ \bmttnt{k} } A^{\circ}  $ \\
 \end{tabular}
 \vspace{1em}

 \raggedright
 \fbox{\mbox{$ \Gamma^{\circ} \vdash_{ \bmttnt{k} }  {M^{\circ} } \hasType A^{\circ} $}}
 \vspace{0.5em}

 \begin{mathpar}
  \begin{prooftree}
    \caption{Var}\label{rule:lamcirc-t-var}
    \hypo{  \bmttmv{x} :^{ \bmttnt{k} } A^{\circ}  \in \Gamma^{\circ} }
    \infer1{ \Gamma^{\circ} \vdash_{ \bmttnt{k} }  \bmttmv{x} \hasType A^{\circ} }
  \end{prooftree}
  \and
  \begin{prooftree}
    \caption{$ \mathbin{\rightarrow} $-I}\label{rule:lamcirc-t-func}
    \hypo{ \Gamma^{\circ}  \bmttsym{,}   \bmttmv{x} :^{ \bmttnt{k} } A^{\circ}  \vdash_{ \bmttnt{k} }  {M^{\circ} } \hasType B^{\circ} }
    \infer1{ \Gamma^{\circ} \vdash_{ \bmttnt{k} }   \lambda  \bmttmv{x} ^{ A^{\circ} } .  {M^{\circ} }  \hasType A^{\circ}  \mathbin{\rightarrow}  B^{\circ} }
  \end{prooftree}
  \and
  \begin{prooftree}
    \caption{$ \mathbin{\rightarrow} $-E}\label{rule:lamcirc-t-app}
    \hypo{ \Gamma^{\circ} \vdash_{ \bmttnt{k} }  {M^{\circ} } \hasType A^{\circ}  \mathbin{\rightarrow}  B^{\circ} }
    \hypo{ \Gamma^{\circ} \vdash_{ \bmttnt{k} }  {N^{\circ} } \hasType A^{\circ} }
    \infer2{ \Gamma^{\circ} \vdash_{ \bmttnt{k} }   {M^{\circ} } \  {N^{\circ} }  \hasType B^{\circ} }
  \end{prooftree}
  \\
  \begin{prooftree}
    \caption{$ \mdsmwhtcircle $-I}\label{rule:lamcirc-t-next}
    \hypo{ \Gamma^{\circ} \vdash_{ \bmttnt{k}  \bmttsym{+}  1 }  {M^{\circ} } \hasType A^{\circ} }
    \infer1{ \Gamma^{\circ} \vdash_{ \bmttnt{k} }  \mathbf{next} \, \bmttsym{\{}  {M^{\circ} }  \bmttsym{\}} \hasType  \mdsmwhtcircle A^{\circ}  }
  \end{prooftree}
  \and
  \begin{prooftree}
    \caption{$ \mdsmwhtcircle $-E}\label{rule:lamcirc-t-prev}
    \hypo{ \Gamma^{\circ} \vdash_{ \bmttnt{k} }  {M^{\circ} } \hasType  \mdsmwhtcircle A^{\circ}  }
    \infer1{ \Gamma^{\circ} \vdash_{ \bmttnt{k}  \bmttsym{+}  1 }  \mathbf{prev} \, \bmttsym{\{}  {M^{\circ} }  \bmttsym{\}} \hasType A^{\circ} }
  \end{prooftree}
 \end{mathpar}

 \raggedright
 \fbox{\mbox{${M^{\circ} }_{{\mathrm{1}}}  \Rightarrow_{\beta}  {M^{\circ} }_{{\mathrm{2}}}$}}
 \vspace{0.5em}

 \textbf{Axioms}
 \begin{align}
   \bmttsym{(}   \lambda  \bmttmv{x} ^{ A^{\circ} } .  {M^{\circ} }_{{\mathrm{1}}}   \bmttsym{)} \  {M^{\circ} }_{{\mathrm{2}}}  &  \Rightarrow_{\beta}  {M^{\circ} }_{{\mathrm{1}}}  \bmttsym{[}  \bmttmv{x}  \coloneqq  {M^{\circ} }_{{\mathrm{2}}}  \bmttsym{]} \tag*{$\beta$-$ \mathbin{\rightarrow} $}\label{rule:lamcirc-beta-imp}\\
  \mathbf{prev} \, \bmttsym{\{}  \mathbf{next} \, \bmttsym{\{}  {M^{\circ} }  \bmttsym{\}}  \bmttsym{\}} &  \Rightarrow_{\beta}  {M^{\circ} } \tag*{$\beta$-$ \mdsmwhtcircle $}\label{rule:lamcirc-beta-circ}
 \end{align}

\textbf{Compatibility Rules}

 \begin{mathpar}
  \begin{prooftree}
    \hypo{{M^{\circ} }_{{\mathrm{1}}}  \Rightarrow_{\beta}  {M^{\circ} }_{{\mathrm{2}}}}
    \infer1{ \lambda  \bmttmv{x} ^{ A^{\circ} } .  {M^{\circ} }_{{\mathrm{1}}}   \Rightarrow_{\beta}   \lambda  \bmttmv{x} ^{ A^{\circ} } .  {M^{\circ} }_{{\mathrm{2}}} }
  \end{prooftree}
  \and
  \begin{prooftree}
    \hypo{{M^{\circ} }_{{\mathrm{1}}}  \Rightarrow_{\beta}  {M^{\circ} }_{{\mathrm{2}}}}
    \infer1{ {M^{\circ} }_{{\mathrm{1}}} \  {N^{\circ} }   \Rightarrow_{\beta}   {M^{\circ} }_{{\mathrm{2}}} \  {N^{\circ} } }
  \end{prooftree}
  \and
  \begin{prooftree}
    \hypo{{M^{\circ} }_{{\mathrm{1}}}  \Rightarrow_{\beta}  {M^{\circ} }_{{\mathrm{2}}}}
    \infer1{ {N^{\circ} } \  {M^{\circ} }_{{\mathrm{1}}}   \Rightarrow_{\beta}   {N^{\circ} } \  {M^{\circ} }_{{\mathrm{2}}} }
  \end{prooftree}
  \\
  \begin{prooftree}
    \hypo{{M^{\circ} }_{{\mathrm{1}}}  \Rightarrow_{\beta}  {M^{\circ} }_{{\mathrm{2}}}}
    \infer1{\mathbf{next} \, \bmttsym{\{}  {M^{\circ} }_{{\mathrm{1}}}  \bmttsym{\}}  \Rightarrow_{\beta}  \mathbf{next} \, \bmttsym{\{}  {M^{\circ} }_{{\mathrm{2}}}  \bmttsym{\}}}
  \end{prooftree}
  \and
  \begin{prooftree}
    \hypo{{M^{\circ} }_{{\mathrm{1}}}  \Rightarrow_{\beta}  {M^{\circ} }_{{\mathrm{2}}}}
    \infer1{\mathbf{prev} \, \bmttsym{\{}  {M^{\circ} }_{{\mathrm{1}}}  \bmttsym{\}}  \Rightarrow_{\beta}  \mathbf{prev} \, \bmttsym{\{}  {M^{\circ} }_{{\mathrm{2}}}  \bmttsym{\}}}
  \end{prooftree}
 \end{mathpar}

 \removelastskip
 \caption{Detailed Definition of \lamcirc}
 \label{fig:lamcircsynxtaxfull}
\end{figure}

In this section, we show that \lamcirc, the linear-time temporal lambda calculus of Davies~\cite{journals/jacm/Davies17}, can be embedded into our calculus by a type-preserving translation. The translation follows the
construction of Murase et al.~\cite{conf/esop/MuraseNI23} for a two-level fragment, but supports the full multilevel calculus.

Typing judgments in \lamcirc simultaneously maintain a position for each stage. To emulate this behavior, our translation, defined in \Zcref{fig:translamcirc}, carries four parameters:
\begin{enumerate}
  \item Current level $\bmttnt{k}$,
  \item Upper-bound level $\bmttnt{l}$,
  \item Mapping between levels and classifiers $  \gamma _{ 0 }  ,\dotsc,  \gamma _{ \bmttnt{l} }  $, and
  \item Atomic modal transition witnesses $  \tau _{ 1 }  ,\dotsc,  \tau _{ \bmttnt{k} }  $.
\end{enumerate}
The mapping $  \gamma _{ 0 }  ,\dotsc,  \gamma _{ \bmttnt{l} }  $ assigns a classifier $ \gamma _{ \bmttnt{n} } $ to each level $\bmttnt{n}$ and represents the position maintained for that stage.
Given a mapping $\overrightarrow{\gamma}$, we write $ \overrightarrow{\gamma} [  \bmttnt{k} \mapsto \delta  ] $ to denote a new mapping that updates $\bmttnt{k}$th element of $\overrightarrow{\gamma}$ with $\delta$.

 The sequence of atomic modal transition witnesses $  \tau _{ 1 }  ,\dotsc,  \tau _{ \bmttnt{k} }  $ captures modal transitions between $  \gamma _{ 0 }  ,\dotsc,  \gamma _{ \bmttnt{k} }  $.
\begin{figure*}[hpbt]
  \textbf{Type}\enskip $ \llparenthesis A^{\circ} \rrparenthesis_{@ \bmttnt{k}  \le  \bmttnt{l} }^{   \gamma _{ \bmttnt{k}  \bmttsym{+}  1 }  ,\dotsc,  \gamma _{ \bmttnt{l} }   } $,\enskip $ \Parens{ A^{\circ} }_{@  \bmttnt{k}  \le  \bmttnt{l} }^{   \gamma _{ \bmttnt{k}  \bmttsym{+}  1 }  ,\dotsc,  \gamma _{ \bmttnt{l} }   } $
  \begin{align*}
   \llparenthesis \bmttnt{p} \rrparenthesis_{@ \bmttnt{k}  \le  \bmttnt{l} }^{ \overrightarrow{\gamma} }  = &\ \bmttnt{p}\\
   \llparenthesis A^{\circ}  \mathbin{\rightarrow}  B^{\circ} \rrparenthesis_{@ \bmttnt{k}  \le  \bmttnt{l} }^{ \overrightarrow{\gamma} }  = &\  \Parens{ A^{\circ} }_{@  \bmttnt{k}  \le  \bmttnt{l} }^{ \overrightarrow{\gamma} }   \mathbin{\rightarrow}   \llparenthesis B^{\circ} \rrparenthesis_{@ \bmttnt{k}  \le  \bmttnt{l} }^{ \overrightarrow{\gamma} }  \\
   \llparenthesis  \mdsmwhtcircle A^{\circ}  \rrparenthesis_{@ \bmttnt{k}  \le  \bmttnt{l} }^{   \gamma _{ \bmttnt{k}  \bmttsym{+}  1 }  ,\dotsc,  \gamma _{ \bmttnt{l} }   }  = &\  \Box^{\mathord{\succeq}   \gamma _{ \bmttnt{k}  \bmttsym{+}  1 }  }   \llparenthesis A^{\circ} \rrparenthesis_{@ \bmttnt{k}  \bmttsym{+}  1  \le  \bmttnt{l} }^{   \gamma _{ \bmttnt{k}  \bmttsym{+}  2 }  ,\dotsc,  \gamma _{ \bmttnt{l} }   }   \hspace{1em}\text{if $\bmttnt{k}<\bmttnt{l} $}\\
   \Parens{ A^{\circ} }_{@  \bmttnt{k}  \le  \bmttnt{l} }^{   \gamma _{ \bmttnt{k}  \bmttsym{+}  1 }  ,\dotsc,  \gamma _{ \bmttnt{l} }   }  = &\  \forall{ \binder{  \delta _{ \bmttnt{k}  \bmttsym{+}  1 }  }{\ge  \gamma _{ \bmttnt{k}  \bmttsym{+}  1 }  } }.\dotso\forall{ \binder{  \delta _{ \bmttnt{l} }  }{\ge  \gamma _{ \bmttnt{l} }  } }.   \llparenthesis A^{\circ} \rrparenthesis_{@ \bmttnt{k}  \le  \bmttnt{l} }^{   \delta _{ \bmttnt{k}  \bmttsym{+}  1 }  ,\dotsc,  \delta _{ \bmttnt{l} }   }  
  \end{align*}

  \textbf{Term}\enskip $ \llparenthesis {M^{\circ} } \rrparenthesis_{ @  \bmttnt{k}  \le  \bmttnt{l} }^{ \bmttsym{(}    \gamma _{ 0 }  ,\dotsc,  \gamma _{ \bmttnt{l} }    \bmttsym{)} ,  \bmttsym{(}    \tau _{ 1 }  ,\dotsc,  \tau _{ \bmttnt{k} }    \bmttsym{)} } $,\enskip $ \Parens{ {M^{\circ} } }_{@  \bmttnt{k}  \le  \bmttnt{l} }^{  \bmttsym{(}    \gamma _{ 0 }  ,\dotsc,  \gamma _{ \bmttnt{l} }    \bmttsym{)} ,  \bmttsym{(}    \tau _{ 1 }  ,\dotsc,  \tau _{ \bmttnt{k} }    \bmttsym{)}  } $
  \begin{align*}
       \llparenthesis \bmttmv{x} \rrparenthesis_{ @  \bmttnt{k}  \le  \bmttnt{l} }^{ \bmttsym{(}    \gamma _{ 0 }  ,\dotsc,  \gamma _{ \bmttnt{l} }    \bmttsym{)} ,  \bmttsym{(}    \tau _{ 1 }  ,\dotsc,  \tau _{ \bmttnt{k} }    \bmttsym{)} }  = &\  {  \bmttmv{x}  \gamma _{ \bmttnt{k}  \bmttsym{+}  1 }  \dotso \gamma  }_{ \bmttnt{l} }  \\
       \llparenthesis  \lambda  \bmttmv{x} ^{ A^{\circ} } .  {M^{\circ} }  \rrparenthesis_{ @  \bmttnt{k}  \le  \bmttnt{l} }^{ \bmttsym{(}    \gamma _{ 0 }  ,\dotsc,  \gamma _{ \bmttnt{l} }    \bmttsym{)} ,  \bmttsym{(}    \tau _{ 1 }  ,\dotsc,  \tau _{ \bmttnt{k} }    \bmttsym{)} }  = &\  \lambda \binder{ \bmttmv{x} }\has@{ \delta }  \Parens{ A^{\circ} }_{@  \bmttnt{k}  \le  \bmttnt{l} }^{   \gamma _{ \bmttnt{k}  \bmttsym{+}  1 }  ,\dotsc,  \gamma _{ \bmttnt{l} }   }  \ldotp  \llparenthesis {M^{\circ} } \rrparenthesis_{ @  \bmttnt{k}  \le  \bmttnt{l} }^{  \bmttsym{(}    \gamma _{ 0 }  ,\dotsc,  \gamma _{ \bmttnt{l} }    \bmttsym{)} [  \bmttnt{k} \mapsto \delta  ]  ,  \bmttsym{(}    \tau _{ 1 }  ,\dotsc,  \tau _{ \bmttnt{k} }    \bmttsym{)} }   \\
   \llparenthesis  {M^{\circ} } \  {N^{\circ} }  \rrparenthesis_{ @  \bmttnt{k}  \le  \bmttnt{l} }^{ \bmttsym{(}    \gamma _{ 0 }  ,\dotsc,  \gamma _{ \bmttnt{l} }    \bmttsym{)} ,  \bmttsym{(}    \tau _{ 1 }  ,\dotsc,  \tau _{ \bmttnt{k} }    \bmttsym{)} }  = &\   \llparenthesis {M^{\circ} } \rrparenthesis_{ @  \bmttnt{k}  \le  \bmttnt{l} }^{ \bmttsym{(}    \gamma _{ 0 }  ,\dotsc,  \gamma _{ \bmttnt{l} }    \bmttsym{)} ,  \bmttsym{(}    \tau _{ 1 }  ,\dotsc,  \tau _{ \bmttnt{k} }    \bmttsym{)} }     \Parens{ {N^{\circ} } }_{@  \bmttnt{k}  \le  \bmttnt{l} }^{  \bmttsym{(}    \gamma _{ 0 }  ,\dotsc,  \gamma _{ \bmttnt{l} }    \bmttsym{)} ,  \bmttsym{(}    \tau _{ 1 }  ,\dotsc,  \tau _{ \bmttnt{k} }    \bmttsym{)}  }  \\
       \llparenthesis \mathbf{next} \, \bmttsym{\{}  {M^{\circ} }  \bmttsym{\}} \rrparenthesis_{ @  \bmttnt{k}  \le  \bmttnt{l} }^{ \bmttsym{(}    \gamma _{ 0 }  ,\dotsc,  \gamma _{ \bmttnt{l} }    \bmttsym{)} ,  \bmttsym{(}    \tau _{ 1 }  ,\dotsc,  \tau _{ \bmttnt{k} }    \bmttsym{)} }  = &\  \mathbf{quo} (\binder{ \tau' })\lbrace^{\binder{ \delta } \within  \gamma _{ \bmttnt{k}  \bmttsym{+}  1 }  }   \llparenthesis {M^{\circ} } \rrparenthesis_{ @  \bmttnt{k}  \bmttsym{+}  1  \le  \bmttnt{l} }^{  \bmttsym{(}    \gamma _{ 0 }  ,\dotsc,  \gamma _{ \bmttnt{l} }    \bmttsym{)} [  \bmttnt{k}  \bmttsym{+}  1 \mapsto \delta  ]  ,  \bmttsym{(}    \tau _{ 1 }  ,\dotsc,  \tau _{ \bmttnt{k} }    \bmttsym{,}  \tau'  \bmttsym{)} }  \rbrace \\
      &  \quad\text{\ if $\bmttnt{k} < \bmttnt{l}$ and $\tau'$ is fresh} \\
       \llparenthesis \mathbf{prev} \, \bmttsym{\{}  {M^{\circ} }  \bmttsym{\}} \rrparenthesis_{ @  \bmttnt{k}  \le  \bmttnt{l} }^{ \bmttsym{(}    \gamma _{ 0 }  ,\dotsc,  \gamma _{ \bmttnt{l} }    \bmttsym{)} ,  \bmttsym{(}    \tau _{ 1 }  ,\dotsc,  \tau _{ \bmttnt{k} }    \bmttsym{)} }  = &\  \mathbf{unq} _{  \tau _{ \bmttnt{k} }  }\lbrace^{  \gamma _{ \bmttnt{k}  \bmttsym{-}  1 }  }  \llparenthesis {M^{\circ} } \rrparenthesis_{ @  \bmttnt{k}  \bmttsym{-}  1  \le  \bmttnt{l} }^{ \bmttsym{(}    \gamma _{ 0 }  ,\dotsc,  \gamma _{ \bmttnt{l} }    \bmttsym{)} ,  \bmttsym{(}    \tau _{ 1 }  ,\dotsc,  \tau _{ \bmttnt{k}  \bmttsym{-}  1 }    \bmttsym{)} }  \rbrace  \quad\text{if $\bmttnt{k} > 0$} \\
       \Parens{ {N^{\circ} } }_{@  \bmttnt{k}  \le  \bmttnt{l} }^{  \bmttsym{(}    \gamma _{ 0 }  ,\dotsc,  \gamma _{ \bmttnt{l} }    \bmttsym{)} ,  \bmttsym{(}    \tau _{ 1 }  ,\dotsc,  \tau _{ \bmttnt{k} }    \bmttsym{)}  }  = &\  \lambda{\binder{  \delta _{ \bmttnt{k}  \bmttsym{+}  1 }  } }{\ge  \gamma _{ \bmttnt{k}  \bmttsym{+}  1 }  }.\dotsc\lambda{\binder{  \delta _{ \bmttnt{l} }  } }{\ge  \gamma _{ \bmttnt{l} }  }.     \\
      & \qquad  \llparenthesis {N^{\circ} } \rrparenthesis_{ @  \bmttnt{k}  \le  \bmttnt{l} }^{ \bmttsym{(}    \gamma _{ 0 }  ,\dotsc,  \gamma _{ \bmttnt{k} }    \bmttsym{,}    \delta _{ \bmttnt{k}  \bmttsym{+}  1 }  ,\dotsc,  \delta _{ \bmttnt{l} }    \bmttsym{)} ,  \bmttsym{(}    \tau _{ 1 }  ,\dotsc,  \tau _{ \bmttnt{k} }    \bmttsym{)} } 
  \end{align*}

  \par\noindent
  \textbf{Context}\enskip $ \Gamma^{\circ} @ \bmttnt{k}  \rightsquigarrow_{\le  \bmttnt{l} }  \Delta  /  \bmttsym{(}    \gamma _{ 0 }  ,\dotsc,  \gamma _{ \bmttnt{l} }    \bmttsym{)} ,  \bmttsym{(}    \tau _{ 1 }  ,\dotsc,  \tau _{ \bmttnt{k} }    \bmttsym{)} $
  \begin{mathpar}
   \begin{prooftree}
     \caption{$\rightsquigarrow$-$ \varepsilon $}\label{rule:-->-empty}
     \hypo{}
     \infer1{  \varepsilon  @ 0  \rightsquigarrow_{\le  \bmttnt{l} }   \varepsilon   /   \overrightarrow{  \mathord{\boldsymbol{!} }  }  ,   \varepsilon  }
   \end{prooftree}
   \and
   \begin{prooftree}
     \caption{$\rightsquigarrow$-$ \mathord{\blacktriangleleft} $}\label{rule:-->-shut}
     \hypo{ \Gamma^{\circ} @ \bmttnt{k}  \rightsquigarrow_{\le  \bmttnt{l} }  \Delta  /  \overrightarrow{\gamma} ,  \bmttsym{(}    \tau _{ 1 }  ,\dotsc,  \tau _{ \bmttnt{k} }    \bmttsym{)} }
     \hypo{\bmttnt{k} > 0}
     \infer2{ \Gamma^{\circ} @ \bmttnt{k}  \bmttsym{-}  1  \rightsquigarrow_{\le  \bmttnt{l} }  \Delta  \bmttsym{,}   \mathord{\blacktriangleleft} _{  \tau _{ \bmttnt{k} }  }^{  \gamma _{ \bmttnt{k}  \bmttsym{-}  1 }  }   /  \overrightarrow{\gamma} ,  \bmttsym{(}    \tau _{ 1 }  ,\dotsc,  \tau _{ \bmttnt{k}  \bmttsym{-}  1 }    \bmttsym{)} }
   \end{prooftree}
   \and
   \begin{prooftree}
     \caption{$\rightsquigarrow$-V}\label{rule:-->-var}
     \hypo{ \Gamma^{\circ} @ \bmttnt{k}  \rightsquigarrow_{\le  \bmttnt{l} }  \Delta  /  \bmttsym{(}    \gamma _{ 0 }  ,\dotsc,  \gamma _{ \bmttnt{l} }    \bmttsym{)} ,  \bmttsym{(}    \tau _{ 1 }  ,\dotsc,  \tau _{ \bmttnt{k} }    \bmttsym{)} }
     \infer1{ \Gamma^{\circ}  \bmttsym{,}   \bmttmv{x} :^{ \bmttnt{k} } A^{\circ}  @ \bmttnt{k}  \rightsquigarrow_{\le  \bmttnt{l} }  \Delta  \bmttsym{,}   \binder{ \bmttmv{x} }\has@{ \delta }  \Parens{ A^{\circ} }_{@  \bmttnt{k}  \le  \bmttnt{l} }^{   \gamma _{ \bmttnt{k}  \bmttsym{+}  1 }  ,\dotsc,  \gamma _{ \bmttnt{l} }   }    /   \bmttsym{(}    \gamma _{ 0 }  ,\dotsc,  \gamma _{ \bmttnt{l} }    \bmttsym{)} [  \bmttnt{k} \mapsto \delta  ]  ,  \bmttsym{(}    \tau _{ 1 }  ,\dotsc,  \tau _{ \bmttnt{k} }    \bmttsym{)} }
   \end{prooftree}
   \\
   \begin{prooftree}
     \caption{$\rightsquigarrow$-$ \mathord{\blacktriangleright} $}\label{rule:-->-open}
     \hypo{ \Gamma^{\circ} @ \bmttnt{k}  \rightsquigarrow_{\le  \bmttnt{l} }  \Delta  /  \bmttsym{(}    \gamma _{ 0 }  ,\dotsc,  \gamma _{ \bmttnt{l} }    \bmttsym{)} ,  \bmttsym{(}    \tau _{ 1 }  ,\dotsc,  \tau _{ \bmttnt{k} }    \bmttsym{)} }
     \hypo{\bmttnt{k} < \bmttnt{l}}
     \hypo{\text{$\delta$ and $\tau'$ are fresh}}
     \infer3{ \Gamma^{\circ} @ \bmttnt{k}  \bmttsym{+}  1  \rightsquigarrow_{\le  \bmttnt{l} }  \Delta  \bmttsym{,}   \tau' : \mathord{\blacktriangleright} ^{\binder{ \delta } \within  \gamma _{ \bmttnt{k}  \bmttsym{+}  1 }  }   /   \bmttsym{(}    \gamma _{ 0 }  ,\dotsc,  \gamma _{ \bmttnt{l} }    \bmttsym{)} [  \bmttnt{k}  \bmttsym{+}  1 \mapsto \delta  ]  ,  \bmttsym{(}    \tau _{ 1 }  ,\dotsc,  \tau _{ \bmttnt{k} }    \bmttsym{,}  \tau'  \bmttsym{)} }
   \end{prooftree}
   \and
   \begin{prooftree}
     \caption{$\rightsquigarrow$-M}\label{rule:-->-mon}
     \hypo{ \Gamma^{\circ} @ \bmttnt{k}  \rightsquigarrow_{\le  \bmttnt{l} }  \Delta  /  \bmttsym{(}    \gamma _{ 0 }  ,\dotsc,  \gamma _{ \bmttnt{l} }    \bmttsym{)} ,  \bmttsym{(}    \tau _{ 1 }  ,\dotsc,  \tau _{ \bmttnt{k} }    \bmttsym{)} }
     \hypo{\text{$  \delta _{ \bmttnt{k}  \bmttsym{+}  1 }  ,\dotsc,  \delta _{ \bmttnt{l} }  $ are fresh }}
     \infer2{ \Gamma^{\circ} @ \bmttnt{k}  \rightsquigarrow_{\le  \bmttnt{l} }  \Delta  \bmttsym{,}   \binder{  \delta _{ \bmttnt{k}  \bmttsym{+}  1 }  }{ \within  \gamma _{ \bmttnt{k}  \bmttsym{+}  1 }  },\dotsc,\binder{  \delta _{ \bmttnt{l} }  }{ \within  \gamma _{ \bmttnt{l} }  }   /  \bmttsym{(}    \gamma _{ 0 }  ,\dotsc,  \gamma _{ \bmttnt{k} }    \bmttsym{,}    \delta _{ \bmttnt{k}  \bmttsym{+}  1 }  ,\dotsc,  \delta _{ \bmttnt{l} }    \bmttsym{)} ,  \bmttsym{(}    \tau _{ 1 }  ,\dotsc,  \tau _{ \bmttnt{k} }    \bmttsym{)} }
   \end{prooftree}
  \end{mathpar}
 \caption{Translation from \lamcirc (classifiers named $\delta$ or $ \delta _{ \bmttnt{k} } $ are chosen fresh)}
 \label{fig:translamcirc}
\end{figure*}

The basic idea is to annotate \lamcirc types and terms with
classifiers determined by the mapping. In particular, the translations of modal types, quotations, and unquotations use the corresponding classifiers.

However, this approach alone is not sufficient to define a sound translation. For example, consider the following \lamcirc term.
\[
  \lambda  \bmttmv{x} ^{  \mdsmwhtcircle A^{\circ}  } .  \mathbf{next} \, \bmttsym{\{}   \lambda  \bmttmv{y} ^{ B^{\circ} } .  \mathbf{prev} \, \bmttsym{\{}  \bmttmv{x}  \bmttsym{\}}   \bmttsym{\}} 
\]
When translating the type $ \mdsmwhtcircle A^{\circ} $, the translation infers its classifier from the position of the next stage. However, the classifiers inferred at the binding and use sites of x differ, making the translated types inconsistent. That is why we introduce polymorphic classifiers to address this gap, as shown in the definitions of $ \Parens{ A^{\circ} }_{@  \bmttnt{k}  \le  \bmttnt{l} }^{   \gamma _{ \bmttnt{k}  \bmttsym{+}  1 }  ,\dotsc,  \gamma _{ \bmttnt{l} }   } $, $ \Parens{ {M^{\circ} } }_{@  \bmttnt{k}  \le  \bmttnt{l} }^{  \bmttsym{(}    \gamma _{ 0 }  ,\dotsc,  \gamma _{ \bmttnt{l} }    \bmttsym{)} ,  \bmttsym{(}    \tau _{ 1 }  ,\dotsc,  \tau _{ \bmttnt{k} }    \bmttsym{)}  } $ and \ref{rule:-->-mon}. At each variable occurrence, these quantifiers are instantiated with the
classifiers corresponding to the point of use.
In this sense, polymorphic classifiers play an essential role in this translation. We can confirm that it preserves typing.

\begin{lemma}\label{lem:trans_lamcirc_position}
 If\/ $ \Gamma^{\circ} @ \bmttnt{k}  \rightsquigarrow_{\le  \bmttnt{l} }  \Delta  /  \bmttsym{(}    \gamma _{ 0 }  ,\dotsc,  \gamma _{ \bmttnt{l} }    \bmttsym{)} ,  \bmttsym{(}    \tau _{ 1 }  ,\dotsc,  \tau _{ \bmttnt{k} }    \bmttsym{)} $, then $ \mathrm{pos} ( \Delta )  =  \gamma _{ \bmttnt{k} } $.
\end{lemma}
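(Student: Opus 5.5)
We argue by induction on the derivation of the context-translation judgment $\Gamma^{\circ} @ k \rightsquigarrow_{\le l} \Delta / (\gamma_0,\dotsc,\gamma_l), (\tau_1,\dotsc,\tau_k)$, inspecting the last rule of \Zcref{fig:translamcirc}. In each case we compute $\mathrm{pos}(\Delta)$ from the defining equations of $\mathrm{pos}$ and compare it with the $k$th entry of the output mapping.

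\textbf{Base and binding cases.}
\begin{itemize}
\item \ref{rule:-->-empty}: here $\Delta = \varepsilon$, $k = 0$, and the mapping is constantly $\exclam$. So $\mathrm{pos}(\varepsilon) = \exclam = \gamma_0$.
\item \ref{rule:-->-var}: the new context is $\Delta, \binder{x}\has@{\delta} \Parens{A^{\circ}}$, so its position is $\delta$. The output mapping is $\overrightarrow{\gamma}[k \mapsto \delta]$, whose $k$th entry is $\delta$.
\item \ref{rule:-->-open}: the new context is $\Delta, \tau' : \mathord{\blacktriangleright}^{\binder{\delta}\within\gamma_{k+1}}$, so its position is $\delta$. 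The level becomes $k+1$ and the mapping is $\overrightarrow{\gamma}[k+1 \mapsto \delta]$, whose $(k+1)$th entry is $\delta$.
\end{itemize}
None of these cases uses the induction hypothesis.

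\textbf{Cases using the induction hypothesis.}
\begin{itemize}
\item \ref{rule:-->-mon}: the appended items $\binder{\delta_{k+1}}\within\gamma_{k+1},\dotsc$ are classifier declarations, which do not change the position. Hence $\mathrm{pos}(\Delta, \dots) = \mathrm{pos}(\Delta)$, which equals $\gamma_k$ by the induction hypothesis. The output mapping alters only indices $k+1,\dotsc,l$, so its $k$th entry is still $\gamma_k$.
\item \ref{rule:-->-shut}: the new context $\Delta, \mathord{\blacktriangleleft}_{\tau_k}^{\gamma_{k-1}}$ has position $\gamma_{k-1}$. The level becomes $k-1$ and the mapping $\overrightarrow{\gamma}$ is unchanged, so the claim holds immediately. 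This case also does not actually need the induction hypothesis.
\end{itemize}

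\textbf{Expected obstacle.} The only delicate point is bookkeeping. The mapping has fixed length $l+1$ in every rule, and the updates $[k \mapsto \delta]$ are well defined because the side conditions guarantee $k \le l$ (for instance, $k < l$ in \ref{rule:-->-open}). Beyond keeping this indexing straight, each case is a direct unfolding of the definition of $\mathrm{pos}$.
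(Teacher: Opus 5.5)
Your proof is correct and takes the same approach as the paper, which proceeds by induction on the derivation of the context-translation judgment. Your case analysis fills in the details the paper leaves implicit, and you are right that only the $\rightsquigarrow$-M case actually needs the induction hypothesis; the other cases follow directly from the definition of $\mathrm{pos}$.
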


\begin{proof}
 By induction on the derivation of $ \Gamma^{\circ} @ \bmttnt{k}  \rightsquigarrow_{\le  \bmttnt{l} }  \Delta  /  \bmttsym{(}    \gamma _{ 0 }  ,\dotsc,  \gamma _{ \bmttnt{l} }    \bmttsym{)} ,  \bmttsym{(}    \tau _{ 1 }  ,\dotsc,  \tau _{ \bmttnt{k} }    \bmttsym{)} $.
\end{proof}

\begin{lemma}\label{lem:trans_lamcirc_scope_transition}
 If the derivation $ \Gamma^{\circ}_{{\mathrm{1}}}  \bmttsym{,}  \Gamma^{\circ}_{{\mathrm{2}}} @ \bmttnt{k_{{\mathrm{1}}}}  \rightsquigarrow_{\le  \bmttnt{l} }  \Delta_{{\mathrm{1}}}  \bmttsym{,}  \Delta_{{\mathrm{2}}}  /  \bmttsym{(}    \gamma _{ 0 }  ,\dotsc,  \gamma _{ \bmttnt{l} }    \bmttsym{)} ,  \bmttsym{(}    \tau _{ 1 }  ,\dotsc,  \tau _{ \bmttnt{k_{{\mathrm{1}}}} }    \bmttsym{)} $ includes another derivation of\/ $ \Gamma^{\circ}_{{\mathrm{1}}} @ \bmttnt{k_{{\mathrm{2}}}}  \rightsquigarrow_{\le  \bmttnt{l} }  \Delta_{{\mathrm{1}}}  /  \bmttsym{(}    \delta _{ 0 }  ,\dotsc,  \delta _{ \bmttnt{l} }    \bmttsym{)} ,  \bmttsym{(}    \tau' _{ 1 }  ,\dotsc,  \tau' _{ \bmttnt{k_{{\mathrm{2}}}} }    \bmttsym{)} $, then $\Delta_{{\mathrm{1}}}  \bmttsym{,}  \Delta_{{\mathrm{2}}}  \vdash   \delta _{ \bmttnt{n} }   \preceq   \gamma _{ \bmttnt{n} } $ for all $\bmttnt{n}$ such that $0 \le \bmttnt{n} \le \bmttnt{l}$.
\end{lemma}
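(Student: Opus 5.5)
We argue by induction on the derivation of the longer translation judgment $\Gamma^{\circ}_{1}, \Gamma^{\circ}_{2} @ k_1 \rightsquigarrow_{\le l} \Delta_1, \Delta_2 / (\gamma_0,\dotsc,\gamma_l), (\tau_1,\dotsc,\tau_{k_1})$. More precisely, we fix the included subderivation for $\Gamma^{\circ}_1$ and do induction on the number of rules applied on top of it. In the base case the two derivations coincide. Then $\delta_n = \gamma_n$ for every $n$, and the claim follows from \ref{rule:itrans-refl}, since well-formedness of the translated context ensures that each $\gamma_n$ lies in $\DomC(\Delta_1,\Delta_2)$.

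In the inductive step the last rule extends an intermediate judgment $\Gamma^{\circ}_1,\Gamma^{\circ}_{2}{}' @ k' \rightsquigarrow_{\le l} \Delta_1,\Delta_2' / \overrightarrow{\gamma}', \dotsc$. The induction hypothesis gives $\Delta_1,\Delta_2' \vdash \delta_n \preceq \gamma'_n$ for all $n$. We then show two things:
\begin{enumerate}
\item \emph{Weakening.} Derivations of $\preceq$ survive appending the new context item. The rules in \Zcref{fig:derive-trans} for $\preceq$ only inspect membership of items, or prefixes of the form $\Gamma_1^{\gamma_1}, A^{\gamma_2}, \Gamma_2$, and both are stable under extension on the right. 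So a routine induction gives the weakening.
\item \emph{One-step relation.} For each $n$ we show $\Delta_1,\Delta_2 \vdash \gamma'_n \preceq \gamma_n$. Combined with weakening and \ref{rule:gtrans-trans}, this closes the step.
\end{enumerate}

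The one-step relation is checked rule by rule.
\begin{itemize}
\item[] \ref{rule:-->-shut}: the mapping $\overrightarrow{\gamma}$ is unchanged, so we use reflexivity.
\item[] \ref{rule:-->-var}: only the $k$th entry changes, to $\delta$. By Lemma~\ref{lem:trans_lamcirc_position} the position of $\Delta$ is $\gamma_k$, so \ref{rule:itrans-hyp} yields $\gamma_k \preceq \delta$. All other entries are handled by reflexivity.
\item[] \ref{rule:-->-open}: entry $k+1$ becomes $\delta$ via the item $\tau' : \blacktriangleright^{\delta \within \gamma_{k+1}}$. Rule \ref{rule:itrans-open} yields $\gamma_{k+1} \preceq \delta$.
\item[] \ref{rule:-->-mon}: each new $\delta_m \within \gamma_m$ yields $\gamma_m \preceq \delta_m$ by \ref{rule:itrans-cls}.
\end{itemize}
In every case only the updated entries need a non-trivial argument, and they are covered by exactly one of the three structural rules \ref{rule:itrans-hyp}, \ref{rule:itrans-open}, \ref{rule:itrans-cls}.

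The step we expect to be the main obstacle is making the induction well-founded and bookkeeping-correct. The statement is phrased as \emph{the derivation includes another derivation}, so we must be sure that every rule of the translation has exactly one premise of the same judgment form. This holds, since the only other premises are side conditions. The included subderivation is therefore reached by following the unique chain of premises, and the context $\Delta_1$ is a prefix at every stage, so no rule can delete items.

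A secondary subtlety is the use of Lemma~\ref{lem:trans_lamcirc_position} in the \ref{rule:-->-var} case. That lemma is exactly what justifies reading $\mathrm{pos}(\Delta) = \gamma_k$ for applying \ref{rule:itrans-hyp}.
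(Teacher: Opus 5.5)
Your proposal is correct and is essentially the paper's proof, which is simply induction on the translation derivation. Your split into weakening of $\preceq$-judgments plus a one-step relation $\gamma'_n \preceq \gamma_n$ (discharged by $\preceq$-Hyp together with Lemma~\ref{lem:trans_lamcirc_position}, by $\preceq$-$\blacktriangleright$, and by $\preceq$-Cls) is exactly the bookkeeping the paper leaves implicit. One small precision: membership of each $\gamma_n$ in $\DomC(\Delta_1,\Delta_2)$, which you need for reflexivity in the base case, comes from a trivial invariant of the translation (each mapping entry is either $\exclam$ or bound by the item appended when it was set), not from well-formedness of the context alone.
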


\begin{proof}
 By induction on the derivation of $ \Gamma^{\circ}_{{\mathrm{1}}}  \bmttsym{,}  \Gamma^{\circ}_{{\mathrm{2}}} @ \bmttnt{k_{{\mathrm{1}}}}  \rightsquigarrow_{\le  \bmttnt{l} }  \Delta_{{\mathrm{1}}}  \bmttsym{,}  \Delta_{{\mathrm{2}}}  /  \bmttsym{(}    \gamma _{ 0 }  ,\dotsc,  \gamma _{ \bmttnt{l} }    \bmttsym{)} ,  \bmttsym{(}    \tau _{ 1 }  ,\dotsc,  \tau _{ \bmttnt{k_{{\mathrm{1}}}} }    \bmttsym{)} $.
\end{proof}

\begin{lemma}\label{lem:trans_lamcirc_modal_transition}
 If\/ $ \Gamma^{\circ} @ \bmttnt{k}  \rightsquigarrow_{\le  \bmttnt{l} }  \Delta  /  \bmttsym{(}    \gamma _{ 0 }  ,\dotsc,  \gamma _{ \bmttnt{l} }    \bmttsym{)} ,  \bmttsym{(}    \tau _{ 1 }  ,\dotsc,  \tau _{ \bmttnt{k} }    \bmttsym{)} $, then $ \Delta \vdash  \tau _{ \bmttnt{n}  \bmttsym{+}  1 }  :  \gamma _{ \bmttnt{n} }  \sqsubseteq  \gamma _{ \bmttnt{n}  \bmttsym{+}  1 }  $ for any $\bmttnt{n}$ such that $0 \le \bmttnt{n} < \bmttnt{k}$.
\end{lemma}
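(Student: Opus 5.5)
The plan is to argue by induction on the derivation of the context translation $\Gamma^{\circ} @ k \rightsquigarrow_{\le l} \Delta / (\gamma_0,\dotsc,\gamma_l), (\tau_1,\dotsc,\tau_k)$, proving the claim for every $n$ with $0 \le n < k$. Two generic facts will be used throughout. First, derivations of $\sqsubseteq$ are stable under extending the context on the right: \ref{rule:calc-mtrans-open} already allows an arbitrary suffix $\Gamma_2$, and the remaining rules, including \ref{rule:calc-mtrans-lift} via the $\preceq$-rules, are preserved by appending items. Second, \Zcref{lem:trans_lamcirc_position} tells us the position of each intermediate translated context.

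The cases go as follows.

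\ref{rule:-->-empty}: here $k=0$, so the claim is vacuous.

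\ref{rule:-->-var}: the mapping changes only at index $k$, from $\gamma_k$ to a fresh $\delta$, and the witness list is unchanged. For $n+1<k$, the IH plus weakening suffices. For $n+1=k$, we already have $\Delta \vdash \tau_k : \gamma_{k-1}\sqsubseteq\gamma_k$ by the IH. By \ref{rule:itrans-hyp} and \Zcref{lem:trans_lamcirc_position}, we have $\Delta, x\has@{\delta} \dots \vdash \gamma_k \preceq \delta$. Lifting this with \ref{rule:calc-mtrans-lift} gives $\varepsilon : \gamma_k \sqsubseteq \delta$, and composing with \ref{rule:calc-mtrans-trans} yields $\tau_k + \varepsilon = \tau_k : \gamma_{k-1}\sqsubseteq\delta$.

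\ref{rule:-->-mon}: only indices above $k$ change, so for $n<k$ both endpoints are untouched and weakening suffices.

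\ref{rule:-->-shut}: $k$ decreases and the witness list is truncated while the mapping stays the same, so the required statements are a subset of the IH, weakened past the $\blacktriangleleft$ item.

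\ref{rule:-->-open}: the new witness $\tau'$ goes from $\gamma_k = \mathrm{pos}(\Delta)$ (by \Zcref{lem:trans_lamcirc_position}) to $\delta$, which is exactly \ref{rule:calc-mtrans-open}. Indices $n<k$ keep their endpoints, since index $k+1$ is replaced but is not an endpoint for $n+1\le k$; those follow from the IH and weakening.

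I expect the main obstacle to be the weakening step. Right-extension of a $\sqsubseteq$ judgment is clear for items that only add assumptions. Past a $\blacktriangleleft$, however, I must check that no rule depends on the position, and the rules in \Zcref{fig:derive-trans} do not. I would therefore state and prove a small auxiliary lemma, by induction on the transition derivation, that $\Gamma\vdash T:\gamma_1\sqsubseteq\gamma_2$ and $\vdash\Gamma,\Gamma':\mathbf{ctx}$ imply $\Gamma,\Gamma'\vdash T:\gamma_1\sqsubseteq\gamma_2$, with the analogous statement for $\preceq$. The one delicate point is the exact-witness equation $\tau_k + \varepsilon = \tau_k$ in the \ref{rule:-->-var} case, which holds because concatenation with the empty sequence is the identity.
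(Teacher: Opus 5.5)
Your proposal is correct and follows the paper's proof, which is induction on the derivation of the context translation. Your case analysis supplies exactly the details the paper omits: the $\rightsquigarrow$-$\varepsilon$ case is vacuous; $\rightsquigarrow$-M and $\rightsquigarrow$-$\blacktriangleleft$ follow by weakening; $\rightsquigarrow$-$\blacktriangleright$ uses $\sqsubseteq$-$\blacktriangleright$ for the new witness; and in $\rightsquigarrow$-V you apply $\preceq$-Hyp, lift it, and compose to get $\tau_k + \varepsilon = \tau_k : \gamma_{k-1} \sqsubseteq \delta$. The only loose end is the hypothesis $\vdash \Gamma,\Gamma' : \mathbf{ctx}$ of your weakening lemma, which the statement does not provide (the paper also leaves this implicit); you can discharge it by proving $\vdash \Delta : \mathbf{ctx}$ in the same induction, where the $\rightsquigarrow$-$\blacktriangleleft$ case needs precisely the $n = k-1$ instance of the induction hypothesis.
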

\begin{proof}
 By induction on the derivation of $ \Gamma^{\circ} @ \bmttnt{k}  \rightsquigarrow_{\le  \bmttnt{l} }  \Delta  /  \bmttsym{(}    \gamma _{ 0 }  ,\dotsc,  \gamma _{ \bmttnt{l} }    \bmttsym{)} ,  \bmttsym{(}    \tau _{ 1 }  ,\dotsc,  \tau _{ \bmttnt{k} }    \bmttsym{)} $.
\end{proof}

\begin{lemma}\label{lem:trans_lamcirc_cls_subst_on_type}
 If $ \llparenthesis A^{\circ} \rrparenthesis_{@ \bmttnt{k}  \le  \bmttnt{l} }^{   \gamma _{ \bmttnt{k}  \bmttsym{+}  1 }  ,\dotsc,  \gamma _{ \bmttnt{l} }   } $ is defined,
 then $  \llparenthesis A^{\circ} \rrparenthesis_{@ \bmttnt{k}  \le  \bmttnt{l} }^{   \gamma _{ \bmttnt{k}  \bmttsym{+}  1 }  ,\dotsc,  \gamma _{ \bmttnt{l} }   }  [   \gamma _{ \bmttnt{k}  \bmttsym{+}  1 }  \coloneqq  \delta _{ \bmttnt{k}  \bmttsym{+}  1 }   ]\dotso[   \gamma _{ \bmttnt{l} }  \coloneqq  \delta _{ \bmttnt{l} }   ]  =  \llparenthesis A^{\circ} \rrparenthesis_{@ \bmttnt{k}  \le  \bmttnt{l} }^{   \delta _{ \bmttnt{k}  \bmttsym{+}  1 }  ,\dotsc,  \delta _{ \bmttnt{l} }   } $
\end{lemma}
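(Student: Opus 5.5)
This is a structural induction on the \lamcirc type $A^{\circ}$, with the level pair $k \le l$ and the classifier sequence generalised. The key facts are that $\llparenthesis A^{\circ}\rrparenthesis_{@k\le l}^{\gamma_{k+1},\dotsc,\gamma_l}$ mentions the classifiers $\gamma_{k+1},\dotsc,\gamma_l$ only as bounds of modalities and quantifiers, and that the translation is defined uniformly in these parameters. So I prove the following, for all $k \le l$, all sequences $\gamma_{k+1},\dotsc,\gamma_l$ and all $\delta_{k+1},\dotsc,\delta_l$. First, the substitution property for $\llparenthesis A^{\circ}\rrparenthesis$ as stated. Second, simultaneously, the analogous property for the polymorphic translation:
\[ \Parens{A^{\circ}}_{@k\le l}^{\gamma_{k+1},\dotsc,\gamma_l}[\gamma_{k+1}\coloneqq\delta_{k+1}]\dotso[\gamma_l\coloneqq\delta_l] = \Parens{A^{\circ}}_{@k\le l}^{\delta_{k+1},\dotsc,\delta_l}. \]
The second is needed because the argument type of an arrow is translated with $\Parens{-}$.

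Throughout, I treat the $\gamma_i$ as distinct classifier names that do not occur bound in the translated type. The freshness conventions of \Zcref{fig:translamcirc} ensure this, since every $\delta$ introduced by a quantifier is chosen fresh. Under this convention the iterated substitution acts on each bound occurrence independently, like a simultaneous substitution.

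\textbf{Atom.} $\llparenthesis p\rrparenthesis = p$ on both sides, and classifier substitution fixes $p$.

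\textbf{Arrow.} Classifier substitution distributes over $\to$. The domain is then handled by the $\Parens{-}$ part of the inductive hypothesis and the codomain by the $\llparenthesis-\rrparenthesis$ part.

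\textbf{Next modality.} With $k<l$, we have $\llparenthesis \mdsmwhtcircle A^{\circ}\rrparenthesis = \Box^{\succeq\gamma_{k+1}}\llparenthesis A^{\circ}\rrparenthesis_{@k+1\le l}^{\gamma_{k+2},\dotsc,\gamma_l}$. The substitution for $\gamma_{k+1}$ rewrites the bound to $\delta_{k+1}$ and, by freshness, does not affect the body, since $\gamma_{k+1}$ is not among its parameters. The remaining substitutions are pushed into the body, where the inductive hypothesis applies at level $k+1$. Definedness also transfers, since the side condition $k<l$ does not depend on the classifiers.

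\textbf{The $\Parens{-}$ part.} Unfold $\Parens{A^{\circ}}^{\gamma}$ to $\forall\delta'_{k+1}\ge\gamma_{k+1}.\dotso\forall\delta'_l\ge\gamma_l.\llparenthesis A^{\circ}\rrparenthesis^{\delta'_{k+1},\dotsc,\delta'_l}$. The bound variables $\delta'_i$ are fresh, so they are distinct from every $\gamma_j$ and $\delta_j$. By the definition of substitution on $\forall$, the substitutions then act only on the bounds, turning each $\gamma_i$ into $\delta_i$. The body does not mention the $\gamma_j$, so it is unchanged, and the result is exactly $\Parens{A^{\circ}}^{\delta}$ up to $\alpha$-equivalence. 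This needs only the definition, not the inductive hypothesis, but it must be proved together with the $\llparenthesis-\rrparenthesis$ part because the arrow case uses it.

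The main obstacle I expect is the bookkeeping around capture and sequential substitution. If some $\delta_i$ coincides with a later $\gamma_j$, the iterated substitution $[\gamma_{k+1}\coloneqq\delta_{k+1}]\dotso$ could rename twice. I would handle this by assuming up front that the $\delta$'s are fresh with respect to the $\gamma$'s, which is how the lemma is used in \ref{rule:-->-mon} and in the variable case. I would also work modulo $\alpha$-equivalence of bound classifiers, reading the equality in the statement up to renaming.
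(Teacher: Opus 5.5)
Your proposal is correct and follows the paper's route; the paper's proof is just structural induction on $A^{\circ}$. You additionally make explicit two things that one-line proof leaves implicit: the companion claim for $\Parens{\mathord{-}}$ that the arrow case needs, and the side conditions (parameters pairwise distinct, replacements not clashing with later parameters or with bound classifiers) without which the iterated substitution in the statement does not behave simultaneously and the equation can fail. Those conditions are met where the lemma is actually applied, in the variable case of the typing-preservation theorem for the translation of \lamcirc, because there the parameters are the freshly chosen bound classifiers of the quantifier prefix.
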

\begin{proof}
 By induction on the structure of $A^{\circ}$.
\end{proof}

\begin{theorem*}\label{thm:embedlamcirc_preserve_typing}
 If\/ $ \Gamma^{\circ} \vdash_{ \bmttnt{k} }  {M^{\circ} } \hasType A^{\circ} $, $ \Gamma^{\circ} @ \bmttnt{k}  \rightsquigarrow_{\le  \bmttnt{l} }  \Delta  /  \bmttsym{(}    \gamma _{ 0 }  ,\dotsc,  \gamma _{ \bmttnt{l} }    \bmttsym{)} ,  \bmttsym{(}    \tau _{ 1 }  ,\dotsc,  \tau _{ \bmttnt{k} }    \bmttsym{)} $ and $\bmttnt{l}$ is sufficiently large, then $\Delta  \vdash   \llparenthesis {M^{\circ} } \rrparenthesis_{ @  \bmttnt{k}  \le  \bmttnt{l} }^{ \bmttsym{(}    \gamma _{ 0 }  ,\dotsc,  \gamma _{ \bmttnt{l} }    \bmttsym{)} ,  \bmttsym{(}    \tau _{ 1 }  ,\dotsc,  \tau _{ \bmttnt{k} }    \bmttsym{)} }   \hasType   \llparenthesis A^{\circ} \rrparenthesis_{@ \bmttnt{k}  \le  \bmttnt{l} }^{   \gamma _{ \bmttnt{k}  \bmttsym{+}  1 }  ,\dotsc,  \gamma _{ \bmttnt{l} }   } $.
\end{theorem*}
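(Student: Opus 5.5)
We proceed by induction on the derivation of $ \Gamma^{\circ} \vdash_{ k }  M^{\circ} : A^{\circ} $, generalizing over the context translation $ \Gamma^{\circ} @ k \rightsquigarrow_{\le l} \Delta / (\gamma_0,\dotsc,\gamma_l), (\tau_1,\dotsc,\tau_k)$. The bound $l$ is fixed throughout: "sufficiently large" means $l$ exceeds every level reached by a \textbf{next} in the derivation. This guarantees that the side conditions $k<l$ in the type and term translations, and in \ref{rule:-->-open}, are always met.

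\textbf{Case \ref{rule:lamcirc-t-var}.} We have $x :^{k} A^{\circ} \in \Gamma^{\circ}$. In the translated context, $x$ was introduced by \ref{rule:-->-var} with type $\Parens{A^{\circ}}_{@k\le l}^{\delta_{k+1},\dotsc,\delta_l}$, where $\delta_n$ are the classifiers current at the binding site. We proceed in three steps.
\begin{enumerate}
\item Apply \ref{rule:type-var}. Its side condition $\Delta \vdash \delta' \preceq \mathrm{pos}(\Delta)$ follows from \Zcref{lem:trans_lamcirc_position} and \Zcref{lem:trans_lamcirc_scope_transition} at index $k$.
\item Instantiate the $l-k$ universal quantifiers with $\gamma_{k+1},\dotsc,\gamma_l$ using \ref{rule:type-polycls-e}. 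The premises $\delta_n \preceq \gamma_n$ again come from \Zcref{lem:trans_lamcirc_scope_transition}.
\item Rewrite the resulting type to $\llparenthesis A^{\circ}\rrparenthesis^{\gamma_{k+1},\dotsc,\gamma_l}_{@k\le l}$ using \Zcref{lem:trans_lamcirc_cls_subst_on_type}.
\end{enumerate}

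\textbf{Case \ref{rule:lamcirc-t-func}.} Extend the context translation by \ref{rule:-->-var} with a fresh $\delta$, apply the IH, and close with \ref{rule:type-to-i}. The freshness side condition $\delta\notin\FC$ holds because the result type mentions only $\gamma_{k+1},\dotsc,\gamma_l$, never the level-$k$ classifier.

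\textbf{Case \ref{rule:lamcirc-t-app}.} The argument is translated with $\Parens{-}$. Extend the context by \ref{rule:-->-mon} with fresh $\delta_{k+1},\dotsc,\delta_l$, apply the IH to $N^{\circ}$ under the updated mapping, and close with $l-k$ uses of \ref{rule:type-polycls-i}. This requires \ref{rule:-->-mon} to be a legitimate context-translation step. It also requires that rule \ref{rule:type-polycls-i}, as reflected in the type translation, quantifies exactly over the upper-level classifiers. We then finish with \ref{rule:type-to-e}.

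\textbf{Case \ref{rule:lamcirc-t-next}.} Extend by \ref{rule:-->-open} with fresh $\delta,\tau'$, apply the IH at level $k+1$, and conclude with \ref{rule:type-bm-i}. The result type is $\Box^{\succeq\gamma_{k+1}}\llparenthesis A^{\circ}\rrparenthesis_{@k+1\le l}^{\gamma_{k+2},\dotsc}$, and $\delta$ does not occur in it because the level-$(k+1)$ slot does not appear in the type index.

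\textbf{Case \ref{rule:lamcirc-t-prev}.} The conclusion is at level $k>0$. Use \ref{rule:-->-shut} to obtain the context $\Delta, \mathord{\blacktriangleleft}_{\tau_k}^{\gamma_{k-1}}$ at level $k-1$. Its well-formedness needs $\Delta \vdash \tau_k : \gamma_{k-1}\sqsubseteq\gamma_k$, which is \Zcref{lem:trans_lamcirc_modal_transition}. The IH gives the body the type $\Box^{\succeq\gamma_k}\llparenthesis A^{\circ}\rrparenthesis$. Then \ref{rule:type-bm-e} applies, with side condition $\Delta\vdash\gamma_k\preceq\mathrm{pos}(\Delta)=\gamma_k$ holding by reflexivity (\Zcref{lem:trans_lamcirc_position}).

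\textbf{Expected main obstacle.} The variable case is the hardest step. Appealing to \Zcref{lem:trans_lamcirc_scope_transition} requires the binding-site translation to be a sub-derivation of the current one. That containment in turn needs an inversion lemma: whenever $x:^kA^{\circ}\in\Gamma^{\circ}$, the derivation of the context translation passes through a \ref{rule:-->-var} step for $x$ at level $k$. The interleaving of \ref{rule:-->-shut}, \ref{rule:-->-open} and \ref{rule:-->-mon} steps makes that inversion delicate. We expect to prove it by a separate induction on the context-translation derivation.
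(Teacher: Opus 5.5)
Your proof is correct and follows the paper's own argument. Both are an induction on the $\lambda^{\bigcirc}$ typing derivation:
\begin{itemize}
\item the variable case uses the position and scope-transition lemmas, repeated $\forall$-E and the substitution lemma on types;
\item application uses the $\rightsquigarrow$-M step and repeated $\forall$-I;
\item \textbf{prev} uses the $\rightsquigarrow$-$\blacktriangleleft$ step, the modal-transition lemma and $\Box$-E.
\end{itemize}
You additionally spell out the $\to$-I and \textbf{next} cases. You also flag the inversion that the paper leaves implicit; it is in fact easy, since $\rightsquigarrow$-V is the only rule that extends $\Gamma^{\circ}$.

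One small correction: ``sufficiently large'' must bound the $\bigcirc$-nesting depth of every type occurring in the derivation, not just the levels reached by \textbf{next}. For example, $\vdash_0 \lambda z^{\bigcirc p}.\,z : \bigcirc p \to \bigcirc p$ contains no \textbf{next}, yet its type translation is undefined unless $l \ge 1$.
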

\begin{proof}
 By induction on the derivation of $ \Gamma^{\circ} \vdash_{ \bmttnt{k} }  {M^{\circ} } \hasType A^{\circ} $. We provide non-trivial cases for \ref{rule:lamcirc-t-var}, \ref{rule:lamcirc-t-app} and \ref{rule:lamcirc-t-prev}.

 \vspace{0.5em}
 \noindent\textit{Case} \ref{rule:lamcirc-t-var}: By inversion, we have
 \begin{center}
  \begin{prooftree}
    \infer0{ \Gamma^{\circ}_{{\mathrm{1}}}  \bmttsym{,}   \bmttmv{x} :^{ \bmttnt{k} } A^{\circ}   \bmttsym{,}  \Gamma^{\circ}_{{\mathrm{2}}} \vdash_{ \bmttnt{k} }  \bmttmv{x} \hasType A^{\circ} }
  \end{prooftree}
 \end{center}
 where $\Gamma^{\circ} = \Gamma^{\circ}_{{\mathrm{1}}}  \bmttsym{,}   \bmttmv{x} :^{ \bmttnt{k} } A^{\circ}   \bmttsym{,}  \Gamma^{\circ}_{{\mathrm{2}}}$ and ${M^{\circ} } = \bmttmv{x}$. In the derivation of $ \Gamma^{\circ} @ \bmttnt{k}  \rightsquigarrow_{\le  \bmttnt{l} }  \Delta  /  \bmttsym{(}    \gamma _{ 0 }  ,\dotsc,  \gamma _{ \bmttnt{l} }    \bmttsym{)} ,  \bmttsym{(}    \tau _{ 1 }  ,\dotsc,  \tau _{ \bmttnt{k} }    \bmttsym{)} $, we have the following derivation by \ref{rule:-->-var}, where $\Delta = \Delta_{{\mathrm{1}}}  \bmttsym{,}   \binder{ \bmttmv{x} }\has@{ \delta' }   \Parens{ A^{\circ} }_{@  \bmttnt{k}  \le  \bmttnt{l} }^{   \delta _{ \bmttnt{k}  \bmttsym{+}  1 }  ,\dotsc,  \delta _{ \bmttnt{l} }   }     \bmttsym{,}  \Delta_{{\mathrm{2}}}$ for some $\Delta_{{\mathrm{2}}}$.
 \begin{center}
  \begin{prooftree}
    \hypo{ \Gamma^{\circ}_{{\mathrm{1}}} @ \bmttnt{k}  \rightsquigarrow_{\le  \bmttnt{l} }  \Delta_{{\mathrm{1}}}  /  \bmttsym{(}    \delta _{ 0 }  ,\dotsc,  \delta _{ \bmttnt{l} }    \bmttsym{)} ,  \bmttsym{(}    \tau' _{ 1 }  ,\dotsc,  \tau' _{ \bmttnt{k} }    \bmttsym{)} }
    \infer1{ \Gamma^{\circ}_{{\mathrm{1}}}  \bmttsym{,}   \bmttmv{x} :^{ \bmttnt{k} } A^{\circ}  @ \bmttnt{k}  \rightsquigarrow_{\le  \bmttnt{l} }  \Delta_{{\mathrm{1}}}  \bmttsym{,}   \binder{ \bmttmv{x} }\has@{ \delta' }   \Parens{ A^{\circ} }_{@  \bmttnt{k}  \le  \bmttnt{l} }^{   \delta _{ \bmttnt{k}  \bmttsym{+}  1 }  ,\dotsc,  \delta _{ \bmttnt{l} }   }     /   \bmttsym{(}    \delta _{ 0 }  ,\dotsc,  \delta _{ \bmttnt{l} }    \bmttsym{)} [  \bmttnt{k} \mapsto \delta'  ]  ,  \bmttsym{(}    \tau' _{ 1 }  ,\dotsc,  \tau' _{ \bmttnt{k} }    \bmttsym{)} }
  \end{prooftree}
 \end{center}
 For such $\delta'$, we have $\Delta  \vdash  \delta'  \preceq   \gamma _{ \bmttnt{k} } $ from \Zcref{lem:trans_lamcirc_scope_transition}.
 From \Zcref{lem:trans_lamcirc_position}, we have $ \mathrm{pos} ( \Delta )  =  \gamma _{ \bmttnt{k} } $. Therefore, $\Delta  \vdash  \delta'  \preceq   \mathrm{pos} ( \Delta ) $ holds, and we can use \ref{rule:type-var} to derive $\Delta  \vdash  \bmttmv{x}  \hasType   \Parens{ A^{\circ} }_{@  \bmttnt{k}  \le  \bmttnt{l} }^{   \delta _{ \bmttnt{k}  \bmttsym{+}  1 }  ,\dotsc,  \delta _{ \bmttnt{l} }   } $. From the definition, $ \Parens{ A^{\circ} }_{@  \bmttnt{k}  \le  \bmttnt{l} }^{   \delta _{ \bmttnt{k}  \bmttsym{+}  1 }  ,\dotsc,  \delta _{ \bmttnt{l} }   }  =  \forall{ \binder{  \delta'' _{ \bmttnt{k}  \bmttsym{+}  1 }  }{\ge  \delta _{ \bmttnt{k}  \bmttsym{+}  1 }  } }.\dotso\forall{ \binder{  \delta'' _{ \bmttnt{l} }  }{\ge  \delta _{ \bmttnt{l} }  } }.   \llparenthesis A^{\circ} \rrparenthesis_{@ \bmttnt{k}  \le  \bmttnt{l} }^{   \delta'' _{ \bmttnt{k}  \bmttsym{+}  1 }  ,\dotsc,  \delta'' _{ \bmttnt{l} }   }  $. Also, $\Delta  \vdash   \delta _{ \bmttnt{n} }   \preceq   \gamma _{ \bmttnt{n} } $ for any $\bmttnt{n}$ such that $0 \le \bmttnt{n} \le \bmttnt{l}$ from \Zcref{lem:trans_lamcirc_scope_transition}. Hence, we can apply \ref{rule:type-polycls-e} as follows.
\begin{center}
 \begin{prooftree}
   \hypo{\Delta  \vdash  \bmttmv{x}  \hasType   \forall{ \binder{  \delta'' _{ \bmttnt{k}  \bmttsym{+}  1 }  }{\ge  \delta _{ \bmttnt{k}  \bmttsym{+}  1 }  } }.\dotso\forall{ \binder{  \delta'' _{ \bmttnt{l} }  }{\ge  \delta _{ \bmttnt{l} }  } }.   \llparenthesis A^{\circ} \rrparenthesis_{@ \bmttnt{k}  \le  \bmttnt{l} }^{   \delta'' _{ \bmttnt{k}  \bmttsym{+}  1 }  ,\dotsc,  \delta'' _{ \bmttnt{l} }   }  }
   \hypo{\text{$\Delta  \vdash   \delta _{ \bmttnt{n} }   \preceq   \gamma _{ \bmttnt{n} } $ for any $\bmttnt{n}$ s.t.\@ $\bmttnt{k}  \bmttsym{+}  1 \le \bmttnt{n} \le \bmttnt{l}$}}
   \infer2{\Delta  \vdash   {  \bmttmv{x}  \gamma _{ \bmttnt{k}  \bmttsym{+}  1 }  \dotso \gamma  }_{ \bmttnt{l} }   \hasType    \llparenthesis A^{\circ} \rrparenthesis_{@ \bmttnt{k}  \le  \bmttnt{l} }^{   \delta'' _{ \bmttnt{k}  \bmttsym{+}  1 }  ,\dotsc,  \delta'' _{ \bmttnt{l} }   }  [   \delta'' _{ \bmttnt{k}  \bmttsym{+}  1 }  \coloneqq  \gamma _{ \bmttnt{k}  \bmttsym{+}  1 }   ]\dotso[   \delta'' _{ \bmttnt{l} }  \coloneqq  \gamma _{ \bmttnt{l} }   ] }
 \end{prooftree}
\end{center}
We have $  \llparenthesis A^{\circ} \rrparenthesis_{@ \bmttnt{k}  \le  \bmttnt{l} }^{   \delta'' _{ \bmttnt{k}  \bmttsym{+}  1 }  ,\dotsc,  \delta'' _{ \bmttnt{l} }   }  [   \delta'' _{ \bmttnt{k}  \bmttsym{+}  1 }  \coloneqq  \gamma _{ \bmttnt{k}  \bmttsym{+}  1 }   ]\dotso[   \delta'' _{ \bmttnt{l} }  \coloneqq  \gamma _{ \bmttnt{l} }   ]  =  \llparenthesis A^{\circ} \rrparenthesis_{@ \bmttnt{k}  \le  \bmttnt{l} }^{   \gamma _{ \bmttnt{k}  \bmttsym{+}  1 }  ,\dotsc,  \gamma _{ \bmttnt{l} }   } $ from \ref{lem:trans_lamcirc_cls_subst_on_type}. Therefore this is what we want.

 \vspace{0.5em}
 \noindent\textit{Case} \ref{rule:lamcirc-t-app}: By inversion, we have
 \begin{center}
  \begin{prooftree}
    \hypo{ \Gamma^{\circ} \vdash_{ \bmttnt{k} }  {M^{\circ} }_{{\mathrm{1}}} \hasType B^{\circ}  \mathbin{\rightarrow}  A^{\circ} }
    \hypo{ \Gamma^{\circ} \vdash_{ \bmttnt{k} }  {M^{\circ} }_{{\mathrm{2}}} \hasType B^{\circ} }
    \infer2{ \Gamma^{\circ} \vdash_{ \bmttnt{k} }   {M^{\circ} }_{{\mathrm{1}}} \  {M^{\circ} }_{{\mathrm{2}}}  \hasType A^{\circ} }
  \end{prooftree}
 \end{center}
 where ${M^{\circ} } =  {M^{\circ} }_{{\mathrm{1}}} \  {M^{\circ} }_{{\mathrm{2}}} $. For the first assumption, we apply the induction hypothesis and get $\Delta  \vdash   \llparenthesis {M^{\circ} }_{{\mathrm{1}}} \rrparenthesis_{ @  \bmttnt{k}  \le  \bmttnt{l} }^{ \bmttsym{(}    \gamma _{ 0 }  ,\dotsc,  \gamma _{ \bmttnt{l} }    \bmttsym{)} ,  \bmttsym{(}    \tau _{ 1 }  ,\dotsc,  \tau _{ \bmttnt{k} }    \bmttsym{)} }   \hasType   \Parens{ B^{\circ} }_{@  \bmttnt{k}  \le  \bmttnt{l} }^{   \gamma _{ \bmttnt{k}  \bmttsym{+}  1 }  ,\dotsc,  \gamma _{ \bmttnt{l} }   }   \mathbin{\rightarrow}   \llparenthesis A^{\circ} \rrparenthesis_{@ \bmttnt{k}  \le  \bmttnt{l} }^{   \gamma _{ \bmttnt{k}  \bmttsym{+}  1 }  ,\dotsc,  \gamma _{ \bmttnt{l} }   } $. For the second assumption, we first derive $ \Gamma^{\circ} @ \bmttnt{k}  \rightsquigarrow_{\le  \bmttnt{l} }  \Delta  \bmttsym{,}   \binder{  \gamma' _{ \bmttnt{k}  \bmttsym{+}  1 }  }{ \within  \gamma _{ \bmttnt{k}  \bmttsym{+}  1 }  },\dotsc,\binder{  \gamma' _{ \bmttnt{l} }  }{ \within  \gamma _{ \bmttnt{l} }  }   /  \bmttsym{(}    \gamma _{ 0 }  ,\dotsc,  \gamma _{ \bmttnt{k} }    \bmttsym{,}    \gamma' _{ \bmttnt{k}  \bmttsym{+}  1 }  ,\dotsc,  \gamma' _{ \bmttnt{l} }    \bmttsym{)} ,  \bmttsym{(}    \tau _{ 1 }  ,\dotsc,  \tau _{ \bmttnt{k} }    \bmttsym{)} $ by \ref{rule:-->-mon}. Then we use the induction hypothesis with this context to get $\Delta  \bmttsym{,}   \binder{  \gamma' _{ \bmttnt{k}  \bmttsym{+}  1 }  }{ \within  \gamma _{ \bmttnt{k}  \bmttsym{+}  1 }  },\dotsc,\binder{  \gamma' _{ \bmttnt{l} }  }{ \within  \gamma _{ \bmttnt{l} }  }   \vdash   \llparenthesis {M^{\circ} }_{{\mathrm{2}}} \rrparenthesis_{ @  \bmttnt{k}  \le  \bmttnt{l} }^{ \bmttsym{(}    \gamma _{ 0 }  ,\dotsc,  \gamma _{ \bmttnt{k} }    \bmttsym{,}    \gamma' _{ \bmttnt{k}  \bmttsym{+}  1 }  ,\dotsc,  \gamma' _{ \bmttnt{l} }    \bmttsym{)} ,  \bmttsym{(}    \tau _{ 1 }  ,\dotsc,  \tau _{ \bmttnt{k} }    \bmttsym{)} }   \hasType   \llparenthesis B^{\circ} \rrparenthesis_{@ \bmttnt{k}  \le  \bmttnt{l} }^{   \gamma' _{ \bmttnt{k}  \bmttsym{+}  1 }  ,\dotsc,  \gamma' _{ \bmttnt{l} }   } $. We apply \ref{rule:type-polycls-i} multiple times to get $\Delta  \vdash   \Parens{ {M^{\circ} }_{{\mathrm{2}}} }_{@  \bmttnt{k}  \le  \bmttnt{l} }^{  \bmttsym{(}    \gamma _{ 0 }  ,\dotsc,  \gamma _{ \bmttnt{l} }    \bmttsym{)} ,  \bmttsym{(}    \tau _{ 1 }  ,\dotsc,  \tau _{ \bmttnt{k} }    \bmttsym{)}  }   \hasType   \Parens{ B^{\circ} }_{@  \bmttnt{k}  \le  \bmttnt{l} }^{   \gamma _{ \bmttnt{k}  \bmttsym{+}  1 }  ,\dotsc,  \gamma _{ \bmttnt{l} }   } $.
 Finally we apply \ref{rule:type-to-e} to obtain $\Delta  \vdash    \llparenthesis {M^{\circ} }_{{\mathrm{1}}} \rrparenthesis_{ @  \bmttnt{k}  \le  \bmttnt{l} }^{ \bmttsym{(}    \gamma _{ 0 }  ,\dotsc,  \gamma _{ \bmttnt{l} }    \bmttsym{)} ,  \bmttsym{(}    \tau _{ 1 }  ,\dotsc,  \tau _{ \bmttnt{k} }    \bmttsym{)} }     \Parens{ {M^{\circ} }_{{\mathrm{2}}} }_{@  \bmttnt{k}  \le  \bmttnt{l} }^{  \bmttsym{(}    \gamma _{ 0 }  ,\dotsc,  \gamma _{ \bmttnt{l} }    \bmttsym{)} ,  \bmttsym{(}    \tau _{ 1 }  ,\dotsc,  \tau _{ \bmttnt{k} }    \bmttsym{)}  }    \hasType   \llparenthesis A^{\circ} \rrparenthesis_{@ \bmttnt{k}  \le  \bmttnt{l} }^{   \gamma _{ \bmttnt{k}  \bmttsym{+}  1 }  ,\dotsc,  \gamma _{ \bmttnt{l} }   } $, which is what we want.

 \vspace{0.5em}
 \noindent\textit{Case} \ref{rule:lamcirc-t-prev}: By inversion, we have
 \begin{center}
  \begin{prooftree}
    \hypo{ \Gamma^{\circ} \vdash_{ \bmttnt{k}  \bmttsym{-}  1 }  {M^{\circ} }' \hasType  \mdsmwhtcircle A^{\circ}  }
    \infer1{ \Gamma^{\circ} \vdash_{ \bmttnt{k} }  \mathbf{prev} \, \bmttsym{\{}  {M^{\circ} }'  \bmttsym{\}} \hasType A^{\circ} }
  \end{prooftree}
 \end{center}
 where $\mathbf{prev} \, \bmttsym{\{}  {M^{\circ} }'  \bmttsym{\}} = {M^{\circ} }$. We use \ref{rule:-->-shut} to derive $ \Gamma^{\circ} @ \bmttnt{k}  \bmttsym{-}  1  \rightsquigarrow_{\le  \bmttnt{l} }  \Delta  \bmttsym{,}   \mathord{\blacktriangleleft} _{  \tau _{ \bmttnt{k} }  }^{  \gamma _{ \bmttnt{k}  \bmttsym{-}  1 }  }   /  \bmttsym{(}    \gamma _{ 0 }  ,\dotsc,  \gamma _{ \bmttnt{l} }    \bmttsym{)} ,  \bmttsym{(}    \tau _{ 1 }  ,\dotsc,  \tau _{ \bmttnt{k}  \bmttsym{-}  1 }    \bmttsym{)} $. Here, we can confirm that $\vdash  \Delta  \bmttsym{,}   \mathord{\blacktriangleleft} _{  \tau _{ \bmttnt{k} }  }^{  \gamma _{ \bmttnt{k}  \bmttsym{-}  1 }  }   \hasType \, \bmttkw{ctx}$ by \Zcref{lem:trans_lamcirc_modal_transition}. Then we apply the induction hypothesis to get $\Delta  \bmttsym{,}   \mathord{\blacktriangleleft} _{  \tau _{ \bmttnt{k} }  }^{  \gamma _{ \bmttnt{k}  \bmttsym{-}  1 }  }   \vdash   \llparenthesis {M^{\circ} }' \rrparenthesis_{ @  \bmttnt{k}  \bmttsym{-}  1  \le  \bmttnt{l} }^{ \bmttsym{(}    \gamma _{ 0 }  ,\dotsc,  \gamma _{ \bmttnt{l} }    \bmttsym{)} ,  \bmttsym{(}    \tau _{ 1 }  ,\dotsc,  \tau _{ \bmttnt{k}  \bmttsym{-}  1 }    \bmttsym{)} }   \hasType   \Box^{\mathord{\succeq}   \gamma _{ \bmttnt{k} }  }   \llparenthesis A^{\circ} \rrparenthesis_{@ \bmttnt{k}  \le  \bmttnt{l} }^{   \gamma _{ \bmttnt{k}  \bmttsym{+}  1 }  ,\dotsc,  \gamma _{ \bmttnt{l} }   }  $. Since $ \mathrm{pos} ( \Delta )  =  \gamma _{ \bmttnt{k} } $ by \Zcref{lem:trans_lamcirc_position}, we can apply \ref{rule:type-bm-e} to derive\\ $\Delta  \vdash   \mathbf{unq} _{  \tau _{ \bmttnt{k} }  }\lbrace^{  \gamma _{ \bmttnt{k}  \bmttsym{-}  1 }  }  \llparenthesis {M^{\circ} }' \rrparenthesis_{ @  \bmttnt{k}  \bmttsym{-}  1  \le  \bmttnt{l} }^{ \bmttsym{(}    \gamma _{ 0 }  ,\dotsc,  \gamma _{ \bmttnt{l} }    \bmttsym{)} ,  \bmttsym{(}    \tau _{ 1 }  ,\dotsc,  \tau _{ \bmttnt{k}  \bmttsym{-}  1 }    \bmttsym{)} }  \rbrace   \hasType   \llparenthesis A^{\circ} \rrparenthesis_{@ \bmttnt{k}  \le  \bmttnt{l} }^{   \gamma _{ \bmttnt{k}  \bmttsym{+}  1 }  ,\dotsc,  \gamma _{ \bmttnt{l} }   } $, which is what we want.

\end{proof}

We also confirm that the translation is injective. We define forgetful translation $ \lvert  \mathord{-}  \rvert_{\LTL} $ as follows:
\begin{align*}
   \lvert \bmttnt{p} \rvert_{\LTL}  &= \bmttnt{p} \\
   \lvert \bmttnt{A}  \mathbin{\rightarrow}  \bmttnt{B} \rvert_{\LTL}  &=  \lvert \bmttnt{A} \rvert_{\LTL}   \mathbin{\rightarrow}   \lvert \bmttnt{B} \rvert_{\LTL}  \\
   \lvert  \Box^{\mathord{\succeq}  \gamma }  \bmttnt{A}  \rvert_{\LTL}  &=  \mdsmwhtcircle  \lvert \bmttnt{A} \rvert_{\LTL}   \\
   \lvert  \forall  \gamma_{{\mathrm{1}}} \within  \gamma_{{\mathrm{2}}} . \bmttnt{A}  \rvert_{\LTL}  &=  \lvert \bmttnt{A} \rvert_{\LTL} 
  & \\
   \lvert \bmttmv{x} \rvert_{\LTL}  &= \bmttmv{x} \\
   \lvert  \lambda \binder{ \bmttmv{x} }\has@{ \gamma } \bmttnt{A} \ldotp \bmttnt{M}  \rvert_{\LTL}  &=  \lambda  \bmttmv{x} ^{  \lvert \bmttnt{A} \rvert_{\LTL}  } .   \lvert \bmttnt{M} \rvert_{\LTL}   \\
   \lvert  \bmttnt{M}   \bmttnt{N}  \rvert_{\LTL}  &=   \lvert \bmttnt{M} \rvert_{\LTL}  \   \lvert \bmttnt{N} \rvert_{\LTL}   \\
   \lvert  \mathbf{quo} (\binder{ \tau })\lbrace^{\binder{ \gamma_{{\mathrm{1}}} } \within \gamma_{{\mathrm{2}}} }  \bmttnt{M} \rbrace  \rvert_{\LTL}  &= \mathbf{next} \, \bmttsym{\{}   \lvert \bmttnt{M} \rvert_{\LTL}   \bmttsym{\}} \\
   \lvert  \mathbf{unq} _{ \bmttnt{T} }\lbrace^{ \gamma } \bmttnt{M} \rbrace  \rvert_{\LTL}  &= \mathbf{prev} \, \bmttsym{\{}   \lvert \bmttnt{M} \rvert_{\LTL}   \bmttsym{\}} \\
   \lvert  \lambda \binder{ \gamma_{{\mathrm{1}}} }\within  \gamma_{{\mathrm{2}}} . \bmttnt{M}  \rvert_{\LTL}  &=  \lvert \bmttnt{M} \rvert_{\LTL}  \\
   \lvert  \bmttnt{M} \gamma  \rvert_{\LTL}  &=  \lvert \bmttnt{M} \rvert_{\LTL} 
\end{align*}

\begin{theorem}\label{claim:embed-lamcirc-forget}
  \begin{enumerate}
    \item $ \lvert  \llparenthesis A^{\circ} \rrparenthesis_{@ \bmttnt{k}  \le  \bmttnt{l} }^{ \bmttsym{(}    \gamma _{ \bmttnt{k}  \bmttsym{+}  1 }  ,\dotsc,  \gamma _{ \bmttnt{l} }    \bmttsym{)} }  \rvert_{\LTL}  = A^{\circ}$
    \item $ \lvert  \llparenthesis {M^{\circ} } \rrparenthesis_{ @  \bmttnt{k}  \le  \bmttnt{l} }^{ \bmttsym{(}    \gamma _{ 0 }  ,\dotsc,  \gamma _{ \bmttnt{l} }    \bmttsym{)} ,  \bmttsym{(}    \tau _{ 1 }  ,\dotsc,  \tau _{ \bmttnt{k} }    \bmttsym{)} }  \rvert_{\LTL}  = {M^{\circ} }$
  \end{enumerate}
\end{theorem}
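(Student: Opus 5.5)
The plan is to prove both items by structural induction, mirroring the proof of \Zcref{claim:embed-kripke-style-cs4-forget} for \lambox. For item 1 we induct on $A^{\circ}$, but the statement must first be strengthened so that it also covers the auxiliary translation $\Parens{A^{\circ}}_{@ k \le l}^{\vec\gamma}$. The strengthened claim is that, for every $k \le l$ and every classifier vector $\vec\gamma$ for which the translation is defined, both $\lvert \llparenthesis A^{\circ} \rrparenthesis_{@ k \le l}^{\vec\gamma} \rvert_{\LTL}$ and $\lvert \Parens{A^{\circ}}_{@ k \le l}^{\vec\gamma} \rvert_{\LTL}$ equal $A^{\circ}$.

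The two parts of the strengthened item 1 are handled as follows.
\begin{itemize}
\item For $\Parens{A^{\circ}}$: $\lvert-\rvert_{\LTL}$ erases each $\forall\delta\within\gamma.$, so this reduces to the claim for $\llparenthesis A^{\circ}\rrparenthesis$ at the fresh vector $\delta_{k+1},\dots,\delta_l$. No induction on $A^{\circ}$ is needed here, since we just unfold and reuse the $\llparenthesis\cdot\rrparenthesis$ case for the same $A^{\circ}$.
\item For $\llparenthesis A^{\circ}\rrparenthesis$, the atom case is immediate.
\item The arrow case uses the $\Parens{\cdot}$ claim for the domain (smaller) and the IH for the codomain.
\item The $\mdsmwhtcircle$ case follows from $\lvert\Box^{\succeq\gamma_{k+1}}B\rvert_{\LTL} = \mdsmwhtcircle\lvert B\rvert_{\LTL}$ together with the IH at level $k+1$.
\end{itemize}
The induction must quantify over all $k$ and all vectors, because these change in the $\mdsmwhtcircle$ and $\Parens{\cdot}$ cases.

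Item 2 is by induction on $M^{\circ}$, again strengthened to cover $\Parens{N^{\circ}}$, and again quantified over all $k$, $l$, $\vec\gamma$, $\vec\tau$ for which the translation is defined. The cases are:
\begin{itemize}
\item \textbf{Variable:} $x\,\gamma_{k+1}\cdots\gamma_l$ forgets to $x$, since classifier application is erased.
\item \textbf{Abstraction:} the annotation $\Parens{A^{\circ}}$ forgets to $A^{\circ}$ by item 1, and the body follows by the IH with updated $\vec\gamma[k\mapsto\delta]$.
\item \textbf{Application:} the IH for the function, plus the $\Parens{\cdot}$ claim for the argument; its $\lambda\delta\within\gamma.$ binders are erased.
\item \textbf{$\mathbf{next}$ and $\mathbf{prev}$:} $\mathbf{quo}$ maps to $\mathbf{next}$ and $\mathbf{unq}$ maps to $\mathbf{prev}$, with the IH at levels $k+1$ and $k-1$ respectively.
\end{itemize}

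The main obstacle is bookkeeping rather than mathematics. Induction hypotheses must be stated generally enough, over arbitrary levels, vectors and witness sequences, since each of these changes in the recursive calls. Also, the $\Parens{\cdot}$ translations must be handled in the same induction as the main translation instead of being treated separately. I would also note the clause $\lvert M\gamma\rvert_{\LTL} = \lvert M\rvert_{\LTL}$ used in the variable and argument cases. With these generalizations, every case is a one-line unfolding of definitions.
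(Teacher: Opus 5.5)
Your proposal is correct and is the same argument as the paper's, whose proof reads only ``by induction on the structure of $A^{\circ}$ and $M^{\circ}$, respectively''; strengthening the claims to cover the auxiliary translations $\Parens{\cdot}$, quantified over all levels, classifier vectors and witness sequences, is exactly the bookkeeping that one-line proof leaves implicit. One small slip in your closing remark: the argument case erases the classifier abstractions via $\lvert \lambda \gamma_1 \within \gamma_2. M\rvert_{\LTL} = \lvert M\rvert_{\LTL}$, so the classifier-application clause $\lvert M\gamma\rvert_{\LTL} = \lvert M\rvert_{\LTL}$ is needed only in the variable case.
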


\begin{proof}
  By induction on the structure of $A^{\circ}$ and ${M^{\circ} }$, respectively.
\end{proof}

Then, the $\LTL$ part of \Zcref{claim:embed-s4-and-ltl} follows from \Zcref{thm:embedlamcirc_preserve_typing} and \Zcref{claim:embed-lamcirc-forget}.

\end{document}

%% file: bmtt.tex
\newcommand{\bmttnt}[1]{\mathit{#1}}
\newcommand{\bmttmv}[1]{\mathit{#1}}
\newcommand{\bmttkw}[1]{\mathbf{#1}}
\newcommand{\bmttsym}[1]{#1}
\newcommand{\bmttcom}[1]{\text{#1}}

\newcommand{\bmttrulehead}[3]{$#1$ & & $#2$ & & & \multicolumn{2}{l}{#3}}
\newcommand{\bmttprodline}[6]{& & $#1$ & $#2$ & $#3 #4$ & $#5$ & $#6$}
\newcommand{\bmttfirstprodline}[6]{\bmttprodline{#1}{#2}{#3}{#4}{#5}{#6}}

\newcommand{\bmttprodnewline}{\\}

\NewDocumentCommand\binder{m}{%
  \mathcolor{FireBrick}{#1}%
}
\newcommand\position[1]{%
  \mathcolor{RoyalBlue}{#1}%
}
\NewDocumentCommand\within{}{\mathrel{\mathord{:}\nobreak\mathord{\succeq} } }

\NewDocumentCommand\can{m}{#1\sp{\mathsf{c} } }
\NewDocumentCommand\exclam{}{\mathord{\boldsymbol{!} } }

\newcommand\token[1]{%
  \mathbf{#1}%
}

\newcommand{\bmttM}{
\bmttrulehead{\bmttnt{M}  ,\ \bmttnt{N}  ,\ v}{::=}{\bmttcom{terms}}\bmttprodnewline
\bmttfirstprodline{|}{\bmttmv{x}}{}{}{}{}\bmttprodnewline
\bmttprodline{|}{ \lambda \binder{ \bmttmv{x} }\has@{ \gamma } \bmttnt{A} \ldotp \bmttnt{M} }{}{}{}{}\bmttprodnewline
\bmttprodline{|}{ \bmttnt{M_{{\mathrm{1}}}}   \bmttnt{M_{{\mathrm{2}}}} }{}{}{}{}\bmttprodnewline
\bmttprodline{|}{ \mathbf{quo} (\binder{ \tau })\lbrace^{\binder{ \gamma } \within \bmttnt{cls} }  \bmttnt{M} \rbrace }{}{}{}{}\bmttprodnewline
\bmttprodline{|}{ \mathbf{unq} _{ \bmttnt{T} }\lbrace^{ \bmttnt{cls} } \bmttnt{M} \rbrace }{}{}{}{}\bmttprodnewline
\bmttprodline{|}{ \lambda \binder{ \gamma }\within  \bmttnt{cls} . \bmttnt{M} }{}{}{}{}\bmttprodnewline
\bmttprodline{|}{ \bmttnt{M} \bmttnt{cls} }{}{}{}{}\bmttprodnewline
\bmttprodline{|}{\bmttsym{(}  \bmttnt{M}  \bmttsym{)}}{}{}{}{}\bmttprodnewline
\bmttprodline{|}{ \bmttnt{M} }{}{}{}{}\bmttprodnewline
\bmttprodline{|}{ \bmttnt{M} [ \binder{ \gamma } \coloneqq \bmttnt{cls}  ] }{}{}{}{}\bmttprodnewline
\bmttprodline{|}{ \bmttnt{M} [ \binder{ \gamma } \coloneqq \bmttnt{cls} , \binder{ \bmttmv{x} }  \coloneqq   \bmttnt{N}  ] }{}{}{}{}\bmttprodnewline
\bmttprodline{|}{ \bmttnt{M} [ \binder{ \gamma } \coloneqq \bmttnt{cls} , \binder{ \tau }  \coloneqq   \bmttnt{T}  ] }{}{}{}{}\bmttprodnewline
\bmttprodline{|}{ \bmttnt{M} ^{ \bmttnt{T} } }{}{}{}{}\bmttprodnewline
\bmttprodline{|}{\mathcal{E}  \bmttsym{[}  \bmttnt{M}  \bmttsym{]}}{}{}{}{}\bmttprodnewline
\bmttprodline{|}{ \mathord{-} }{}{}{}{}\bmttprodnewline
\bmttprodline{|}{ { \bmttnt{M} }_{ \bmttnt{k} } }{}{}{}{}\bmttprodnewline
\bmttprodline{|}{ ( {M } )_{ \overrightarrow{\gamma} , \bmttnt{T} }^{\mathord{\succeq} \exclam} }{}{}{}{}\bmttprodnewline
\bmttprodline{|}{ \mathbf{let\ }\binder{ \bmttmv{x} }\has@{ \gamma } \bmttnt{A} \mathbf{\ =\ } \bmttnt{M} \mathbf{\ in\ } \bmttnt{N} }{}{}{}{}\bmttprodnewline
\bmttprodline{|}{ \cdot }{}{}{}{}\bmttprodnewline
\bmttprodline{|}{\bmttsym{10}}{}{}{}{}\bmttprodnewline
\bmttprodline{|}{ \lambda \binder{ \bmttmv{x} }^{ \gamma }\ldotp \bmttnt{M} }{}{}{}{}\bmttprodnewline
\bmttprodline{|}{ \llparenthesis {M^{\circ} } \rrparenthesis_{ @  \bmttnt{k}  \le  \bmttnt{l} }^{ \overrightarrow{\gamma} ,  \bmttnt{T} } } {\textsf{M}}{}{}{}\bmttprodnewline
\bmttprodline{|}{ \Parens{ {M^{\circ} } }_{@  \bmttnt{k}  \le  \bmttnt{l} }^{  \overrightarrow{\gamma} ,  \bmttnt{T}  } } {\textsf{M}}{}{}{}\bmttprodnewline
\bmttprodline{|}{ \lambda{\binder{ \bmttnt{cls_{{\mathrm{1}}}} } }{\ge \bmttnt{cls_{{\mathrm{2}}}} }.\dotsc\lambda{\binder{ \bmttnt{cls_{{\mathrm{3}}}} } }{\ge \bmttnt{cls_{{\mathrm{4}}}} }. \bmttnt{M} }{}{}{}{}\bmttprodnewline
\bmttprodline{|}{ \bmttnt{M} \bmttnt{cls_{{\mathrm{1}}}} \dotso \bmttnt{cls_{{\mathrm{2}}}} }{}{}{}{}\bmttprodnewline
\bmttprodline{|}{  }{}{}{}{}}

\newcommand{\bmttGElm}{
\bmttrulehead{\bmttnt{GElm}}{::=}{\bmttcom{elements of a typing context}}\bmttprodnewline
\bmttfirstprodline{|}{ \binder{ \bmttmv{x} }\has@{ \gamma } \bmttnt{A} }{}{}{}{}\bmttprodnewline
\bmttprodline{|}{ \tau : \mathord{\blacktriangleright} ^{\binder{ \gamma } \within \bmttnt{cls} } }{}{}{}{}\bmttprodnewline
\bmttprodline{|}{ \mathord{\blacktriangleleft} _{ \bmttnt{T} }^{ \bmttnt{cls} } }{}{}{}{}\bmttprodnewline
\bmttprodline{|}{ \binder{ \gamma } \within \bmttnt{cls} }{}{}{}{}\bmttprodnewline
\bmttprodline{|}{ \binder{ \bmttnt{cls_{{\mathrm{1}}}} }{ \within \bmttnt{cls_{{\mathrm{2}}}} },\dotsc,\binder{ \bmttnt{cls_{{\mathrm{3}}}} }{ \within \bmttnt{cls_{{\mathrm{4}}}} } } {\textsf{M}}{}{}{}}
